\newtheorem{theorem}{Theorem}[section]\newtheorem{corollary}[theorem]{Corollary}
\newtheorem{proposition}[theorem]{Proposition}\newtheorem{lemma}[theorem]{Lemma}
\newtheorem{remark}[theorem]{Remark}
\numberwithin{equation}{section}
\def\eps{{\varepsilon}}
\def\bF{\mathbf{F}}\def\bphi{\boldsymbol{\phi}}\def\bpsi{\boldsymbol{\psi}}
\def\bA{\mathbf{A}}
\def\pa{\partial}
\def\swpa{\slashed{\wpa}}
\def\pas{\text{$\pa\mkern -11.0mu$\slash}}
\def\wpa{\widetilde{\pa}}
\def\wpao{\overline{\widetilde{\pa}}}\def\wL{\widetilde{L}}
\def\wuL{\widetilde{\underline{L}}}
\def\Lie{{\mathcal L^{{}^{{\!}}}}}
\newcommand{\onabla}{\widetilde\nabla}
\newcommand{\Wcal}{\mathcal{W}}
\def\rectangle#1#2{\hbox{\vrule\vbox to #2 {\hrule\hbox to #1{\hfil}\vfil\hrule}\vrule}}
\def\Box{\square}
\def\tr{\text{tr}}\def\trs{\,\slash{\!\!\!\!\tr}\,}
\def\Lb{\underline{L}}\def\wh{ \widetilde{h}}\def\th{ \widetilde{h}}
\def\tg{{\widetilde{g}}}
\def\rs{r^{\!*}}\def\us{u^{\!*}}\def\wZ{\widetilde{Z}}
\newcommand{\R}{{\mathbb R}}
\newcommand{\shortminus}{\scalebox{0.75}[1.0]{\( - \)}}
\newcommand*{\bigtwo}[1]{\vcenter{\hbox{\scalebox{1.4}{\ensuremath#1}}}}
\newcommand*{\bigthree}[1]{\vcenter{\hbox{\scalebox{1.6}{\ensuremath#1}}}}
\newcommand{\les}{\lesssim}
\newcommand{\wt}{\widetilde}
\newcommand{\beq}{\begin{equation}}\newcommand{\eq}{\end{equation}}
\newcommand{\whm}{\widehat{m}}\newcommand{\whh}{\widehat{h}}
\newcommand{\tplusr}{\langle \mkern 1.0mu t\!{}_{{}_{\!}}+\!r \mkern -1.0mu \rangle}
\newcommand{\tminusrstar}{\langle \mkern 1.0mu t\!-\!r^*\rangle}
\newcommand{\tminusr}{\langle \mkern 1.0mu t\!-\!r \mkern -1.0mu \rangle}
\DeclareMathOperator{\sgn}{sgn}
\DeclareFontFamily{U}{mathx}{\hyphenchar\font45}
\DeclareFontShape{U}{mathx}{m}{n}{<5> <6> <7> <8> <9> <10>
      <10.95> <12> <14.4> <17.28> <20.74> <24.88> mathx10}{}
\DeclareSymbolFont{mathx}{U}{mathx}{m}{n}
\DeclareMathAccent{\widecheck}{0}{mathx}{"71}
\DeclareMathAccent{\wideparen}{0}{mathx}{"75}
\begin{document}
\title[Stability of Minkowski space for the Einstein-Maxwell-Klein-Gordon system]
{Global stability of Minkowski space for the Einstein-Maxwell-Klein-Gordon system in generalized wave coordinates}

\author{Christopher Kauffman}
\address[C.K.]{Westfälische Wilhelms-Universität Münster, Mathematisches Institut,
Einsteinstrasse 62, 48149 Münster, Bundesrepublik Deutschland}
\email{ckauffma@uni-muenster.de}

\author{Hans Lindblad}
\address[H.L.]{Johns Hopkins University, Department of Mathematics, 3400 N.\@ Charles St., Baltimore, MD 21218, USA}
\email{lindblad@math.jhu.edu}

\date{\today}

\begin{abstract} We prove global existence for Einstein's equations with a charged scalar field for initial conditions sufficiently close to the Minkowski spacetime without matter.
The proof relies on  generalized wave coordinates adapted to the outgoing Schwarzschild light cones and the estimates for the massless Maxwell-Klein-Gordon system, on the background of metrics asymptotically approaching Schwarzschild at null infinity in such coordinates, by Kauffman \cite{Ka18}.
The generalized wave coordinates are obtained from a change of variables, introduced in Lindblad \cite{L17},
to asymptotically Schwarzschild coordinates at null infinity.
The main technical advances are that the change of coordinates
makes critical components of the metric decay faster, 
making the quasilinear wave operator closer to the flat wave operator, and that commuting with modified Lie
derivatives preserves the geometric null structure, improving the error terms.
This improved decay of the metric is essential for proving the estimates in \cite{Ka18},
and will likely be useful in other contexts as well.

\end{abstract}

\maketitle

\vspace{-0.3in}

\tableofcontents

\vspace{-0.5in}

\section{Introduction}

 Einstein's equations state that the Ricci curvature of the space time metric  satisfies
\begin{equation}
R_{\mu\nu}\!=\widecheck{T}_{\mu\nu}, \qquad \text{where}\quad \widecheck{T}_{\mu\nu}=T_{\mu\nu}-g_{\mu\nu} g^{\alpha\beta} T_{\alpha\beta}/2,
\end{equation}
and $T_{\mu\nu}$ is the energy momentum tensor of matter that satisfies the divergence free condition $$\nabla^{{}_{\,}\mu} T_{\mu\nu}=0.$$
Einstein's equations in harmonic or wave coordinates
are a system of nonlinear wave equations
\begin{equation}\label{eq:EinsteinWave0}
\Box^{\,g}  g_{\mu\nu} =F_{\mu\nu} (g) [\pa g, \pa g]+\widecheck{T}_{\mu\nu},
\qquad\text{where}\quad
\Box^{\,g}=
g^{\alpha\beta}\pa_\alpha\pa_\beta,
\end{equation}
is the reduced wave operator
for a Lorentzian metric $g_{\alpha\beta}$, that satisfy the wave coordinate condition
\begin{equation}\label{eq:WaveCordinateCond}
g^{\alpha\beta}\Gamma^{\gamma}_{\alpha\beta}
=|g|^{-1/2}\partial_\alpha (|g|^{1/2} g^{\alpha\gamma})=0,
\qquad \text{where} \quad  |g|=|\det{(g)}|,
\end{equation}
that is preserved under the flow of  \eqref{eq:EinsteinWave0}.
Here  $F(g)[\pa g,\pa g]$ is quadratic form in $\pa g$ with coefficients depending on $g$.
Choquet-Bruhat \!\!\cite{ChBr} proved local existence in these coordinates.

 Christodoulou-Klainerman  \!\!\cite{CK93} proved global existence for Einstein vacuum equations $R_{\mu\nu}\!=\widecheck{T}_{\mu\nu}=\!0$ for asymptotically flat
 initial
data:
\vspace{-0.05in}
\begin{equation}\label{eq:asymptoticallyflatdata}
g_{ij}|_{t=0}=(1+M r^{-1})\, \delta_{ij} + o(r^{-1-\gamma}),\quad
\pa_t g_{ij}|_{t=0}=o(r^{-2-\gamma}),\quad r\!=|x|,
\end{equation}
with $\gamma\!>\!1\!/2$,
that are small perturbations of the Minkowski metric $ m_{\alpha\beta}$. Here $M\!>\!0$, by the positive mass theorem.
\cite{CK93}  avoids using coordinates and instead uses equations for the full curvature tensor $R_{\alpha\beta\gamma\delta}$ since it
 was believed the metric in harmonic coordinates would blow up for large times, because this is
 true for wave equations with quadratic nonlinearities without any extra structure.

 However, Lindblad \cite{L92} and Lindblad-Rodnianski \cite{LR03} identified a weak null structure in Einstein's equations and formulated a weak null condition under which they expected that systems of nonlinear wave equations would have global solutions. Then in Lindblad-Rodnianski \cite{LR05,LR10} they proved global existence of solutions to Einstein's equations in wave coordinates for small asymptotically flat initial data with $\gamma>0$ in the case of the energy momentum tensor $T$ of a scalar field.

 The proof in \cite{CK93} involved proving strong decay of various components of the curvature tensor that may not hold in the presence of matter, whereas the proof in \cite{LR05,LR10} only relies on weaker decay for most components of the metric.
Moreover the proof in \cite{CK93} uses null coordinate and vector fields
adapted to the outgoing curved light cones or characteristic surfaces $u=const$, where $u$ solves the eikonal equation
\begin{equation}
g^{\alpha\beta} \pa_\alpha u \,\pa_\beta u=0,
\label{eq:eikonalintro}
\end{equation}
whereas \cite{LR05,LR10} only use vector fields and coordinates adapted to the background Minkowski space.

Due to the simpler and more perturbative approach in
 \cite{LR05,LR10} it was followed by global existence for Einstein's equations in wave coordinates coupled to the energy momentum tensor of Maxwell's equations \cite{Lo}, of Vlasov matter \cite{LT17,FaJoSm} and of Klein-Gordon\cite{IP17,LeMa}.

 Here we will prove global existence for
 Einstein's equations coupled to Maxwell-Klein-Gordon. This however requires more precise asymptotics of the metric. There are strong estimates of Morawetz type for Maxwell-Klein-Gordon on Minkowski background see Lindblad-Sterbenz \cite{LS}, but these estimates are not stable under perturbations as large as the difference between the metric and the Minkowski metric. However, they hold for the Schwarzschild metric.

  In was shown in  \cite{LR10} that solutions with asymptotically flat data approaches the Schwarzschild metric with the same ADM mass, as $r\!\to\!\infty$.
  This was achieved by subtracting off the leading order term in the Schwarzschild metric $m_{\alpha\beta}\!+\delta_{\alpha\beta} M\!/r$ from the solution of the wave equation.
It was further shown in Lindblad \cite{L17}
that the outgoing light cones for the metric converge to the outgoing light cones for the Schwarzschild metric, $u^*\!=t-r^*\!=const$, where $r^*\!=r+M\ln{r}$, i.e. there  is a solution $u$ to \eqref{eq:eikonalintro} such that $u\sim u^*$ at null infinity. Moreover changing to the coordinates $(\wt{t}, \wt{x})$, with $\wt{t} = t$ and $\wt{x} = r^*x{}_{\!}/r$, the precise asymptotics for the metric was given and
  a set of vector fields were constructed analogous to the commutation fields in Minkowski space,  depending only on the ADM mass of initial data.
  If $g$ satisfy the wave coordinate condition then the metric expressed in the new coordinates $\widetilde{g}$ will satisfy a generalized wave coordinate condition,
  see \eqref{eq:wavetildetowavecoord}.

In these coordinates Kauffman \cite{Ka18} proved that the same kind of Morawetz estimates as in \cite{LS} hold for MKG on the background of a metric that is asymptotic to Schwarzschild at null infinity. The purpose of this paper is to couple MKG to Einstein using \cite{Ka18} to estimate the MKG fields.

In order to deal with more difficult matter fields we prove global existence for Einstein's equations  directly in the new coordinates by composing the wave coordinates with the change of variables above. The change of variables can either be thought of as the new metric satisfying a generalized
wave coordinate condition,
\beq\label{def:WCCTrueIntro}
\widetilde{g}^{\beta\gamma}\widetilde{\Gamma}^\alpha_{\beta\gamma} = \Wcal^\alpha(\widetilde{g})
\eq
for some function $\Wcal^\alpha$ as in Appendix B, cf. \cite{Hu,HV20},  or the new metric satisfying the equation obtained by replacing derivatives with covariant derivatives with respect to the change of variables. This will introduce new error terms but they are fast decaying and easily estimated. However, the estimates of the quasilinear part simplifies substantially because the commutators of the vector fields with the wave operator in these coordinates have an improved null structure. 
Moreover, the estimates of the semilinear terms are also simplified substantially by using Lie derivatives with respect to the vector fields instead of vector fields, which as observed in \cite{L17} preserve the geometric weak null structure of the semilinear terms. As a byproduct of our proof we hence have an improved proof of the vacuum case as well, that we think will be useful when dealing with more difficult matter fields. We will first formulate the result for a general matter field satisfying certain assumptions and then afterward we formulate the theorem in the coupled case specifically for MKG.

\subsection{Einstein's equations in generalized wave coordinates}
 Since $m$ is constant  $h\!=\!g\shortminus m$ satisfies
\begin{equation}\label{eq:EinsteinWaveintro}
\Box^{\,g}  h_{\mu\nu}\! =F_{\mu\nu} (g) [\pa h, \pa h]+\!\widecheck{T}_{\!\mu\nu},
\quad\text{where}\quad
F_{\mu\nu} (g) [\pa h, \pa h]\!=P(g)[\pa_\mu h, \pa_\nu h]+Q_{\mu\nu}(g)[\pa h, \pa h],
\end{equation}
where $Q$ is a sum of classical null forms with two  metric contractions and
$P$ has a weak null structure
\beq\label{eq:curvedPdef}
P(g)[k,p_{\,}]=g^{\alpha\beta} k_{\alpha\beta}  g^{\gamma\delta} p_{\gamma\delta}/4
-g^{\alpha\beta} g^{\gamma\delta}k_{\alpha\gamma}  p_{\beta\delta}/2,
\eq
see Section \ref{sec:inhom}.
Here  $F(\cdots)[u_1,u_2]$ stands for functions that are separately linear in  $u_{{}_{\!1}}$ and $u_2$.
With
$H^{\alpha\beta}\!=g^{\alpha\beta}\!-m^{\alpha\beta}\!
=-h^{\alpha\beta}\!+O^{\alpha\beta}(g)[h,h]$,
where $h^{\alpha\beta}\!=m^{\alpha\delta}m^{\beta\gamma} h_{\delta\gamma}$ the wave coordinate condition \eqref{eq:WaveCordinateCond} becomes
\begin{equation}\label{eq:approximatewavecordinateconditionintro}
\pa_\mu\big(H^{\mu\nu}\! - m^{\mu\nu} m_{\alpha\beta} H^{\alpha\beta}\!/2\big)
=W^\nu(g)[H,\pa H]
:= (g^{\mu\nu}g_{\alpha\beta} - m^{\mu\nu}m_{\alpha\beta} )\partial_\mu H^{\alpha\beta}/2 ,
\end{equation}
where the right hand side is decaying faster since it is quadratic.

\subsubsection{The weak null structure, vector fields and Lie derivatives} To see the weak null structure we
introduce a {\it null frame} ${\mathcal N}$ of vectors
tangential to the outgoing light cones plus a vector perpendicular to the cone:
\begin{equation} 
	\underline{L}=\pa_t-\pa_r,
	\quad
	L=\pa_t+\pa_r,
	\quad
	S_1,S_2\in\mathbf{S}^2,
	\quad
	\langle S_i,S_j\rangle =\delta_{ij},
\end{equation}
It is well known
that, for solutions of wave equations, derivatives tangential to the outgoing light
cones $\overline{\pa }\in\mathcal{T}=\{L,S_1,S_2\}$ decay faster.
Since $Q_{\mu\nu}\!=Q_{\mu\nu}(m)$ satisfy the classical null condition
\beq \label{eq:nullcondestintro}
|Q_{\mu\nu}(\pa h,\pa k)|\les |\overline{\pa} h|\,|\pa k|+|\pa h|\, |\overline{\pa} k|.
\eq
Moreover, projecting derivatives onto the frame
\beq
	|\pa_\mu \phi-L_\mu \pa_q \phi|\lesssim |\overline{\pa} \phi|,
\qquad\text{where}\quad \pa_q=(\pa_r-\pa_t)/2,
	\quad L_\mu=m_{\mu\nu} L^\nu,
\eq
we see that the main term $P=P(m)$ has the following weak null structure:
\beq\label{eq:PPnullintro}
\big| P(\pa_\mu h,\pa_\nu k)- L_{\mu}L_\nu P(\pa_q h,\pa_q k)\big|\les |\overline{\pa} h| \,|\pa k|
+|\pa h|\,|\overline{\pa} k|.
\eq
Furthermore, from the wave coordinate condition
\beq\label{eq:WCCIntro}
|L_\nu L_\mu \pa_q H^{\mu\nu}|\lesssim |\overline{\pa} H|+|H|\,|\pa H|,
\eq
and expanding $P$ in a null frame using this it follows that
\beq\label{eq:Pestintro}
|P(\pa_q h,\pa_q h)|\lesssim {\sum}_{T\in\mathcal{T},U\in\mathcal{N}}\,|\pa h_{TU}|^2+ |\overline{\pa} h|\, |\pa h| + |h|\, |\pa h|^2.
\eq
The weak null structure is that the tangential components
of the right hand side of  \eqref{eq:EinsteinWaveintro} are decaying faster, due to \eqref{eq:nullcondestintro}-\eqref{eq:PPnullintro}, and the remaining component $F_{\underline{L}\underline{L}}\!\!=\!F_{\mu\nu} \underline{L}^\mu\underline{L}^\nu$ to highest order does not depend on that component of the solution $\pa_q h_{\underline{L}\underline{L}}$, due to \eqref{eq:Pestintro}.
To highest order there is no mutual interaction between the components, similar to the structure of  $\Box \phi\!=\!(\pa_t\psi_{\!})^2\!$, $\Box\psi\!=\!0$.

In order to estimate the solution we also need higher order equations for vector fields applied to it.
The vector fields ${\Omega}_{ab}\!={x}_a\pa_{b}-{x}_b\pa_a$, where ${x}_a\!=m_{ab}{x}^b\!\!$,
commute with ${\Box}$ and ${S}\!={x}^a\pa_a$ satisfy $[{\Box},{}_{\!}{S}]\!=\!-2{\Box}$.
Applying vector fields to $F_{\mu\nu}(m)[\pa h,\pa k]$ will
produce lower order terms that no longer have the weak null structure, but, as observed in \cite{L17}, if we instead apply Lie derivatives the weak null structure is preserved:
\begin{equation}
 {\mathcal L}_Z\,\big(  F_{\mu\nu}(m)[\pa h,\pa k]\big)
 =F_{\mu\nu}(m)[\pa \widehat{\mathcal L}_Z h,k]
 +F_{\mu\nu}(m)[\pa h,\pa \widehat{\mathcal L}_Z k].
\end{equation}

\subsubsection{Reduction to Schwarzschild at infinity}
We will make two reductions that will put us closer to  Schwarzschild  in the exterior and at null infinity. First we write
$h_{\mu\nu}\!=h^0_{\mu\nu}\!+h^1_{\mu\nu}$ and $H^{\mu\nu}\!\!=H_0^{\mu\nu}\!\!+H_1^{\mu\nu}\!\!$,
where
\beq
h^0_{\mu\nu}=\chi\big(\tfrac{r}{t+1}\big)Mr^{-1}\delta_{\mu\nu},
\quad\text{and}\quad
H_0^{\mu\nu}=-\chi\big(\tfrac{r}{t+1}\big)Mr^{-1}\delta^{\mu\nu},
\eq
and $\chi$ is a cutoff function such that $\chi(s)\!=\!1$ for
$s\!\geq\! 3/4 $ and $\chi(s)\!=\!0$ for $s\!\leq\! 1\!/4$. Then we change variables
\beq
\widetilde{t}=t,\qquad \widetilde{x}=\rs x/r, \qquad \text{where} \quad
\rs=r+{\chi}(\tfrac{r}{t+1}) M\ln{r},
\eq
and with  $x^0=t$, $\widetilde{x}^0=\widetilde{t}$, set
\beq
\widetilde{g}^{ab}=A^a_{\alpha} A^b_{\beta} g^{\alpha\beta}\!\!\!,\qquad
\widetilde{g}_{ab}=A^{\alpha}_{a}A^{\beta}_{b} g_{\alpha\beta},\qquad \wpa_a=A^{\alpha}_{a}\pa_\alpha,\quad\text{where}\quad
A^{a}_{\alpha}\!={\partial \widetilde{x}^{\,a}}/{\partial x^\alpha}\!,
\quad A^{\alpha}_{a}\!={\partial x^\alpha}/{\partial \widetilde{x}^{\,a}}\!.
\eq
Then
$\widetilde{h}^1_{ab}=A^{\alpha}_{a}A^{\beta}_{b} h^1_{\alpha\beta}$
satisfies
\begin{equation}\label{eq:EinsteinGenWaveintro}
\widetilde{\Box}^{\,\widetilde{g}}  \widetilde{h}^1_{cd} =F_{cd} (\widetilde{g}) [\wpa\widetilde{h}^1, \wpa \widetilde{h}^1]+\widetilde{\widecheck{T}}_{cd}+R_{cd}^{\,error},\qquad\text{where}
\quad \widetilde{\Box}^{\,\widetilde{g}}=\widetilde{g}^{ab}\wpa_a\wpa_b,
\end{equation}
and $\widetilde{H}_1^{ab}=A^a_{\alpha} A^b_{\beta} H_1^{\alpha\beta}$ satisfies
\begin{equation}\label{eq:approximategenwavecordinatecponditionintro}
\wpa_c\big(\widetilde{H}_1^{cd}\! - m^{cd} m_{ab}\widetilde{H}_1^{ab}\!/2\big)
=W^d(\widetilde{g})[\widetilde{H}_1,\wpa\widetilde{H}_1]+W^d_{error}.
\end{equation}
The error terms $R_{cd}^{\,error}$ and $W^d_{error}$ include terms coming from the mass error and from the change in coordinates, and generally decay faster. One way to see the structure of these terms is by treating \eqref{eq:EinsteinWaveintro} as a geometric quasilinear equation on the Minkowski spacetime and changing to the modified coordinates. This introduces error terms coming from covariant differentiation with respect to the new coordinates, which may therefore be expressed in terms of the Christoffel symbols $\widehat{\Gamma}$ for the new coordinates. These decay like $r^{-2}\ln r$ and are supported away from the spatial origin, and therefore the error terms which arise are straightforward to handle. We use this framework as it makes the weak null structure of Einstein's equations clear; alternatively, one could derive Einstein's equations directly from the  generalized wave coordinate condition \eqref{def:WCCTrueIntro}, see Appendices \ref{sec:GeneralizedWaveCurvature} and \ref{sec:AppB}.

The choice of coordinates ensures that the initial data for key components of $\wt{g}$ decay faster than $r^{-1}$ in a suitable null decomposition. The harmonic coordinate condition \eqref{eq:approximategenwavecordinatecponditionintro}, via the relation \eqref{eq:WCCIntro}, propagates this improved decay along incoming null characteristics. This improvement is necessary if one wants to uniformly bound the energy of the fields, as the analogous estimate appearing in \cite{LR10} gives slow energy growth even for solutions of the homogeneous geometric wave equation. The cost is that tangential components of $\wt{g}$ will decay slower than those for $g$, but due to the null structure these terms are paired with terms with faster decay. This improved null structure is preserved by Lie differentiation, see Section \ref{sec:higherorderwavecoord}.

\subsubsection{Higher order energies}
Let $\widetilde{Z}$ be any of the vector fields $\widetilde{\Omega}_{ab}$, $\widetilde{S}$, and $\wpa$, and for a multindex $I$ let  $\widetilde{Z}^I$ respectively $\Lie_{\widetilde{Z}}^I $ denote any combination of $|I|$ vector fields respectively Lie derivatives. We will be using energies with an additional exterior weight
\begin{equation} \label{eq:energydefEintro}
	E_N(t)={\sum}_{\vert I \vert \leq N\,\,}
	\int
	|\wpa\Lie_{\widetilde{Z}}^I \widetilde{h}^1 (t,x)|^2 w\, dx, \quad
	\text{where}
	\quad
	w(t,x)
	=\bigg\{
	\begin{aligned}
		&(1+|r^*\!\!-t|)^{1+2\gamma},\,\,\,
		&
		r^*\!>t,
		\\
		&\sim 1
		\quad
		&
		r^*\!\leq t,
	\end{aligned}
\end{equation}
for some $0<\gamma<1$.
These energies remains bounded for the linear homogeneous wave equations
$\widetilde{\Box}^{\widetilde{g}} \phi=0$, and the only small growth is caused by that the inhomogeneous term $F_{\mu\nu}$ in \eqref{eq:EinsteinGenWaveintro} doesn't satisfy the classical null condition. The extra weight in the exterior is meant to catch the extra decay of $h^1$ in the exterior, which in turn using the wave coordinate condition \eqref{eq:approximategenwavecordinatecponditionintro} gives
additional decay for the critical components controlling the geometry of the light cones.

Together with the energy estimate comes for free a
space time norm
\begin{equation}
	S_N(T)
	\!=
	\!\!{\sum}_{\vert I \vert \leq N\,\,}\int_0^T\!\!\!\int 	
	|\wpao
\Lie_{\widetilde{Z}}^I \wt{h}^{1} (t,x)|^2 w'dx dt,
	\quad
	\text{where}
	\quad
w^\prime(t,x)\!=\bigg\{\begin{aligned} &(1\!+2\gamma)(1\!+|r^*\!\!-t|)^{2\gamma}\!\!,\quad\!\! &r^*\!>t,\\
&2\mu (1\!+|r^*\!\!-t|)^{-1-2\mu}\!,\quad &r^*\!\leq t,\end{aligned}
\end{equation}
where $|\wpao \wt{h}|^2=|\swpa \wt{h}|^2 \!+|\wL \wt{h}|^2$ is the norm of the derivatives tangential to the outgoing curved light cones
$r^*\!\!-t=q^*$, where $|\swpa \wt{h}|^2\!=\!\sum_{\,i=1,2}|S_i \wt{h}|^2$ is the norm of the derivatives tangential to the sphere.
\subsubsection{Conditions on matter} The energy momentum tensor has to satisfy some smallness conditions and be compatible with the weak null structure for the metric. In order for it to be compatible with the energy estimate for the metric we assume that for a sufficiently small $\varepsilon$, the following  hold:
\begin{equation}\label{eq:L2Tboundintro}
\lVert\Lie_{\widetilde{Z}}^I\widecheck{T}(t, \cdot) w^{1/2}\rVert_{L^2(\mathbb{R}^3)} \lesssim {\varepsilon^2}{(1+t)^{-1}},\qquad\text{for}\quad |I|\leq N.
\end{equation}
It is also natural to assume that with $q=r-t$ and $q_+=q$ for $q > 0$ or $q_+=0$ for $q < 0$, we have
\beq \label{eq:L1Tboundintro}
 \|\Lie_{\widetilde{Z}}^J\widecheck{T}(t, \cdot) \langle q_+\rangle^\gamma \|_{L^1}
\lesssim \varepsilon^2 ,\qquad |J|\leq N-3.
\eq
We also assume that the following decay estimate holds for sufficiently small $\varepsilon$
\beq
|\Lie_{\widetilde{Z}}^K\widecheck{T}(t,x)|\lesssim \varepsilon^2\langle t+r^* \rangle^{-2}\langle t-r^*\rangle^{-1}\langle (t-r^*)_+\rangle^{-\gamma},\qquad |K|\! \leq \!N\!-\!6.\label{eq:decayTintro}
\eq
In addition we must assume that $T$ is compatible with the weak null structure of Einstein's equation;
for some  $s>0$ and all $T\in\mathcal{T}$, $U\in{\mathcal{N}}$ we assume that
\beq
|\Lie_{\widetilde{Z}}^K\widecheck{T}(t,x)|_{TU}\lesssim \varepsilon^2\langle t +r^*\rangle^{-2-s}\langle t-r^*\rangle^{-1+s}\langle (t-r^*)_+\rangle^{-\gamma},\qquad |K|\! \leq \!N\!-\!6.\label{eq:decayweaknullTintro}
\eq

\subsubsection{The existence theorem} We are now ready to state our first result.
\begin{theorem}\label{thm:EExistence} Suppose that the energy momentum tensor $\widecheck{T}$ satisfies the conditions \eqref{eq:L2Tboundintro}, \eqref{eq:L1Tboundintro},
\eqref{eq:decayTintro} and \eqref{eq:decayweaknullTintro} in a spacetime decaying weakly to Minkowski for some $N\!\geq\! 11$ and $0\!<\!\gamma\!<\!1$. Then there is a $\varepsilon_0\!>\!0$ such that if
$\varepsilon\!<\!\varepsilon_0$ and
\beq
 E_N(0)+ M^2\leq \varepsilon^2,
\eq
then  \eqref{eq:EinsteinWaveintro} has a global solution and there is a constant $C$ such that for all $t>0$
\beq
E_N(t)+S_N(T)\leq C \varepsilon^2(1+t)^{C\varepsilon}.
\eq
Moreover, with $\overline{w}(q^*)=\langle q^*\rangle^{1-\delta}\langle q_+^*\rangle^{\!\gamma}\!$, for any $\delta>0$, we have for $|I|\leq N-7$
\beq
(1+t+|x|) \,|(\pa \widehat{\mathcal L}_{\widetilde{Z}}^I
h^1)_{TU}(t,x)\, \overline{w}(q^*)|\les \varepsilon ,
\eq
\beq
(1+t+|x|) \,|\pa \widehat{\mathcal L}_{\widetilde{Z}}^I
h^1(t,x)\, \overline{w}(q^*)|
\les \varepsilon \Big(1+ \ln\frac{t\!+\!r^*}{\langle t\!-\!r^*\rangle}\Big),
\eq
and
\beq
|\pa \widehat{\mathcal L}_Z^I{H}_1|_{L\mathcal T}
+|\pa \trs\widehat{\mathcal L}_Z^I {H}_1 |\les
\varepsilon(1+t+r)^{-2+2\delta}\langle q_+^*\rangle^{-\gamma}\langle q^*\rangle^{-\delta}.
\eq
\end{theorem}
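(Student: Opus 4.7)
The plan is to establish Theorem \ref{thm:EExistence} by a continuity argument for the reduced system \eqref{eq:EinsteinGenWaveintro}--\eqref{eq:approximategenwavecordinatecponditionintro} in the generalized wave coordinates $(\widetilde t,\widetilde x)$, adapting the Lindblad--Rodnianski scheme to this setting. Two structural features are essential: first, the Lie-derivative commutation identity recorded just above the theorem, which guarantees that after commuting any number of $\widehat{\mathcal L}_{\widetilde Z}$ into \eqref{eq:EinsteinGenWaveintro} the semilinear source still decomposes into a $P$-term and classical null forms; second, the fact that $\widetilde\Box^{\,\widetilde g}$ is a closer perturbation of $\widetilde\Box=m^{ab}\wpa_a\wpa_b$ than $\Box^{\,g}$ is of the Minkowski wave operator, so that the commutators $[\widetilde Z,\widetilde\Box^{\,\widetilde g}]$ produce faster-decaying errors. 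On the maximal existence interval $[0,T^*)$ I would assume the bootstrap
\beq
E_N(t)\le 2C_0\varepsilon^2(1+t)^{2C_0\varepsilon},
\eq
together with bootstrap versions of the claimed pointwise estimates for $\pa\widehat{\mathcal L}_{\widetilde Z}^I h^1$ and for $|\pa\widehat{\mathcal L}_{\widetilde Z}^I H_1|_{L\mathcal T}+|\pa\trs\widehat{\mathcal L}_{\widetilde Z}^I H_1|$, and aim to improve the constants by a factor $1/2$.

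From the bootstrap $L^2$ control, a weighted Klainerman--Sobolev inequality yields pointwise decay of $\pa\widehat{\mathcal L}_{\widetilde Z}^I\widetilde h^1$ at a cost of roughly six vector fields, with the exterior weight $w$ producing the additional $\langle q_+^*\rangle^{-\gamma}$ factor. The better behavior of the $L\mathcal T$ and trace components of $H_1$ is extracted by reading the generalized wave-coordinate condition \eqref{eq:approximategenwavecordinatecponditionintro} as a transport equation for $L_\nu\wpa_q\widetilde H_1^{\nu\cdot}$: the forcing is either tangential to the cones or quadratic in $\widetilde H_1$, plus the short-range error $W^d_{error}$; integrating inward from spatial infinity along $\wpa_q$ in the exterior and using the smallness $E_N(0)+M^2\le\varepsilon^2$ produces the stated weighted pointwise estimate on $H_1$.

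The energy estimate itself is obtained by applying $\widehat{\mathcal L}_{\widetilde Z}^I$ to \eqref{eq:EinsteinGenWaveintro} and contracting with $\wpa_t\widehat{\mathcal L}_{\widetilde Z}^I\widetilde h^1\cdot w$. The source decomposes into four pieces: (i) the semilinear self-interaction, which by the Lie-derivative identity is a sum of $P$'s and null forms applied to $\widehat{\mathcal L}_{\widetilde Z}^J\widetilde h^1$ with $|J|\le|I|$; using \eqref{eq:nullcondestintro}--\eqref{eq:Pestintro}, the bootstrap pointwise bounds, and substituting the wave-coordinate identity for the dangerous $\pa_q \widetilde h^1_{\Lb\Lb}$ factor, the contribution is at most $C\varepsilon(1+t)^{-1}E_N(t)$ plus an integrable remainder; (ii) the matter term $\widehat{\mathcal L}_{\widetilde Z}^I\widetilde{\widehat T}$, handled by Cauchy--Schwarz against \eqref{eq:L2Tboundintro}; (iii) the change-of-variables error $R^{error}_{cd}$, which is localized to the cutoff transition region and decays faster than critical; and (iv) the commutator $[\widehat{\mathcal L}_{\widetilde Z}^I,\widetilde\Box^{\widetilde g}]\widetilde h^1$, where the flat part vanishes since $\widetilde Z$ commutes with $\widetilde\Box$ up to the harmless $-2\widetilde\Box$ factor from $\widetilde S$, and the remainder is schematically $\wpa(\widetilde g-m)\cdot\wpa^2\widehat{\mathcal L}_{\widetilde Z}^J\widetilde h^1$, controllable by the improved commutator structure in the new coordinates together with the bootstrap bounds on $\widetilde h^1$. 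Summing gives a Gronwall inequality that yields $E_N(t)\le C_0\varepsilon^2(1+t)^{C_0\varepsilon}$, improving the bootstrap.

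The principal obstacle is closing the top-order estimate for the $\Lb\Lb$ component, where one has no pointwise slack and must exploit \eqref{eq:Pestintro} together with \eqref{eq:approximategenwavecordinatecponditionintro} to trade transverse derivatives of the bad components of $\widetilde H_1$ for tangential derivatives and cubic remainders; the compatibility assumption \eqref{eq:decayweaknullTintro} on $\widehat T_{TU}$ is precisely what ensures the matter does not destroy this structure, while \eqref{eq:L1Tboundintro} and \eqref{eq:decayTintro} feed the transport estimate for $H_1$ and the pointwise bounds on lower-order iterates of $\widetilde h^1$. This single interaction is the source both of the inevitable $(1+t)^{C\varepsilon}$ growth in $E_N$ and of the logarithmic factor $1+\ln((t+r^*)/\langle t-r^*\rangle)$ in the full pointwise bound on $\pa\widehat{\mathcal L}_{\widetilde Z}^I h^1$. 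Once $E_N$ is controlled globally, the three displayed pointwise bounds in the theorem follow from Klainerman--Sobolev, the wave-coordinate identity, and interpolation between the interior and exterior decay rates.
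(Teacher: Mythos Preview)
Your outline captures the overall bootstrap architecture correctly, but it contains a genuine gap in the mechanism for obtaining the displayed sharp pointwise bounds, and it omits a quantity that is essential for closing the energy estimate.

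The sharp bounds with the weight $\overline{w}(q^*)=\langle q^*\rangle^{1-\delta}\langle q^*_+\rangle^{\gamma}$ cannot be obtained from Klainerman--Sobolev. The weighted Klainerman--Sobolev inequality applied to $\wpa\widehat{\mathcal L}_{\widetilde Z}^I\widetilde h^1$ gives only $\langle q^*\rangle^{-1/2}$ interior decay for derivatives, which falls short of the needed $\langle q^*\rangle^{-1+\delta}$ by half a power. The paper uses two further ingredients. First, H\"ormander's $L^1$--$L^\infty$ estimate for the fundamental solution of $\Box^*$ is applied to $\Box^*\widehat{\mathcal L}_{\widetilde X}^I\widetilde h^1$, with the inhomogeneity bounded in weighted $L^1$ via the bootstrap energy; this upgrades the undifferentiated bound to $|\widehat{\mathcal L}_{\widetilde Z}^J\widetilde h^1|\lesssim\varepsilon(1+t)^{\delta}(1+t+|q^*|)^{-1}\langle q^*_+\rangle^{-\gamma}$, removing the interior $\langle q^*_-\rangle^{1/2}$ growth that Klainerman--Sobolev alone would leave. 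Second, the sharp first-order bounds are obtained by writing $\Box^*\phi=r^{*-1}4\pa_{s^*}\pa_{q^*}(r^*\phi)+r^{*-2}\triangle_\omega\phi$ and integrating $\pa_{s^*}\pa_{q^*}(r^*\phi_{UV})$ along outgoing characteristics from the boundary of a conical region; this $L^\infty$--$L^\infty$ estimate is what produces the weight $\overline w$ and, for the $\underline L\underline L$ component, the logarithm $1+\ln((t+r^*)/\langle t-r^*\rangle)$ after integrating the borderline $P$-term and the matter term using \eqref{eq:decayTintro}. Your claim that these bounds follow from Klainerman--Sobolev plus interpolation is not correct.

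Separately, your bootstrap carries only $E_N$, but the paper's energy inequality simultaneously controls the spacetime tangential energy $S_N(T)=\sum_{|I|\le N}\int_0^T\!\!\int|\overline{\wpa}\widehat{\mathcal L}_{\widetilde Z}^I\widetilde h^1|^2w'\,dx\,dt$, which comes from the weight derivative $w'$ in the multiplier identity. The null-form and wave-coordinate error terms are not ``integrable remainders'' in time that can be bounded by $E_N$ alone: they are handled by Cauchy--Schwarz in spacetime and absorbed into $S_N$ on the right, using $w'\sim w\langle q^*\rangle^{-1}\langle q^*_-\rangle^{-2\mu}$. Without $S_N$ in the bootstrap and in the energy inequality, the tangential contributions do not close.
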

\begin{remark}
It is in general impossible to fully decouple the analysis of the metric and the fields, as the conditions \eqref{eq:L2Tboundintro}, \eqref{eq:L1Tboundintro}, \eqref{eq:decayTintro} and \eqref{eq:decayweaknullTintro} will in general only hold when $g$ satisfies some weak energy and decay bounds. In Theorem \ref{thm:EFEgenericstability} we more precisely detail the nature of this coupling.
\end{remark}
\begin{remark}
Even in the case of vanishing matter fields these decay estimates are improvements of the decay estimates in \cite{LR10} and the proof here is simplification of the proof in \cite{LR10}. This is due to that the error terms are more efficiently controlled by using coordinates adapted to the outgoing Schwarzschild light cones instead of the Minkowski light cones, and that the geometric structure is better preserved by using Lie derivatives instead of vector fields.
\end{remark}

\subsection{Einstein with Maxwell-Klein-Gordon matter}
As an application of this general theorem we couple Einstein's equations to the massless Maxwell-Klein-Gordon system, which has been studied extensively in the Minkowski spacetime and for which the associated trace-reversed energy momentum tensor $\widecheck{T}$ satisfies the conditions \eqref{eq:L2Tboundintro}, \eqref{eq:L1Tboundintro}, \eqref{eq:decayTintro} and \eqref{eq:decayweaknullTintro} for small initial data. We first define the system and state stability results in perturbed spacetime, as proven in \cite{Ka18}.

\subsubsection{The electromagnetic field and complex scalar field}
Given a real connection one-form $\bA_\alpha$ and the consequent quantities ${\bf F}=d\bA, D_\alpha = \nabla_\alpha + i\bA_\alpha$, where $\nabla_\alpha$ is covariant differentiation, $\{\bphi,{\bf F}\}$ is a solution to the charge-scalar field system if
\begin{align}\label{eq:MKG}
\Box^{\mathbb{C}}_g \bphi &= D^\alpha D_\alpha\bphi = 0, \\
\label{eq:CSFField}\nabla^\beta {\bf F}_{\alpha\beta} &= \mathfrak{I}(\bphi \overline{D_\alpha\bphi}),\\
\label{eq:PotentialStatement}\nabla^\beta(^*{\bf F})_{\alpha\beta} &= 0,
\end{align}
where $\mathfrak{I}, \mathfrak{R}$ denote the imaginary and real parts of a quantity respectively. From a physical perspective it is useful to introduce the current
\[
J_\mu = \mathfrak{I}(\bphi \overline{D_\mu\bphi}).
\]
 The energy momentum tensor $T$ corresponding to a solution is given by
\begin{equation}\label{eq:energymomentumchargedscalarfield}
T_{\alpha\beta}[\bphi, {\bf F}] = \mathfrak{R}\bigtwo(D_\alpha\bphi \overline{D_\beta\bphi} - g_{\alpha\beta}D_\gamma\bphi \overline{D^\gamma\bphi}/2\bigtwo) + {\bf F}_{\alpha\gamma}{\bf F}_\beta^{\,\,\gamma} - \frac14 g_{\alpha\beta}{\bf F}_{\gamma\delta}{\bf F}^{\gamma\delta}.
\end{equation}

There is an ambiguity in the choice of $\bA$, called gauge freedom,
 which does not affect the system on a physical level,
 and which it is therefore not necessary to resolve. There is a choice of gauge, called Lorenz gauge, for which $\bA$ has a clear weak null structure and components
 decay like solutions of the corresponding asymptotic system, as proven by Candy-Kauffman-Lindblad \cite{CKL19}
and He \cite{H21}.

\begin{remark}
A characteristic difficulty for the Maxwell-Klein-Gordon system, even in Minkowski, is the poor decay of the right hand side of \eqref{eq:CSFField}. Heuristically, the presence of an undifferentiated $\bphi$ on the right hand side implies that the right hand side of \eqref{eq:CSFField} satisfies worse decay bounds in the interior than the right hand side of \eqref{eq:EinsteinWaveintro}. Additionally, in order to prove a spacetime energy bound for the current vector close to the light cone, we will need to take full advantage of the null structure of the system. In order to compensate for these poor decay rates, we will need to prove a global energy bound which allows for peeling estimates, in the spirit of a conformal Morawetz-type estimate. This difficulty was not present in \cite{LR10}, as the improved decay of the semilinear terms in the interior means that the argument closes even for slowly growing energy, but for MKG there is no room to spare.
\end{remark}

Another characteristic difficulty for $\bF$, which also has an analogue in the behavior of Einstein's Vacuum Equations, is the decay of $\bF$ in space:
\beq
\label{eq:fielddata0}
{\bf F}_{0i}(0,x) \sim C\omega_i|x|^{-2} + o(|x|^{-2-\gamma}),\qquad \pa \bphi = o(|x|^{-2-\gamma}).
\eq
The value of $C$ scales with the charge $\bold{q}[\bF]$, which depends on $\bphi$, and is in general not 0 even if $\bphi$ is compactly supported. This decay rate causes issues when attempting to establish an energy estimate, as the (fractional) conformal Morawetz estimate we use requires a decay rate of $O(r^{-2-\gamma})$ for some $\gamma>0$. In order to deal with this we construct a model field ${\bF}^0$ with the same charge as ${\bF}$, and run analysis on the remainder quantity ${\bF}^1 = {\bF}-{\bF}^0$, which decays faster as $r\to\infty$.
Then, for a smooth increasing function $\chi_{\bF}(x)$ which is identically 0 for $x\leq 0$ and 1 for $x\geq 1$, we define the 2-forms $\bF^0, \bF^1$ by
\begin{equation}
\bF^0_{0i} = \frac{\omega_iq}{4\pi}\frac{\chi_{\bF}(r^*-t-2)\partial_r(r^*)}{r^{*2}}, \qquad \bF^0_{ij} = \bF^0_{00} =0, \qquad \bF^1 = \bF - \bF^0.
\end{equation}

\subsubsection{The existence theorem for the Einstein-Maxwell-Klein-Gordon system}
For the fields $\{\bphi, \bF\},$ we define the weighted higher order energy
\begin{equation}
	Q_N(T)={\sup}_{0\leq t\leq T}{\sum}_{\vert I \vert \leq N\,\,}
	\int
	|\!D_{\widetilde{Z}}^I\widetilde{D} \bphi(t,x)|^2\!\!
+|\widetilde{Z}^{I} \bF^1 (t,x)|^2\,\,  w\, dx,
\end{equation}
where $\bF^1$ is the charge-free part of $\bF$, and $q$ is the charge of $\bF$, both of which are defined in Section \ref{sec:MKGIVPdefs}. As with $E_N(T)$, we are able to refine our energy bound by first subtracting off the term with the worst decay in space.
Additionally for a 2-form $\mathbf{G}$, we define the null decomposition
\begin{equation}
\alpha_i[\mathbf{G}] = \mathbf{G}[\wL, \widetilde{S}_i], \qquad \underline{\alpha}_i[\mathbf{G}] = \mathbf{G}[\wuL, \widetilde{S}_i], \qquad \sigma[\mathbf{G}] = \frac12\mathbf{G}[\widetilde{S}_1, \widetilde{S}_2], \qquad \rho[\mathbf{G}] = \frac12\mathbf{G}[\wuL, \wL],
\end{equation}
where $\{\wL, \wuL, \widetilde{S}_1. \widetilde{S}_2\}$ is the modified null frame defined in \eqref{def:ModifiedNullFrame}. By antisymmetry this fully characterizes $\mathbf{G}$.

\begin{theorem} \label{thm:EMKGIntro} Einstein's equations in harmonic coordinates \eqref{eq:EinsteinWave0} with energy momentum tensor \eqref{eq:energymomentumchargedscalarfield} coupled with the massless Maxwell-Klein-Gordon system
\eqref{eq:MKG}, \eqref{eq:CSFField}, \eqref{eq:PotentialStatement} have global solutions for sufficiently small and smooth asymptotically flat initial data. Specifically, given a real $\gamma$ satisfying $\frac12 < \gamma < 1$, and an integer $N$ with $N > 11$, there exists an $\varepsilon_0 > 0$ such that for all $\varepsilon < \varepsilon_0$, the coupled system has global solutions in time whenever
\beq
E_N(0) + Q_N(0) + M^2 + q^2\leq \varepsilon^2.
\eq
These solutions satisfy the global in time bounds
\beq
E_N(t)+S_N(t)\leq C_N\varepsilon^2(1\!+t)^{C_N^{\,\prime}\varepsilon}\quad\text{and}\quad  Q_N(t)\leq C_N\varepsilon^2.
\eq
Taking $s < (1+\gamma)/2$, then the bounds \eqref{eq:L2Tboundintro}--\eqref{eq:decayweaknullTintro} hold, and consequently the bounds on $h^1$ in Theorem \ref{thm:EExistence} hold as well. For $|I|\leq N-4$, the field quantities $\bphi$ and $\bF^1$ satisfy the bounds
\begin{subequations}
\begin{align}
\!\!\!\!\!\!|D_{\widetilde{L}} D_{\wt{Z}}^I\bphi| + |\alpha[\Lie_{\wt{Z}}^I \bF^1]|
&\leq C \varepsilon \langle t+r^*\rangle^{-2}\langle t-r^*\rangle^{1/2-s}
\langle (r^*-t)_+\rangle^{s-1/2-\gamma},\\
\!\!\!\!\!\!|\slashed{D}D_{\wt{Z}}^I \bphi| + |\rho[\Lie_{\wt{Z}}^I \bF^1]| + |\sigma[\Lie_{\wt{Z}}^I \bF^1]|&\leq C\varepsilon\langle t+r^*\rangle^{-1-s}\langle t-r^*\rangle^{-1/2}\langle (r^*-t)_+\rangle^{s-1/2-\gamma}, \\
\!\!\!\!\!\!\langle t-r^* \rangle^{-1} |D_{\wt{Z}}^I\bphi| + |D_{\widetilde{\underline{L}}}D_{\wt{Z}}^I \bphi| + |\underline{\alpha}[\Lie_{\wt{Z}}^I \bF^1]|&\leq C \varepsilon\langle t+r^*\rangle^{-1}\langle t-r^*\rangle^{-1/2-s}\langle (r^*-t)_+\rangle^{s-1/2-\gamma}.
\end{align}
\end{subequations}
\end{theorem}


\subsubsection{History of related results}
An early global existence result for the Maxwell-Klein-Gordon system with small initial data in the Lorenz gauge was given by Choquet-Bruhat-Christodoulou \cite{CBC}. However, this method required $H^s$ bounds for components of the potential $\bA$, which was incompatible with a nonzero charge, and furthermore, did not give any decay for solutions of the system. Eardley-Moncrief \cite{EM1}-\cite{EM2} showed global existence for general initial data bounded in $H^s$ in a gauge adapted to the light cone. These existence results were later refined by Klainerman-Machedon \cite{KlMa}, who improved the requirements on the the initial decay. However, these results did not give any insight on the rate of decay in time. Furthermore, the initial bounds were strongly dependent on the choice of gauge.

A study of asymptotic behavior of the system with small initial data was undertaken by Shu \cite{Shu} in the massless case, and in the massive case by Psarrelli \cite{Ps}. In each case, compact support of the initial data is assumed, modulo the presence of a charge. We restrict our discussion to the massless system, as the massive system is more closely modelled by the Klein-Gordon system, and so the treatment of asymptotics is substantially different. The approach in \cite{Shu}, though heuristic, provided a model for more complete global stability results and asymptotics for small initial data by Lindblad-Sterbenz \cite{LS} and Bieri-Miao-Shahshahani \cite{BMS}.

Asymptotics for large initial data were determined more recently by Yang \cite{Y16} and Yang-Yu \cite{YY19}. The first of these papers removed the assumption of smallness for the field $\bF$, and the second removed it for $\bphi$ as well, though in each case the fields were assumed to be in certain weighted Sobolev spaces. Results for the massive case with small initial data (not necessarily compactly supported) were shown for Klainerman-Wang-Yang \cite{K1} and for large $\bF$ by Fang-Wang-Yang \cite{FWY}.

\subsection{Structure of the paper and strategy of the proof}
In Section \ref{sec:Asymptotics} we give a heuristic outline of our approach, in the form of a detailed analysis of the expected asymptotic behavior of the system. This includes first-order corrections to account for the ADM mass of the metric and the total charge of the field, as well as the structure of the metric in the asymptotically Schwarzschild coordinates. Section \ref{sec:EEWaveCoordinatesStructure} gives an outline of the null structure of Einstein's equations. In Section \ref{sec:masssubtractionsection}, we give precise definitions for the ADM mass and charge terms which we subtract off. In Section \ref{sec:Genralizedwavecoord}, we precisely state the change of coordinates that we use, and give the structure of the metric in these coordinates.

In Section \ref{sec:Genralizedwavecoordsection} we derive a more precise form for Einstein's equations and for the wave coordinate condition in asymptotically Schwarzschild coordinates, and determine the structure of the first-order error terms which arise from the change of coordinates. In Section \ref{sec:higherordereinstein} we commute vector fields through the wave equation and wave coordinate condition, to determine higher order equations with the same geometric structure. Sections \ref{sec:EnergyEstimates} and \ref{sec:DecayEstimates} respectively contain the weighted energy and decay estimates for the wave equation, which will be applied to Einstein's equations. In Section \ref{sec:ChrisResults} we restate the main result of Kauffman \cite{Ka18}, which includes existence results for the Maxwell-Klein-Gordon system in a curved spacetime assuming weak energy and decay bounds for the metric.

Sections \ref{sec:StatementandBootstrap}-\ref{sec:ProofMetricEnergy} contain the full proof of Theorem \ref{thm:EMKGIntro}. Additionally, in Section \ref{sec:TheoremGenericProof} we show how this may be extended to the ``black box" result given in Theorem \ref{thm:EExistence} by proving Proposition \ref{prop:theL1bound} and the results of Section \ref{sec:sharpdecayfields} for a given system of field equations. We outline the structure of the proof here:
\begin{enumerate}
\item In Section \ref{sec:StatementandBootstrap} we state the precise version of the theorems and give the overall bootstrap argument that will rely on estimates from the following sections. We start by assuming weak energy bounds that we will improve
    by the end of the argument closing the proof.
\item In Section \ref{sec:ProofMetricDecay} we prove weak decay estimates for the metric which follow almost directly from the assumed energy bounds and the structure of the wave equation.
\item In Section \ref{sec:wavecoordbounds} we prove improved decay estimates and $L^2$  bounds for components of the metric that can be controlled by the wave coordinate condition.
\item In Section \ref{sec:sharpdecayfields} we use the estimates obtained so far for the metric to deduce sharp decay and sharp energy bounds for the fields from the theorem stated in Section \ref{sec:ChrisResults}. The improved energy bounds close the bootstrap argument for the fields.
\item In Section \ref{sec:metricdecaysharp} we use the decay estimates for the metric and fields in the wave equation for the metric to obtain the sharp decay estimates for the metric.
\item In Section \ref{sec:ProofMetricEnergy} we use the sharp decay estimates for the metric and the energy bounds for the fields to finally get back the sharp energy bounds, which closes the bootstrap argument for the metric.
\end{enumerate}

In Appendix \ref{sec:GeneralizedWaveCurvature} we derive an expression for the Ricci curvature for a more generalized wave coordinate condition, which we use in Appendix \ref{sec:AppB} to derive Einstein's equations for generalized wave coordinates. In Appendix \ref{sec:AppC} we outline a method to improve our approximation of the asymptotic behavior of $h$ in the interior.

\section{Asymptotics of  different types
of  terms and error terms}\label{sec:Asymptotics}
\subsection{Asymptotics} In this section we will use asymptotics for solutions of wave equations
to give a heuristic argument for why certain nonlinear effects
are under control.

\subsubsection{General nonlinear wave equations and the weak null condition}
\label{sec:notnull} General wave equations with quadratic nonlinearities
may blow up for large times even for small data
as shown in John \cite{J1,J2} for
 $\Box\,\phi\!=\phi_t^2
 $ or $\Box \,\phi\!=\phi_t
 \triangle_x\phi$.
The {\it null condition} is an algebraic condition on the structure of
the nonlinear terms
that  guarantees small data global existence, e.g.  $\Box
 \,\phi\!=\!\phi_t^2\!-|\nabla_x\phi|^2\!$, see Christodoulou \cite{C86} and
Klainerman \cite{K1}.
The null condition is however not satisfied for  Einstein's equations
in wave coordinates.
 Neither is it satisfied for the
 quasilinear equation
$\label{eq:hormander4}
\Box\, \phi= c^{\,\alpha\beta} \phi\,\,\pa_\alpha\pa_\beta \phi,
$
that resembles the quasilinear terms in Einstein's equations, yet
global existence was proven in Lindblad \cite{L92,L08} and Alinhac \cite{A03} for these equations.
A simple semilinear system that violates the null condition yet trivially
has global solutions is
\beq\label{eq:semilinearmodel}
\Box \,\psi\!=(\pa_t \phi)^2\!,\qquad\text{where}\qquad
\Box\,\phi\!=0.
\eq
The semilinear terms in Einstein's equations resemble this system, see Section \ref{sec:inhom}.
Based on this Lindblad-Rodnianski \cite{LR03} introduced
the {\it weak null condition} and showed that
it was satisfied for Einstein's equations.

\subsubsection{Asymptotics, decay along the light cone and total decay}\label{sec:asymptotics}
 A solution of a linear homogeneous wave equation $\Box\,\phi=0$
 with smooth  initial data decaying like $r^{-1}$,
 decays like $t^{-1}$ and has a radiation field
\begin{equation}\label{eq:radiationfield}
\phi(t,x)\sim \mathcal{F}(r-t,\omega)/r,
\qquad\text{where}\quad | \mathcal{F}(q,\omega)|+\langle q\rangle | \mathcal{F}_q(q,\omega)|\lesssim 1,\qquad
\langle q\rangle= 1+|q|.
\end{equation}
For spherically symmetric compactly supported data this is exact for large $r$.
The same is true if
\beq\label{eq:decayforradiation}
|\Box \,\phi_{{}_{\,}} |+r^{-2}|{\triangle}_\omega \phi|\lesssim
r^{-1} \langle t+r\rangle^{-1-\eps}\langle t-r\rangle^{-1+\eps}\langle (r-t)_+\rangle^{-\eps},\qquad
\varepsilon>0,
\eq
and data decays like $r^{-1}$. This is seen by expressing the wave operator in spherical
coordinates:
\beq\label{eq:sphericalwaveoperator}
\Box\, \phi =-r^{-1}(\pa_t+\pa_r)(\pa_t-\pa_r)(r\phi)+r^{-2}{\triangle}_\omega \phi,
\eq
 and integrating, in the $t\!+_{\!}r$ direction and in the $t\!-\!r$ direction, to obtain
a bound for $r\phi$ and the asymptotics \eqref{eq:radiationfield}. The decay in
\eqref{eq:decayforradiation} separates into decay along the light cone in the $t\!+_{\!}r$ direction and decay
away from the light cone in the $t\!-\!r$ direction,  the sum of the two we call the total decay or
homogeneity. The total decay of the solution $\phi$ is one and since differentiating twice
will decrease the homogeneity by two this is consistent with a total decay of three
for \eqref{eq:decayforradiation}. Note also how the decay in $t\!-\!r$ helped but only up to
integrable decay. In the nonradial case when ${\triangle}_\omega \phi\neq 0$ the  integration along
characteristics can be replaced by the energy integral method in which case the bound for
the tangential Laplacian  is not needed.

\subsubsection{Sources along light cones}\label{sec:sources}
However, general quadratic inhomogeneous terms as in \eqref{eq:semilinearmodel} do not decay enough
along the light cone for \eqref{eq:decayforradiation} to hold.
In fact, by \eqref{eq:radiationfield}
\beq\label{eq:inhomogeneitywithradiationfield}
\phi_t(t,x)^2\sim \mathcal{F}_q(r-t,\omega)^2\!/r^2 ,\qquad
\text{where}\quad
| \mathcal{F}_q(q,\omega)|\lesssim  \langle q\rangle^{-1}.
\eq
The asymptotics for the wave equation with such sources  was studied in Lindblad \cite{L90}:
\begin{equation} \label{eq:source}
-\Box \,\psi=n(r-t,\omega)/r^2.
\end{equation}
The solution
is asymptotically given by a formula
which leads to a log correction in the asymptotics:
\beq \label{eq:approximatesource}
\psi(t,r\omega)
\sim \int_{r-t}^\infty \frac{1}{2r}\ln{\Big|\frac{t+q+r}{t+q-r}\Big|} \, n(q,\omega)\, dq
\sim \ln{|r|}\,\mathcal{F}_{\log}(r-t,\omega) /r+\mathcal{F}(r-t,\omega)/r,\qquad\text{when}\quad r\sim t,
\eq
In fact applying the wave operator \eqref{eq:sphericalwaveoperator} to the first expression in
\eqref{eq:approximatesource}
  shows that \eqref{eq:source} holds up to a nonspherical error bounded by
\eqref{eq:decayforradiation} which hence correspond to a solution with a radiation field without a logarithm, see Proposition 7.2 in \cite{L17},

Only certain combinations of quadratic terms
that satisfy the condition decay better, e.g.
\beq\label{eq:inhomogeneitynullcondwithradiationfield}
\phi_t(t,x)^2-|\nabla_x \phi|^2\sim \mathcal{F}_q(r-t,\omega)^2\!/r^2 -\mathcal{F}_q(r-t,\omega)^2\!/r^2
\sim r^{-3}\langle t-r\rangle^{-1},
\eq
but semilinear terms in Einstein's equations 
 behave like
\eqref{eq:semilinearmodel} and hence will produce logarithms.

\subsubsection{The asymptotics in the wave coordinate condition}
Assuming that we have asymptotics for $H^{\mu\nu}\!$
of the form $\mathcal{H}^{\mu\nu}(t\!-\!r,\omega)/r$ plus a similar term multiplied with a log,
in order for this to
be compatible with the wave coordinate condition
\eqref{eq:approximatewavecordinateconditionintro} we must have the following relation:
\beq
L_\mu\pa_q \big(\mathcal{H}^{\mu\nu}
 -\tfrac{1}{2} m^{\mu\nu} m_{\alpha\beta} \mathcal{H}^{\alpha\beta}\big)=0,
\eq
where $L_0=-1$ and $L_i=\omega_i$, for $i=1,2,3$. In particular
$L_\mu L_\nu \pa_q \mathcal{H}^{\mu\nu} =0$.
Since by the asymptotic flatness condition \eqref{eq:asymptoticallyflatdata}
 $ \mathcal{H}^{\mu\nu}L_\mu L_\nu=-2M$ and
 $ \mathcal{H}_{\log}^{\mu\nu}L_\mu L_\nu=0$ when $t=0$ it follows that
 \beq\label{eq:HLLRadiation}
 \mathcal{H}^{\mu\nu}L_\mu L_\nu  =-2M.
 \eq

\subsubsection{The principal quasilinear part of the wave operator}
If we plug in the expansion into the wave operator we get that, up to lower order terms involving derivatives tangential to the outgoing light cone that decay better,
\beq\label{eq:HLLRadiationWaveOperator}
\Box^{\,g}-\Box=H^{\alpha\beta}\pa_\alpha\pa_\beta
\sim -  4^{-1} r^{-1}\mathcal{H}^{\alpha\beta}L_\alpha L _\beta \pa_q^2
=   2^{-1}r^{-1}M \pa_q^2.
\eq

\subsection{Inhomogeneous error terms}
By Section \ref{sec:sources}  general quadratic inhomogeneous terms, e.g \eqref{eq:inhomogeneitywithradiationfield},
just fail to decay enough
whereas by  Section \ref{sec:asymptotics} terms with any additional decay
along the light cone, e.g. \eqref{eq:inhomogeneitynullcondwithradiationfield},
will not qualitatively affect the asymptotic behaviour of the solution.
Apart from possible logarithmic factors we expect the solution to decay
like linear wave $\phi\sim t^{-1}$ and
$\pa\phi\sim t^{-1}\langle t-r\rangle^{-1}\!$ and hence we expect the
quadratic inhomogeneous terms to decay like $\sim t^{-2}\langle t-r\rangle^{-2}$. In fact, a closer analysis shows that the components of the metric with additional logarithmic factors are not present in the quadratic semilinear nonlinearity for Einstein's equations since the system satisfies the {\it weak null condition}.
Anything with additional decay is to be considered a negligible error term.
The first type of
error term has additional decay everywhere $\sim t^{-2-a}\langle t-r\rangle^{-2}$, $a>0$
and the second type has additional decay along the light cone but the same total decay
$\sim t^{-2-b}\langle t-r\rangle^{-2+b}$, $b>0$.

\subsubsection{Error terms with additional decay everywhere}\label{sec:cubicerrors}
  This is in particular true for cubic or higher order terms, which we will denote
  {\it cubic error terms}.
Moreover changing coordinates, see Section \ref{sec:coordchange}, will produce lower order terms
which additional decaying factors, which we will call
{\it covariant error terms}, see Section \ref{sec:covariant}.
 It is easy to estimate terms with
additional  decay everywhere.

\subsubsection{Error terms with additional decay along the light cone}\label{sec:lightconedecay}
There are two types of inhomogeneous quadratic terms that, although they decay faster along
the light cone, do not have faster total decay.  The first are terms that satisfy the classical
null condition e.g.
\eqref{eq:inhomogeneitynullcondwithradiationfield}.
The second are  terms produced when we subtract off a term to pick up the leading order
spatial decay from the mass in initial data, see Section \ref{sec:exteriorH0}.
Since these are proportional to the mass we will call these {\it mass error terms}.
Although by Section \ref{sec:asymptotics} solutions to the wave equation with
inhomogeneous terms with
additional decay along the light cone will not distort the asymptotics along the light cones,
we will also need additional exterior decay and for this we need the inhomogeneous terms
to decay additionally in the exterior, see Section \ref{exteriordecaymass}.

\subsubsection{Commutator errors with no additional decay}\label{sec:commutatorsflat} The most difficult quadratic error terms show up when one commutes
vector fields $Z$ with the reduced wave operator, in order to obtain wave equations also for product of
vector fields
applied to the solution $Z^{I \!}h_{\mu\nu}$. If we use the vector fields that commutes with
the Minkowski wave operator, or satisfy $[Z,\Box\,]\!=\!c_Z \Box$, then
 the commutator with the reduced wave
operator $\Box^{\,g}$  is roughly a product of vector fields applied to this wave
 operator plus
terms of the form
\beq\label{eq:commutatorsflatintro}
 (Z^J H^{\alpha\beta})\pa_\alpha\pa_\beta Z^K h_{\mu\nu}, \qquad
  \text{for}\quad |J|+|K|\leq |I|,\quad |J|\geq 1.
\eq
Such  terms are  problematic but due to
\eqref{eq:HLLRadiation}-\eqref{eq:HLLRadiationWaveOperator}
the leading behavior is determined by a fixed tensor and as a result
\eqref{eq:commutatorsflatintro}
  are to leading order linear.
  \eqref{eq:HLLRadiationWaveOperator} does however affect the asymptotic behaviour.
  To remedy this we introduce coordinates that asymptotically straighten out the light cones,
  see Section \ref{sec:changingcoordinates}.

\subsection{Subtracting off a term that picks up the mass contribution and
exterior decay}\label{exteriordecaymass}
The Schwarzschild metric in harmonic coordinates
is a solution of Einstein's vacuum equations, \eqref{eq:EinsteinWaveintro} with $\widecheck{T}=0$ also satisfying
the harmonic coordinate condition \eqref{eq:approximatewavecordinateconditionintro}.
Since this metric has an expansion in powers of $r^{-1}$ it follows that the first term in the expansion
$Mr^{-1}\delta_{\mu\nu}$ satisfies   \eqref{eq:EinsteinWaveintro}  with $\widecheck{T}\!=\!0$ up to terms of order $r^{-4}$
as well as \eqref{eq:approximatewavecordinateconditionintro} up to terms of order $r^{-3}\!\!$.
In fact this is the contribution of the next term in the expansion.
However, since we expect the solution to decay like $t^{-1}\!$ this approximation can only be valid for $r\!>\!t_{\!}/2$, say. Therefore we will multiply with a cutoff function $\wt\chi\big(\tfrac{r}{t+1}\big)$, where $\wt\chi(s)=1$ for
$s\geq 3/4 $ and $\wt\chi(s)=0$ for $s\leq 1/4$, and write
$h_{\mu\nu}\!=h^0_{\mu\nu}\!+h^1_{\mu\nu}$ and $H^{\mu\nu}\!\!=H_0^{\mu\nu}\!\!+H_1^{\mu\nu}\!\!$,
where
\beq
h^0_{\mu\nu}=\wt\chi\big(\tfrac{r}{t+1}\big)Mr^{-1}\delta_{\mu\nu},
\quad\text{and}\quad
H_0^{\mu\nu}=-\wt\chi\big(\tfrac{r}{t+1}\big)Mr^{-1}\delta^{\mu\nu}.
\eq
Since $H^{\alpha\beta} \!=\!( (m+h)^{\shortminus 1})^{\alpha\beta}\! - (m^{\shortminus 1})^{\alpha\beta}\! =\! - h^{\alpha\beta} \!+ O^{\alpha\beta}(g)[h,h]$, it follows that $H_{1}^{\alpha\beta}\! \!= \! \shortminus h^{1\alpha\beta} \!+O^{\alpha\beta}(g)[h,h]$.

\subsubsection{Asymptotics and additional exterior decay}
\label{sec:exteriorH0}
Solutions of linear homogeneous wave equations $\Box_{\,}\varphi\!_{\!}=\!0$
 with smooth initial  data decaying like ${}_{\!}r^{-1-_{\!}\gamma}\!$, $\gamma\!>\!0,$
  have radiation fields
\begin{equation}
\varphi(t,x)\sim {\mathcal{F}}(r-t,\omega)/r,
\qquad\text{where}\quad | {\mathcal{F}}(q,\omega)|+\langle q\rangle | {\mathcal{F}}_q(q,\omega)|
\lesssim \langle q_+\rangle^{-\gamma}\!,
\end{equation}
where $q_+=\max\{q,0\}$.
The same is true if only
\beq\label{eq:decayforradiationexterior}
|\Box \,\varphi |+r^{-2}|{\triangle}_\omega \varphi|\lesssim
r^{-1} \tplusr^{-1-\eps}\tminusr^{-1+\eps}\langle (r-t)_+\rangle^{-\gamma},\qquad
\gamma>0,\quad\eps>0.
\eq
as is seen by integrating along characteristics as in Section \ref{sec:asymptotics}.
When $\eps=0$ we do get the logarithm but multiplied by the additional exterior decay
$\langle (r\!-t)_+\rangle^{-\gamma}$
as is seen in  \eqref{eq:approximatesource} with
$|n(q,\omega)|\lesssim \langle q\rangle^{-1} \langle q_+\rangle^{-\gamma}\!$.

\subsubsection{Subtracting off a  term picking up the mass in the wave equation}\label{sec:subtracth0intro}
Since $r^{-1}$ is a fundamental solution of $\triangle$ one can also see directly
that $\Box\, r^{-1}=0$, for $r\neq 0$. The cutoff will introduce an
error $\langle t+r\rangle^{-3}$ but only in the region $(t+1)/4<r<3(t+1)/4$.
We have
 \beq\label{eq:Boxh0intro}
 |\Box \,h^0_{\mu\nu}|\lesssim \tplusr^{-3},\quad \text{when } (t+1)/4<r<3(t+1)/4,\quad  \text{and}\quad \Box\, h^0_{\mu\nu}=0
 ,\quad \text{otherwise}.
 \eq
 As alluded to in Section \ref{sec:commutatorsflat} we want to subtract off $h^0_{\mu\nu}$ in order to pick up the leading behavior of $h_{\mu\nu}$ in the exterior region. Then $h^1_{\mu\nu}\!=h_{\mu\nu}\!-\!h_{\mu\nu}^0$
 will satisfy the same equation as $h_{\mu\nu}$, i.e. $\Box^{\,g} h^1_{\mu\nu}\!=F_{{}_{\!}\mu\nu}\!-\Box^{\,g} h^0_{\mu\nu}$,
 apart from an error which to leading order is of the form \eqref{eq:Boxh0intro}. The new error we introduced
  in this way as well as $F_{{}_{\!}\mu\nu}$
  have additional exterior decay as in \eqref{eq:decayforradiationexterior} with any $\gamma\!\leq\! 1$. By
 the asymptotic flatness condition \eqref{eq:asymptoticallyflatdata} data for $h^1_{\mu\nu}$ are decaying like $r^{-1-\gamma}$ for some $\gamma\!<\!1$ so if we could replace
 $\Box^{\,g}$ with $\Box$ we could conclude from  Section \ref{sec:exteriorH0} that $h^1_{\mu\nu}$
 will have additional exterior decay $\langle (r\!-\!t)_+\rangle^{-\gamma}\!\!$. The above heuristic argument can
 be made into a proof in $L^2$ for $\Box^{\,g}$ by using  energy inequalities with exterior weights.

\subsubsection{Subtracting off a term picking up the mass in the wave coordinate condition}
\label{sec:subtractH0intro}
As pointed out above, $Mr^{-1}\delta^{\mu\nu}$ has to be a solution to the wave coordinate condition
\eqref{eq:approximatewavecordinateconditionintro} up to terms
of order $r^{-3}$. However since all terms in the left are order $r^{-2}$ it has to be an exact solution of
the left hand side. Multiplying with the cutoff function introduces an error as for the wave equation above
\beq\label{eq:WveCH0intro}
\big|\pa_\mu\big(H_0^{\mu\nu} -\tfrac{1}{2} m^{\mu\nu} m_{\alpha\beta} H_0^{\alpha\beta}\big)\big|
\lesssim \langle t+r\rangle^{-2},\quad \text{when } (t+1)/4<r<3(t+1)/4,\quad  \text{and}\quad =0
 ,\quad \text{otherwise}.
 \eq
 Subtracting this from \eqref{eq:approximatewavecordinateconditionintro}  gives a similar equation for
 $H_1^{\alpha\beta}\!\!=H^{\alpha\beta}\!\!-H_0^{\alpha\beta}\!\!$. This equation can then be
 integrated to show that the component \eqref {eq:HLLRadiation} of  $H_1$,
  that determines the bending
 of the light cones and the main component in the commutators, decays faster.
 The proof of this will however require that we first show estimates for all
  components with the
 additional exterior decay mentioned above.

\subsection{The energy momentum tensor of the charged scalar field}

\subsubsection{How the metric enters into the energy momentum tensor}
The energy momentum tensor in its simplest form can be written as
\[
T_{\alpha\beta}[{\bf F}, \bphi] = T^1_{\alpha\beta}(g)[{\bf F},{\bf F}] + T^2_{\alpha\beta}(g)[D\bphi, D\bphi],
\]
where $T^1, T^2$ are quadratic in ${\bf F}$ and $D\bphi$ respectively. We wish to bound $L^\infty$ norms for low derivatives and $L^1$ and $L^2$ norms for higher derivatives. We can write
\[
T\sim (1+O(|H|+|h|))(|\bF|^2+|D\bphi|^2)
\]
with derivatives of $T$ satisfying similar bounds with derivatives of $H, h, \bF, \bphi$. We are most interested with quadratic terms, which appear in the Minkowski case, as well as cubic terms where a large number of derivatives fall on $H$, and for which a slightly different analysis is required. Other cubic terms can be handled similarly to the quadratic terms, with nicer decay, and higher order terms in general behave similarly. For the quadratic terms, the analysis closely mirrors that for Minkowski. On lower derivatives we have, for $T\in \mathcal{T}, U\in\mathcal{N}$, and $s > 1/2$
\begin{subequations}
\begin{align}
|T[{\bf F}, \bphi](t,\cdot)| &\lesssim \varepsilon\langle t+r^* \rangle^{-2}\langle t-r^*\rangle^{-2}\langle (t-r^*)_+\rangle^{1-2s},\label{est:Tall}\\
|T[{\bf F}, \bphi](t,\cdot)_{TU}| &\lesssim \varepsilon \langle t +r^*\rangle^{-2-s}\langle t-r^*\rangle^{-2+s}\langle (t-r^*)_+\rangle^{1-2s}.\label{est:Tnice}
\end{align}
\end{subequations}
Similar bounds hold for $\widecheck{T}$, which are consistent with \eqref{eq:decayTintro} and \eqref{eq:decayweaknullTintro}.

The metric perturbation induces additional considerations in the $L^2$ estimates. We have to consider the case where almost all derivatives fall on the metric $g$, for which we must instead use weighted energy estimates on the metric. The method of this is detailed in \cite{Ka18}.

\subsubsection{Subtracting off the charge contribution in the exterior}\label{subsubsec:Charge}
In the analysis of the Maxwell-Klein Gordon system, one issue that arises is the fact that, even for compactly supported initial data $\bphi$, ${\bf F}$, like $\partial h$, will generally decay like $r^{-2}$, except in the charge free case. This limits the weights we can use when attempting an energy estimate. We resolve this issue by subtracting off a fixed solution of Maxwell's equations, ${\bf F}^0$, which picks up the asymptotic decay up to terms decaying like $o(|x|^{-2-\gamma})$. ${\bf F}^0$ satisfies Maxwell's equations for a current $J^0$ which in the Minkowski spacetime is compactly supported in $u$.
\begin{remark}
In the Minkowski space setting, we are subtracting off an exterior derivative of
\[
\overline{\bf A} \sim \frac{q[\bF]}{4\pi r}\,dt.
\]
As with $h^0$, these terms have additional decay along the light cone, but worse total decay at spatial infinity, see Section \ref{sec:lightconedecay}.
\end{remark}

\subsection{Coordinates adapted to the outgoing characteristic
surfaces of Schwarzschild}\label{sec:changingcoordinates}
In order to unravel the effect of the quasilinear terms one can change to
characteristic coordinates as in \cite{CK93}, but this is not explicit and loses regularity.
Instead we use the asymptotic behavior of the metric to determine the characteristic surfaces asymptotically and use this to construct coordinates.
Due to the wave coordinate condition \eqref{eq:WaveCordinateCond} the outgoing light cones
of a solution with asymptotically flat data \eqref{eq:asymptoticallyflatdata}
 approach those of the Schwarzschild metric with the same mass, which are described by
 the Regge-Wheeler coordinates.

\subsubsection{The outgoing characteristic surfaces}\label{sec:charsurface}
The outgoing light cones or characteristic surfaces are level sets of the solution of the eikonal equation
\beq\label{eq:eikonal}
g^{\alpha\beta} \pa_\alpha u \,\pa_\beta u=0.
\eq
For the Schwarzschild metric there is a solution 
$\us\!\!=\rs\!\!-_{{}_{\!}}t $,
where
$\rs$ is the so-called \emph{tortoise coordinate} of the Schwarzschild metric, which away from the horizon may be approximated by $\rs\!\!=r_{\!}+M\ln(r)$. 
Due to the wave coordinate condition, there is a solution $u$ of \eqref{eq:eikonal}
such $u\!\sim_{\!} \us$, as $r\!>\!t_{\!}/2\to\infty$, see Lindblad \cite{L17}.
In fact $\us\!$ is an approximate solution of \eqref{eq:eikonal} with $g^{\alpha\beta}\!$ replaced by
$m_0^{\alpha\beta}\!\!=m^{\alpha\beta}\!\!+\!H_0^{\alpha\beta}\!$ up to terms of order $\langle t\!+{}_{\!}r\rangle^{-2}$:
  \beq
m_0^{\alpha\beta} \pa_\alpha \us \,\pa_\beta \us
=-(1+M/r)+(1-M/r)(d\rs\!\!/dr)^2=O(M^2 r^{-2}),\quad\text{when}\quad r>t/2.
\eq

  \subsubsection{Asymptotic Schwarzschild coordinates}\label{sec:coordchange}
We therefore make the change of variables
\beq
\widetilde{t}=t,\qquad \widetilde{x}=\rs x/r, \qquad \text{where} \quad
\rs=r+\widetilde{\chi}(\tfrac{r}{t+1}) M\ln{r},
\eq
and $\widetilde{\chi}$ is as in Section \ref{exteriordecaymass}.
 Let
\beq
\widetilde{g}^{ab}=A^a_{\alpha} A^b_{\beta} g^{\alpha\beta},
\qquad\text{where}\quad
A^{a}_{\alpha}=\frac{\partial \widetilde{x}^{\,a}}{\partial x^\alpha},
\quad A^{\alpha}_{a}=\frac{\partial x^\alpha}{\partial \widetilde{x}^{\,a}},
\qquad\wpa_a=A^{\alpha}_{a}\pa_\alpha,
\eq
where $x^0=t$, $\widetilde{x}^0=\widetilde{t}$.
Then if $\widehat{m}^{ab}$ is the Minkowski metric in the $\widetilde{x}$ coordinates,
\beq
\widetilde{m}_0^{ab} \wpa_a \us \,\wpa_b \us
=m_0^{\alpha\beta} \pa_\alpha \us \,\pa_\beta \us,\quad\text{and}\quad
\widehat{m}^{ab} \wpa_a \us \,\wpa_b \us=0.
\eq
We would like to deduce  that as far as  components
determining the characteristic surfaces
\beq
 \widetilde{m}_0^{ab}\sim (1\!+M r^{-1})\widehat{m}^{ab}.
\eq
Since $\widetilde{g}^{\,ab}\!\!=\widetilde{m}_0^{ab}\!\!+\widetilde{H}_{\!1}^{ab}\!\!$,
and we expect the critical
components of $H_{\!1\!}$ to decay faster, we expect the reduced wave operator in the $(t,\widetilde{x})$ coordinates:
 \beq
 \widetilde{\Box}={\Box}^{\,\widetilde{g}}=\widetilde{g}^{\,ab}\wpa_a\wpa_b,
  \eq
  to asymptotically approach the constant coefficient Minkowski wave operator in these coordinates,
 \beq
 \Box^*={\Box}^{\,\widehat{m}}=\widehat{m}^{ab}\wpa_a\wpa_b.
 \eq
One heuristic motivation is to look at the linearized Einstein's Equations, as in Wald \cite{Wa} Chapter 7.5. The $h^0$ terms are
 generated by a metric perturbation term like $\tfrac{\chi}{r}\delta$, and the perturbation for a gauge transformation of the linearized system
  is generated by subtracting off the symmetric part of $2\nabla_\alpha V_\beta$ for an arbitrary vector $V$ (where $\nabla$ is the Levi-Civita connection associated with the
  background (Minkowski) metric).
  Setting $V = \ln r \partial_r$ gives a perturbation with a nice null structure, specifically away from $r=0$ we have
  \beq
  \tfrac{1}{r}\delta_{\alpha\beta} - \partial_\beta V_\alpha-\partial_\alpha V_\beta = -\tfrac{1}{r}m_{\alpha\beta} - \tfrac{2\ln r - 2}{r}(\delta_{\alpha\beta} - \omega_\alpha\omega_\beta)
  \eq
In this gauge, $t-|x|, t+|x|$ are again optical functions.
\subsubsection{Covariant formulation of Einstein's equations}\label{sec:covariant}
Let $\onabla_{\!a}$ be covariant differentiation with respect to the metric
$\widetilde{m}_{ab}\!=m_{\alpha\beta} A_a^{\,\alpha} \! A_b^{\,\beta}$
with Christoffel symbols $\widehat{\Gamma}^{c}_{ab}=\widetilde{m}^{cd}
\big(\wpa_a\widetilde{m}_{bd}
+\wpa_b\widetilde{m}_{ad}-\wpa_d\widetilde{m}_{ab}\big)/2
=O( Mr^{-2} \ln{r})$. Then
$\onabla_{\!\!a} h$ is equal to $\wpa_a h$ plus a correction of the form
$\widehat{\Gamma}\! \cdot\! h $, called the \emph{covariant error terms}.
Moreover Einstein's equations can be written
\beq
\tg^{ab}\onabla_{\!a}\onabla_b \widetilde{h}_{cd}
=\widetilde{F}_{cd}(\tg)[\onabla \widetilde{h},\onabla\widetilde{h}]+\widetilde{\widecheck{T}}_{\!cd},
\eq
and the wave coordinate condition \eqref{eq:approximatewavecordinateconditionintro} become
\beq
\onabla_{\!a}\big(\widetilde{H}^{ac}
-\tfrac{1}{2}\widetilde{m}^{ac}  \widetilde{m}_{bd}\,
 \widetilde{H}^{bd}\big)=\widetilde{W}^c(\tg)[\widetilde{H},\onabla\widetilde{H}].
 \eq

 \subsubsection{Improved commutators in the new coordinates}
  If we use the vector fields $\widetilde{Z}$ that commute with
the Minkowski wave operator
${\Box}^{\widehat{m}}=\widehat{m}^{ab}\wpa_a\wpa_b$,
or satisfy $[\widetilde{Z},{\Box}^{\,\widehat{m}}]
=c_{\widetilde{Z}}{\Box}^{\,\widehat{m}}$
for some constant $c_{\widetilde{Z}}$, then
 the commutator of $\widetilde{Z}^I$ with the reduced wave
operator ${\Box}^{\,\widetilde{g}}=\widetilde{g}^{ab}\wpa_a\wpa_b$
is roughly a product of vector fields applied to this wave
 operator plus
terms of the form
\beq
 (\widetilde{Z}^J \widetilde{g}^{ab})\wpa_a\wpa_b \widetilde{Z}^K h_{\mu\nu}, \qquad
  \text{for}\quad |J|+|K|\leq |I|,\quad |J|\geq 1.
\eq
However in these coordinates
\beq
\widetilde{g}^{ab}= \widetilde{m}_0^{ab}+ \widetilde{H}_1^{ab},
\eq
Here the critical components of
$\widetilde{Z} \widetilde{m}_0^{ab}$, $\widetilde{Z} \widetilde{H}_1^{ab}$ are under control
as mentioned in Sections \ref{sec:coordchange}, \ref{sec:subtractH0intro}.

\section{The geometric structure of Einstein's equations in wave coordinates}\label{sec:EEWaveCoordinatesStructure}
\subsection{The geometric structure}\label{sec:geometricstructure} Generic wave equations with
quadratic nonlinearities do not in general have global solutions for small data,
but some extra cancellation such as the null condition or weak null condition is needed.
In Section \ref{sec:notnull} we explained the need for the null condition, and
what extra cancellation it achieves.
For Einstein's equations the weak null condition can only be seen in a null frame, which we
introduce in Section \ref{sec:nullframe}. The geometric structure of
Einstein's equations in wave coordinates in a null frame really enters
in three different places.
The first cancellation originates from
 the wave coordinate condition in Section \ref{sec:wavecoord} and is
 then used to control the reduced wave operator in Section
 \ref{sec:reducedwave}. The fact that the reduced wave operator
 by itself is under control is then used together with the
 null structure of the reduced system
  with inhomogeneous terms in Section \ref{sec:inhom}.

\subsubsection{The null frame, tangential derivatives and killing vector fields}
\label{sec:nullframe}
By Huygen's principle the solution of the constant coefficient wave equation
emanating from an initial source at the origin propagates along the outgoing light
cone $t=r=|x|$. It is therefore natural to introduce a {\it null frame} ${\mathcal N}$ of vectors
tangential to the outgoing light cones plus a vector perpendicular to the cone:
\begin{equation}
	\underline{L}=\pa_t-\pa_r,
	\quad
	L=\pa_t+\pa_r,
	\quad
	S_1,S_2\in\mathbf{S}^2,
	\quad
	\langle S_i,S_j\rangle =\delta_{ij},
\end{equation}
It is well known
that, for solutions of wave equations, derivatives tangential to the outgoing light
cones $\overline{\pa }\in\mathcal{T}=\{L,S_1,S_2\}$ decay faster.
In fact, derivatives of solutions to the homogeneous wave
equation are also solutions since derivatives commute with the the wave operator,
and therefore decay as much. Moreover, for the generators of the Lorentz transformations and the scaling
\beq\label{eq:MinkowskiVectorFields}
x^i\pa_{x^j}-x^i\pa_{x^i}, \quad x^i\pa_t+t\pa_{x^i},\quad
t\pa_t+x^i\pa_{x^i},
\eq
the commutator with the wave operator is either 0 or a multiple
 of the wave operator. In any case, if $\Box \,\phi=0$, then $\Box Z\phi=0$ if $Z$
 is any of the vector fields \eqref{eq:MinkowskiVectorFields},
 so $Z\phi$ decays like a solution of the wave equation. Since the vector fields span the tangent
 space of the outgoing light cones
 \beq\label{eq:tangentialbyvectorfield}
 |\overline{\pa} \phi{}_{\,}|\lesssim \langle\,t\!+\!|x|\rangle^{-1}{\sum}_{Z} |Z\phi{}_{\,}|,\quad\text{and}\quad
 |{\pa} \phi{}_{\,}|\lesssim \langle\, t\!-\!|x|\rangle^{-1}{\sum}_{Z} |Z\phi{}_{\,}| .
 \eq
Therefore, tangential derivatives decay better and neglecting tangential derivatives
$\overline{\partial}  \phi$  of $\phi$:
\beq\label{eq:transversalderivativeprojectionsec2}
	|\pa_\mu \phi-L_\mu \pa_q \phi|\lesssim |\overline{\pa} \phi|,
\qquad\text{where}\quad \pa_q=(\pa_r-\pa_t)/2,
	\quad L_\mu=m_{\mu\nu} L^\nu.
\eq
In fact we have
\beq
|\pas\, \phi|^2=|\pas_1\phi|^2+|\pas_2\phi|^2+|\pas_3\phi|^2=|S_1\phi|^2+|S_2\phi|^2,
\quad\text{where}\quad
\pas\, \phi=\pa \phi -\omega \pa_r \phi,\quad \omega=x/r.
\eq
\subsubsection{The geometric structure of the wave coordinate condition}
\label{sec:wavecoord}
The wave coordinate condition can be written
$
\pa_\alpha g^{\alpha\delta} -\tfrac{1}{2}g^{\alpha\delta}
  g_{\beta\gamma}\,\pa_\alpha g^{\beta\gamma}=0
  $
from which it follows that $H^{\alpha\beta}=g^{\alpha\beta}-m^{\alpha\beta}$ satisfy
\begin{equation}\label{eq:approximatewavecordinatecpondition}
\pa_\mu\big(H^{\mu\nu} -\tfrac{1}{2} m^{\mu\nu} m_{\alpha\beta} H^{\alpha\beta}\big)
=W^\nu(g)[H,\pa H].
\end{equation}
Expressing the divergence above in a null frame;
\beq
\pa_\mu \widecheck{H}^{\mu \nu}\!=
L_\mu\pa_q \widecheck{H}^{\mu \nu} \!-\Lb_{\,\mu} \pa_s \widecheck{H}^{\mu \nu} \!
+ S_{1\,\mu} \pa_{S_1} \widecheck{H}^{\mu \nu}\!+S_{2\,\mu} \pa_{S_2} \widecheck{H}^{\mu \nu}\!\!,\quad
\text{where}\quad\pa_q\!=(\pa_r\!-\pa_t)/2,\quad \pa_s\!=(\pa_r\!+\pa_t)/2,
\eq
and contracting with $T_\mu\in \mathcal{T}$ respectively $\underline{L}_\nu$
it follows that
\begin{equation}\label{eq:firstwavecoordinateestimate}
|\pa_q H|_{L{\mathcal T}}+|\pa_q \trs H|\lesssim |\overline{\pa} H|+|W|,\quad\text{where}\quad
|W|\lesssim |H|\, |\pa H|.
\end{equation}
Here $H_{UV}=H^{\alpha\beta}U_\alpha V_\beta$, where $U_\alpha=m_{\alpha\beta}U^\beta$ and
\beq
|H|_{L\mathcal T}=|H_{LL}|+|H_{LS_1}|+|H_{LS_2},
|\qquad\text{and}\qquad\trs H=\delta^{AB}H_{AB},\quad A,B\in{\mathcal S}=\{S_1,S_2\}.
\eq

\subsubsection{The geometric null structure of the reduced wave operator}
\label{sec:reducedwave}
By
\eqref{eq:transversalderivativeprojectionsec2}
we have
\beq\label{eq:waveoperatorHLLtangential}
	\big| H^{\alpha \beta} \partial_{\alpha} \partial_{\beta}
\phi-H_{LL}\pa_q^2 \phi \big|
	\lesssim
	\vert  H \vert \vert \overline{\partial} \partial \phi \vert,
\eq
and by \eqref{eq:tangentialbyvectorfield} the tangential derivatives
in the right are
better behaved. Here $H_{LL}$ is controlled by the wave
coordinate condition
\eqref{eq:firstwavecoordinateestimate}. In fact a more detailed
analysis, see Section \ref{sec:approxwavecoord},
shows that at null infinity $H_{{}_{\!}LL{}_{\!}}\!\sim -2M\!/r$,
consistent with
\eqref{eq:HLLRadiation}. In Section \ref{sec:Genralizedwavecoord}
we will  change coordinates to asymptotically remove
$-2M\!/r$ from $H_{{}_{\!}LL{}_{\!}}$ in \eqref{eq:waveoperatorHLLtangential} so that the quasilinear terms will be lower
order, while it will only introduce lower order corrections to
 the wave coordinate condition and the inhomogeneous terms.

\subsubsection{\!\!The geometric structure of the inhomogeneous terms}\label{sec:inhom}
Recall that the inhomogeneous term in Einstein's vacuum equations has the form
\beq
F_{\mu\nu} (g) [\pa h, \pa h]=P(g)[\pa_\mu h, \pa_\nu h]+Q_{\mu\nu}(g)[\pa h, \pa h],
\eq
where $Q$ is a combination of classical null forms and $P$, given by \eqref{eq:curvedPdef},
 has a weak null structure that we will now describe. First,
since $Q_{\mu\nu}\!=Q_{\mu\nu}(m)$ satisfy the classical null condition
\beq 
|Q_{\mu\nu}(\pa h,\pa k)|\les |\overline{\pa} h|\,|\pa k|+|\pa h|\, |\overline{\pa} k|.
\eq
The main term $P=P(m)$ can be further analyzed as follows.
First we note that
by \eqref{eq:transversalderivativeprojectionsec2}
\beq\label{eq:PPnull}
\big| P(\pa_\mu h,\pa_\nu k)- L_{\mu}L_\nu P(\pa_q h,\pa_q k)\big|\les |\overline{\pa} h| \,|\pa k|
+|\pa h|\,|\overline{\pa} k|.
\eq

Expressing $P(h,k)=P_{\mathcal N}(h,k)$ in a null frame we have
\begin{multline}\label{eq:nullframeP}
P_\mathcal{N\,}(h,k)
=-\big({h}_{LL}
{k}_{\underline{L}\underline{L}}
+{h}_{\underline{L}\underline{L}}
{k}_{{L}{L}}\big)/8 -\delta^{CD}\delta^{C^\prime
D^\prime}\big(2{h}_{CC^\prime}{k}_{DD^\prime}-
{h}_{CD} {k}_{C^\prime D^\prime}\big)/4\\
+\delta^{CD}\big(2{h}_{C
L}{k}_{D\underline{L}} +2{h}_{C
\underline{L}}{k}_{D{L}}- {h}_{CD}
{k}_{L\underline{L}}-{h}_{L\underline{L}}
{k}_{CD} \big)/4.
\end{multline}
It follows that
\beq\label{eq:tanPsec3}
\big|P_{\mathcal N}(h,k)-P_{\mathcal S}(h,k)\big|\les \big(|h\,|_{L\mathcal
T}+|\trs h|\big)|k| +|h\,|\big(|k\,|_{L\mathcal
T}+|\trs k_{}|\big).
\eq
where
\beq
P_{\mathcal{S}} (D,E)= -\widehat{D}_{\!AB}\, \widehat{E}^{AB\!\!}/2,\quad A,B\in\mathcal{S},\quad
\text{where} \quad \widehat{D}_{\!AB}=D_{\!AB}-\delta_{AB}\trs D\!/2.
\eq
Hence
\beq\label{eq:PPnullS}
\big| P(\pa_\mu h,\pa_\nu k)- L_{\mu}L_\nu P_{\mathcal S}(\pa_q h,\pa_q k)\big|
\les \big( |\overline{\pa} h|\!+\!|\pa_q h|_{L\mathcal
T}\!+\!|\pa_q\trs_{\!} h| \big)\,|\pa k|
+|\pa h|\,\big(|\overline{\pa} k|\!+\!|\pa_q k|_{L\mathcal
T}\!+\!|\pa_q \trs_{\!} k_{}|\big).
\eq
Also using the wave coordinate condition \eqref{eq:firstwavecoordinateestimate}
and that fact the $H=-h+O(h^2)$ we get
\beq \big| P(\pa_\mu h,\pa_\nu h)- L_{\mu}L_\nu  P_{\mathcal S}(\pa_q h,\pa_q h)\big|
\les \big(|\overline{\pa} h|+|h||\pa h|\big) |\pa h|.
\eq
With respect to the null frame, the Einstein equations become
\beq
	(\Box^{\,g} h)_{TU}\sim 0, \quad T\in \mathcal{T},U\in\mathcal{N}\qquad
	(\Box^{\,g} h)_{\underline{L}\underline{L}}\sim 4 P_{\mathcal{S}}(\pa_q h,\pa_q h),
	\label{eq:simplifiedEinstein}
\eq
since $T^\mu L_\mu\!=\!0$ for $T\!\in\!\mathcal{T}$. Here, $ P_{\mathcal{S}}(\pa_q h,\pa_q h)$ only depends
on tangential components, for which, by the first equation, we
have better control. Hence in a null frame as far as semilinear terms Einstein's
equations look like
\beq
\Box\, \phi=0,\qquad \Box\, \psi=(\pa_t \phi)^2.
\eq

\subsection{Lie derivatives and Commutators}
In order to get estimates for higher derivatives we need to commute the system
 with vector fields $Z$. However, if one instead commutes with Lie derivatives along the vector fields $Z$
 it turns that the geometric structure is preserved also for the lower order terms.
 Note that for a function $\phi$
\[
	\Box^{\,g} Z \phi
	=
	Z \big( \Box^{\,g} \phi \big)
	+
	2 g^{\alpha \beta} \partial_{\alpha} Z^{\mu} \partial_{\beta} \partial_{\mu} \phi - Z(g^{\alpha \beta}) \partial_{\alpha} \partial_{\beta} \phi
	=
	Z \big( \Box^{\,g} \phi \big)
	-
	(\mathcal{L}_Z g^{\alpha \beta}) \partial_{\alpha} \partial_{\beta} \phi,
\]
where the fact that $\partial_{\mu} \partial_{\nu} Z^{\lambda} = 0$ for each $Z$ and
 $\mu, \nu, \lambda = 0,1,2,3$ has been used. Here the Lie derivative applied to a
 $(r,s)$ tensor $K$ is defined by
\begin{equation}
{\mathcal L}_Z K^{\alpha_{\!1}\dots \alpha_r}_{\beta_1\dots \beta_s}\!
=Z K^{\alpha_{\!1}\dots \alpha_r}_{\beta_{1}\dots \beta_s}
-\pa_{\gamma\!} Z^{\alpha_{\!1}} K^{\gamma\dots \alpha_r}_{\!\beta_{\!1}\dots \beta_s}\!-\cdots
-\pa_{\gamma\!} Z^{\alpha_r} K^{\alpha_{\!1}\dots \gamma}_{\!\beta_1\dots \beta_s}\!
+\pa_{\beta_{\!1}\!} Z^{\gamma} K^{\alpha_{\!1}\dots \alpha_r}_{\gamma\dots \beta_s}\!\!+\cdots
+\pa_{\beta_s}\! Z^{\gamma} K^{\alpha_{\!1}\dots \alpha_r}_{\beta_1\dots \gamma}\!\!.
\end{equation}
More generally, for any vector field $Z=Z^\alpha\pa_\alpha$ with linear coefficients
$Z^\alpha=c^\alpha_\beta x^\beta$ we have
\begin{equation} \label{eq:Liepartialcommute}
		{\mathcal L}_Z\pa_{\mu_1}\!\cdots\pa_{\mu_k} K^{\alpha_1\dots \alpha_r}_{\beta_1\dots \beta_s}
		=
		\pa_{\mu_1}\!\cdots\pa_{\mu_k} {\mathcal L}_Z K^{\alpha_1\dots \alpha_r}_{\beta_1\dots\beta_s}.
	\end{equation}
 The procedure simplifies further by commuting with a modified Lie derivative
 $\widehat{\mathcal{L}}$, defined by
\beq\label{eq:modLiedef}
	\widehat{\mathcal L}_Z K^{\alpha_1\dots \alpha_r}_{\beta_1\dots \beta_s}
	=
	{\mathcal L}_Z K^{\alpha_1\dots \alpha_r}_{\beta_1\dots \beta_s}
	+
	\tfrac{r-s}{4}(\pa_\gamma Z^\gamma)K^{\alpha_1\dots \alpha_r}_{\beta_1\dots \beta_s}.
\eq

\subsubsection{Commutators with the wave coordinate condition}
The Lie derivative  commutes
 with exterior differentiation which for any of the vector fields $Z$ leads to
\[
	 \pa_\mu  \widehat{\mathcal L}_Z \widecheck{H}^{\mu\nu}
=\big(\widehat{\mathcal L}_Z
+\tfrac{\pa_\gamma Z^\gamma}{2}\big)\pa_\mu \widecheck{H}^{\mu\nu}.
\]
This applied to \eqref{eq:approximatewavecordinatecpondition} in turn leads to higher order approximate wave coordinate conditions
so the term $\widehat{\mathcal{L}}_Z H^{\Lb \Lb}$ is then controlled
by using \eqref{eq:firstwavecoordinateestimate} and integrating in the $q=r-t$ direction
to control $H^{\Lb \Lb}$ itself.

\subsubsection{Commutators with the reduced curved wave operator}
The modified Lie derivative satisfies
\beq\label{eq:hatLiem}
\widehat{\mathcal{L}}_Z m^{\alpha\beta} = 0,
\eq
 for each of the vector fields $Z$ and moreover
\begin{equation} \label{eq:intromodLiecommsec2}
	\Box^{\,g} \widehat{\mathcal{L}}_Z \phi_{\mu \nu}
	=
	\mathcal{L}_Z \big( \Box^{\,g} \phi_{\mu \nu} \big)
	-
	(\widehat{\mathcal{L}}_Z g^{\alpha \beta}) \partial_{\alpha} \partial_{\beta} \phi_{\mu \nu},
\end{equation}
where in view of \eqref{eq:hatLiem}
$\widehat{\mathcal{L}}_Z g^{\alpha \beta}=\widehat{\mathcal{L}}_Z H^{\alpha \beta}$.
By \eqref{eq:waveoperatorHLLtangential} the commutator can be controlled by
the term $(\widehat{\mathcal{L}}_Z H)_{LL} \partial^2_q \phi_{\mu \nu}$,
plus terms involving one tangential derivative
\[
	\vert (\widehat{\mathcal{L}}_Z H^{\alpha \beta})
\partial_{\alpha} \partial_{\beta} \phi_{\mu \nu} \vert
	\lesssim
	\vert (\widehat{\mathcal{L}}_Z H)_{LL} \vert \vert \partial^2 \phi_{\mu \nu} \vert
	+
	\vert \widehat{\mathcal{L}}_Z H \vert \vert \overline{\partial} \partial
\phi_{\mu \nu} \vert.
\]

\subsubsection{Commutators with the inhomogeneous terms}
When applying the commutation formula \eqref{eq:intromodLiecommsec2} to the
reduced Einstein equations \eqref{eq:EinsteinWaveintro} we have to estimate
Lie derivatives of nonlinear terms:
$\widehat{\mathcal{L}}_Z^I \left( F_{\mu \nu}(g)[\partial h, \partial h] \right)$.
 Let $h_{\alpha\beta}$ and $k_{\alpha\beta}$ be $(0,2)$ tensors and let
 $S_{\mu\nu}(g)[\pa h,\pa k]$ be a $(0,2)$ tensor which is a quadratic form in the $(0,3)$ tensors
 $\pa h$ and $\pa k$ with two contractions with the metric $g$
 (in particular $P(g)[\pa_\mu h,\pa_\nu k]$ or $Q_{\mu\nu}(g)[\pa h,\pa k]$).
 Then $S_{\mu\nu}(g)[\pa h,\pa k]=S_{\mu\nu}[G,G][\pa h,\pa k]$, i.e.
 it is bilinear in the inverse of the metric $G^{\alpha\beta}\!=g^{\alpha\beta}\!$.
 We have
\begin{multline}\label{eq:LieQuad}
 {\widehat{\mathcal L}}_Z\big({}_{\!}  S_{\mu\nu}(g{}_{\!})[\pa h,\pa k]\big)\!\!
 =\!S_{\mu\nu}(g{}_{\!})[\pa \widehat{\mathcal L}_Z h,k]
 +S_{\mu\nu}(g{}_{\!})[\pa h,\pa \widehat{\mathcal L}_Z k]\\
 +  S_{\mu\nu}[\widehat{{\mathcal L}}_Z G,{}_{\!}G][\pa h,\pa k]
 +  S_{\mu\nu}[G,\widehat{{\mathcal L}}_Z G][\pa h,\pa k],
\end{multline}
so the Lie derivative preserves the desirable structure of the nonlinear terms
$P(g)[\pa_\mu h,\pa_\nu h]$ and
$Q_{\mu\nu}(g)[\pa h,\pa k]$.
We remark that the last two terms in \eqref{eq:LieQuad}
are cubic and therefore much easier to control, see Section \ref{sec:cubicerrors}.

\section{Subtracting off terms that picks up the mass and charge contributions}
\label{sec:masssubtractionsection}
In order to bound the solution $h$ to the wave equation in the
weighted energy spaces we will be using,
we need to subtract off an approximate solution to the homogenous wave equation
$h^0$ which picks up
the contribution
 from the initial mass in
\eqref{eq:asymptoticallyflatdata}. Similarly one can subtract off the charge from the electromagnetic field.
These will be explained in further detail when we get in to the specific norms that we will be using, e.g. in Section \ref{sec:StatementandBootstrap}.

\subsection{Subtracting off the mass}
\subsubsection{Subtracting off a term that picks up the mass
contribution from the wave coordinate condition}
\label{sec:approxwavecoord}
Let
\beq\label{eq:m0def}
m_0^{\alpha\beta}=m^{\alpha\beta}+H_0^{\alpha\beta},\qquad\text{where}\qquad
H_0^{\alpha\beta}=-Mr^{-1}\widetilde{\chi}(\tfrac{r}{t+1})
\delta^{\alpha\beta},
\eq
and $\widetilde{\chi}(s)=1$, when $s>3/4$ and $\widetilde{\chi}(s)=0$ when $s<1/4$.
A calculation shows that the approximate wave coordinate condition
\eqref{eq:approximatewavecordinatecpondition} is approximately satisfied by $H_0$:
\beq\label{eq:H0wavecoord}
 W^{\beta,0}_{mass}=-\pa_\alpha \big({H}_0^{\alpha\beta}\! -\tfrac{1}{2}{m}^{\alpha\beta}  {m}_{\mu\nu}\,
 {H}_0^{\mu\nu}\big)
 =-M\delta^{\beta 0}{\chi}_0^{\,\prime\!}\big(\tfrac{r}{t+1}\big) r^{-2},
 \eq
 where ${\chi}_{i}^{\,\prime\!}(s,\cdot)$ stands for functions
 supported when $|s\!-\!1\!/2|\!<\!1\!/4$.
 Hence by \eqref{eq:approximatewavecordinatecpondition}
$H_1^{\alpha\beta}=H^{\alpha\beta}-H_0^{\alpha\beta}$ satisfy an approximate wave coordinate condition
 \beq\label{eq:approxwavecoord}
\pa_\alpha \big(H_1^{\alpha\beta}\! -\tfrac{1}{2} m^{\alpha\beta} m_{\mu\nu} H_1^{\mu\nu}\big)
=W^\beta(g)[H,\pa H]+W^{\beta,0}_{mass}.
\eq
Once we subtracted off $H_0$, the critical components of $H_{1}$ will have better
decay as $r\!\to\!\infty$ and by integrating the equation in the $t\!-\!r$ direction using the null decomposition,
everywhere in the exterior of the
light cone. We can further write
\beq
W^\beta(g)[H,\pa H]=W^\beta(g)[H_1,\pa H_1]
+W^{\beta,1}_{mass}
+W^{\beta,2}_{mass}[H_1]+W^{\beta,3}_{mass}[\pa H_1],
\eq
where
\begin{align}
W^{\beta,1}_{mass}&=W^\beta(g)[H_0,\pa H_0]
=\chi^{\beta,1}\big(\tfrac{r}{1+t},\omega,g\big)M r^{-3}, \\
W^{\beta,2}_{mass}[H_1]&=W^\beta(g)[H_1,\pa H_0]
=\chi_{\mu\nu}^{\beta,2}\big(\tfrac{r}{1+t},\omega,g\big)M r^{-2} H_1^{\mu\nu},\\
W^{\beta,3}_{mass}[\pa H_1]&=W^\beta(g)[H_0,\pa H_1]
=\chi_{\mu\nu}^{\beta\gamma,3}\big(\tfrac{r}{1+t},\omega,g\big)
 Mr^{-1}\pa_\gamma H_1^{\mu\nu} ,
\end{align}
since
\beq
\pa^{\alpha} H^{\mu\nu}_0 = \chi^{\alpha \mu\nu}
\big(\tfrac{r}{1+t},\omega\big)M r^{-1-|\alpha|}.
\eq
Here $\chi^{\alpha \mu\nu}
\big(s,\omega\big)$ and  $\chi_{\mu\nu}^{\beta,i}\big(s,\omega,g\big)$
are bounded functions that vanish for $s\leq 1/4$.

\subsubsection{The asymptotically Schwarzschild wave operator}
We have
\beq
\Box^{\,g}=\Box^{\,m_0}+H_{1}^{\alpha\beta}\pa_\alpha\pa_\beta,
\eq
where we expect the critical coefficient in front of the $\pa_q^2$ term, i.e.
$H_1^{\alpha\beta} L_\alpha L_\beta$ to be small.

Expressing $\Box^{\,m_0}=m_0^{\alpha\beta}\pa_\alpha\pa_\beta$ in spherical coordinates we get
\begin{equation}\label{eq:Boxrm0insphericalcoord}
\Box^{m_0\!}\phi=\bigtwo(\!-\!\pa_t^2\!+\!\triangle_x\!
-\!\tfrac{M\!}{r}{\widetilde{\chi}}\big(\!\tfrac{r}{1+t{}_{\!}}\big)
\big(\pa_t^2\!+\!\triangle_x\big)\!\bigtwo)\phi
=\tfrac{1}{r\!}\bigtwo(\!-\!\pa_t^2\!+\pa_r^2\!-\!\tfrac{M\!}{r}{\widetilde{\chi}}
\big(\!\tfrac{r}{1+t{}_{\!}}\big)
(\pa_t^2\!+\pa_r^2)\!\bigtwo)(r\phi)
+\bigtwo(\!1\!-\!\tfrac{M}{r}{\widetilde{\chi}}\big(\!\tfrac{r}{1+t{}_{\!}}
\big)\!\bigtwo) \!\frac{1}{r^2\!}\triangle_\omega \phi.
\end{equation}

\subsubsection{Subtracting off a term that picks up the mass
contribution from the wave equation}
Let
\beq\label{eq:hup0def}
h^{0}_{\alpha\beta}=Mr^{-1}\widetilde{\chi}(\tfrac{r}{t+1})\delta_{\alpha\beta},
\eq
where $\widetilde{\chi}$ is as in \eqref{eq:m0def}.
We have
\beq\label{eq:h0alphaest}
\pa^{\alpha} h^0_{\mu\nu} = \chi_{\mu\nu}^\alpha
\big(\tfrac{r}{1+t},\omega\big)M r^{-1-|\alpha|},
\eq
where $\chi_{\mu\nu}^\alpha(s,\omega)$ stands for bounded functions that vanish when $s<1/4$.
Using \eqref{eq:Boxrm0insphericalcoord}
we see that
\begin{equation}\label{eq:Boxrm0}
E_{\mu\nu,0}^{\,mass}=\Box^{\,m_0} h^0_{\mu\nu}
=M\chi^\prime_{\mu\nu}\big(\tfrac{r}{1+t},\tfrac{M}{r}\big) r^{-3},
\end{equation}
where $\chi^\prime_{\mu\nu}(s,\cdot\big)$ stands for functions supported when $|s\!-\!1\!/2|\!<\!1\!/4$, in the far interior away from the light cone.
 Moreover
 \beq\label{eq:BoxrH1m0}
 E_{\mu\nu,1}^{\,mass}[H_1]=H_{1}^{\alpha\beta}\pa_\alpha\pa_\beta h^0_{\mu\nu}
 =\chi_{\mu\nu\alpha\beta}
\big(\tfrac{r}{1+t},\omega\big)M r^{-3}H_{1}^{\alpha\beta}.
 \eq
 Hence
\beq\label{eq:thewaveoperatormasserrror}
E_{\mu\nu}^{\,mass}=\Box^{\,g} h^0_{\mu\nu}
=\Box^{\,m_0} h^0_{\mu\nu}+H_{1}^{\alpha\beta}\pa_\alpha\pa_\beta h^0_{\mu\nu}
=E_{\mu\nu,0}^{\,mass}+E_{\mu\nu,1}^{\,mass}[H_1].
\eq
The reason we need to subtract off $h^0$ is that $h$ itself
is not in the weighted energy space we need to use.
The terms \eqref{eq:Boxrm0} decay enough to cover extra exterior weight in $L^{{}_{\!}2}\!$,
since they vanish in the exterior and along the light cone.
The term \eqref{eq:BoxrH1m0}
 is also easy to control since it is linear in $H_1$ and the weight can be
 absorbed in the norm of $h^{{}_{\!}1}\!$.
Since as we will see, in the weighted norms that we will be using
\beq
\| \langle r^*\!\!-t\rangle^{-1}h^1 \|\lesssim \|\pa h^1 \|,
\eq
the term \eqref{eq:thewaveoperatormasserrror} still has an additional decaying factor
of $t^{-2}$.

\subsubsection{Subtracting off a term that picks up the mass contribution from the inhomogeneous term}
Let
\beq\label{eq:Ferror}
F_{\mu\nu}^{\, mass}=F_{\mu\nu} (g) [\pa h, \pa h]-F_{\mu\nu} (g) [\pa h^1\!, \pa h^1]
=F_{\mu\nu,0}^{\,mass}+F_{\mu\nu,1}^{\,mass}[\pa h],
\eq
where
\begin{align}\label{eq:Ferror1}
F_{\mu\nu,0}^{\,mass}&=F_{\mu\nu} (g) [\pa h^0\!, \pa h^0]
=\chi_{\mu\nu}\big(\tfrac{r}{1+t},\omega,\tfrac{M}{r},g\big)M^2 r^{-4},\\
\label{eq:Ferror2}
F_{\mu\nu,1}^{\,mass}[\pa h]&=2F_{\mu\nu} (g) [\pa h^0\!, \pa h^1]
=M^2 r^{-2}\chi_{\mu\nu}^{\alpha\beta\gamma}
\big(\tfrac{r}{1+t},\omega,\tfrac{M}{r},g\big)\pa_\gamma h^{1}_{\alpha\beta}.
\end{align}
Here the terms with linear and quadratic factors in $\pa h^0$ can be estimated
without use of any special geometric structure
only using  the estimate \eqref{eq:h0alphaest}.
The reason we need to subtract off $h^0$ is that $h$ itself
is not in the weighted energy space we use.
The terms \eqref{eq:Ferror} decay enough to cover the exterior weight.
The term \eqref{eq:Ferror2} that is linear in $h^1$ is
 even easy to control since the weight can be absorbed in the norm of $h^{{}_{\!}1}\!$.

\subsubsection{Subtracting off a term that picks up the mass contribution from Einstein's equations
and estimating the mass errors in exterior weighted norms}
Modulo a mass error $R_{\mu\nu}^{mass}=E_{\mu\nu}^{mass}+F_{\mu\nu}^{mass}$ we have
\begin{equation}\label{eq:EinsteinWaveMassErrors}
\Box^{\,g}  h_{\mu\nu}^1
=F_{\mu\nu} (g) [\pa h^1, \pa h^1]+R_{\mu\nu}^{mass}+\widehat{\,T\!}_{\mu\nu}.
\end{equation}
We can write $R_{\mu\nu}^{mass}=R_{\mu\nu,0}^{mass}+R_{\mu\nu,1}^{mass}$, where
$R_{\mu\nu,0}^{mass}=E_{\mu\nu,0}^{\,mass}+F_{\mu\nu,0}^{\,mass}$ and
$R_{\mu\nu,1}^{mass}=E_{\mu\nu,1}^{\,mass}[H_1]+F_{\mu\nu,1}^{\,mass}[\pa h]$.

We will be using a weighted $L^p$ norms, for $p=2,\infty,1$,
\beq
\| \phi(t,\cdot) w_{p,\gamma}\|_{L^p},\quad \text{where} \quad
w_{p\, ,\gamma}=\langle (r^*-t)_+\rangle^{1-1/p\, +\gamma},\quad 0<\gamma <1.
\eq
Using the weighted energy inequality (if $p=2$) we need to show that
\beq
\int_0^\infty (1+t)^{1-2/p}\,
 \| R_{\mu\nu,0}^{mass}(t,\cdot)\, w_{p\, ,\gamma}\|_{L^p}\, dt < C M ,
\eq
which follows from \eqref{eq:thewaveoperatormasserrror} and \eqref{eq:Ferror}. For $p=1,\infty$ this is
used for the decay estimates.

\subsection{Subtracting off the charge} In the Minkowski metric, given a field $\bF$ solving \eqref{eq:MKG}, $\bF$ has a charge $q$ defined by the integral
\[
\bold{q}[\bF] = \int_{\mathbb{R}^3}\mathfrak{I}(-J^0(0,x))\, dx,
\]
where the quantity on the right can be defined in terms of initial data for $\bphi$. This is invariant in time, since the current $J$ is by design divergence free. By an application of the divergence theorem in space;
\[
\lim_{r\to\infty}\int_{\mathbb{S}^2} r^2\omega^i\bF_{0i}(0, r\omega) = \bold{q}[\bF].
\]
Consequently, unless $\bold{q}\!=\!0$,  $\bF$ can not decay uniformly faster than $r^{-2}$, so it is not bounded in our desired energy norms. We instead subtract off a well-defined $\bF^0$ with the same asymptotic decay as $\bF$, such that we can establish a meaningful energy estimate on the difference $\bF-\bF^0$. In the Minkowski spacetime, Lindblad and Sterbenz \cite{LS} use the fact that $\bF = d(r^{-1}\, dt)$ is a solution of Maxwell's equations away from the origin (with vanishing current) to construct a field $\bF^0$ which models the point charge in the exterior and is 0 inside the light cone. The associated current of $\bF^0$ is supported close to the light cone and decays rapidly in time. Kauffman \cite{Ka18} constructed an analogous current for asymptotically Schwarzschild metrics which instead decays rapidly away from the light cone.

We can split the portion of the energy momentum tensor coming from $\bF$ into four parts, recalling the remainder $\bF^1 = \bF - \bF^0$. Using the notation
\[
T[\bF, {\bf G}]_{\alpha\beta} = \bF_{\alpha\gamma}{\bf G}_\beta^{\,\,\gamma} - \frac14 g_{\alpha\beta}\bF_{\gamma\delta}{\bf G}^{\gamma\delta}
\]
we can write
\[
T[\bF, \bF] = T[\bF^0, \bF^0] + 2T[\bF^0, \bF^1] + T[\bF^1, \bF^1],
\]
which has the consequent bounds
\begin{subequations}
\begin{align}
|T[\bF^0, \bF^0]| & \lesssim |q|^2\widetilde\chi(r^*-t)\langle r^*+t\rangle^{-4}, \\
|T[\bF^0, \bF^1]| & \lesssim |q|\widetilde\chi(r^*-t)\langle r^*+t\rangle^{-2}|\bF^1|, \\
|T[\bF^1, \bF^1]|_{\mathcal{U}\mathcal{T}}&\lesssim |\bF^1||\slashed{\bF}^1|,+ |h||\bF^1|^2,\\
|T[\bF^1, \bF^1]| &\lesssim |\bF^1|^2.
\end{align}
\end{subequations}
Here, we use the notation
\[
|{\slashed\bF}^1| = |{\bF}^1_{LS_1}|+|{\bF}^1_{LS_2}|+|{\bF}^1_{\underline{L}S_1}|
+|{\bF}^1_{S_1S_2}|+|{\bF}^1_{\underline{L}L}|;
\]
that is, $|\slashed{\bF}^1|$ does not include the components with the worst decay. We note for now that the component decomposition for $|T[\bF^1, \bF^1]|$ is necessary to establish the bounds \eqref{eq:decayTintro} and \eqref{eq:decayweaknullTintro}.

When we apply Proposition \ref{prop:hormander} we must treat $\bF^0$ and $\bF^1$ separately, using energy bounds when $\bF^1$ appears, and integrating decay bounds for $\bF^0$\!. Importantly, $\bF^1\!$ determines the asymptotic behavior of the metric $g$, as we have worse decay along the light cone.

\section{The structure of the metric in asymptotically Schwarzschild coordinates}
When attempting to prove an energy estimate in wave coordinates, one runs into the complications arising from the fact that $g_{{}_{\!}LL{}_{\!}}$ will asymptotically decay like $-2M\!/r$ along the light cone, an issue which appears even when using the approximation $m_0^{\alpha\beta} = m^{\alpha\beta} + H_0^{\alpha\beta}$. For instance, attempting to repeat the proof of Theorem \ref{thm:EEst} with $\mu = 0$ would give slowly growing energy, as can be seen in \cite{LR10}. A heuristic argument, given by integrating \eqref{eq:firstwavecoordinateestimate} along ingoing null characteristics from the initial data, would imply that this bad decay rate comes solely from corresponding decay for the initial data in space.

In Section \ref{sec:subGenralizedwavecoord} we will give an outline of the change to generalized wave coordinates, with which we may asymptotically remove the first order correction from $H_{{}_{\!}LL{}_{\!}}$ in \eqref{eq:waveoperatorHLLtangential}. In Section \ref{ChangeinCoordBounds} we will determine commutator fields adapted to these new coordinates, and prove certain technical estimates on error terms which appear when changing coordinates. In section \ref{sec:AsymptoticSchwarzschildNullStructure}, we characterize the null structure of $m_0$ in asymptotically Schwarzschild coordinates. Lemma \ref{lem:m0approx} gives a decomposition of the form
 \beq
 \widetilde{m}_0^{ab}= \kappa_{0\,}\widehat{m}^{ab}\!
 +\kappa_{1\,} \slashed{S}^{\,ab}\!+\kappa_{2\,}\omega^a\omega^b\!+\kappa_{3\,} i_+^{\, ab}\!.
 \eq
The quantity $\kappa_{0\,}\widehat{m}^{ab}\! +\kappa_{1\,} \slashed{S}^{\,ab}$ exhibits an improved null structure, and components $\kappa_{2\,}\omega^a\omega^b\!+\kappa_{3\,} i_+^{\, ab}$ decays rapidly along the light cone. In Lemma \ref{lem:m0decomp} we show that Lie derivatives with respect to the new commutator fields preserve this improved structure.
\label{sec:Genralizedwavecoord}\label{section:metricdecomposition}

\subsection{Generalized Wave Coordinates asymptotically adapted to the outgoing light cones}\label{sec:subGenralizedwavecoord}
\subsubsection{The leading behavior of the metric at null infinity and changes of coordinates that
straighten out the characteristic surfaces at null infinity}
As previously pointed out in Section \ref{sec:coordchange}, the metric to leading order approaches
the Schwarzschild metric with the same mass, in the sense that the distance between light cones remains bounded. In order to study the precise asymptotic behavior, Lindblad \cite{L17} straightened out the light cones by a change of coordinates
\beq
\widetilde{t}=t,\qquad \widetilde{x}=\rs x/r, \qquad \text{where} \quad \rs=r+\widetilde{\chi}(\tfrac{r}{t+1}) M\ln{r},
\eq
where $\widetilde{\chi}$ is as in \eqref{eq:m0def}. Therefore,
\beq
\widetilde{x}^a=x^a+M\omega^a \widetilde{\chi} \ln{r}.
\eq
 If
\beq\label{eq:Aexpression}
A^{a}_{\alpha}=\frac{\partial \widetilde{x}^{\,a}}{\partial x^\alpha}
=\delta^{a}_{\alpha}+Mr^{-1}\ln{r} \, \chi^a_\alpha(\tfrac{r}{t+1},\omega)
+Mr^{-1} \widetilde{\chi}^a_\alpha(\tfrac{r}{t+1},\omega),
\eq
then in the new coordinates $m_0$ in \eqref{eq:m0def} satisfies
\beq
\widetilde{m}_0^{ab}=A^{a}_{\alpha} A^{b}_{\beta} m_0^{\alpha\beta}\sim \widehat{m}^{ab},
\eq
where $\widehat{m}^{ab}$ is the inverse of the Minkowski metric in the new coordinates,
\beq
\widehat{m} = -d\wt{t}^2 + {\sum}_i d\wt{x}_i^2.
\eq
In the new coordinates
\begin{equation}\label{eq:tildetotildewaveop}
A^{\alpha}_{ a}A^{\beta}_{ b}\pa_\alpha\pa_\beta
=A^{\alpha}_{ a}A^{\beta}_{ b}\pa_\alpha A^{c}_{\beta} \wpa_{\,c}
=\wpa_a\wpa_{\,b} -\widehat{\Gamma}_{ab}^c\wpa_c,
\eq
where
\beq\label{eq:gammahatdecay}
\widehat{\Gamma}^{c}_{\!ab\!} =-A^{\alpha}_{ a}A^{\beta}_{ b}\pa_\alpha A^{c}_{\beta}
=-A^{\alpha}_{ a} A^{\beta}_{ b}
\pa_\alpha\pa_\beta \widetilde{x}^{\,c}
=Mr^{-2}\ln{r} \, \chi^c_{ab}\big(\tfrac{r}{t+1},\omega,\!\tfrac{M\!}{r},\!
\tfrac{M\ln{r}\!\!}{r}\,\big)
+Mr^{-2} \widetilde{\chi}^c_{ab}\big(\tfrac{r}{t+1},\omega,
\!\tfrac{M\!}{r},\!\tfrac{M\ln{r}\!\!}{r}\,\big),
\eq
are the Christoffel symbols for the metric
$\widetilde{m}_{ab}\!=m_{\alpha\beta} A_a^{\alpha}  A_b^{\beta}$ in the new coordinates:
\begin{equation}\label{eq:christoffelm}
\widehat{\Gamma}^{c}_{ab}=\widetilde{m}^{cd}
\big(\wpa_a\widetilde{m}_{bd}
+\wpa_b\widetilde{m}_{ad}-\wpa_d\widetilde{m}_{ab}\big)/2.
\end{equation}
Here $A_a^{\alpha}$ is the inverse of $A^a_{\alpha}$. In fact
$\widetilde{x}(x)$ is a wave map into $(\mathbf{R}^{1+3},\widetilde{m})$.

\subsubsection{Change of coordinates and the generalized wave coordinate condition}
If $\widetilde{g}^{ab}\!\!=\!A^a_{\alpha}A^b_{\beta} g^{\alpha\beta}$ and $\widetilde{\Gamma}_{\!ab}^c$ are the Christoffel symbols of $\widetilde{g}$ then
since the geometric wave operator is invariant
\beq
\widetilde{g}^{\,ab}\wpa_a\wpa_b -\widetilde{g}^{\,ab}\widetilde{\Gamma}_{ab}^c\wpa_c
=\Box_{\widetilde{g}}=\Box_g =
g^{\alpha\beta}\pa_\alpha\pa_\beta -g^{\alpha\beta}
\Gamma_{\alpha\beta}^\delta\pa_\delta.
\eq
This combined with \eqref{eq:tildetotildewaveop} gives, with $\widehat{\Gamma}_{ab}^c$ given by \eqref{eq:christoffelm},
\begin{equation}\label{eq:wavetildetowavecoord}
\widetilde{g}^{\,ab}\widetilde{\Gamma}_{ab}^c
=\widetilde{g}^{\,ab}\widehat{\Gamma}_{ab}^c
 + g^{\alpha\beta}\Gamma_{\alpha\beta}^\delta A^{c}_{\,\,\delta}.
\end{equation}
If $\Gamma_{\alpha\beta}^\delta$ are the Christoffel symbols of $g$ in harmonic coordinates, the last term vanishes and \eqref{eq:wavetildetowavecoord} is the
generalized wave coordinate condition. Moreover
\beq
\widetilde{g}^{\,ab}\widehat{\Gamma}_{ab}^c
=\Box_g \widetilde{x}^c=g^{\alpha\beta}\pa_\alpha\pa_\beta (\widetilde{x}^{\,c }-x^c)=m_0^{\alpha\beta}\pa_\alpha\pa_\beta( \widetilde{x}^{\,c }-x^c)
+H_1^{\alpha\beta}\pa_\alpha\pa_\beta (\widetilde{x}^{\,c }-x^c)
\eq

Furthermore,
\beq\label{eq:approxtildegm0H1}
\widetilde{g}^{\,ab}=\widetilde{m}_0^{\,ab}
+\widetilde{H}_1^{\,ab}.
\eq
We shall see in Section \ref{subsec:ASmetric}
that
\beq\label{eq:approxtildem0}
\widetilde{m}_0^{\,ab}\sim  (1\!+\!\tfrac{M}{r})\widehat{m}^{\,ab},
\eq
 where
$\widehat{m}^{\,ab}$ is the Minkowski metric. In fact we shall see already a heuristic argument in
 Section \ref{sec:AsymSchwarzWaveOp}, or for the linearized equations at the end of Section \ref{sec:coordchange}.

\subsubsection{The asymptotically Schwarzschild wave operator in the new coordinates}\label{sec:AsymSchwarzWaveOp}
By \eqref{eq:tildetotildewaveop} applied to $m_0$, and recalling \eqref{eq:Boxrm0insphericalcoord},
\beq\label{eq:BoxStar}
{\Box}^{\,\widetilde{m}_0}=\Box^{\,{m}_0}
-\widetilde{m}_0^{ab}\widehat{\Gamma}_{ab}^c\wpa_c.
\eq
Let $\Phi(t,\rs,\omega)=r\phi(t,r\omega)$, where $\rs=r+M \widetilde{\chi}\big(\tfrac{r}{1+t}\big) \ln{r}$.
For $\tfrac{r}{1+t}>\tfrac{3}{4}$ we have $\widetilde{\chi}\big(\tfrac{r}{1+t}\big)=1$ and
\begin{equation*}
r\Box^{\,m_0}\phi- r\big(1\!-\!\tfrac{M}{r}\big)
r^{-2}\!{\triangle}_\omega \phi
=\Big(\!-\pa_t^2\!\!+\pa_r^2-\!\tfrac{M}{r}(\pa_t^2\!\!+\pa_r^2)\Big)
\Phi
= -(1\!+\!\tfrac{M}{r})\Phi_{tt}\!
+(1\!-\!\tfrac{M}{r})\Big(\big(\frac{ d\rs\!\!}{dr}\big)^{\!2}\!\Phi_{\rs\rs}\!
+\frac{ d^{\,2} \rs\!\!\!\!\!}{dr^2\!\!}\,\Phi_{\rs}\!\Big).
\end{equation*}
Hence in the new coordinates
\begin{equation}\label{eq:waveeqradial}
r\Box^{\,m_0}\phi
=(1\!+\!\tfrac{M}{r})\big(-\!\Phi_{tt}\!+\Phi_{\rs\rs}\big)
+r\big(1\!-\!\tfrac{M}{r}\big)
r^{-2}\!\triangle_\omega \Phi
-\big(\!\tfrac{M}{r}\big)^2
\big((1\!+\!\tfrac{M}{r})\Phi_{\rs\rs}\!+(1\!-\!\tfrac{M}{r})\Phi_{\rs}\big),
\quad\text{when}\quad \tfrac{r}{1+t}>\tfrac{3}{4}.
\end{equation}
Hence  $\widetilde{\Box}^{\,\widetilde{m}_0}$
approximately becomes $(1\!+\!\tfrac{M}{r})$ times the
Minkowski wave operator in the new coordinates:
\beq\label{eq:waveoperatorstarsphericalcoord}
\Box^*\phi\equiv \Box^{\,\widehat{m}}\phi
=\frac{1}{\rs}\big(\!-\pa_t^2\!\!+\pa_{\rs}^2\big)(r^* \phi)
+\frac{1}{{\rs}^2}\triangle_\omega\phi .
\eq

\subsection{Estimates on error terms arising from the change in coordinates}\label{ChangeinCoordBounds}
\subsubsection{The modified fields and commutation properties}
 We recall that by $\wt{Z}$ and $\wt{X}$ we mean commutator fields in
\beq\label{eq:ModifiedVectorFields}
{\wpa}_a, \qquad
\widetilde{\Omega}_{ij}=\widetilde{x}^i\wpa_{j}-\widetilde{x}^j\wpa_{i}, \qquad
\widetilde{\Omega}_{i0}=\widetilde{x}^i\wpa_t+t\wpa_{i},\qquad
\widetilde{S}=t\wpa_t+\widetilde{x}^i\wpa_{i},
\eq
where $a$ ranges from 0 to 3 and $i,j$ range from 1 to 3. These fields $\widetilde{Z}$ commute nicely with each other:
\beq
\label{FieldComms}
[\wt{Z}_1, \wt{Z}_2] = {\sum}_{|I|=1}c_{12I}\wt{Z}^I,\qquad [\wt{Z}_1, \wpa_a] = {\sum}_{b=0}^3 c_{1a}^b\wpa_b, \qquad [\wpa_a, \wpa_b] = 0,
\eq
where all $c_{12I}, c_{1a}^b$ are $-1$, $0$, or $1$.

\subsubsection{The change of coordinates}
We now show estimates on error terms arising from the change in coordinates. The expansion \eqref{eq:Aexpression} follows directly from the definition
\begin{subequations}\label{id:Aexpansion}
\begin{align}
\pa_\alpha \widetilde{x}^0 &= \delta_\alpha^0,\label{id:A00expansion} \\
\pa_0 \widetilde{x}^i &= \delta_0^i - M r^{-1}\ln r\,  \omega^i\left(\tfrac{r}{t+1}\right)^2\widetilde{\chi}',\label{id:A0iexpansion}\\
\pa_j\widetilde{x}^i &= \delta_j^i + Mr^{-1}\ln r \left((\delta_j^i - \omega_j\omega^i)\widetilde{\chi} + \omega_j\omega^i\tfrac{r}{t+1}\widetilde{\chi}'\right) + Mr^{-1}\omega_j\omega^i\widetilde{\chi}.\label{id:Aijexpansion},
\end{align}
\end{subequations}
where $\wt\chi' = \wt{\chi}'\left(\tfrac{r}{t+1}\right)$, $\wt\chi = \wt{\chi}\left(\tfrac{r}{t+1}\right)$ are supported when their arguments are in $[1/4, 3/4]$ and $[1/4, \infty)$ respectively. We introduce a lemma which will allow us to handle these terms efficiently.
\begin{lemma}\label{lem:StaPhi}
Take constants $M_0,N_0 \geq 0$. Let $\psi$ be a finite sum of the form
\begin{equation}\label{def:chiexpansion}
\psi = \sum_{\substack{m\leq M_0, n \geq N_0}}\frac{(\ln r)^m}{r^n}\chi_{mn}\Big(\frac{r}{t+1},\frac{M}{r}, \frac{M\ln r}{r}, \frac{t}{r},  \omega\Big),
\end{equation}
where $\chi_{mn}$ are smooth functions supported when $\frac{r}{t+1}\geq \frac14$ such that $\partial_y\chi_{mn}(y, \cdot)$ is supported in the region $y\leq\frac34$. Then, for every $k > 0$, there exists constant $C_k$, $\widetilde{C}_k$ such that the bounds
\beq\label{def:PsiLogBound}
\sum_{|\alpha|= k}|\partial^\alpha\psi| \leq C_k \frac{\langle \ln r\rangle^{M_0}}{(t+r)^{N_0+k}}, \qquad \sum_{|I|\leq k}|Z^I\psi| \leq \widetilde{C}_k \frac{\langle \ln r\rangle^{M_0}}{(t+r)^{N_0}}
\eq
holds globally.
\end{lemma}
\begin{proof}
It suffices to prove the first bound in \eqref{def:PsiLogBound}, as the second bound follows directly from expanding out the fields $Z$. We first note that $t+1 \leq 4r$ in the support of $\chi_{mn}$, so $r > \frac14$ and $r \leq t+r \leq 5r$. Consequently, $\chi_{mn}$ attains its entire range on a compact set, so $\chi_{mn}$ is uniformly bounded.

The result for higher derivatives follows from induction: applying any derivative to $\psi$ introduces a factor of $\frac1r$ and either keeps the logarithmic power the same or decreases it by 1.
\end{proof}

\begin{corollary}\label{cor:Aderivatives}
For all multiindices $\beta$, $I$, the following bound holds on derivatives of $A_\alpha^a$:
\begin{equation}\label{est:AderivativeBound}
|\partial^\beta (A^{a}_{\alpha}-\delta^{a}_{\alpha})| + |\partial^\beta (A_{a}^{\alpha}-\delta_{a}^{\alpha})| \lesssim_\beta
\frac{M\langle\ln r\rangle}{(t+r)^{1+|\beta|}}, \quad |Z^I(A^{a}_{\alpha}-\delta^{a}_{\alpha})| + |Z^I(A_{a}^{\alpha}-\delta_{a}^{\alpha})| \lesssim
\frac{M\langle\ln r\rangle}{t+r},
\end{equation}
where $Z^I$ is a collection of the fields $\{\partial, \Omega, S\}$.
If $\frac{r}{t+1}\leq\frac14$, this is equal to 0.
\end{corollary}
\begin{proof} The inequality for $\partial^\beta(A^{a}_{\alpha} - \delta^a_{\alpha})$ follows from applying Lemma \ref{lem:StaPhi} to the identity \eqref{eq:Aexpression}, noting that the differentiation properties required of $\chi$ terms follow from the formal definition. To show this for $\partial^\beta (A_{a}^{\alpha}-\delta_{a}^{\alpha})$, for the case $|\beta| = 0$ we use the smallness assumption on $M$ implies smoothness for coefficients of $
(I-M)^{-1}.
$
For higher derivatives we differentiate
$ A_a^{\alpha} A_{\alpha}^b = \delta_a^b, $
and prove by induction.
\end{proof}

\begin{corollary}\label{cor:BoundPartial}
If $\psi$ is a finite sum of the form \eqref{def:chiexpansion}, then $\psi$ satisfies
\beq\label{def:PsiLogBoundCurved}
\sum_{|\alpha|= k}|\wpa^\alpha\psi| \leq C_k \frac{\langle \ln r\rangle^{M_0}}{(t+r)^{N_0+k}}, \qquad \sum_{|I|\leq k}|\wt{Z}^I\psi| \leq \widetilde{C}_k \frac{\langle \ln r\rangle^{M_0}}{(t+r)^{N_0}}.
\eq
Consequently,
\begin{equation}\label{est:AderivativeBoundCurved}
|\wpa^\beta (A^{a}_{\alpha}-\delta^{a}_{\alpha})| + |\wpa^\beta (A_{a}^{\alpha}-\delta_{a}^{\alpha})| \lesssim_\beta
\frac{M\langle\ln r\rangle}{(t+r)^{1+|\beta|}}, \qquad |\wt{Z}^I(A^{a}_{\alpha}-\delta^{a}_{\alpha})| + |\wt{Z}^I(A_{a}^{\alpha}-\delta_{a}^{\alpha})| \lesssim
\frac{M\langle\ln r\rangle}{t+r}.
\end{equation}
\end{corollary}
\begin{proof}
The bound \eqref{def:PsiLogBoundCurved} follows from repeated use of $\wpa_a = \delta_a^\alpha\partial_\alpha + (A_a^{\beta} - \delta_a^\beta)\partial_\beta$ as well as Corollary \ref{cor:Aderivatives}. The bound \eqref{est:AderivativeBoundCurved} follows from applying the bound \eqref{def:PsiLogBoundCurved} to the identity \eqref{eq:Aexpression}.
\end{proof}
We can use these to bound components of the inverse metric,
\beq
\widetilde{g}^{ab}=\widetilde{m}^{ab}+\widetilde{H}^{ab}.
\eq
\begin{lemma}\label{lem:ChristoffelBounds} Let $\widehat{\Gamma}$ be the Christoffel symbols of
$\widetilde{m}_{ab}=m_{\alpha\beta}A^\alpha_{a} A^\beta_{b}$. Then
\beq\label{est:tildeGamma}
|\wpa^\beta\widehat{\Gamma}|\les \frac{ M (\ln{(1+r)}+1)}{(1+r+t)^{2+|\beta|}}, \qquad
|\Lie_{\widetilde{X}}^I \widehat{\Gamma}|\les \frac{ M (\ln{(1+r)}+1)}{(1+r+t)^2}
\eq
Moreover if $\widetilde{m}^{ab}=A^a_{\alpha} A^b_{\beta}m^{\alpha\beta}$ is the inverse of the
Minkowski metric, written in the new coordinates then
 \beq\label{est:LieMtilde}
|\wpa^\beta(\widetilde{m}^{ab} - \widehat{m}^{ab})|\les \frac{ M (\ln{(1+r)}+1)}{(1+r+t)^{1+|\beta|}}, \qquad
|\Lie_{\widetilde{X}}^I(\widetilde{m}^{ab} - \widehat{m}^{ab})|\les \frac{ M (\ln{(1+r)}+1)}{1+r+t}
\eq
\end{lemma}
\begin{proof}
First, we consider \eqref{est:tildeGamma}, for which it suffices to replace the Lie derivatives with standard derivatives in each component. We write
\begin{align}
\widehat{\Gamma}^a_{bc} &= \frac12\widetilde{m}^{ad}(\wpa_b \widetilde{m}_{dc} + \wpa_c\widetilde{m}_{bd} - \wpa_d\widetilde{m}_{bc}) \\
&= \frac12(A^a_{\alpha}A^d_{\delta}m^{\alpha\delta})(A_b^{\beta}\partial_\beta (A_d^{\delta} A_c^{\gamma}m_{\delta\gamma}) + A_c^{\gamma}\partial_\beta (A_b^{\beta} A_d^{\delta}m_{\beta\delta}) - A_d^{\delta}\partial_\beta (A_b^{\beta} A_c^{\gamma}m_{\beta\gamma})).
\end{align}
The result follows from Corollary \ref{cor:BoundPartial}.
To prove \eqref{est:LieMtilde}, we use the decomposition
\[
\wpa_c(\widetilde{m}^{ab}-\widehat{m}^{ab}) = \wpa_c(m^{\alpha\beta}(A_\alpha^a (A_\beta^b - \delta_\beta^b) + ( A_\alpha^a - \delta_\alpha^a)\delta_\beta^b)),
\]
and again apply Corollary \ref{cor:BoundPartial}.
\end{proof}

\subsection{The null structure of the metric in asymptotically Schwarzschild coordinates}\label{sec:AsymptoticSchwarzschildNullStructure}
\subsubsection{The null frame in asymptotically Schwarzschild coordinates}
We take the null frame with respect to $\widehat{m}$,
\beq\label{def:ModifiedNullFrame}
\widetilde{L} = \partial_{t^*} + \partial_{r^*}\qquad \widetilde{\underline{L}}
 = \partial_{t^*} - \partial_{r^*} \qquad \widetilde{S}_i = \frac{r}{r^*}S_i.
\eq
Let derivatives tangential to the outgoing light
cones in these coordinates be denoted by
$\overline{\widetilde{\pa}}\in\widetilde{\mathcal{T}}
=\{\widetilde{L},\widetilde{S}_1,\widetilde{S}_2\}$.

All the estimates in Section \ref{sec:nullframe} hold with $\pa$ replaced by $\wpa$, $\overline{\pa}$ by
$\overline{\wpa}$, $\pa_q$ by $\pa_{q^*}$, the frame $\{\underline{L},L,S_1,S_2\}$ replaced by
$\{\widetilde{\underline{L}},\widetilde{L},\widetilde{S}_1,\widetilde{S}_2\}$
and the vector fields \eqref{eq:MinkowskiVectorFields} replaced by  \eqref{eq:ModifiedVectorFields}.

We now estimate the difference between the fields $Z^I$ and $\wt{Z}^I$ applied to an arbitrary function:

\begin{lemma}\label{lemma:starwaveeq} We have
 \begin{align}
\big|_{\,}\pa^\alpha Z^I(\wpa_{\mu}-\pa_\mu )\phi\big| &\les
\frac{M\ln{|1+t+r|}}{1+t+r} {\sum}_{|J|\leq|I|,\, |\beta|\leq
|\alpha|} |\pa \pa^\beta Z^J\phi|.\label{eq:partialstardifference}
 \end{align}
Moreover,
 \begin{align}
\big(1+\frac{M\ln{|1+r|}}{1+|\,q^*|}\big)^{-1}
 &\les \frac{1+|\,q^*|}{1+|\,q|}
 \les \big(1+\frac{M\ln{|1+r|}}{1+|\,q|}\big), \label{eq:qstarq}\\
 \!\!\!\!
 C^{-1}\big(1\!+\!\frac{M\ln{|1\!\!+\!r|}}{1\!+\!|\,q^*|}\big)^{\shortminus k}
 {\sum}_{|I|\leq k}|Z^I\!\phi|
 &\leq\! {\sum}_{|I|\leq k} |{\widetilde{Z}}^I \!\phi|
 \leq C\big(1\!+\!\frac{M\ln{|1\!\!+\!r|}}{1\!+\!|\,q|}\big)^k
 {\sum}_{|I|\leq k}|Z^I\!\phi|\label{eq:ZstarZ}.
 \end{align}
\end{lemma}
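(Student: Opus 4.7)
The plan is to prove the three estimates in the order given, exploiting throughout the explicit formula $\wpa_a = A_a^{\alpha}\pa_\alpha$ together with the bounds on $A - I$ supplied by Proposition \ref{prop:Aderivatives} and Corollary \ref{cor:Aderivatives}. The first estimate is the technical heart that feeds the other two.

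For \eqref{eq:partialstardifference}, I would write $(\wpa_\mu - \pa_\mu)\phi = (A_\mu^\beta - \delta_\mu^\beta)\pa_\beta\phi$, apply $\pa^\alpha Z^I$ and expand by iterated Leibniz to get a sum of terms $\bigl(\pa^{\alpha'} Z^{I'}(A_\mu^\beta - \delta_\mu^\beta)\bigr)\cdot \pa^{\alpha''} Z^{I''}\pa_\beta\phi$ with $|\alpha'|+|\alpha''|\leq |\alpha|$ and $|I'|+|I''|\leq |I|$. Writing each $Z^{I'}$ as a finite linear combination of $p_K(x)\pa^K$ with $p_K$ polynomial of degree at most $|K|$, and using Proposition \ref{prop:Aderivatives} to bound $|\pa^{\alpha'+K}(A-\delta)|\lesssim M\langle\ln r\rangle/r^{1+|\alpha'|+|K|}$ together with $|x|,t\lesssim r$ on the support of $\widetilde\chi$, gives $|\pa^{\alpha'}Z^{I'}(A-\delta)|\lesssim M\langle\ln r\rangle/r^{1+|\alpha'|}$. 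Since the whole expression vanishes off the support of $\widetilde\chi$ and $r\sim 1+t+r$ on the support, \eqref{eq:partialstardifference} follows.

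Estimate \eqref{eq:qstarq} is then a direct algebraic manipulation from $q^* - q = \widetilde\chi(r/(t+1))M\ln r$, which is supported where $r\geq (t+1)/4$ so that $\ln r\sim \ln|1+r|$; the triangle inequality $|q^*|\leq |q|+|q^*-q|$ divided by $1+|q|$, together with its reversed version divided by $1+|q^*|$, yields the two inequalities in \eqref{eq:qstarq}. For \eqref{eq:ZstarZ} I would argue by induction on $k$. At $k=1$, write $\widetilde{X} = Z + E$, where $E$ gathers the differences coming from $\widetilde{x}^i - x^i = M\omega^i\widetilde\chi\ln r$ and $\wpa - \pa = (A-I)\pa$; expanding $\widetilde\Omega_{ij}$, $\widetilde\Omega_{i0}$ and $\widetilde{S}$ term by term, and using $t\lesssim r$ on the support of $\widetilde\chi$, one sees $|E\phi|\lesssim M\langle\ln r\rangle|\pa\phi|$. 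Combining this with the Klainerman-type pointwise identity $|\pa\phi|\lesssim \sum_{|I|\leq 1}|Z^I\phi|/(1+|q|)$, which follows from $S - \omega^i\Omega_{i0} = (t-r)\underline L$, $S + \omega^i\Omega_{i0}=(t+r)L$ and the tangentiality of $L$, gives the $k=1$ upper bound. For $k > 1$ I would iterate: each additional tilded field either becomes a $Z$ or produces an $E$-error whose coefficients are controlled by \eqref{eq:partialstardifference} and Corollary \ref{cor:BoundPartial}, while commutators of the tilded fields stay inside their own family by \eqref{FieldComms}. The lower bound is the mirror argument with the roles of $\widetilde{X}$ and $Z$ swapped, using the companion identity $|\wpa\phi|\lesssim \sum_{|I|\leq 1}|\widetilde{X}^I\phi|/(1+|q^*|)$ obtained by composing \eqref{eq:partialstardifference} with the untilded Klainerman identity.

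The hardest step is the $k\to k+1$ induction for \eqref{eq:ZstarZ}: one must verify that all the error terms produced by commuting, expanding and applying Leibniz combine to give exactly the $k$-th power of the weight $1+M\ln|1+r|/(1+|q|)$ without losing derivatives or picking up spurious growth in $r$. What makes this go through is the combination of the sharp bounds on $(A-I)$ established in the proof of \eqref{eq:partialstardifference} with the closure of both commutator families under brackets up to constants (equation \eqref{FieldComms}), so that the geometric weights $1+|q|$ and $1+|q^*|$ remain the correct ones at each level of the induction.
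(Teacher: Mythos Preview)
Your treatments of \eqref{eq:partialstardifference} and \eqref{eq:qstarq} coincide with the paper's. For \eqref{eq:ZstarZ} you take a genuinely different route. The paper argues by region: the claim is trivial where $\widetilde\chi=0$; on the support of $\widetilde\chi'$ one has $1+|q|\sim 1+|q^*|\sim 1+t+r$ so the weight is $\sim 1$ and a crude expansion via Proposition~\ref{prop:Aderivatives} suffices; in the remaining near-cone region the paper writes $\widetilde{X}^I$ explicitly as a polynomial in $(\underline{u}^*\wL)$, $(u^*\wuL)$, $(r^*\widetilde{\slashed\pa})$ and $\wpa$ with bounded coefficients, then substitutes the exact conversion formulas relating $(1+u)\underline{L}$ to $(1+u^*)\wuL$ and $(1+\underline u)L$ to $(1+\underline u^*)\wL$, and iterates. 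Your perturbative approach---expand $\widetilde{X}=Z+E$, commute each bare $\pa$ outward via $[Z,\pa]=c\,\pa$, and trade each outermost $\pa$ for a factor $(1+|q|)^{-1}$ through \eqref{eq:tangentialbyvectorfield}---is also correct: the bookkeeping closes because a $\pa$ that falls on a coefficient $c$ produces $1/r\lesssim(1+|q|)^{-1}$ on the support of $\widetilde\chi$, so each $E$ contributes exactly one factor $M\ln r/(1+|q|)$. The paper's null-frame argument makes the geometric origin of the weight transparent and provides formulas reused later (cf.\ Lemma~\ref{lem:m0decomp}); your approach avoids the region split and the frame algebra at the cost of more commutator bookkeeping.

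One small gap: the companion inequality $(1+|q^*|)\,|\wpa\phi|\lesssim\sum_{|I|\leq 1}|\widetilde X^I\phi|$ that you need for the lower bound is not ``obtained by composing \eqref{eq:partialstardifference} with the untilded Klainerman identity''---that route either becomes circular (you end up needing the lower bound you are proving) or requires absorbing an error of size $M\ln r/(1+|q|)$, which is not globally small. The inequality holds instead for the direct algebraic reason that $\widetilde S\pm\omega^i\widetilde\Omega_{0i}=(t\pm r^*)(\wpa_t\pm\wpa_{r^*})$, i.e.\ the tilded fields satisfy the same relations with $q^*$ as the untilded ones do with $q$; once you invoke this, your mirror argument for the lower bound goes through.
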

\begin{proof}
We first prove \eqref{eq:partialstardifference}. \!We write the left hand side as $\partial^\alpha Z^I((A_\mu^\beta \!- \delta_\mu^\beta)\partial_\beta)$, from which we get
\begin{align}
\big|_{\,}\pa^\alpha Z^I(\wpa_{\mu}-\pa_\mu )\phi\big| &\les {\sum}_{\substack{|I_1|+|I_2|\leq|I| \\ |\alpha_1|+|\alpha_2|\leq|\alpha|}}|\partial^{\alpha_1}Z^{I_1}(A_\mu^\beta - \delta_\mu^\beta)||\partial^{\alpha_2}Z^{I_2}\pa_\beta\phi|.
\end{align}
The result follows from Lemma \ref{lem:StaPhi} and Corollary \ref{cor:Aderivatives}.

To prove the rightmost inequality of \eqref{eq:qstarq} we note that $1+|q^*| \leq 1+|q|+M\wt\chi\ln r$ and divide by $1+|q|$. Taking $1+|q| \leq 1+|q^*|+M\wt\chi\ln r$ and dividing by $1+|q^*|$ gives the left inequality.

We now look at the right hand inequality in \eqref{eq:ZstarZ}. In the region $\frac{r}{t+1}\leq \frac14$ this is trivial. For $\frac{1}{4}\leq \frac{r}{t+1}\leq \frac34$, we don't need to distinguish derivatives. and the inequality follows from expanding out and applying Corollary\ref{cor:Aderivatives}. We now consider the region $\frac34 \leq \frac{r}{t+1}\leq 1$ (the region where it is greater than 1 follows from an almost identical argument). Here, we must take the null decomposition into consideration.
For any set of vector fields $\wt{X}^I$, we can commute fields and take the decompositions $\wt{S} = \frac12(\underline{u}^*\wt{L} + u^*\wt{\underline{L}}), \wt{\Omega}_{0i} = \frac12\omega_i(\underline{u}^*\wt{L} - u^*\wt{\underline{L}}) + \frac{t}{r^*}(r^*\widetilde{\slashed\pa}_i)$, which gives us
\begin{equation}
\widetilde{X}^I = \sum_{j+k+|\alpha|+|\gamma|\leq k}\, \, \prod_{1\leq i<j\leq 3} f^I_{jk\alpha\gamma}(\omega, t^*/r^*)(\underline{u}^*\wt{L})^j
(u^*\wt{\underline{L}})^k(r^*\wt{\slashed\pa}_i)^{\gamma_i}\wpa^\alpha.
\end{equation}
where $\gamma$ is a multiindex with components $\gamma_{ij}$, and $\partial^\alpha$ is a standard multiindex. We can replace $1+u^*$ with $1+u^*$ and $\underline{u}^*$ with $1+\underline{u}^*$, which changes the values of $f$. We can write
\begin{align}
(1+u)\underline{L} &= \Big(1+\frac{M\chi\ln r}{1+u^*}\Big)(1+u^*)\wt{\underline{L}} - (1+u)(\frac{M}{r}\partial_{r^*}),\\
(1+\underline{u})L&=  \Big(1 - \frac{M\ln r}{1+\underline{u}^*}\Big)(1+\underline{u}^*)\widetilde{{L}} +(1+\underline{u})(\frac{M}{r}\partial_{r^*}),\\
(1+u^*)\underline{\widetilde{L}} &= \Big(1-\frac{M\chi\ln r}{1+u}\Big)(1+u)\underline{L} + (1+u^*)(\frac{M}{r}\partial_{r^*}),\\
(1+\underline{u}^*)\wt{L} &= \Big(1 + \frac{M\ln r}{1+\underline{u}}\Big)(1+\underline{u}) L - (1+\underline{u}^*)\frac{M}{r}\partial_{r^*}
\end{align}
Our result follows from iterating this and applying Corollary \ref{cor:Aderivatives} when needed. The right hand inequality, and the exterior case $r^*\geq t$, follow similarly.
\end{proof}

\subsubsection{The asymptotically Schwarzschild metric in the new coordinates}\label{subsec:ASmetric}
We recall that
\beq\widetilde{m}_0^{ab}=(m^{\alpha\beta} - \frac{M\chi}{r}\delta^{\alpha\beta})A_\alpha^{a} A_\beta^{b}\eq
In asymptotically Schwarzschild coordinates, we expect $g_{\wt{L}\wt{L}}$ to decay faster than $r^{-1}$ along the light cone. We first show bounds on components $\wt{m}_0$, but before that we must define certain quantities.
Let $\widehat{\delta}^{ab}$ be the Euclidean metric on the spatial coordinates in this frame, with $\widehat{\delta}^{ab}\! = \! 1$ if $a\! =\! b\! > \! 0$ and $0$ otherwise, and let $\slashed{S}^{ab}\!$ be the tangential part, given by
\[
\slashed{S}^{00} = \slashed{S}^{0i} = \slashed{S}^{i0} = 0, \qquad
\slashed{S}^{ij} = \delta^{ij} - \omega^i\omega^j
\]
where as usual $i,j$ range from $1$ to $3$. In the inverse null decomposition neither $\widehat{m}$ nor $\slashed{S}$ contains a term like $\wt{L}^a\wt{L}^b$. We now decompose $\wt{m}_0$ as a weighted sum of these terms, plus a remainder which decays rapidly along the light cone in all components.
\begin{lemma}\label{lem:m0approx}
We can decompose
 \beq\label{eq:metric0decomposition}
 \widetilde{m}_0^{ab}= \kappa_{0\,}\widehat{m}^{ab}\!
 +\kappa_{1\,} \slashed{S}^{\,ab}\!+\kappa_{2\,}\omega^a\omega^b\!+\kappa_{3\,} i_+^{\, ab}\!,
 \eq
 where
 \beq
 \kappa_0=1+\tfrac{M\widetilde\chi}{r},\quad
 \kappa_1\!=(1{}_{\!}+\tfrac{M\widetilde\chi\ln{r}}{r})^2
 (1-\tfrac{M\widetilde\chi}{r})-(1{}_{\!}+\tfrac{M\widetilde\chi}{r})\!
 =2\tfrac{M\ln{r}\!\!}{r} \chi_1
 \big(\tfrac{r}{t+1},\! \tfrac{M\ln{r}\!\!}{r}, \!\tfrac{M}{r}\big)
 -2\widetilde{\chi}\tfrac{M}{r},
 \eq
 and
 \beq
 \kappa_2=(\tfrac{M\widetilde\chi}{r})^2(1{}_{\!}+\tfrac{M\widetilde\chi}{r}),
 \quad
i_+^{\, ab}=\widetilde{\chi}^{\,\prime}(\tfrac{r}{t+1})
 \chi^{ab}\big(\tfrac{r}{t+1}, \tfrac{M\ln{r}}{r}, \tfrac{M}{r},\omega\big),
 \qquad\text{and} \qquad \kappa_3=\tfrac{M\ln{r}}{r},
 \eq
 for some smooth functions $\chi_1,$ $\chi^{ab}$\!\!. \!We have the following estimates on Lie derivatives of $\kappa_{2\,} \widehat{\delta}^{\,ab}\!$ and $\kappa_{3\,} i_+^{\, ab}\!$:
\begin{align}
{\sum}_{a, b}{\sum}_{\substack{|I|\leq k}}|(\Lie_{\widetilde{X}}^I (\kappa_{2\,} \widehat{\delta}))^{ab}|
&\lesssim \frac{M^2(\ln(1+r))^2}{\langle t+r^*\rangle^2},\\
{\sum}_{a, b}{\sum}_{\substack{|I|\leq k}}|(\Lie_{\widetilde{X}}^I (\kappa_{3\,} i_+))^{ab}|
&\lesssim \frac{M\ln(1+r)}{\langle t+r^*\rangle}.
\end{align}
The support of these are contained in the supports of $\wt\chi$ and $\wt\chi'$ respectively.
\end{lemma}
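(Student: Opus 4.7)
I would prove the decomposition \eqref{eq:metric0decomposition} by direct expansion of $\widetilde m_0^{ab} = m_0^{\alpha\beta} A_\alpha^a A_\beta^b$ using the identities \eqref{id:Aexpansion}, splitting the computation by index type. Since $A_\alpha^0 = \delta_\alpha^0$ identically, the time-time component reduces to $\widetilde m_0^{00} = m_0^{00} = -(1+M\widetilde\chi/r) = \kappa_0\widehat m^{00}$. The cross component $\widetilde m_0^{0i} = A_0^i\, m_0^{00}$ is entirely proportional to $\widetilde\chi'(r/(t+1))$ through the factor $A_0^i$, hence is supported in the annulus $\{1/4 < r/(t+1) < 3/4\}$ and is collected into $\kappa_3 i_+^{0i}$ with $\kappa_3 = M\ln r/r$ and $i_+^{0i}$ of the stated form.

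For the spatial block I would write, from \eqref{id:Aijexpansion}, $A_j^i = (1 + a\widetilde\chi)\delta_j^i + (\beta - a\widetilde\chi)\omega_j\omega^i$ with $a = M\ln r/r$, $b = M/r$, $\beta = a\eta\widetilde\chi'(\eta) + b\widetilde\chi$, $\eta = r/(t+1)$. Using the orthogonality $(\delta_j^i - \omega_j\omega^i)\omega^j = 0$ yields $\delta^{kl} A_k^i A_l^j = (1+a\widetilde\chi)^2 \slashed S^{ij} + (1+\beta)^2 \omega^i\omega^j$. Multiplying by the overall factor $(1-M\widetilde\chi/r)$ from $m_0^{kl}$ and adding the $m_0^{00} A_0^i A_0^j$ contribution gives $\widetilde m_0^{ij}$ as a linear combination of $\slashed S^{ij}$ and $\omega^i\omega^j$. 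The $\slashed S^{ij}$ coefficient equals $\kappa_0 + \kappa_1$ by the definition of $\kappa_1$. The $\omega^i\omega^j$ coefficient decomposes as $\kappa_0 + \kappa_2$ plus a remainder every summand of which carries at least one factor of $\widetilde\chi'(\eta)$; this remainder, together with the $0i$ piece, realizes $\kappa_3 i_+^{ab}$ as $(M\ln r/r)\widetilde\chi'(\eta)$ times a smooth function of $\eta$, $M\ln r/r$, $M/r$, and $\omega$.

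For the Lie derivative bounds, $\widehat\delta^{ab}$ has constant components in the $\widetilde x$ coordinates, and its Lie derivatives along the fields $\widetilde X$ involve only the coordinate derivatives of the (linear) vector field coefficients, which are bounded. Consequently $\Lie_{\widetilde X}^I(\kappa_2 \widehat\delta^{ab})$ is controlled by $\widetilde X^J$-derivatives of the scalar $\kappa_2$ for $|J|\leq|I|$. The scalar $\kappa_2 = (M\widetilde\chi/r)^2(1 + M\widetilde\chi/r)$ satisfies the hypothesis of Proposition \ref{prop:StaPhi} with $N = 2$, $M_0 = 0$, and Corollary \ref{cor:Aderivatives} together with Corollary \ref{cor:BoundPartial} allows conversion of the standard fields $Z$ to $\widetilde X$ with at worst an $O(M\ln r/r)$ loss. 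Combined with $r \gtrsim \langle t + r^*\rangle$ on $\supp\widetilde\chi$, this gives the first inequality. The argument for $\kappa_3 i_+^{ab}$ is identical, starting from $N = 1$ and $M_0 = 1$ and yielding the extra $\ln(1+r)$ factor; the compact support of $\widetilde\chi'$ ensures the bound does not degenerate near the tip of the outgoing cone.

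The main bookkeeping obstacle is verifying that the various $\omega^i\omega^j$ remainder corrections — from $A_0^i A_0^j$, from the $a\eta\widetilde\chi'(\eta)\omega_j\omega^i$ contribution inside $A_j^i$, and from their cross products with the bulk terms — collectively factor as $\widetilde\chi'(\eta)$ times a smooth function of bounded arguments, so that $i_+^{ab}$ has the precise form claimed. This is straightforward once one observes that every excess term carries $\widetilde\chi'(\eta)$ explicitly, and that all remaining scalar inputs $\eta$, $M\ln r/r$, $M/r$, $\omega$ lie in compact bounded ranges on $\supp\widetilde\chi'$.
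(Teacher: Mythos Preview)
Your proposal is correct and follows essentially the same route as the paper: direct expansion of $\widetilde m_0^{ab} = m_0^{\alpha\beta}A_\alpha^a A_\beta^b$ by index type via the explicit formulae \eqref{id:Aexpansion}, with the spatial block organized through the $\slashed S/\omega\omega$ projector decomposition of $A_j^i$, and then Proposition~\ref{prop:StaPhi} together with Corollary~\ref{cor:BoundPartial} for the componentwise Lie-derivative bounds. The paper's presentation differs only cosmetically, writing the spatial block first in terms of $\delta^{ij}$ and $\slashed S^{ij}$ before converting to the $\slashed S^{ij}$ and $\omega^i\omega^j$ basis.
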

\begin{proof}
For the bounds on Lie derivatives of $\kappa_{2\,} \widehat{\delta}^{\,ab}\!$ and $\kappa_{3\,} i_+^{\, ab}$ it suffices to prove the bounds on derivatives of each component, for which we can use Lemma \ref{lem:StaPhi} and Corollary \ref{cor:BoundPartial}.

We expand $\widetilde{m}_0^{ab} = m_0^{\alpha\beta}A_{\alpha}^a A_{\beta}^b$ and use \eqref{id:A00expansion}, \eqref{id:A0iexpansion}, and \eqref{id:Aijexpansion} to get
\begin{equation*}
\widetilde{m}_0^{00} = \bigtwo(-1-\tfrac{M\widetilde{\chi}}{r}\bigtwo),\qquad
\widetilde{m}_0^{0i} = -\tfrac{1}{r}\bigtwo(\tfrac{r}{t+1}\bigtwo)^2
\bigtwo(-1-\tfrac{M\widetilde{\chi}}{r}\bigtwo)\omega_i\wt{\chi}'M\ln r.
\end{equation*}
The term $\widetilde{m}_0^{00}$ is equal to the corresponding component of $\kappa_{0\,}\widehat{m}^{ab}\!$, and $\widetilde{m}_0^{0i}$ appears in $\kappa_{3\,} i_+$. The component bounds for this term follow directly from Lemma \ref{lem:StaPhi} and Corollary \ref{cor:BoundPartial}. For $\widetilde{m}_0^{ij}$, the process is slightly longer. Writing $\widetilde{m}_0^{ij} = m_0^{\alpha\beta}A_\alpha^iA_\beta^j$ we see that if $\alpha$ or $\beta$ are 0, then both must be 0 and we can absorb this term in $\kappa_{3\,} i_+$ using Lemma \ref{lem:StaPhi} and Corollary \ref{cor:BoundPartial}. If neither is 0 we write
\begin{equation}
\widetilde{m}_0^{ij} - m_0^{00}A_0^i A_0^j= {\sum}_{k, l=1}^3 \bigtwo(1 - \tfrac{M\widetilde\chi}{r}\bigtwo)\delta^{kl}A_k^i A_l^j,
\end{equation}
Equation \eqref{id:Aijexpansion} is equivalent to
\beq
\pa_j\widetilde{x}^i = \delta_j^i\bigtwo(1 + \tfrac{M\widetilde\chi}{r}\bigtwo) + \bigtwo(\tfrac{(M\ln r - M)\widetilde\chi}{r}\bigtwo)\bigtwo(\delta_j^i - \omega_j\omega^i\bigtwo) + \tfrac{M\ln r}{r}\omega_j\omega^i\tfrac{r}{t+1}\widetilde{\chi}' .
\eq

Terms containing $\chi'$ can be absorbed into $\kappa_{3\,} i_+$ using Lemma \ref{lem:StaPhi} and Corollary \ref{cor:BoundPartial}. The remaining terms are
\beq\label{m0space}
\bigtwo(1-\tfrac{M\widetilde\chi}{r}\bigtwo)\bigtwo(1 + \tfrac{M\widetilde\chi}{r}\bigtwo)^2\delta^{ij} + \bigtwo(1-\tfrac{M\widetilde\chi}{r}\bigtwo)\Big(2\bigtwo(1 + \tfrac{M\widetilde\chi}{r}\bigtwo)\bigtwo(\tfrac{(M\ln r - M)\widetilde\chi}{r}\bigtwo) + \bigtwo(\tfrac{(M\ln r - M)\widetilde\chi}{r}\bigtwo)^2\Big)\bigtwo(\delta^{ij} - \omega^i\omega^j\bigtwo).
\eq
We can expand this, using the identity
\beq
2\bigtwo(1 + \tfrac{M\widetilde\chi}{r}\bigtwo)\bigtwo(\tfrac{(M\ln r - M)\widetilde\chi}{r}\bigtwo) + \bigtwo(\tfrac{(M\ln r - M)\widetilde\chi}{r}\bigtwo)^2= (1{}_{\!}+\tfrac{M\widetilde\chi\ln{r}}{r})^2
 -(1{}_{\!}+\tfrac{M\widetilde\chi}{r})^2.
\eq
For $i, j > 0$ we now that $\slashed{S}^{ij} - \delta^{ij} = -\omega^i\omega^j$ so componentwise \eqref{m0space} is equal to
\beq
 \kappa_{0\,}\widehat{m}^{ij}\!+\kappa_{1\,} \slashed{S}^{\,ij}\!+\kappa_{2\,}\omega^i\omega^j. \tag*{\qedhere}
\eq
\end{proof}

We now show that $\kappa_{0\,}\widehat{m}^{ij}$, $\kappa_{1\,} \slashed{S}^{\,ij}$, $\kappa_{2\,}\omega^i\omega^j$
behave nicely under repeated Lie differentiation. First, we consider $\widehat{m}$.
\begin{equation}
\Lie_{\widetilde{X}}^I\left(\kappa_{0\,}\widehat{m}^{ij}\right)^{ab}
 = \kappa_0^I\widehat{m}^{ab},
\end{equation}
for some bounded function $\kappa_0^I$.
This follows from expanding and applying Lemma \ref{lem:StaPhi} and Corollary \ref{cor:BoundPartial} whenever a derivative falls on $\kappa_0$, or noting that $\Lie_{\wt{X}}\widehat{m}^{-1} = c_{\wt{X}}\widehat{m}^{-1}$ otherwise. Similar reasoning gives us the estimate
\beq
|\widehat{\Lie}_{\widetilde{X}}^I\left(\kappa_{0\,}\widehat{m}\right)^{ab}|= \frac{1}{r}\chi_0(\tfrac{r}{t+1}, \tfrac{M\ln r}{r}, \tfrac{M}{r}),
\eq
where $\chi_0$ is a bounded smooth function which is 0 when $\frac{r}{t+1}<\frac14$.
 In order to deal with the term $\kappa_{1\,} \slashed{S}^{\,ij}$, we use the identity
\begin{equation}
\slashed{S}^{ab} = r^{-2}{\sum}_{i < j}\widetilde\Omega^{a}_{ij}\widetilde\Omega^{b}_{ij}.
\end{equation}
We now expand. If Lie derivatives fall on $r^{-2}\kappa_1$, we can again apply Lemma \ref{lem:StaPhi} and Corollary \ref{cor:BoundPartial}. If derivatives fall on $\widetilde\Omega^{a}_{ij}\widetilde\Omega^{b}_{ij}$, the commutator identities \eqref{FieldComms} and the Leibniz rule gives us the following decomposition, which holds when $\frac{r}{t+1}>1/4$: for $\widetilde{U}, \widetilde{V} \in \{\widetilde{L}, \widetilde{\underline{L}}, \wt{S_1}, \wt{S_2}\}$, we have the pointwise estimates
\begin{subequations}
\begin{align}
|(\Lie_{\widetilde{X}}^I(\wt{\Omega}^a_{ij}
\wt{\Omega}^b_{ij}))_{\widetilde{U}\widetilde{V}}|& \lesssim \langle t+r^* \rangle^2, \\
|(\Lie_{\widetilde{X}}^I(\wt{\Omega}^a_{ij}
\wt{\Omega}^b_{ij}))_{\widetilde{U}\widetilde{L}}|&\lesssim \langle t-r^*\rangle\langle t+r^* \rangle,\\
|(\Lie_{\widetilde{X}}^I(\wt{\Omega}^a_{ij}
\wt{\Omega}^b_{ij}))_{\widetilde{L}\widetilde{L}}|&\lesssim \langle t-r^*\rangle^2.
\end{align}
\end{subequations}
Combining this with the bounds $\wt{X}_{\wt{U}}\lesssim \langle t+r^*\rangle, \wt{X}_{\wt{L}} \lesssim \langle t-r^*\rangle$ gives the following:
\begin{lemma}\label{lem:m0decomp}
Given the inverse metric $\widetilde{m}_0^{ab}$ under the change of coordinates, we can decompose
\begin{equation}
(\Lie_{\widetilde{X}}^I\widetilde{m}_0)^{ab}
= \kappa_0^I\widehat{m}^{ab} + \slashed{S}^{I\,ab} +
(\Lie_{\widetilde{X}}^I R)^{ab},
\end{equation}
where
\begin{subequations}
\begin{align}
\kappa_0^I &\lesssim 1 \\
|\slashed{S}^{I}_{\wt{U}\wt{V}}| &\lesssim \tfrac{M\ln(1 + \langle t + r^*\rangle)}{\langle{t+r^*}\rangle}\chi_1\\
|\slashed{S}^{I}_{\wt{L}\wt{U}}| &\lesssim\tfrac{M\langle t-r^*\rangle\ln(1 + \langle t + r^*\rangle)}{\langle{t+r^*}\rangle^2}\chi_1\\
|(\Lie_{\widetilde{X}}^IR)_{\wt{U}\wt{V}}| &\lesssim \tfrac{M^2(\ln(1+r))^2}{{\langle t+r^*\rangle}^2}\chi_1
+ \tfrac{M\widetilde\ln(1+r)}{\langle t+r^*\rangle}\chi'_2,
\end{align}
\end{subequations}
where $\chi_1$ and $\chi'_2$ are smooth functions functions bounded above by 1 with support respectively in the intervals $\frac{r}{t+1}\in [\tfrac14, \infty)$and $[\tfrac14, \tfrac34]$ respectively.
\end{lemma}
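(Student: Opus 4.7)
The strategy is to start from the explicit decomposition already established in Lemma \ref{lem:m0approx},
\[
\widetilde{m}_0^{ab} = \kappa_0\widehat{m}^{ab} + \kappa_1\slashed{S}^{ab} + \kappa_2\omega^a\omega^b + \kappa_3 i_+^{ab},
\]
and push the Lie derivatives $\Lie_{\widetilde{X}}^I$ through each summand using Leibniz, then reorganize the four resulting pieces into the three pieces of the target decomposition.

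For the $\kappa_0\widehat{m}$ term I use the fact that the fields in \eqref{eq:ModifiedVectorFields} are precisely the Minkowski commutator fields in the $\widetilde{x}$-coordinate system, so for each $\widetilde{Z}$ we have $\Lie_{\widetilde{Z}}\widehat{m}^{ab} = c_{\widetilde{Z}}\widehat{m}^{ab}$ for a constant $c_{\widetilde{Z}}$ (zero for rotations and translations, a fixed constant for $\widetilde{S}$). Iterating and using Leibniz, every Lie derivative landing on $\widehat{m}$ just produces a numerical factor, while derivatives landing on $\kappa_0 = 1 + M\widetilde{\chi}/r$ remain bounded by Proposition \ref{prop:StaPhi} and Corollary \ref{cor:BoundPartial}. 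The result is a term of the form $\kappa_0^I\widehat{m}^{ab}$ with $\kappa_0^I$ bounded, as required.

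For the $\kappa_1\slashed{S}$ term I use the identity $\slashed{S}^{ab} = r^{-2}\sum_{i<j}\widetilde{\Omega}^a_{ij}\widetilde{\Omega}^b_{ij}$ and expand. Derivatives falling on the scalar factor $r^{-2}\kappa_1$ stay bounded in the weighted sense of Proposition \ref{prop:StaPhi}, noting that $\kappa_1 = O\big(M\ln r/r\big)$, so $r^{-2}\kappa_1 = O\big(M\ln r/r^3\big)$. Derivatives falling on $\widetilde{\Omega}^a_{ij}\widetilde{\Omega}^b_{ij}$ are controlled by the pointwise null-frame estimates quoted just above the lemma. In the tangential case this yields $r^{-3}\cdot M\ln r \cdot \langle t+r^*\rangle^2$ which, in the support region where $r \sim \langle t+r^*\rangle$, collapses to the claimed $M\ln(1+r)/\langle t+r^*\rangle$; contracting at least one index against $\widetilde{L}$ trades a factor $\langle t+r^*\rangle$ for $\langle t-r^*\rangle$ and gives the $\slashed{S}^I_{\widetilde{L}\widetilde{U}}$ bound. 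This piece is set to be $\slashed{S}^{I\,ab}$. The remaining summands $\kappa_2\omega^a\omega^b$ and $\kappa_3 i_+^{ab}$ are lumped into $R^{ab}$; their Lie derivatives are estimated term-by-term directly from the corresponding bounds in Lemma \ref{lem:m0approx} — $\kappa_2 = O(M^2/r^2)$ produces the $M^2(\ln(1+r))^2/\langle t+r^*\rangle^2$ contribution with $\widetilde{\chi}$-support, while the $\widetilde{\chi}'$ factor in $i_+$ localizes the $M\ln(1+r)/\langle t+r^*\rangle$ contribution to the collar $r/(t+1)\in[1/4,3/4]$.

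The main obstacle is justifying the pointwise null-frame bounds on $\Lie_{\widetilde{X}}^I(\widetilde{\Omega}^a_{ij}\widetilde{\Omega}^b_{ij})$. This is essentially Minkowski bookkeeping in the $\widetilde{x}$-coordinates: one iterates the commutator relations \eqref{FieldComms} and uses that the components of the vector fields themselves obey $|\widetilde{X}_{\widetilde{L}}|\lesssim \langle t-r^*\rangle$ and $|\widetilde{X}_{\widetilde{U}}|\lesssim \langle t+r^*\rangle$ for general null-frame directions. The subtlety lies in keeping track of which contractions remain tangential at each stage, so that one does not lose the gain in $\langle t-r^*\rangle/\langle t+r^*\rangle$ along the light cone. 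Once these bounds are in hand the rest of the proof is bookkeeping.
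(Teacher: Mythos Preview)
Your proposal is correct and follows essentially the same approach as the paper: start from the decomposition of Lemma~\ref{lem:m0approx}, handle $\kappa_0\widehat{m}$ via $\Lie_{\widetilde{Z}}\widehat{m} = c_{\widetilde{Z}}\widehat{m}$, handle $\kappa_1\slashed{S}$ via the identity $\slashed{S}^{ab} = r^{-2}\sum_{i<j}\widetilde{\Omega}^a_{ij}\widetilde{\Omega}^b_{ij}$ together with the null-frame component bounds $|\widetilde{X}_{\widetilde{L}}|\lesssim \langle t-r^*\rangle$, $|\widetilde{X}_{\widetilde{U}}|\lesssim \langle t+r^*\rangle$, and put the $\kappa_2$ and $\kappa_3 i_+$ terms into the remainder $R$. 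The paper in fact presents these computations in the paragraphs immediately preceding the lemma statement rather than after it, but the content is the same.
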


\section{Einstein's equations in asymptotically Schwarzschild coordinates}\label{sec:Genralizedwavecoordsection}
We recall from \eqref{eq:EinsteinGenWaveintro} that Einstein's equations in the new coordinates take the form
\beq\label{eq:EinsteinGenWaveGeneral}
\widetilde{\Box}^{\,\widetilde{g}}  \widetilde{h}^1_{cd} =\wt{F}_{cd} (\widetilde{g}) [\wpa\widetilde{h}^1, \wpa \widetilde{h}^1]+\widetilde{\widecheck{T}}_{cd}+\wt{R}_{cd}^{\,error}.
\eq
 In order to obtain this from \eqref{eq:EinsteinWaveintro}, we perform a change of coordinates and subtract off a term $\widetilde{h}^0$ containing the mass. Each step produces error terms, the \emph{covariant} and \emph{mass} error terms respectively, both from the reduced wave operator and from the quadratic terms appearing in $\wt{F}$. These appear in \eqref{eq:EinsteinGenWaveintro} as $R_{cd}^{error}$. We make the expansion
\begin{equation}
\wt{R}_{cd}^{error}=\widetilde{E}_{cd}^{\, mass}+\widetilde{E}_{cd}^{\,cov}+\widetilde{F}_{cd}^{\, mass}\!+\widetilde{F}_{cd}^{\, cov},
\end{equation}
where $\widetilde{E}$ terms are the mass and covariant error terms coming from the reduced wave operator $\widetilde{\Box}^{\,\widetilde{g}}  \widetilde{h}^1_{cd}$, and the $\widetilde{F}$ terms are the mass and covariant error terms coming from the semilinear error terms in $\wt{F}_{cd} (\widetilde{g}) [\wpa\widetilde{h}^1, \wpa \widetilde{h}^1]$. We determine the exact structure of these error terms in Section \ref{sec:covarianterrorsec5}.

\subsection{Einstein's equations in asymptotically Schwarzschild coordinates}\label{sec:cov}
Let $\onabla$ be covariant differentiation with respect to the metric $\widetilde{m}$:
\begin{equation}\label{eq:covariantderivativedefintion}
\onabla_{\!c}  \widetilde{K}^{a_{\!1}\dots a_r}_{b_1\dots b_s}\!
=\wpa_c \widetilde{K}^{a_{\!1}\dots a_r}_{b_{1}\dots b_s}
+\widehat{\Gamma}_{dc}^{a_{\!1}} \widetilde{K}^{d\dots a_r}_{\! b_{\!1}\dots b_s}\!-\cdots
+\widehat{\Gamma}_{dc}^{a_{r}} \widetilde{K}^{a_{\!1}\dots d}_{\! b_1\dots b_s}\!
-\widehat{\Gamma}_{b_1 c}^{d}\widetilde{K}^{a_{\!1}\dots a_r}_{d\dots b_s}\!\!+\cdots
-\widehat{\Gamma}_{b_s c}^{d}\widetilde{K}^{a_{\!1}\dots a_r}_{b_1\dots d}\!\!.
\end{equation}
Note that here $\onabla$ is simply a tool to systematize
the change of variables and does not depend on $g$, but only on
the Christoffel symbols $\widehat{\Gamma}$ of $\widetilde{m}$, given in
\eqref{eq:christoffelm}.
Since $m$ is the Minkowski metric, the tensor transformation law implies
\begin{equation}
\onabla_{\!c}  \widetilde{K}^{a_{\!1}\dots a_r}_{b_1\dots b_s}\! = A_c^\gamma \!A^{a_1}_{\alpha_1}\!\hdots A^{a_r}_{\alpha_r}\!A_{b_1}^{\beta_1}\!\hdots A_{b_s}^{\beta_s}\partial_\gamma K^{\alpha_{\!1}\dots \alpha_r}_{\beta_1\dots \beta_s}.\!
\end{equation}

Let $\widetilde{H}^{cd}\!
=A^c_{\gamma}A^d_{\delta}H^{\!\gamma\delta}\!\!$.
Since
$\pa_\alpha H^{\alpha\beta}\!=m^{\alpha\alpha'}\! m^{\beta\beta'}\!\pa_\alpha
 H_{\alpha'\!\beta'}\!=A_b^{\beta}\widetilde{m}^{a a'}\! \widetilde{m}^{b b'}\!\onabla_{\! a}
 H_{a'{}_{\!} b'}\!=A_b^{\beta}\onabla_{\!a} \widetilde{H}^{ab}$,
it follows that
\begin{equation}
A^{c}_{\beta}\pa_\alpha H^{\alpha\beta}
=\onabla_{\!a} \widetilde{H}^{ac}.
\end{equation}
Let $\widetilde{h}_{cd}
=A_c^{\gamma}\!A_d^{\delta}h_{\gamma\delta}$.
Since $A_a^{\alpha} \! A_b^{\beta}\! A_c^{\gamma} \! A_d^{\delta}
\pa_\alpha \pa_\beta h_{\gamma\delta}
\!=\onabla_{\!a}\!\onabla_b \widetilde{h}_{cd}$
it follows that
\begin{equation}\label{eq:waveeqcoordchange}
A_c^{\gamma}\!A_d^{\delta}\Box^{g} h_{\gamma\delta}
=\tg^{\,ab}\onabla_{\!a}\onabla_b \widetilde{h}_{cd}.
\end{equation}
Since the covariant derivative satisfies
$  \! A_b^{\beta} \!A_{c}^{\gamma} \! A_d^{\delta}
\pa_\beta h_{\gamma\delta}\!=\!\onabla_{\!b} \widetilde{h}_{cd}$ and since
$F$ consists of two contractions with the (inverse of the) metric
it follows that
\beq\label{eq:inhomcoordchange}
A_c^{\,\gamma}\!A_d^{\,\delta}F_{\!\gamma\delta}(g)[\pa h,\!\pa k]
=\widetilde{F}_{\!cd}(\tg)[\onabla \widetilde{h},\!\onabla\widetilde{k}].
\eq

By \eqref{eq:waveeqcoordchange} and \eqref{eq:inhomcoordchange}
 Einstein's equations \eqref{eq:EinsteinWaveMassErrors} in the new coordinates become
\beq\label{eq:covarianteinstein}
\tg^{\,ab}\onabla_{\!a}\!\onabla_{\!b} \widetilde{h}_{cd}
=\widetilde{F}_{cd}(\tg)[\onabla \widetilde{h},\!\onabla\widetilde{h}]
+\widetilde{\widecheck{T}}_{\,cd}.
\eq
\subsection{The structure of the mass error terms} By \eqref{eq:thewaveoperatormasserrror} respectively \eqref{eq:Ferror} and \eqref{eq:Aexpression}, \eqref{eq:gammahatdecay}, we can write
  \begin{align}\label{eq:covariantwave}
 \tg^{\,ab}\onabla_{\!a}\!\onabla_{\!b} \widetilde{h}_{cd}
 &=\tg^{\,ab}\onabla_{\!a}\!\onabla_{\!b} \widetilde{h}^1_{cd}
 +\widetilde{E}_{cd,0}^{\, mass}\!\!+\widetilde{E}_{cd,1}^{\, mass}[\widetilde{H}_1],\\
\widetilde{F}_{cd}(\tg)[\onabla \widetilde{h},\!\onabla\widetilde{h}]
&= \widetilde{F}_{cd}(\tg)[\onabla \widetilde{h}^1\!,\!\onabla\widetilde{h}^1]
 +\widetilde{F}_{cd,0}^{\, mass}(\tg)\!
+\widetilde{F}_{cd,1}^{\, mass}(\tg)[\onabla\widetilde{h}^1],\label{eq:covariantinhom}
  \end{align}
 where
 \begin{align}
\!\!\!\!\!\!\widetilde{E}_{cd,0}^{\, mass}\!
&=M\chi^{\,\prime}_{cd}\big(\tfrac{r}{1+t},\!
\tfrac{M}{r},\!\tfrac{M\ln{r}\!\!}{r},\omega\big)
r^{-3}\!\!,
\qquad\quad
\widetilde{E}_{cd,1}^{\, mass}[\widetilde{H}_1]= \chi_{cdab}
\big(\tfrac{r}{1+t},\!\tfrac{M}{r},\!\tfrac{M\ln{r}\!\!}{r},\omega\big)
M r^{-3}\widetilde{H}_{1}^{ab}\!,\label{eq:Emass}\\
\!\!\! \!\!\!\widetilde{F}_{cd,0}^{\, mass}\!&=M^2\chi_{cd}\big(\tfrac{r}{1+t},\!\tfrac{M}{r},\!
 \tfrac{M\ln{r}\!\!}{r},\omega,\widetilde{g}\big) r^{-4}\!\!,
\qquad
\widetilde{F}_{cd,1}^{\, mass}[\onabla \widetilde{h}^1]\!
=\! \chi_{cd}^{e a b}
\big(\tfrac{r}{1+t},\!\tfrac{M}{r},\!
\tfrac{M\ln{r}\!\!}{r},\omega,\widetilde{g}\big)M r^{-2}
\onabla_{\!e}
\widetilde{h}^{1}_{ab}\quad\label{eq:Fmass}.
 \end{align}
 Here $\chi^\prime_{cd}(s,\cdot\big)$ stands for functions with
 $|s-1\!/2|\!<\!1\!/4$ in the support and $\chi_{cdab}(s,\cdot)$,  $\chi_{cd}(s,\cdot)$
 and $\chi_{cd}^{e a b}(s,\cdot)$ stands for functions with
 $s\geq \!1\!/4$ in the support. It follows that
 \beq\label{eq:h1covariantEinstein}
 \tg^{\,ab}\onabla_{\!a}\!\onabla_{\!b} \widetilde{h}^1_{cd}
 +\widetilde{E}_{cd,0}^{\, mass}\!\!+\widetilde{E}_{cd,1}^{\, mass}[\widetilde{H}_1]
 =\widetilde{F}_{cd}(\tg)[\onabla \widetilde{h}^1\!,\!\onabla\widetilde{h}^1]
 +\widetilde{F}_{cd,0}^{\, mass}(\tg)\!
+\widetilde{F}_{cd,1}^{\, mass}(\tg)[\onabla\widetilde{h}^1]+\widetilde{\widecheck{T}}_{cd}
 \eq

The approximate wave coordinate condition \eqref{eq:approxwavecoord}
becomes
\beq\label{eq:covariantwavecoord}
\onabla_{\!a}\big(\widetilde{H}_1^{ac}
-\tfrac{1}{2}\widetilde{m}^{ac}  \widetilde{m}_{bd}\,
 \widetilde{H}_1^{bd}\big)=\widetilde{W}^c(\tg)[\widetilde{H},\onabla \widetilde{H}]
 +\widetilde{W}^{c, 0}_{\! mass},
 \eq
where
\beq
\widetilde{W}^c(\tg)[\widetilde{H},\onabla \widetilde{H}]=
\widetilde{W}^c(\tg)[\widetilde{H}_1,\onabla \widetilde{H}_1]
+\widetilde{W}^{c,1}_{mass}
+\widetilde{W}^{c,2}_{mass}[ \widetilde{H}_1]
+\widetilde{W}^{c,3}_{mass}[\onabla  \widetilde{H}_1].
\eq
Here
\begin{align}
\widetilde{W}^{c,0}_{mass}
&=-M{\chi}^{c\,\prime, 0}\big(\tfrac{r}{t+1},\!\tfrac{M}{r},\!\tfrac{M\ln{r}\!\!}{r},\omega\big) r^{-2},
\qquad\qquad
\widetilde{W}^{c,1}_{mass}=\chi^{c,1}\big(\tfrac{r}{1+t},\omega,\tg\big)M r^{-3}, \\
\widetilde{W}^{c,2}_{mass}[ \widetilde{H}_1]&
=\chi_{\mu\nu}^{c,2}\big(\tfrac{r}{1+t},\omega,\tg\big)M r^{-2}  \widetilde{H}_1^{\mu\nu},\qquad\qquad
\widetilde{W}^{c,3}_{mass}[\pa  \widetilde{H}_1]
=\chi_{ab}^{cd,3}\big(\tfrac{r}{1+t},\omega,\tg\big)M r^{-1}\onabla_d  \widetilde{H}_1^{ab} .
\end{align}
with ${\chi}^{c\,\prime}(s,\cdot)$ denoting a function with
 $|s-1\!/2|\!<\!1\!/4$ in the support.

\subsection{The structure of the covariant error terms}\label{sec:cov}
We now show how the wave coordinate condition and the mass-adjusted Einstein equation \eqref{eq:h1covariantEinstein} behave under coordinate transformation.
\begin{lemma}\label{lemma:covariant}
Given a symmetric $(0,2)$-tensor $\widetilde{H}$ and symmetric $(2,0)$-tensors $\widetilde{h}$ and $\widetilde{k}$, the identities
\begin{align}\label{eq:WaveCoordchangecoord}
\onabla_{\!\!a} \widetilde{H}^{ac}
&=\wpa_a \widetilde{H}^{ac}\!
+\widetilde{W}_1^c[\widehat{\Gamma},\widetilde{H}],\\
 \label{eq:ReducedwaveEqchangecoord}
\tg^{\,ab}\onabla_{\!a}\onabla_b \widetilde{h}_{cd}&={\Box}^{\,\tg}\, \widetilde{h}_{cd}
+\widetilde{E}_{cd}^1(\tg)[\widehat{\Gamma},\wpa\widetilde{h}]
+\widetilde{E}_{cd}^2(\tg)[\wpa\widehat{\Gamma},\widetilde{h}]
+\widetilde{E}_{cd}^3(\tg)[\widehat{\Gamma},\widehat{\Gamma},\widetilde{h}],\\
\label{eq:Fchangecoord}
\widetilde{F}_{\!cd}(\tg)[\onabla \widetilde{h},\!\onabla\widetilde{k}]
&=\widetilde{F}_{\!cd}(\tg)[\wpa \widetilde{h},\wpa\widetilde{k}]
+\widetilde{F}_{\!cd}^1(\tg)[\widehat{\Gamma}\!,\widetilde{h},\wpa\widetilde{k}]
+\widetilde{F}_{\!cd}^1(\tg)[\widehat{\Gamma}\!,\widetilde{k},\wpa\widetilde{h}]
+\widetilde{F}_{\!cd}^2(\tg)
[\widehat{\Gamma}\!,\widehat{\Gamma}\!,\widetilde{h},\widetilde{k}].
\end{align}
hold for some $\widetilde{W}_1, \widetilde{E}^1, \widetilde{E}^2, \widetilde{E}^3, \widetilde{F}^1, \widetilde{F}^2$ which are linear in the bracketed arguments.
\end{lemma}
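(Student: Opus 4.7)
The three identities are purely algebraic consequences of the definition of the covariant derivative $\nabla$ recalled at the start of Section~\ref{sec:cov}. The plan is to expand each covariant derivative into its partial-derivative part plus Christoffel-symbol corrections, and then group terms by their order of $\widehat{\Gamma}$. No geometric input (no curvature identity, no use of the metric equation) is needed; this lemma is strictly a bookkeeping statement that packages the correction terms so they can later be estimated using the decay of $\widehat{\Gamma}$ from \eqref{est:tildeGamma}.

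For \eqref{eq:WaveCoordchangecoord}, I would simply apply the definition to the $(2,0)$-tensor $\widetilde H$:
\[
\nabla_{\!a}\widetilde H^{bc}=\wpa_a\widetilde H^{bc}+\widehat\Gamma^{b}_{da}\widetilde H^{dc}+\widehat\Gamma^{c}_{da}\widetilde H^{bd},
\]
and then contract $b=a$. The contracted correction
\[
\widetilde W_1^{c}[\widehat\Gamma,\widetilde H]=\widehat\Gamma^{a}_{da}\widetilde H^{dc}+\widehat\Gamma^{c}_{da}\widetilde H^{ad}
\]
is manifestly linear in $\widehat\Gamma$ and linear in $\widetilde H$, which is the structure claimed.

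For \eqref{eq:ReducedwaveEqchangecoord}, the plan is to apply the covariant derivative formula twice to the $(0,2)$-tensor $\widetilde h_{cd}$. The inner derivative gives $\nabla_b\widetilde h_{cd}=\wpa_b\widetilde h_{cd}-\widehat\Gamma^{e}_{cb}\widetilde h_{ed}-\widehat\Gamma^{e}_{db}\widetilde h_{ce}$. Applying $\nabla_a$ to this $(0,3)$-tensor produces one term where $\wpa_a$ hits the expression above, together with three more correction terms of the form $\widehat\Gamma\cdot(\nabla\widetilde h)$ coming from the three lower indices $c$, $d$, $b$. After distributing $\wpa_a$ inside the products $\widehat\Gamma\cdot\widetilde h$ one obtains terms of three distinct types: (i)~$\wpa_a\wpa_b\widetilde h$, (ii)~$\widehat\Gamma\cdot\wpa\widetilde h$, (iii)~$(\wpa\widehat\Gamma)\cdot\widetilde h$, and (iv)~$\widehat\Gamma\cdot\widehat\Gamma\cdot\widetilde h$. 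Contracting with $\tg^{ab}$ and identifying the first group as $\Box^{\tg}\widetilde h_{cd}$, the remaining three groups become $E^1_{cd}$, $E^2_{cd}$, $E^3_{cd}$ respectively. One only needs to check that each $E^i$ is bilinear (resp.\ trilinear) in its listed arguments with coefficients depending only on $\tg$, which is immediate from the computation.

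For \eqref{eq:Fchangecoord}, the starting observation is that $\widetilde F_{cd}(\tg)[X,Y]$ is, by \eqref{eq:inhomcoordchange} and the analogous structure in \eqref{eq:EinsteinWaveintro}–\eqref{eq:curvedPdef}, a bilinear form in the two $(0,3)$-tensors $X=\nabla\widetilde h$ and $Y=\nabla\widetilde k$ with coefficients quadratic in $\tg^{-1}$. Writing $\nabla_a\widetilde h_{cd}=\wpa_a\widetilde h_{cd}+\widehat\Gamma\!\cdot\!\widetilde h$ and similarly for $\widetilde k$, bilinearity yields four terms: the $\wpa\widetilde h,\wpa\widetilde k$ piece (the main term), the mixed pieces linear in $\widehat\Gamma$ and linear in $\widetilde h$ or $\widetilde k$ together with one $\wpa$ of the other factor (giving the two $\widetilde F^{1}_{cd}$ contributions), and the piece quadratic in $\widehat\Gamma$ and bilinear in $(\widetilde h,\widetilde k)$ (giving $\widetilde F^{2}_{cd}$). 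There is no real obstacle; the main care needed is to verify the claimed multilinear structure of each $\widetilde F^{i}_{cd}$ and each $E^{i}_{cd}$, since these forms are what allow one later to combine this lemma with the pointwise bound \eqref{est:tildeGamma} and the energy estimates to treat the correction terms as negligible covariant error terms in the sense of Section~\ref{sec:cubicerrors}.
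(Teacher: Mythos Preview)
Your proposal is correct and follows essentially the same approach as the paper: both proofs simply expand the covariant derivative from its definition at the start of Section~\ref{sec:cov}, contract where needed, and group the resulting terms by their order in $\widehat\Gamma$. The paper's proof writes out the double expansion for $\nabla_a\nabla_b\widetilde h_{cd}$ explicitly, while you describe the same computation in words, but the content is identical.
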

\begin{proof} \eqref{eq:WaveCoordchangecoord} follows from
 $\onabla_{\!a} \widetilde{K}^{ac}\!\!=\wpa_a \widetilde{K}^{ac}\!\!
 +\widehat{\Gamma}_{\!ad}^a \widetilde{K}^{dc}\!\!
+\widehat{\Gamma}_{\!ad}^c \widetilde{K}^{ad}\!\!$.
\!\eqref{eq:ReducedwaveEqchangecoord} follows from
\begin{multline}\label{eq:secondcovariant}
\onabla_{\!a} \onabla_{\!b}\, \widetilde{h}_{cd}
=\wpa_a \big( \wpa_b \widetilde{h}_{cd}
-\widehat{\Gamma}_{bc}^e \widetilde{h}_{ed}\!
-\widehat{\Gamma}_{bd}^e \widetilde{h}_{ce}\!\big)
-\widehat{\Gamma}_{ab}^{b'} \big( \wpa_{b'} \widetilde{h}_{cd}
-\widehat{\Gamma}_{b'c}^e \widetilde{h}_{ed}\!
-\widehat{\Gamma}_{b'd}^e \widetilde{h}_{ce}\!\big)
-\widehat{\Gamma}_{ab}^{c'} \big( \wpa_{b}\widetilde{h}_{c'd}
-\widehat{\Gamma}_{bc'}^e \widetilde{h}_{ed}\!
-\widehat{\Gamma}_{bd}^e \widetilde{h}_{c'e}\!\big)\\
-\widehat{\Gamma}_{ab}^{d'} \big( \wpa_{b} \,\widetilde{h}_{cd'}
-\widehat{\Gamma}_{bc}^e \widetilde{h}_{ed'}\!
-\widehat{\Gamma}_{bd'}^e \widetilde{h}_{ce}\!\big)
=\wpa_a\wpa_b \,\widetilde{h}_{cd}
+\widetilde{E}_{abcd}^1[\widehat{\Gamma},\wpa\widetilde{h}]
+\widetilde{E}_{abcd}^2[\wpa\widehat{\Gamma},\widetilde{h}]
+\widetilde{E}_{abcd}^3[\widehat{\Gamma},\widehat{\Gamma},\widetilde{h}].
\end{multline}
Finally, \!\eqref{eq:Fchangecoord} follows from \eqref{eq:covariantderivativedefintion} and symmetry of $\widetilde{F}$.
\end{proof}
\subsubsection{The covariant error terms}\label{sec:covarianterrorsec5}
Lemma \ref{lemma:covariant} says that we can replace the covariant derivatives by $\wpa$,
up to terms involving at least an extra factor
of $\widehat{\Gamma}$, that by \eqref{eq:gammahatdecay} decays like
\beq
\wpa^\alpha \widehat{\Gamma}^{c}_{\!ab\!}
=M r^{-2-|\alpha|}\ln{r} \, \chi^{c\alpha}_{ab}\big(\tfrac{r}{t+1},\omega,\!\tfrac{M\!}{r},\!
\tfrac{M\ln{r}\!\!}{r}\, ,\widetilde{h}\big)
+M r^{-2-|\alpha|} \widetilde{\chi}^{c\alpha}_{ab}\big(\tfrac{r}{t+1},\omega,\!\tfrac{M\!}{r},\!
\tfrac{M\ln{r}\!\!}{r}\, ,\widetilde{h}\big),\quad
|\alpha|\leq 1,
\eq
 multiplied by a factor of $\widetilde{h}$ instead
of a factor of $\wpa \widetilde{h}$. Since as we will see, in the weighted norms that we will be using
\beq
\| \langle r^*\!-t\rangle^{-1}\widetilde{h}^1 \|\lesssim \|\wpa \widetilde{h}^1 \|,
\eq
this error still gives an extra decay of $\ln{t}/t$. By \eqref{eq:h1covariantEinstein}
\beq\label{eq:newcoordeinstein}
\tg^{\,ab}\wpa_{a}\wpa_b \widetilde{h}^1_{cd}
-\widetilde{E}_{cd}^{\, mass}
 -\widetilde{E}_{cd}^{\,cov}
=\widetilde{F}_{cd}(\tg)[\wpa \widetilde{h}^1\!\!,\wpa\widetilde{h}^1]
+\widetilde{F}_{cd}^{\, mass}\!
+\widetilde{F}_{cd}^{\, cov}+\widetilde{\widecheck{T}}_{\!cd},
\eq
where with notation as in \eqref{eq:Emass}-\eqref{eq:Fmass}
  \beq
 \widetilde{E}_{cd}^{\, mass}=\widetilde{E}_{cd,0}^{\, mass}\!\!
 +\widetilde{E}_{cd,1}^{\, mass}[\widetilde{H}_1],
 ,\qquad\qquad
\widetilde{F}_{cd}^{\, mass}=\widetilde{F}_{cd,0}^{\, mass}(\tg)\!
+\widetilde{F}_{cd,1}^{\, mass}(\tg)[\wpa\widetilde{h}^1],
\eq
and
\beq
\widetilde{E}_{cd}^{\, cov}
=\widetilde{E}_{cd,0}^{\, cov}[h^1]+\widetilde{E}_{cd,1}^{\, cov}[\pa h^1],\qquad\qquad
\widetilde{F}_{cd}^{\, cov}
=\widetilde{F}_{cd,0}^{\, cov}[h^1]+\widetilde{F}_{cd,1}^{\, cov}[\pa h^1],
\eq
are  the covariant error coming from replacing
the covariant derivatives by partial derivatives in \eqref{eq:h1covariantEinstein}: by \eqref{eq:ReducedwaveEqchangecoord} and Lemma \ref{lem:ChristoffelBounds}, we have
\begin{align}
\widetilde{E}_{cd,0}^{\,cov}[ h^1]
&=Mr^{-3}\ln{r} \, \chi^{ab}_{cd,0}\big(\tfrac{r}{t+1},
\omega,\!\tfrac{M\!}{r},\!\tfrac{M\ln{r}\!\!}{r}\, ,h\big) h^1_{ab}
+M r^{-3} \widetilde{\chi}^{ab}_{cd,0}\big(\tfrac{r}{t+1},\omega,\!\tfrac{M\!}{r},\!
\tfrac{M\ln{r}\!\!}{r}\, ,h\big) h^1_{ab},\qquad\label{eq:Ecov1}\\
\widetilde{E}_{cd,1}^{\,cov}[\pa h^1]
&=Mr^{-2}\ln{r} \, \chi^{ab\gamma}_{cd,1}\big(\tfrac{r}{t+1},
\omega,\!\tfrac{M\!}{r},\!\tfrac{M\ln{r}\!\!}{r}\, ,h\big)\wpa_\gamma h^1_{ab}
+M r^{-2} \widetilde{\chi}^{ab\gamma}_{cd,1}\big(\tfrac{r}{t+1},\omega,\!\tfrac{M\!}{r},\!
\tfrac{M\ln{r}\!\!}{r}\, ,h\big)\wpa_\gamma h^1_{ab}\qquad\label{eq:Ecov2}
\end{align}
and by \eqref{eq:Fchangecoord} and Lemma \ref{lem:ChristoffelBounds}, we have
\begin{align}
\widetilde{F}_{cd,0}^{\, cov}[h^1]&=Mr^{-3}\ln{r} \, \chi^{ab}_{cd,0}\big(\tfrac{r}{t+1},
\omega,\!\tfrac{M\!}{r},\!\tfrac{M\ln{r}\!\!}{r}\, ,h^1\big) h^1_{ab}
+M r^{-3} \widetilde{\chi}^{ab}_{cd,0}\big(\tfrac{r}{t+1},\omega,\!\tfrac{M\!}{r},\!
\tfrac{M\ln{r}\!\!}{r}\, ,h\big) h^1_{ab},\qquad\label{eq:Fcov1}\\
\widetilde{F}_{cd,1}^{\, cov}[\pa h^1]&=Mr^{-2}\ln{r} \, \chi^{ab\gamma}_{cd,1}\big(\tfrac{r}{t+1},
\omega,\!\tfrac{M\!}{r},\!\tfrac{M\ln{r}\!\!}{r}\, ,h^1\big)\wpa_\gamma h^1_{ab}
+M r^{-2} \widetilde{\chi}^{ab\gamma}_{cd,1}\big(\tfrac{r}{t+1},\omega,\!\tfrac{M\!}{r},\!
\tfrac{M\ln{r}\!\!}{r}\, ,h\big)\wpa_\gamma h^1_{ab}\qquad\label{eq:Fcov2}.
\end{align}
Moreover by \eqref{eq:covariantwavecoord}
\beq\label{eq:newcoordwavecoord}
\wpa_a\big(\widetilde{H}_1^{ac}
-\tfrac{1}{2}\widetilde{m}^{ac}  \widetilde{m}_{bd}\,
 \widetilde{H}_1^{bd}\big)=\widetilde{W}^c(\tg)[\widetilde{H},\wpa \widetilde{H}]
 +\widetilde{W}^{c,0}_{\!mass}+\widetilde{W}^c_{\!cov},
 \eq
where
\beq
\widetilde{W}^c(\tg)[\widetilde{H},\wpa \widetilde{H}]=
\widetilde{W}^c(\tg)[\widetilde{H}_1,\wpa \widetilde{H}_1]
+\widetilde{W}^{c,1}_{mass}
+\widetilde{W}^{c,2}_{mass}[ \widetilde{H}_1]
+\widetilde{W}^{c,3}_{mass}[\wpa \widetilde{H}_1],
\eq
and
\beq
\widetilde{W}^c_{\!cov}=\widetilde{W}^{c}_{\!cov}(h)[\widetilde{H}_{\!1}]\!
=M r^{-2}\ln{r} \,
\chi^c_{ab}\big(\tfrac{r}{t+\!1},\omega,\!\tfrac{M\!}{r},\!\tfrac{M\ln{r}\!\!}{r} ,h\big)
\widetilde{H}_{1}^{ab}\!\!
+M r^{-2} \widetilde{\chi}^c_{ab}\big(\tfrac{r}{t+\!1},\omega,\!\tfrac{M\!}{r},\!
\tfrac{M\ln{r}\!\!}{r} ,h\big)\widetilde{H}_1^{ab}\!\! .
\eq

\subsubsection{Alternative term picking up the mass}
In view of \eqref{eq:approxtildegm0H1}, \eqref{eq:approxtildem0} and
\eqref{eq:waveoperatorstarsphericalcoord} it may be natural to alternatively
 subtract of a term picking up the mass after having changed coordinates:
\beq
\widetilde{h}^{0\mathcal{I}}_{ab}
=M{\rs}^{-1}\widetilde{\chi}(\tfrac{\rs}{t+1})\delta_{ab},
\eq
where $\widetilde{\chi}$ is as in \eqref{eq:m0def}. The difference
\beq
\wpa^\alpha \widetilde{h}^{0\mathcal{I}}_{ab}-\wpa^\alpha \widetilde{h}^{0}_{ab}
=O(Mr^{-2-|\alpha|}\ln{r}),
\eq
is of the same size as
$\widehat{\Gamma}$ so one could include it in the covariant error terms described above.

\subsection{The structure of the quadratic and higher order terms}
Recall that the inhomogeneous term in Einstein's vacuum equations has the form
\beq
\widetilde{F}_{ab} (\tg) [\wpa \widetilde{h}, \wpa \widetilde{h}]=P(\tg)[\wpa_a \wh, \wpa_b \wh]+Q_{ab}(\tg)[\wpa \wh, \wpa \wh],
\eq
where $Q$ is a combination of classical null forms and $P$, given by \eqref{eq:curvedPdef},
 has a weak null structure. In view of \eqref{eq:approxtildegm0H1}-\eqref{eq:approxtildem0}
we expect that
\beq
\widetilde{F}_{ab} (\tg) [\wpa \widetilde{h}, \wpa \widetilde{h}]
\sim \widetilde{F}_{ab} (\widehat{m}) [\wpa \widetilde{h}, \wpa \widetilde{h}].
\eq
 Since $\widehat{Q}_{ab}\!=Q_{ab}(\widehat{m})$ satisfy the classical null condition with respect to the new coordinates,
\beq \label{eq:nullcondestsec5}
|\widehat{Q}_{ab}(\wpa \wh,\wpa \widetilde{k})|\les |\overline{\wpa} \,\wh|\,|\wpa  \widetilde{k}|+|\wpa h|\, |\overline{\wpa} \,
\widetilde{k}|.
\eq
The main term $\widehat{P}=P(\widehat{m})$ can be further analyzed as follows.
First we note that
by \eqref{eq:transversalderivativeprojectionsec2}
\beq\label{eq:PPnullsec5}
\big| \widehat{P}(\wpa_a \wh,\wpa_b \widetilde{k})- \wL_{a}\wL_b \widehat{P}(\pa_{q^*} \wh,\pa_{q^*} \widetilde{k})\big|
\les |\overline{\wpa}\, \wh| \,|\wpa \widetilde{k}|
+|\wpa \wh|\,|\overline{\wpa} \,\widetilde{k}|.
\eq
where by \eqref{eq:tanPsec3}
\beq\label{eq:tanPsec5}
\big|\widehat{P}(\pa_{q^*}\wh,\pa_{q^*}\widetilde{k})-P_{\mathcal S}(\pa_{q^*}\wh,\pa_{q^*}\widetilde{k})\big|
\les \big(|\pa_{q^*}\wh\,|_{\wL\widetilde{\mathcal
T}}+|\pa_{q^*}\trs \wh|\big)|\pa_{q^*}\widetilde{k}| +|\pa_{q^*}\wh\,|\big(|\pa_{q^*}\widetilde{k}\,|_{\wL\widetilde{\mathcal
T}}+|\pa_{q^*}\trs \widetilde{k}|\big).
\eq
where
\beq\label{eq:tanSPsec5}
P_{\mathcal{S}} (D,E)= -\widehat{D}_{\!AB}\, \widehat{E}^{AB\!\!}/2,\quad A,B\in\mathcal{S},\quad
\text{where} \quad \widehat{D}_{\!AB}=D_{\!AB}-\delta_{AB}\trs D\!/2,\quad
\trs D=\delta^{AB} D_{AB}.
\eq

\section{Vector fields applied to Einstein's equations,
commutators and higher order equations}\label{sec:higherordereinstein}
Modulo mass errors
(that comes from replacing $h$ by $h^1$)
and covariant errors
(that comes from replacing $\nabla_{\!a} $ by
$\wpa_a$ and $h^1$ by $\widetilde{h}^1$)
 \eqref{eq:newcoordeinstein} the reduced wave operator $\widetilde{\Box}=\tg^{\,ab}\wpa_a\wpa_b$ applied to $\widetilde{h}^1$ satisfy
\beq\label{eq:newcoordeinsteinwithRerrors}
\widetilde{\Box}\,\widetilde{h}^1_{cd}
=\widetilde{F}_{cd}(\tg)[\wpa \widetilde{h}^1,\wpa\widetilde{h}^1]
+\widetilde{\widecheck{T}}_{\!cd}
+\widetilde{R}_{cd}^{\,mass}+\widetilde{R}_{cd}^{\,cov},
\eq
where
$\widetilde{R}_{cd}^{\,mass}=\widetilde{E}_{cd}^{\,mass}
+\widetilde{F}_{cd}^{\,mass}$
and $\widetilde{R}_{cd}^{\,cov}=\widetilde{E}_{cd}^{\,cov}+\widetilde{F}_{cd}^{\,cov}$
are given in Section \ref{sec:cov}.
It was explained how to deal with the mass errors in Section
\ref{sec:masssubtractionsection} and covariant errors in Section
\ref{sec:Genralizedwavecoordsection} at lowest order. However, applying vector fields
 using the estimates in Section
\ref{section:metricdecomposition} for curved vector fields in terms of the non curved
gives, apart from additional logarithmic factors, the same estimates for vector fields applied
to these quantities. We can therefore neglect these error terms in the analysis that follows.
Similarly by \eqref{eq:newcoordwavecoord}
\beq\label{eq:wavecoordhattilde}
\wpa_a\big(\widetilde{H}_1^{ac}
-\tfrac{1}{2}\widetilde{m}^{ac}  \widetilde{m}_{bd}\,
 \widetilde{H}_1^{bd}\big)=\widetilde{W}^c(\tg)[\widetilde{H}_1,\wpa \widetilde{H}_1]
 +\widetilde{W}^c_{\!mass}+\widetilde{W}^c_{\!cov}.
\eq
We will now apply vector fields to Einstein's equations \eqref{eq:newcoordeinsteinwithRerrors} to get higher order equations. As we will see the commutators improve if we use Lie derivatives. This gives a wave equation for the differentiated metric $ \widetilde{h}^{1I\!}=\widehat{\mathcal{L}}_{\widetilde{X}}^I \widetilde{h}^{1}$:
\beq\label{eq:newcoordeinsteinwithRerrorshigherorder}
\widetilde{\Box}\, \widetilde{h}^{1I}_{cd}
+\widetilde{R}_{cd}^{\,com\, I}\!
=\!{\sum}_{I^\prime\!+I^{\prime\prime}=I}\! \kappa_{I^{\prime\prime}} \Big(\!{\sum}_{J+K=I^\prime}\!\widetilde{F}_{cd}(\tg)[\wpa \widetilde{h}^{1J}\!\!,\wpa\widetilde{h}^{1K}]\!+\widetilde{R}_{cd,\, 0}^{\,cube\, I^\prime}\!\!\!
+\widetilde{\widecheck{T}}_{\!cd}{\!\!}^{\!I^\prime}\!
+\widetilde{R}_{cd}^{\,mass\,I^\prime}\!\!\!+\widetilde{R}_{cd}^{\,cov\,I^\prime}\!\Big),
\eq
where the remainder terms $\widetilde{R}_{cd}^{\,\bullet\, I}$ are lower order, section \ref{sec:higherordereinstein}

Similarly with $\widecheck{H}_1^{ac}\!=\widetilde{H}_1^{ac}\!
-\tfrac{1}{2}\widetilde{m}^{ac}  \widetilde{m}_{bd}\,
 \widetilde{H}_1^{bd} $
 we get higher order versions of the wave coordinate condition
 \beq\label{eq:Liedivergencetilderepeat}
	 \wpa_c  \widehat{\Lie}_{\widetilde{X}}^I \widecheck{H}_1^{cd}
={\sum}_{|J|\leq |I|} c_J^I\widehat{\Lie}_{\widetilde{X}}^J
\wpa_c \widecheck{H}_1^{cd},
\eq
see section \ref{sec:higherorderwavecoord}.

\subsection{Commutators and Lie derivatives along the modified vector fields}
\subsubsection{Lie derivatives along the modified vector fields}
As for \eqref{eq:Liepartialcommute} we have
\begin{equation} 
		{\mathcal L}_{\widetilde{X}}\wpa_{c_1}\!\cdots\wpa_{c_k}
\widetilde{K}^{a_1\dots a_r}_{b_1\dots b_s}
		=\wpa_{c_1}\!\cdots\wpa_{c_k} \Lie_{\widetilde{X}}
\widetilde{K}^{a_1\dots a_r}_{b_1\dotsb_s}.
	\end{equation}
As in \eqref{eq:modLiedef} we define the modified Lie derivative in the new coordinates by
\beq
	\widehat{\mathcal L}_{\widetilde{X}} \widetilde{K}^{a_1\dots a_r}_{b_1\dots b_s}
	=
	\Lie_{\widetilde{X}} \widetilde{K}^{a_1\dots a_r}_{b_1\dots b_s}
	+
	\tfrac{r-s}{4}(\wpa_e \widetilde{X}^e)\widetilde{K}^{a_1\dots a_r}_{b_1\dots b_s}.
\eq
Then if $\widehat{m}^{ab}={m}^{ab}$ is the Minkowski metric in the $\widetilde{x}$ coordinates
then for the vector fields in \eqref{eq:ModifiedVectorFields} we have
\beq\label{eq:LieMinkowski}
\widehat{\Lie}_{\widetilde{X}}
\widehat{m}^{ab}=0.
\eq

\subsubsection{The commutators of the equations
with Lie derivatives along the modified vector fields}
The Lie derivative  commutes
 with exterior differentiation which for any of the vector fields $\wt{X}$ leads to
\beq\label{eq:Liedivergencetilde}
	 \wpa_c  \widehat{\Lie}_{\widetilde{X}} \widecheck{H}^{cd}
=\big(\widehat{\Lie}_{\widetilde{X}}
+\tfrac{\wpa_c \widetilde{X}^c }{2}\big)\wpa_c \widecheck{H}^{cd}.
\eq
Moreover,
\begin{equation}
[\Lie_{\widetilde{X}}, {\Box}^{\,\tg}]\phi
= -(\Lie_{\widetilde{X}}\widetilde{g}^{\,ab})\widetilde{\pa}_a\widetilde{\pa}_b\phi
+ \widetilde{g}^{\,ab}(\wpa_a\wpa_b \widetilde{X}^c)\widetilde{\pa}_c\phi=
-(\Lie_{\widetilde{X}} \widetilde{g}^{\,ab})
\widetilde{\pa}_a\widetilde{\pa}_b\phi,
\end{equation}
since  $\widetilde{X}^c$ are linear functions. Moreover
\begin{equation} \label{eq:Liecommutatortilde}
	{\Box}^{\,\tg} \widehat{\mathcal{L}}_{\widetilde{X}} \phi_{cd}
	=
	\Lie_{\widetilde{X}} \big({\Box}^{\,\tg} \phi_{cd} \big)
	-
	(\widehat{\mathcal{L}}_{\widetilde{X}} \tg^{\,ab}) \wpa_{a} \wpa_{b} \phi_{cd},
\end{equation}
where by \eqref{eq:approxtildegm0H1}-\eqref{eq:approxtildem0} and \eqref{eq:LieMinkowski}
\beq
\widehat{\mathcal{L}}_{\widetilde{X}} \tg^{\,ab}\sim \widehat{\mathcal{L}}_{\widetilde{X}} \widetilde{H}_1^{\,ab}.
\eq

Finally if $$S_{ab}(\widetilde{g})[\wpa \widetilde{h},\wpa \widetilde{k}]
=S_{ab}[\widetilde{G},\widetilde{G}][\wpa \widetilde{h},\wpa \widetilde{k}]$$
is a quadratic form in the $(0,3)$ forms $\wpa \widetilde{h}$ and
$\wpa \widetilde{k}$ with two contractions with the inverse of the metric
$\widetilde{G}^{ab}=\widetilde{g}^{ab}$, i.e. it is bilinear in $\widetilde{G}$,
we have
\beq\label{eq:LieQuadtilde}
 \Lie_{\!\widetilde{X}}\big(  S_{ab}(\widetilde{g\!})[\wpa \widetilde{h},\wpa \widetilde{k}]\big)\!
 =\!S_{ab}(\widetilde{g})[\wpa \widehat{\Lie}_{\!\widetilde{X}}\widetilde{h},\wpa \widetilde{k}]
 +S_{ab}(\widetilde{g\!})[\wpa \widetilde{h},\wpa \widehat{\Lie}_{\!\widetilde{X}} \widetilde{k}]
 +S_{ab}[ \widehat{\Lie}_{\!\widetilde{X}}\widetilde{G},\!\widetilde{G}]
 [\wpa \widetilde{h},\wpa \widetilde{k}]
 +S_{ab}[\widetilde{G},\widehat{\Lie}_{\!\widetilde{X}}\widetilde{G}]
 [\wpa \widetilde{h},\wpa \widetilde{k}].
\eq
\subsubsection{The cubic error terms}
We remark that the last term in \eqref{eq:LieQuadtilde}
is cubic and therefore much easier to control, see Section \ref{sec:cubicerrors}
and Section \ref{sec:covarianterrorsec5}.

\subsection{Higher order commutators with Einstein's equations}
We will now apply vector fields to Einstein's equations \eqref{eq:newcoordeinsteinwithRerrors} to get higher order equations. As we have seen the commutators improve if we use Lie derivatives and furthermore as will see below they also improve if we first multiply with $\kappa=(1+M\widetilde{\chi}/r)^{-1}$, which just produces lower order terms of the same form. This gives a wave equation for the differentiated metric $ \widetilde{h}^{1I\!}=\widehat{\mathcal{L}}_{\widetilde{X}}^I \widetilde{h}^{1}$:
\beq\label{eq:newcoordeinsteinwithRerrorshigherorder}
\widetilde{\Box}\, \widetilde{h}^{1I}_{cd}
+\widetilde{R}_{cd}^{\,com\, I}\!
=\!{\sum}_{I^\prime\!+I^{\prime\prime}=I}\! \kappa_{I^{\prime\prime}} \Big(\!{\sum}_{J+K=I^\prime}\!\widetilde{F}_{cd}(\tg)[\wpa \widetilde{h}^{1J}\!\!,\wpa\widetilde{h}^{1K}]\!+\widetilde{R}_{cd,\, 0}^{\,cube\, I^\prime}\!\!\!
+\widetilde{\widecheck{T}}_{\!cd}{\!\!}^{\!I^\prime}\!
+\widetilde{R}_{cd}^{\,mass\,I^\prime}\!\!\!+\widetilde{R}_{cd}^{\,cov\,I^\prime}\!\Big),
\eq
where $ \kappa_{I^{\prime\prime}} =\kappa^{-1}\widehat{\mathcal{L}}_{\widetilde{X}}^{I^{\prime\prime}} \kappa $,
$\widetilde{R}_{cd}^{\,mass\,I^\prime}
=\widehat{\mathcal{L}}_{\widetilde{X}}^{I^{\prime}}\widetilde{R}_{cd}^{\,mass}$
and $\widetilde{R}_{cd}^{\,cov\,I^\prime}
=\widehat{\mathcal{L}}_{\widetilde{X}}^{I^{\prime}}\widetilde{R}_{cd}^{\,cov}$.
Here $\kappa\widetilde{R}_{cd}^{\,com\, I}$ is the commutator term of
$\widehat{\mathcal{L}}_{\widetilde{X}}^{I}$ with the wave operator
$\kappa\tg^{ab}\wpa_a\wpa_b$ and $\widetilde{R}_{cd,\, 0}^{\,cube\,I^\prime}$
are the cubic terms that show up when some of the vector fields in
$\widehat{\mathcal{L}}_{\widetilde{X}}^{I^\prime}$ fall on the metric $g$ in the nonlinearity
$\widetilde{F}_{cd}(\tg)[\wpa \widetilde{h}^1,\wpa\widetilde{h}^1]$.
The remainder terms $\widetilde{R}_{cd}^{\,\bullet\, I}$ will be estimated below. Moreover we will show that we can write
\beq
\kappa^{2\!} \widetilde{F}_{\!cd}(\tg)
[\wpa \widetilde{h}^{1J}\!\!,\wpa\widetilde{h}^{1K}]
=\!\wL_{c}\wL_d
\widehat{P}(\pa_{q^*} \wh^{1J}\!\!,\pa_{q^*} \widetilde{h}^{1K})
+\kappa^2\widetilde{R}_{cd}^{\,tan\, JK}+\widetilde{R}_{cd,\, 1}^{\,cube\, JK},
\eq
where $\widehat{P}$ is the constant coefficient quadratic form with the special geometric weak null structure mentioned before and $\widetilde{R}_{cd}^{\,tan\, JK}$
consist of a quadratic form in $\widetilde{\pa} \widetilde{h}^1$, with at least one factor with only tangential derivatives, plus cubic terms $\widetilde{R}_{cd,\, 1}^{\,cube\, JK}$. The cubic terms as well as the mass and covariant terms are lower order and easier to control.

\subsubsection{The leading behavior of the inverse of the metric
in the curved coordinates}
 We have
 \beq
 \widetilde{g}^{\,ab}=\widetilde{m}_0^{ab}+\widetilde{H}_1^{ab}.
 \eq
 We have seen in Lemma \ref{lem:m0approx}
 that
 \beq\label{eq:metric0decomposition}
 \widetilde{m}_0^{ab}= \kappa_{0\,}\widehat{m}^{ab}\!
 +\kappa_{1\,} \slashed{S}^{\,ab}\!+\kappa_{2\,} \widehat{\delta}^{\,ab}\!+\kappa_{3\,} i_+^{\, ab}\!,
 \eq
 where
 \beq
 \kappa_0=1+\tfrac{M\widetilde\chi}{r},\qquad
 \kappa_1\!=(1{}_{\!}+\tfrac{M\widetilde\chi\ln{r}}{r})^2
 (1-\tfrac{M\widetilde\chi}{r})-(1{}_{\!}+\tfrac{M\widetilde\chi}{r})\!
 =2\widetilde{\chi} \tfrac{M\ln{r}\!\!}{r} \chi_1
 \big(\tfrac{r}{t+1},\! \tfrac{M\ln{r}\!\!}{r}, \!\tfrac{M}{r}\big)
 -2\widetilde{\chi}\tfrac{M}{r},
 \eq
 and
 \beq
 \kappa_2=(\tfrac{M\widetilde\chi}{r})^2\!\!,
 \qquad
i_+^{\, ab}=\widetilde{\chi}^{\,\prime}(\tfrac{r}{t+1})
 \chi^{ab}\big(\tfrac{r}{t+1}, \tfrac{M\ln{r}}{r}, \tfrac{M}{r},\omega\big),
 \qquad\text{and} \qquad \kappa_3=\tfrac{M\ln{r}}{r},
 \eq
 for some smooth functions $\chi_1$ and $\chi^{ab}$.
Here $\widehat{m}^{ab}$ is the Minkowski metric in the new coordinates,
$\widehat{\delta}^{ab}=1$, when $a,b=1,2,3$ and $\widehat{\delta}^{0b}=\widehat{\delta}^{a0}=\widehat{\delta}^{00}=0$.
 For $r>t/2$ \eqref{eq:metric0decomposition}
also follows from \eqref{eq:waveeqradial}.

Let $\kappa=1/\kappa_0$.
Lemma \ref{lem:m0decomp} together with \eqref{eq:tangentialbyvectorfield}
gives
\beq
\bigtwo|(\widehat{\mathcal{L}}_{\widetilde{X}}^I
(\kappa\widetilde{m}_0^{\,ab}\!\!-\widehat{m}^{ab})) \wpa_{a} \wpa_{b}\phi\bigtwo|
\lesssim \frac{M\ln{\tplusr}}{\tplusr^2} {\sum}_{|J|\leq 1}
|\wpa \widetilde{Z}^J \phi|.
\eq
Moreover using \eqref{eq:tangentialbyvectorfield} and \eqref{eq:waveoperatorHLLtangential} we get
\beq
\bigtwo|(\widehat{\mathcal{L}}_{\widetilde{X}}^I (\kappa \widetilde{H}_1^{ab}))
\wpa_{a} \wpa_{b}\phi\bigtwo|
\lesssim {\sum}_{|J|\leq |I|}
\Big(
\frac{\big|
\big(\widehat{\mathcal{L}}_{\widetilde{X}}^J
\widetilde{H}_1\big)_{\widetilde{L}\widetilde{L}}
\big|}{\tminusrstar}
+\frac{|\widehat{\mathcal{L}}_{\widetilde{X}}^J\widetilde{H}_1|}{\tplusr}\Big)
 {\sum}_{|K|\leq 1} |\wpa \widetilde{Z}^K \phi|.
\eq
Hence
\beq\label{eq:TheCommutator}
\bigtwo|(\widehat{\mathcal{L}}_{\widetilde{X}}^I (\kappa\tg^{\,ab}\!\!-\widehat{m}^{ab})) \wpa_{a} \wpa_{b}\phi\bigtwo|
\lesssim {\sum}_{|J|\leq |I|}\Big(\frac{M\ln{\tplusr}}{\tplusr^2}
+\frac{\big|\big(\widehat{\mathcal{L}}_{\widetilde{X}}^J
\widetilde{H}_1\big)_{\widetilde{L}\widetilde{L}}
\big|}{\tminusrstar}
+\frac{|\widehat{\mathcal{L}}_{\widetilde{X}}^J\widetilde{H}_1|}{\tplusr}\Big)
 {\sum}_{|K|\leq 1} |\wpa \widetilde{Z}^K \phi|.
\eq

\subsubsection{Higher order commutators with the reduced wave operator}
Since $\widehat{\mathcal{L}}_{\widetilde{X}}  \widehat{m}^{ab}=0$,
it follows that with $\kappa=1/\kappa_0$
\begin{equation} \label{eq:Liecommutatortildekappa}
	\kappa\widetilde{\Box}\, \widehat{\mathcal{L}}_{\widetilde{X}} \phi_{cd}
	=\Lie_{\widetilde{X}} \big( \kappa\widetilde{\Box}\,\phi_{cd} \big)
	-(\widehat{\mathcal{L}}_{\widetilde{X}} (\kappa\tg^{\,ab}\!\!-\widehat{m}^{ab})) \wpa_{a} \wpa_{b} \phi_{cd},
\end{equation}
By repeated use of \eqref{eq:Liecommutatortilde}
(see also \eqref{eq:Liecommutatortildekappa}) we have
\beq\label{eq:Liecommutatortildekapparepeat}
\Lie_{\widetilde{X}}^I \big( \kappa\widetilde{\Box}\, \widetilde{h}_{cd}^1\big)
=	\kappa\widetilde{\Box}\, \widehat{\mathcal{L}}_{\widetilde{X}}^I\widetilde{h}_{cd}^1
	+\kappa \widetilde{R}_{cd}^{\,com\, I}\!\!,
\eq
where
\beq
\widetilde{R}_{cd}^{\,com\, I}={\sum}_{J+K=I, \, |K|<|I|}
\kappa^{-1}\big(\widehat{\mathcal{L}}_{\widetilde{X}}^J (\kappa\tg^{\,ab}\!\!-\widehat{m}^{ab})\big) \wpa_{a} \wpa_{b}
\widehat{\Lie}_{\widetilde{X}}^{K}\widetilde{h}_{cd}^1.
\eq
The commutator term above is the main problematic term to deal with.
We will use the above expression for the curved wave operator with
the energy estimate but for the decay estimates we will use the constant coefficient wave operator $\Box^*=\widehat{m}^{ab}\wpa_a\wpa_b$.  We have
\beq\label{eq:Liecommutatortildekapparepeatstar}
	\Box^* \widehat{\mathcal{L}}_{\widetilde{X}}^I
\widetilde{h}_{cd}^{1}
	=\Lie_{\widetilde{X}}^I \big( \kappa\widetilde{\Box}\, \widetilde{h}^1_{cd}\big)- \kappa\widetilde{R}_{cd}^{\,com\,*\, I}\!,
\eq
where
\beq
\widetilde{R}_{cd}^{\,com\,*\, I}
={\sum}_{J+K=I} \kappa^{-1}
\big(\widehat{\mathcal{L}}_{\widetilde{X}}^J (\kappa\tg^{\,ab}\!\!-\widehat{m}^{ab})\big) \wpa_{a} \wpa_{b}
\widehat{\Lie}^{K}_{\widetilde{X}}\widetilde{h}_{cd}^{1}.
\eq
In conclusion by \eqref{eq:Liecommutatortildekapparepeat} respectively
\eqref{eq:Liecommutatortildekapparepeatstar}
using \eqref{eq:TheCommutator} we have:
\begin{lemma}\label{lem:TheCommutatorLemma}
We have
\beq
\Lie_{\widetilde{X}}^I \big( \kappa\widetilde{\Box}\, \widetilde{h}_{cd}^1\big)
=	\kappa\widetilde{\Box}\, \widehat{\mathcal{L}}_{\widetilde{X}}^I\widetilde{h}_{cd}^1
	+\kappa \widetilde{R}_{cd}^{\,com\, I}\!\!,
\eq
where
\beq
\bigtwo| \widetilde{R}_{cd}^{\,com\, I}\bigtwo|
\lesssim
{\sum}_{\substack{|J|+|K|\leq |I|+1,\,1\leq |J|\leq |I|}}\Big(\frac{M\ln{\tplusr}}{\tplusr^2}
+\frac{\big|\big(\widehat{\mathcal{L}}_{\widetilde{X}}^J
\widetilde{H}_1\big)_{\widetilde{L}\widetilde{L}}
\big|}{\tminusrstar}
+\frac{|\widehat{\mathcal{L}}_{\widetilde{X}}^J\widetilde{H}_1|}{\tplusr}\Big)
\big|\wpa
\widehat{\Lie}^{K}_{\widetilde{X}}\widetilde{h}^{1}\big|.
\eq
Moreover
\beq\label{eq:Liecommutatortildekapparepeatstar}
	\Box^* \widehat{\mathcal{L}}_{\widetilde{X}}^I
\widetilde{h}_{cd}^{1}
	=\Lie_{\widetilde{X}}^I \big( \kappa\widetilde{\Box}\, \widetilde{h}^1_{cd}\big)- \kappa\widetilde{R}_{cd}^{\,com\,*\, I}\!,
\eq
where
\beq
\bigtwo|\widetilde{R}_{cd}^{\,com\,*\, I}\bigtwo|
\lesssim
{\sum}_{|J|+|K|\leq |I|+1,\, |J|\leq |I|}\Big(\frac{M\ln{\tplusr}}{\tplusr^2}
+\frac{\big|\big(\widehat{\mathcal{L}}_{\widetilde{X}}^J
\widetilde{H}_1\big)_{\widetilde{L}\widetilde{L}}
\big|}{\tminusrstar}
+\frac{|\widehat{\mathcal{L}}_{\widetilde{X}}^J\widetilde{H}_1|}{\tplusr}\Big)
\big|\wpa
\widehat{\Lie}^{K}_{\widetilde{X}}\widetilde{h}^{1}\big|.
\eq
\end{lemma}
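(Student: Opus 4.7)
The plan is to prove both identities by induction on $|I|$ starting from the one-step commutator formula \eqref{eq:Liecommutatortildekappa}, and then to verify the pointwise bounds by applying the already-established estimate \eqref{eq:TheCommutator} termwise. First I would establish the identity \eqref{eq:Liecommutatortildekapparepeat} inductively. The base case $|I|=0$ is trivial; for the inductive step I would write $\Lie_{\widetilde X}^I = \Lie_{\widetilde X}\Lie_{\widetilde X}^{I-1}$ and apply \eqref{eq:Liecommutatortildekappa} with $\phi = \widehat{\Lie}_{\widetilde X}^{I-1}\widetilde h^1_{cd}$. This produces one new commutator summand $\kappa^{-1}\bigl(\widehat{\Lie}_{\widetilde X}(\kappa\tg^{\,ab}-\widehat m^{ab})\bigr)\wpa_a\wpa_b \widehat{\Lie}_{\widetilde X}^{I-1}\widetilde h^1_{cd}$, where the $\widehat m^{ab}$ can be subtracted for free because $\widehat{\Lie}_{\widetilde X}\widehat m^{ab}=0$. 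Combining with the inductive hypothesis assembles the sum over $J+K=I$, $|J|\geq 1$ that defines $\widetilde R_{cd}^{\,com,I}$.

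Next, to obtain the pointwise bound I would substitute $\phi = \widehat{\Lie}_{\widetilde X}^K \widetilde h^1_{cd}$ into \eqref{eq:TheCommutator} for each $K$ with $|K|<|I|$ and sum the resulting inequalities. The three terms on the right of \eqref{eq:TheCommutator} come respectively from the background piece of $\kappa\tg-\widehat m$, controlled via Lemma \ref{lem:m0decomp} together with $\widehat{\Lie}_{\widetilde X}\widehat m=0$; from the $\wL\wL$ component of $\widehat{\Lie}^J \widetilde H_1$, controlled by the wave-operator/null-frame bound \eqref{eq:waveoperatorHLLtangential}; and from the remaining components of $\widehat{\Lie}^J \widetilde H_1$. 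Absorbing the tangential factor $\sum_{|K'|\leq 1}|\wpa \widetilde Z^{K'}\widehat{\Lie}_{\widetilde X}^K \widetilde h^1|$ into quantities of the form $|\wpa \widehat{\Lie}_{\widetilde X}^{K''}\widetilde h^1|$ with $|K''|\leq |K|+1$ gives the stated index range $|J|+|K|\leq |I|+1$, $1\leq |J|\leq |I|$.

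For the starred identity, the only additional ingredient is the decomposition $\Box^* = \kappa\widetilde{\Box} - (\kappa\tg^{\,ab}-\widehat m^{ab})\wpa_a\wpa_b$. Applying it to $\widehat{\Lie}_{\widetilde X}^I \widetilde h^{1}_{cd}$ and inserting the first identity of the lemma contributes exactly one extra summand corresponding to the ``$J=0$'' case, which enlarges the index range to $|J|\leq |I|$ and produces $\widetilde R_{cd}^{\,com,*,I}$. The pointwise control for the new $J=0$ term again follows from \eqref{eq:TheCommutator}, whose right-hand side remains valid when no Lie derivatives fall on the coefficient (the $\widehat m$ subtraction still kills the worst part of $\widetilde m_0$ by Lemma \ref{lem:m0decomp}).

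The only real work is careful bookkeeping of the indices produced when $\Lie_{\widetilde X}^I$ passes through $\kappa\widetilde{\Box}$; the geometric content is entirely carried by Lemma \ref{lem:m0decomp}, the wave-coordinate-driven estimate \eqref{eq:waveoperatorHLLtangential}, and the standard frame bound $|\overline{\wpa}\phi|\lesssim \tplusr^{-1}\sum_{|J|\leq 1}|\wpa\widetilde Z^J\phi|$, all of which are already in hand. I therefore expect no genuine obstacle here beyond the accounting — in particular the \emph{modified} Lie derivative $\widehat{\Lie}$ must be used throughout, since only it annihilates $\widehat m$ and thereby prevents spurious top-order terms from appearing in the commutator.
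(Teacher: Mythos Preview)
Your proposal is correct and matches the paper's own argument essentially line for line: the paper derives \eqref{eq:Liecommutatortildekapparepeat} by repeated use of \eqref{eq:Liecommutatortildekappa}, obtains \eqref{eq:Liecommutatortildekapparepeatstar} by adding the $J=0$ term via $\Box^*=\kappa\widetilde\Box-(\kappa\tg^{ab}-\widehat m^{ab})\wpa_a\wpa_b$, and then invokes \eqref{eq:TheCommutator} termwise to get both pointwise bounds. The bookkeeping you describe (absorbing the extra tangential derivative into $|K|\mapsto|K|+1$, using $\widehat{\Lie}_{\widetilde X}\widehat m=0$ to subtract $\widehat m$ for free) is exactly what the paper leaves implicit.
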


\subsubsection{Higher order commutators with the inhomogeneous term}
Finally we note that we have some additional structure of the nonlinear term:
$$
 \widetilde{F}_{ab}(\widetilde{g})[\wpa \widetilde{h},\wpa \widetilde{k}]
\!=\!\widetilde{F}_{ab}[\widetilde{G},\widetilde{G}][\wpa \widetilde{h},\wpa \widetilde{k}]
$$
is a quadratic form in the $(0,3)$ forms $\wpa \widetilde{h}$ and
$\wpa \widetilde{k}$ with two contractions with $\kappa$ times the inverse of the metric
$\widetilde{G}^{ab}=\widetilde{g}^{ab}$, i.e. it is bilinear in $\widetilde{G}$. Hence
we have by \eqref{eq:LieQuadtilde}:
\beq\label{eq:LieQuadtildesec7}
 \Lie_{\widetilde{X}}\big( \widetilde{F}_{\!ab}(\widetilde{g})[\wpa \widetilde{h},\!\wpa \widetilde{k}]\big)
 \!=\!  \widetilde{F}_{\!ab}(\widetilde{g})[\wpa \widehat{\Lie}_{\widetilde{X}}\widetilde{h},\!\wpa \widetilde{k}]
 \!+ \! \widetilde{F}_{\!ab}(\widetilde{g})[\wpa \widetilde{h},\!\wpa \widehat{\Lie}_{\widetilde{X}} \widetilde{k}]
\! +\!\widetilde{F}_{\!ab}[ \widehat{\Lie}_{\widetilde{X}}\widetilde{G},\!\widetilde{G}]
 [\wpa \widetilde{h},\!\wpa \widetilde{k}]
\!+\!\widetilde{F}_{\!ab}[\widetilde{G},\!\widehat{\Lie}_{\widetilde{X}}\widetilde{G}]
 [\wpa \widetilde{h},\!\wpa \widetilde{k}].
\eq
  Let
$\widetilde{h}^{1I}=\widehat{\Lie}^{I}_{\widetilde{X}} \widetilde{h}^1$.
By repeated use of \eqref{eq:LieQuadtildesec7} we get modulo cubic error terms
\begin{equation}
\label{eq:Fchangecoordlongrepeat}
\Lie_{\widetilde{X}}^I
\bigtwo(\widetilde{F}_{\!cd}(\tg)
[\wpa \widetilde{h}^1\!\!,\wpa\widetilde{h}^1]\bigtwo)
={\sum}_{J+K=I}\widetilde{F}_{\!cd}(\tg)
[\wpa \widetilde{h}^{1J}\!\!,\wpa\widetilde{h}^{1K}]+\widetilde{R}_{cd,\, 0}^{\,cube\, I}.
\end{equation}
Here with $\widetilde{G}^I=\widehat{\Lie}^{I}_{\widetilde{X}} \widetilde{G}$,
$\widetilde{G}^{ab}=\widetilde{g}^{ab}$,
\begin{equation}
\widetilde{R}_{cd,\, 0}^{\,cube\, I}={\sum}_{J+K+L+M =I,\, |L|+|M|\geq 1}
R_{cdJKLM}^{\, I}
[\widetilde{G}^L\!\!,\widetilde{G}^M][\wpa \widetilde{h}^{1J}\!\!,\wpa\widetilde{h}^{1K}].
\end{equation}
Hence we have proven
\begin{lemma} \label{lem:differentiatedinhomogen} We have
\begin{equation}
\Lie_{\widetilde{X}}^I
\bigtwo(\widetilde{F}_{\!cd}(\tg)
[\wpa \widetilde{h}^1\!\!,\wpa\widetilde{h}^1]\bigtwo)
={\sum}_{J+K=I}\widetilde{F}_{\!cd}(\tg)
[\wpa \widetilde{h}^{1J}\!\!,\wpa\widetilde{h}^{1K}]+\widetilde{R}_{cd,\, 0}^{\,cube\, I},
\end{equation}
where if  $|\widehat{\Lie}^{J}_{\widetilde{X}} \widetilde{H}_1|\lesssim 1$, for $|J|\leq N/2$,
then for $|I|\leq N$
\begin{equation}
\big|\widetilde{R}_{cd,\, 0}^{\,cube\, I}\big|\les
{\sum}_{|J|+|K|+|L| \leq |I|,\,\, |L|\geq 1}
|\widehat{\Lie}^{L}_{\widetilde{X}} \widetilde{H}_1| \,
|\wpa \widetilde{h}^{1J\!}|\,|\wpa\widetilde{h}^{1K\!}|.
\end{equation}
\end{lemma}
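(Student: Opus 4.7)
The plan is to prove both the identity and the bound by iterating the Lie-Leibniz formula \eqref{eq:LieQuadtildesec7}. Note that $\widetilde{F}_{cd}(\widetilde{g})[\wpa h, \wpa k] = \widetilde{F}_{cd}[\widetilde{G}, \widetilde{G}][\wpa h, \wpa k]$ is quadrilinear, being separately bilinear in the pair of inverse-metric slots $\widetilde{G}$ and bilinear in the two tensor-derivative slots, and $\Lie_{\widetilde{X}}$ commutes with $\wpa$ modulo the adjustment already absorbed into $\widehat{\Lie}$. Each application of $\Lie_{\widetilde{X}}$ therefore produces four Leibniz terms, and after iterating $|I|$ times and collecting by partitions $J + K + L + M = I$ (where $J, K$ count derivatives that land on the two $\widetilde{h}^1$ factors and $L, M$ count those that land on the two inverse-metric factors), one obtains
\[
\Lie_{\widetilde{X}}^I \bigtwo(\widetilde{F}_{cd}(\widetilde{g})[\wpa\widetilde{h}^1, \wpa\widetilde{h}^1]\bigtwo) = \sum_{J+K+L+M=I} c_{JKLM}\, \widetilde{F}_{cd}[\widetilde{G}^L, \widetilde{G}^M][\wpa\widetilde{h}^{1J}, \wpa\widetilde{h}^{1K}],
\]
with $\widetilde{G}^L = \widehat{\Lie}^L_{\widetilde{X}} \widetilde{G}$ and combinatorial coefficients $c_{JKLM}$ that consolidate to $1$ in the $L = M = 0$ piece. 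That piece is exactly the main term claimed in the lemma, and the remainder with $|L|+|M|\geq 1$ is by definition $\widetilde{R}_{cd,0}^{\,cube\, I}$.

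For the pointwise bound on $\widetilde{R}_{cd,0}^{\,cube\, I}$, decompose $\widetilde{G} = \widehat{m} + (\widetilde{m}_0 - \widehat{m}) + \widetilde{H}_1$ and use $\widehat{\Lie}_{\widetilde{X}} \widehat{m} = 0$ from \eqref{eq:LieMinkowski}, so that for every $|L| \geq 1$,
\[
\widehat{\Lie}^L_{\widetilde{X}} \widetilde{G} = \widehat{\Lie}^L_{\widetilde{X}}(\widetilde{m}_0 - \widehat{m}) + \widehat{\Lie}^L_{\widetilde{X}} \widetilde{H}_1.
\]
By Lemma \ref{lem:m0decomp} the first summand is $O\bigl(M\langle\ln r\rangle/\langle t+r^*\rangle\bigr)$ and in particular uniformly bounded, so this contribution can be absorbed into the stated bound for the $\widetilde{H}_1$ piece (either by enlarging the implicit constant, or by rewriting the cubic sum purely in terms of $\widetilde{H}_1$ and tracking the $\widetilde{m}_0$ piece as a smaller remainder).

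In each cubic summand, the symmetry $L \leftrightarrow M$, $J \leftrightarrow K$ permits assuming without loss of generality $|L| \geq |M|$, whence $|M| \leq |I|/2 \leq N/2$. The hypothesis $|\widehat{\Lie}^J_{\widetilde{X}} \widetilde{H}_1| \lesssim 1$ for $|J| \leq N/2$ together with Lemma \ref{lem:m0decomp} then gives $|\widetilde{G}^M| \lesssim 1$. Extracting this bounded factor from the bilinear form $\widetilde{F}_{cd}[\cdot,\cdot]$ produces
\[
|\widetilde{F}_{cd}[\widetilde{G}^L, \widetilde{G}^M][\wpa\widetilde{h}^{1J}, \wpa\widetilde{h}^{1K}]| \lesssim |\widehat{\Lie}^L_{\widetilde{X}} \widetilde{H}_1| \, |\wpa\widetilde{h}^{1J}| \, |\wpa\widetilde{h}^{1K}|,
\]
which after summing over all allowed partitions yields the claimed inequality.

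The proof presents no substantial obstacle; the only delicate points are the combinatorial bookkeeping in the iterated Leibniz expansion, which is facilitated by the structural identity $\widehat{\Lie}_{\widetilde{X}} \widehat{m} = 0$ that isolates the $\widetilde{H}_1$ dependence of the cubic error, and the step of absorbing the lower-order $\widetilde{m}_0 - \widehat{m}$ contribution, which is handled automatically by Lemma \ref{lem:m0decomp}. No weak-null or null-frame structure is needed at this stage — those enter only when estimating the main term $\widetilde{F}_{cd}(\widetilde{g})[\wpa\widetilde{h}^{1J}, \wpa\widetilde{h}^{1K}]$ itself, not in proving the decomposition into main plus cubic error.
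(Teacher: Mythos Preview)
Your argument is correct and follows the same route as the paper: both iterate the Leibniz identity \eqref{eq:LieQuadtildesec7} to obtain the four-fold expansion over $J+K+L+M=I$, isolate the $L=M=0$ piece as the main sum, and call the rest $\widetilde{R}_{cd,0}^{\,cube\,I}$. The paper's own argument is in fact just the two displayed equations preceding the lemma statement together with ``Hence we have proven''; your write-up is more explicit, particularly in flagging the $\widehat{\Lie}^L_{\widetilde{X}}(\widetilde{m}_0-\widehat{m})$ contribution, which the paper's stated bound glosses over (it resurfaces as the $M\ln\langle t+r\rangle/\langle t+r\rangle$ term in Lemma~\ref{lem:PPS} and is harmless in every application since $M\le\varepsilon$). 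Your phrase ``absorbed into the stated bound by enlarging the implicit constant'' is not literally correct, but your parenthetical alternative --- tracking the $\widetilde{m}_0$ piece as a separate smaller remainder --- is exactly what is needed and matches how the paper treats these terms downstream.
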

We remark that the
cubic errors $R_{cd}^{\,cube\, I}$ are much easier to control,
see Section \ref{sec:cubicerrors}.
The only issue may be when
most derivatives fall on one factor of $\widetilde{G}$ and we have to estimate it in
$L^2$, but we are going to handle worse terms of this form in the commutator errors.

\subsubsection{The fine structure of the differentiated inhomogeneous term}
Assuming the weak estimate $|\widetilde{H}_1|\lesssim 1$ the difference
\beq
\big|\kappa^{2\!}\widetilde{F}_{ab}[\widetilde{G},\widetilde{G}][\wpa \widetilde{h},\wpa \widetilde{k}]
-\widetilde{F}_{ab}[\kappa\widetilde{m}_0,\kappa\widetilde{m}_0][\wpa \widetilde{h},\wpa \widetilde{k}]\big|
\lesssim|\widetilde{H}_1|\,
|\wpa \widetilde{h}|\,|\wpa \widetilde{k}|,
\eq
can be estimated by cubic terms and by \eqref{eq:metric0decomposition}
\beq
\big|\widetilde{F}_{ab}[\kappa\widetilde{m}_0,\kappa\widetilde{m}_0][\wpa \widetilde{h},\wpa \widetilde{k}]
-\widetilde{F}_{ab}[\widehat{m},\widehat{m}][\wpa \widetilde{h},\wpa \widetilde{k}]\big|
\lesssim \frac{M\ln{\tplusr}}{\tplusr}
|\wpa \widetilde{h}|\,|\wpa \widetilde{k}|.
\eq
Moreover by, \eqref{eq:nullcondestsec5} and \eqref{eq:PPnullsec5} with
$\widehat{P}[\wh,\widetilde{k}]=P(\widehat{m})[\wh,\widetilde{k}]$ we have
\beq
\big| \widetilde{F}_{\!ab}(\widehat{m})
[\wpa \widetilde{h},\wpa\widetilde{k}]
- \wL_{a}\wL_b \widehat{P}(\pa_{q^*} \wh,\pa_{q^*} \widetilde{k})\big|
\les |\overline{\wpa}\, \wh| \,|\wpa \widetilde{k}|
+|\wpa\, \wh|\,|\overline{\wpa} \,\widetilde{k}|.
\eq
Hence we conclude that
\beq
\Big|\kappa^{2\!} \widetilde{F}_{\!ab}(\widetilde{g})
[\wpa \widetilde{h},\wpa\widetilde{k}]
- \wL_{a}\wL_b \widehat{P}(\pa_{q^*} \wh,\pa_{q^*} \widetilde{k})\Big|
\les |\overline{\wpa}\, \wh| \,|\wpa \widetilde{k}|
+|\wpa\, \wh|\,|\overline{\wpa} \,\widetilde{k}|
+\Big(|\widetilde{H}_1|\!+\!\frac{M\ln{\tplusr}}{\tplusr}\Big)|\wpa \widetilde{h}|\,|\wpa \widetilde{k}| .
\eq

Hence, also using \eqref{eq:tanPsec5} and \eqref{eq:tanSPsec5} to estimate $\widehat{P}$ we have proven:
\begin{lemma}\label{lem:PPS} Suppose  that $|\widehat{\Lie}^{J}_{\widetilde{X}} \widetilde{H}_1|\lesssim 1$, for $|J|\leq N/2$.
Then for $|I|\leq N$ with $\widetilde{h}^{1I}=\widehat{\Lie}^{I}_{\widetilde{X}} \widetilde{h}^1$
we have
\beq
\kappa^{2\!} \widetilde{F}_{\!cd}(\tg)
[\wpa \widetilde{h}^{1J}\!\!,\wpa\widetilde{h}^{1K}]
=\!\wL_{c}\wL_d
\widehat{P}(\pa_{q^*} \wh^{1J}\!\!,\pa_{q^*} \widetilde{h}^{1K})
+\kappa^2 \widetilde{R}_{cd}^{\,tan\, JK}+\kappa^2\widetilde{R}_{cd,\, 1}^{\,cube\, JK},
\eq
where
\beq
\bigtwo|
\widetilde{R}_{cd}^{\,tan\, JK}
\bigtwo|
\lesssim\! |\overline{\wpa}\, \wh^{1J}| \,|\wpa \widetilde{h}^{1K}|
+|\wpa  \wh^{1J}| \,|\overline{\wpa}\,\widetilde{h}^{1K}|
,\quad \text{and}\quad |\widetilde{R}_{cd,\, 1}^{\,cube\, JK}|\les\Big(|\widetilde{H}_1|\!+\!\frac{M\ln{\tplusr}}{\tplusr}\Big)|\wpa \widetilde{h}^{1J}|\,|\wpa \widetilde{h}^{1K}|
 \eq
  and with $\widehat{P}[\wh,\widetilde{k}]=P(\widehat{m})[\wh,\widetilde{k}]$ we have
\begin{multline}
\big| \widehat{P}(\pa_{q^*} \wh^{1J}\!\!,\pa_{q^*} \widetilde{h}^{1K})\big|\\
\les \big(|\pa_{q^*}\wh^{1J\!}|_{\wL\widetilde{\mathcal
T}}+|\pa_{q^*}\trs \wh^{1J\!}|\big)|\wpa\,\widetilde{h}^{1K\!}| +|\wpa\,\wh^{1J\!}
|\big(|\pa_{q^*}\widetilde{h}^{1K\!}|_{\wL\widetilde{\mathcal
T}}+|\pa_{q^*}\trs \widetilde{h}^{1K\!}|\big)
+|\wpa\, \wh^{1J\!}|_{\widetilde{\mathcal{T}}\widetilde{\mathcal{T}}}|\wpa\,
 \widetilde{h}^{1K\!}|_{\widetilde{\mathcal{T}}\widetilde{\mathcal{T}}}.
\end{multline}
\end{lemma}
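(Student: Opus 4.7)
The plan is to consolidate the three chained pointwise bounds stated in the paragraphs immediately preceding the lemma statement, applied with $\wh = \wh^{1J}$ and $\widetilde{k} = \widetilde{h}^{1K}$, and to separately extract the bound on $\widehat{P}$ from \eqref{eq:tanPsec5}--\eqref{eq:tanSPsec5}. First, since $\widetilde{F}_{cd}[\widetilde{G},\widetilde{G}][\cdot,\cdot]$ is bilinear in the inverse metric $\widetilde{G}$, the decomposition $\kappa \widetilde{G} = \widehat{m} + (\kappa\widetilde{m}_0 - \widehat{m}) + \kappa\widetilde{H}_1$ combined with Lemma \ref{lem:m0approx} and the pointwise bound $|\widetilde{H}_1|\lesssim 1$ (the $|J|=0$ case of the hypothesis) allows passage from $\kappa^2 \widetilde{F}_{cd}(\tg)[\wpa\wh^{1J},\wpa\widetilde{h}^{1K}]$ to $\widetilde{F}_{cd}(\widehat{m})[\wpa\wh^{1J},\wpa\widetilde{h}^{1K}]$ at the cost of an error controlled by $\big(|\widetilde{H}_1| + M r^{-1}\ln r\big)|\wpa\wh^{1J}|\,|\wpa\widetilde{h}^{1K}|$, which I assign to the cubic bucket $\kappa^2\widetilde{R}_{cd,\,1}^{\,cube\,JK}$.

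Next, writing $\widetilde{F}_{cd}(\widehat{m}) = \widehat{P}(\wpa_c\cdot,\wpa_d\cdot) + \widehat{Q}_{cd}$ as in Section \ref{sec:inhom}, the classical null condition \eqref{eq:nullcondestsec5} bounds $\widehat{Q}_{cd}(\wpa\wh^{1J},\wpa\widetilde{h}^{1K})$ by $|\overline{\wpa}\wh^{1J}|\,|\wpa\widetilde{h}^{1K}| + |\wpa\wh^{1J}|\,|\overline{\wpa}\widetilde{h}^{1K}|$, and the null-frame projection \eqref{eq:PPnullsec5} extracts the leading $\wL_c\wL_d\,\widehat{P}(\pa_{q^*}\wh^{1J},\pa_{q^*}\widetilde{h}^{1K})$ modulo another error of the same shape. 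Collecting all such contributions in the tangential bucket $\kappa^2\widetilde{R}_{cd}^{\,tan\,JK}$ (the $\kappa^2$ factor is uniformly bounded and simply multiplies through) proves the first displayed identity.

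Finally, the bound on $|\widehat{P}(\pa_{q^*}\wh^{1J},\pa_{q^*}\widetilde{h}^{1K})|$ follows by applying \eqref{eq:tanPsec5} with $D = \pa_{q^*}\wh^{1J}$ and $E = \pa_{q^*}\widetilde{h}^{1K}$, which produces precisely the $\wL\widetilde{\mathcal T}$- and $\trs$-projection terms in the claim, plus a residual $P_{\mathcal S}(D,E)$. Since $P_{\mathcal S}$ contracts only spherical indices $A,B \in \mathcal S \subset \widetilde{\mathcal T}$ by \eqref{eq:tanSPsec5}, this residual is bounded by $|\wpa\wh^{1J}|_{\widetilde{\mathcal T}\widetilde{\mathcal T}}\,|\wpa\widetilde{h}^{1K}|_{\widetilde{\mathcal T}\widetilde{\mathcal T}}$, yielding exactly the last display of the lemma. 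The proof is essentially bookkeeping: the only point requiring care, in contrast to the undifferentiated estimate \eqref{eq:Pestintro}, is that no wave coordinate condition is available on the individual Lie-differentiated fields at this stage to further absorb the $\wL\widetilde{\mathcal T}$ and $\trs$ projections of $\pa_{q^*}\wh^{1J}$ and $\pa_{q^*}\widetilde{h}^{1K}$, so these must be left explicit on the right-hand side and are handled later through the differentiated wave coordinate condition \eqref{eq:wavecoordhattilde}.
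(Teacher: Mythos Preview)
Your proposal is correct and follows essentially the same approach as the paper: the paper lays out exactly the three chained inequalities you describe in the paragraphs immediately preceding the lemma (passing from $\kappa^2\widetilde{F}_{cd}(\tg)$ to $\widetilde{F}_{cd}[\kappa\widetilde{m}_0,\kappa\widetilde{m}_0]$ to $\widetilde{F}_{cd}(\widehat{m})$ to $\wL_c\wL_d\widehat{P}$), then cites \eqref{eq:tanPsec5}--\eqref{eq:tanSPsec5} for the $\widehat{P}$ bound. Your closing remark about why the $\wL\widetilde{\mathcal T}$ and $\trs$ projections must remain explicit is a correct piece of context that the paper leaves implicit.
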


\subsubsection{Higher order vector fields applied to the mass and covariant error terms
in Einstein's equations}
We have $\widetilde{E}_{cd}^{\, mass}=\widetilde{E}_{cd,0}^{\, mass}\!\!
 +\widetilde{E}_{cd,1}^{\, mass}[\widetilde{H}_1]$ where it follows from  \eqref{eq:Emass} that
\beq
\big|\Lie_{\widetilde{X}}^I\widetilde{E}_{cd,0}^{\, mass}\big|\lesssim
\frac{M H(r \!<\! t/2)}{\tplusr^3},\quad\text{and}\quad
\big|\Lie_{\widetilde{X}}^I\widetilde{E}_{cd,1}^{\, mass}[\widetilde{H}_1]\big|
\lesssim \frac{M}{\tplusr^3}{\sum}_{|J|\leq |I|}\big|\widetilde{Z}^J
\widetilde{H}_1\big|.\label{eq:Emasserrorsestimate}
\eq
Moreover $\widetilde{E}_{cd}^{\, cov}=\widetilde{E}_{cd,0}^{\, cov}[\widetilde{H}_1]
 +\widetilde{E}_{cd,1}^{\, cov}[\wpa \widetilde{H}_1]$
 where it follows from  \eqref{eq:Ecov1}-\eqref{eq:Ecov2} that
 \beq
\big|\Lie_{\widetilde{X}}^I\widetilde{E}_{cd,0}^{\, cov}[\widetilde{H}_1]\big|\!\lesssim\!
\frac{M\big(\!\ln{\tplusr}\!+\!1\big)}{\tplusr^3}
\!\!\!\!\sum_{|J|\leq |I|}\!\!\!\big|\widetilde{Z}^J\!\widetilde{H}_1\big|,
\quad\text{and}\quad
\big|\Lie_{\widetilde{X}}^I\widetilde{E}_{cd,1}^{\, cov}[\wpa \widetilde{H}_1]\big|\!\lesssim\!
\frac{M\big(\!\ln{\tplusr}\!+\!1\big)}{\tplusr^2}
\!\!\!\!\sum_{|J|\leq |I|}\!\!\!\big|\wpa\widetilde{Z}^J\!\widetilde{H}_1\big|.
\label{eq:Ecoverrorsestimate}
\eq
 We have $\widetilde{F}_{cd}^{\, mass}=\widetilde{F}_{cd,0}^{\, mass}\!\!
 +\widetilde{F}_{cd,1}^{\, mass}[\wpa\widetilde{H}_1]$ where it follows from
 \eqref{eq:Fmass} that
\beq
\big|\Lie_{\widetilde{X}}^I\widetilde{F}_{cd,0}^{\, mass}\big|\lesssim
\frac{M^2 }{\tplusr^4},\quad\text{and}\quad
\big|\Lie_{\widetilde{X}}^I\widetilde{F}_{cd,1}^{\, mass}[\wpa\widetilde{H}_1]\big|
\lesssim \frac{M}{\tplusr^2}{\sum}_{|J|\leq |I|}
\big|\wpa \widetilde{Z}^J
\widetilde{H}_1\big|.\label{eq:Fmasserrorsestimate}
\eq
Moreover $\widetilde{F}_{cd}^{\, cov}=\widetilde{F}_{cd,\,0}^{\, cov}[\widetilde{H}_1]
 +\widetilde{F}_{cd,1}^{\, cov}[\wpa \widetilde{H}_1]$
  where it follows from  \eqref{eq:Fcov1}-\eqref{eq:Fcov2} that
 \beq
\big|\Lie_{\widetilde{X}}^I\widetilde{F}_{cd,\,0}^{\, cov}[\widetilde{H}_1]\big|\!\lesssim\!
\frac{M\big(\!\ln{\tplusr}\!+\!1\big)}{\tplusr^3}
\!\!\!\!\sum_{|J|\leq |I|}\!\!\!\big|\widetilde{Z}^J\!\widetilde{H}_1\big|,
\quad\text{and}\quad
\big|\Lie_{\widetilde{X}}^I\widetilde{F}_{cd,1}^{\, cov}[\wpa \widetilde{H}_1]\big|\!\lesssim\!
\frac{M\big(\!\ln{\tplusr}\!+\!1\big)}{\tplusr^2}
\!\!\!\!\sum_{|J|\leq |I|}\!\!\!\big|\wpa\widetilde{Z}^J\!\widetilde{H}_1\big|.
\label{eq:Fcoverrorsestimate}
\eq

\begin{lemma}\label{lem:errorlemma} Let
$\widetilde{R}_{cd}^{\,mass}\!=\widetilde{E}_{cd}^{\,mass}\!
+\widetilde{F}_{cd}^{\,mass}$
and $\widetilde{R}_{cd}^{\,cov}\!=\widetilde{E}_{cd}^{\,cov}\!
+\widetilde{F}_{cd}^{\,cov}$
be as in \eqref{eq:newcoordeinsteinwithRerrors} and Section \ref{sec:cov}
and let $\widetilde{R}_{cd}^{\,mass\, I}\!=\widehat{\Lie}^{I}_{\widetilde{X}}
\widetilde{R}_{cd}^{\,mass}$ and $\widetilde{R}_{cd}^{\,cov\, I}\!=\widehat{\Lie}^{I}_{\widetilde{X}}
\widetilde{R}_{cd}^{\,cov}$. We have
\begin{equation}
|\widetilde{R}_{cd}^{\,mass\, I}|
+|\widetilde{R}_{cd}^{\,cov\, I}|
\!\lesssim \!\frac{\!M H(r \!<\! 3t/4)\!}{\tplusr^3}+\frac{M^2 }{\!\tplusr^4\!}
+\frac{\!M\big(\!\ln{\tplusr}\!+\!1\big)\!}{\tplusr^3}
\!\!\!\!\sum_{|J|\leq |I|}\!\!\!\big|\widetilde{Z}^J\!\widetilde{H}_1\big|
+\frac{\!M\big(\!\ln{\tplusr}\!+\!1\big)\!}{\tplusr^2}
\!\!\!\!\sum_{|J|\leq |I|}\!\!\!\big|\wpa\widetilde{Z}^J\!\widetilde{H}_1\big|.
\end{equation}
\end{lemma}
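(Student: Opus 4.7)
The lemma is essentially a repackaging of the four elementary estimates \eqref{eq:Emasserrorsestimate}, \eqref{eq:Ecoverrorsestimate}, \eqref{eq:Fmasserrorsestimate}, \eqref{eq:Fcoverrorsestimate} under a single bound. The plan is to decompose $\widetilde{R}_{cd}^{\,mass\,I}$ and $\widetilde{R}_{cd}^{\,cov\,I}$ into their six constituent summands using the formulas in Section \ref{sec:cov}, apply the Leibniz rule for the modified Lie derivative, and then add the resulting contributions. Each summand has the schematic form
\begin{equation*}
M^{a}\,\chi\Bigl(\tfrac{r}{1+t},\omega,\tfrac{M}{r},\tfrac{M\ln r}{r},h\Bigr)\,r^{-k}\cdot\Phi,
\qquad
\Phi\in\{1,\;\widetilde{H}_1^{ab},\;\wpa\widetilde{H}_1^{ab}\},
\end{equation*}
with $\chi$ a smooth function whose support and polynomial behaviour are recorded in \eqref{eq:Emass}--\eqref{eq:Fmass} and \eqref{eq:Ecov1}--\eqref{eq:Fcov2}.

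By Leibniz, $\widehat{\Lie}_{\widetilde{X}}^I(\chi\,r^{-k}\,\Phi) = \sum_{J+K=I}\bigl(\widehat{\Lie}_{\widetilde{X}}^J(\chi\,r^{-k})\bigr)\,\widehat{\Lie}_{\widetilde{X}}^K\Phi$, so the analysis splits into two pieces: (i) differentiating the structure factor, and (ii) converting the modified Lie derivative of the tensorial factor into ordinary vector fields applied to its components. For (i), the functions $\chi(\tfrac{r}{1+t},\omega,\tfrac{M}{r},\tfrac{M\ln r}{r})\,r^{-k}$ are precisely of the form covered by Proposition \ref{prop:StaPhi} and Corollary \ref{cor:BoundPartial}, so $|\widetilde{Z}^J(\chi\,r^{-k})|\lesssim \langle\ln r\rangle\,r^{-k}$, with the extra logarithm appearing only when some derivative lands on the $M\ln r/r$ argument. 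The support properties of $\chi$ are preserved under differentiation (any $\wpa$ landing on the cutoff cost introduces at most an $r/(1+t)$-localised factor and an extra $r^{-1}$), and on the support of every $\chi$ one has $r\gtrsim 1+t$, hence $r\sim \tplusr$, which converts $r^{-k}$ to $\tplusr^{-k}$. The cutoff $\chi'$ appearing in $\widetilde{E}_{cd,0}^{\,mass}$ enforces the additional spatial restriction $r<3(1+t)/4$, producing the indicator $H(r<3t/4)$ in the first term of the target bound.

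For (ii), I would use the fact that the $\widetilde{X}$ in \eqref{eq:ModifiedVectorFields} have linear coefficients, so $\wpa_\gamma\widetilde{X}^\gamma$ is a bounded constant and the modification term in $\widehat{\Lie}$ is harmless. Expanding the Lie derivative of a $(0,2)$ tensor, each index contribution produces at worst a bounded combination of components and one derivative of the components; iterating, $|\widehat{\Lie}_{\widetilde{X}}^K\widetilde{H}_1|\lesssim \sum_{|K'|\leq|K|}|\widetilde{Z}^{K'}\widetilde{H}_1|$. For the $\wpa\widetilde{H}_1$ factors one additionally uses the commutation identity \eqref{eq:Liepartialcommute} in the $\widetilde{x}$ coordinates, giving $|\widehat{\Lie}_{\widetilde{X}}^K(\wpa\widetilde{H}_1)|\lesssim\sum_{|K'|\leq|K|}|\wpa\widetilde{Z}^{K'}\widetilde{H}_1|$.

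Combining (i) and (ii) term by term exactly reproduces the four estimates \eqref{eq:Emasserrorsestimate}--\eqref{eq:Fcoverrorsestimate}: $\widetilde{E}_{cd,0}^{\,mass}$ yields the first term; $\widetilde{F}_{cd,0}^{\,mass}$ yields the $M^2/\tplusr^4$ term; the $\widetilde{H}_1$-linear pieces $\widetilde{E}_{cd,1}^{\,mass}$, $\widetilde{E}_{cd,0}^{\,cov}$, and $\widetilde{F}_{cd,0}^{\,cov}$ combine into the $M(\ln\tplusr+1)/\tplusr^3$-weighted sum; and the $\wpa\widetilde{H}_1$-linear pieces $\widetilde{F}_{cd,1}^{\,mass}$, $\widetilde{E}_{cd,1}^{\,cov}$, and $\widetilde{F}_{cd,1}^{\,cov}$ combine into the $M(\ln\tplusr+1)/\tplusr^2$-weighted sum. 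Taking the worst constants yields the displayed inequality. I do not anticipate a real obstacle: the only mild subtlety is ensuring that the implicit dependence of the structure functions $\chi_\bullet$ on $h$ does not spoil the bounds, which is automatic under the a priori smallness of $\widetilde{h}$ used throughout the paper (so that $\chi_\bullet$ remains a smooth bounded function of its arguments, with bounded derivatives).
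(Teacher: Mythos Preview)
Your proposal is correct and follows essentially the same approach as the paper: the lemma is stated immediately after the four estimates \eqref{eq:Emasserrorsestimate}--\eqref{eq:Fcoverrorsestimate} and is obtained simply by summing them, with those estimates in turn following from the explicit structural formulas in Section~\ref{sec:cov} together with Proposition~\ref{prop:StaPhi} and Corollary~\ref{cor:BoundPartial}. Your write-up actually supplies more detail than the paper does (which gives no separate proof), and the minor slip of calling the pieces ``six constituent summands'' rather than eight does not affect the argument.
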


\subsection{Higher order commutators with the wave coordinate condition}\label{sec:higherorderwavecoord}
Let $\widecheck{H}_1^{ac}\!=\widetilde{H}_1^{ac}\!
-\tfrac{1}{2}\widetilde{m}^{ac}  \widetilde{m}_{bd}\,
 \widetilde{H}_1^{bd} $.
By repeated use of \eqref{eq:Liedivergencetilde} we have
 \beq\label{eq:Liedivergencetilderepeat}
	 \wpa_c  \widehat{\Lie}_{\widetilde{X}}^I \widecheck{H}_1^{cd}
={\sum}_{|J|\leq |I|} c_J^I\widehat{\Lie}_{\widetilde{X}}^J
\wpa_c \widecheck{H}_1^{cd}.
\eq
We claim that for any $\widetilde{H}^{ab}$
\begin{equation}\label{eq:firstwavecoordinateestimate2}
|\pa_{q^*} \widetilde{H}|_{\widetilde{L}\widetilde{\mathcal{T}}}+|\pa_{q^*} \trs \widetilde{H}|
\lesssim |\overline{\wpa} \widetilde{H}|+|\operatorname{div}\widehat{F}|,\quad\text{if}\quad
\operatorname{div}\widecheck{H}^b\!=\wpa_a \widecheck{H}^{a b}\!\!,\quad \text{and}\quad
\widecheck{H}^{ac}\!=\widetilde{H}^{ac}\!
-\tfrac{1}{2}\widetilde{m}^{ac}  \widetilde{m}_{bd}\,
 \widetilde{H}^{bd}\!\! .
\end{equation}
In fact, expressing the divergence in a null frame;
\beq\label{eq:divergenceinnullframetilde}
\wpa_a \widecheck{H}^{a b}\!=
\widetilde{L}_a\pa_{q^*\!} \widecheck{H}^{a b} \! -\widetilde{\Lb}_{a} \pa_{s^* \!}\widecheck{H}^{a b}\!
 + \widetilde{S}_{1 a} \pa_{\widetilde{S}_1}  \!\widecheck{H}^{a b}\!
 + \widetilde{S}_{2 a} \pa_{\widetilde{S}_2}  \!\widecheck{H}^{a b}\!\!,\quad
\text{where}\quad\pa_{q^*}\!=(\pa_{r^*}\!-\pa_t)/2,\quad \pa_{s^*}\!=(\pa_{r^*}\!+\pa_t)/2,
\eq
and contracting with $\widetilde{T}_b\in \widetilde{\mathcal{T}}$ respectively
$\widetilde{\underline{L}}_b$ proves \eqref{eq:firstwavecoordinateestimate2}.
Applying this to \eqref{eq:Liedivergencetilderepeat} gives
\beq
 |\pa_{q^*} \widehat{\mathcal L}_{\widetilde{Z}}^I\widetilde{H}_{\!1}|_{\widetilde{L}\widetilde{\mathcal T}}\!
+|\pa_{q^*}  \trs{}_{\!}\widehat{\mathcal L}_{\widetilde{Z}}^I \widetilde{H}_{\!1} |
\!\les\! |\overline{\wpa} \widehat{\mathcal L}_{\widetilde{Z}}^I \widetilde{H}_{\!1}|
+\!\!\!\! \sum_{|J|\leq |I|} \!\!\!\! |  \widehat{\mathcal L}_{\widetilde{Z}}^{J}\!
\operatorname{div}{}_{\!} \widecheck{H}_{\!1\!}|,
\quad \operatorname{div}{}_{\!} \widecheck{H}_1^d\!=\wpa_c \widecheck{H}_1^{cd}\!\!,\quad
\widecheck{H}_1^{ac}\!=\!\widetilde{H}_1^{ac}\!\shortminus \tfrac{1}{2}\widetilde{m}^{ac}  \widetilde{m}_{bd}
 \widetilde{H}_1^{bd}\!\! .
\label{eq:higherwavecoordinateLiederivativeH1}
\eq
By \eqref{eq:higherwavecoordinateLiederivativeH1} applied to \eqref{eq:wavecoordhattilde}
(with notation as in \eqref{eq:newcoordwavecoord})
we have
\begin{lemma} Let $\widetilde{H}_1^J=\widehat{\mathcal L}_{\widetilde{Z}}^I
\widetilde{H}_1$. We have
\begin{multline}
|\pa_{q^*} \widehat{\mathcal L}_{\widetilde{Z}}^I
\widetilde{H}_1|_{\widetilde{L}\widetilde{\mathcal T}}
+|\pa_{q^*}  \trs\widehat{\mathcal L}_{\widetilde{Z}}^I \widetilde{H}_1 |
\les |\overline{\wpa} \widehat{\mathcal L}_{\widetilde{Z}}^I \widetilde{H}_1|
+ {\sum}_{|J|+|K|\leq |I|} \big|  \widehat{\mathcal L}_{\widetilde{Z}}^{J} \widetilde{h}^1\big|
    \big|\pa  \widehat{\mathcal L}_{\widetilde{Z}}^{K} \widetilde{h}^1\big|\\
    + \frac{M\big|\chi^\prime\big(\tfrac{r}{t+1}\big)\big|}{\tplusr^2}
    + \frac{M}{\tplusr^3}+ \frac{M\big(1+\ln{\tplusr}\big)}{\tplusr^2}
    {\sum}_{|J|\leq |I|} \big|  \widehat{\mathcal L}_{\widetilde{Z}}^{J}\widetilde{H}_{1}\big|
    + \frac{M}{\langle r\!+\!t\rangle}
    {\sum}_{|J|\leq |I|} \big| \wpa \widehat{\mathcal L}_{\widetilde{Z}}^{J}\widetilde{H}_{1}\big|,
\label{eq:higherwavecoordinateLiederivativeH1Wsec7}
\end{multline}
where $\chi^\prime(s)$, is a function supported when $1/4\leq s\leq 1/2$.
\end{lemma}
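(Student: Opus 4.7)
The proof will be a direct consequence of combining three ingredients already developed in the preceding sections, so the task is really to assemble the pieces cleanly rather than to discover new structure.

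The plan is to start from the pointwise null-frame divergence estimate \eqref{eq:firstwavecoordinateestimate}, which I would apply to the tensor $\widehat{\Lie}_{\widetilde{Z}}^{I}\widetilde{H}_1$ in place of $\widetilde{H}$. Combined with the commutator identity \eqref{eq:Liedivergencetilderepeat}, which says Lie derivatives pass through the divergence up to a bounded linear combination of lower-order Lie derivatives of the divergence, this produces the already-stated bound \eqref{eq:higherwavecoordinateLiederivativeH1}, namely
\[
 |\pa_{q^*} \widehat{\mathcal L}_{\widetilde{Z}}^I\widetilde{H}_{1}|_{\widetilde{L}\widetilde{\mathcal T}} +|\pa_{q^*} \trs \widehat{\mathcal L}_{\widetilde{Z}}^I \widetilde{H}_{1}| \lesssim |\overline{\pa}\widehat{\mathcal L}_{\widetilde{Z}}^I\widetilde{H}_{1}| + \!\!\!\!\sum_{|J|\leq|I|}\!\!\! |\widehat{\Lie}_{\widetilde{Z}}^{J} \operatorname{div}\widehat{H}_1|.
\]
The only remaining task is to replace $\widehat{\Lie}_{\widetilde{Z}}^{J}\operatorname{div}\widehat{H}_1$ by its expression coming from the wave-coordinate condition \eqref{eq:wavecoordhattilde} and to read off the pointwise size of each term of the right-hand side.

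For the quadratic piece $\widetilde{W}^c(\tg)[\widetilde{H}_1,\wpa\widetilde{H}_1]$, I would expand $\widehat{\Lie}_{\widetilde{Z}}^{J}$ by the Leibniz rule for Lie derivatives; since this term is schematically $\widetilde{H}_1\cdot\wpa\widetilde{H}_1$ and $\widetilde{H}_1=-\widetilde{h}^1+O(\widetilde{h}^2)$, the bootstrap assumption $|\widehat{\Lie}_{\widetilde{Z}}^{J}\widetilde{H}_1|\lesssim 1$ (already invoked in Lemmas \ref{lem:differentiatedinhomogen} and \ref{lem:PPS}) lets me absorb all higher-order contributions into the product $|\widehat{\Lie}_{\widetilde{Z}}^{J}\widetilde{h}^1||\wpa\widehat{\Lie}_{\widetilde{Z}}^{K}\widetilde{h}^1|$ with $|J|+|K|\leq|I|$, which is the second term of the claimed bound. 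For the remaining pieces I use the explicit expressions of $\widetilde{W}^{c,0}_{mass},\ldots,\widetilde{W}^{c,3}_{mass}$ from Section \ref{sec:cov} and the covariant error $\widetilde{W}^c_{cov}$. Each coefficient is a bounded smooth function of $r/(t+1)$, $\omega$, $M/r$ and $M\ln r/r$, multiplied by an explicit negative power of $r$ and, in some cases, by a factor $\ln r$ or by $\chi'(r/(t+1))$. Proposition \ref{prop:StaPhi} and Corollary \ref{cor:BoundPartial} guarantee that both $\wpa$-derivatives and vector-field derivatives of these coefficients preserve the same $\tplusr^{-\ast}$ decay, up to logarithmic losses; carrying out the bookkeeping yields the three contributions
\[
\frac{M|\chi'(\tfrac{r}{t+1})|}{\tplusr^2},\qquad \frac{M}{\tplusr^3},\qquad \frac{M(1+\ln\tplusr)}{\tplusr^2}\sum_{|J|\leq|I|}|\widehat{\Lie}_{\widetilde{Z}}^{J}\widetilde{H}_1|,
\]
and the term linear in $\wpa\widetilde{H}_1$ from $\widetilde{W}^{c,3}_{mass}$ gives $M\tplusr^{-1}\sum_{|J|\leq|I|}|\wpa\widehat{\Lie}_{\widetilde{Z}}^{J}\widetilde{H}_1|$.

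The main bookkeeping obstacle is ensuring that when Lie derivatives hit $\widetilde{W}^c_{cov}$, which is already linear in $\widetilde{H}_1$, no extra factor of $\widetilde{H}_1$ is produced that would prevent absorption into the stated linear bounds; this is handled by the observation that $\widetilde{W}^c_{cov}$ depends on $h$ only through bounded scalar functions whose vector-field derivatives are controlled by Corollary \ref{cor:BoundPartial}, so higher derivatives on the coefficient are dominated by existing terms on the right-hand side. Collecting all contributions and comparing with \eqref{eq:higherwavecoordinateLiederivativeH1} yields \eqref{eq:higherwavecoordinateLiederivativeH1Wsec7}.
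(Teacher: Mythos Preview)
Your proposal is correct and follows essentially the same approach as the paper: the paper's proof is the single sentence ``By \eqref{eq:higherwavecoordinateLiederivativeH1} applied to \eqref{eq:wavecoordhattilde} (with notation as in \eqref{eq:newcoordwavecoord}) we have,'' and your outline is precisely a fleshed-out version of this, first invoking the null-frame divergence estimate together with the Lie-derivative commutation \eqref{eq:Liedivergencetilderepeat} to reach \eqref{eq:higherwavecoordinateLiederivativeH1}, then substituting the right-hand side of the approximate wave-coordinate condition and reading off the size of each of the $\widetilde{W}^{c,i}_{mass}$ and $\widetilde{W}^c_{cov}$ contributions from their explicit forms in Section~\ref{sec:cov}.
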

\section{The $L^2$ estimates for the wave equation}\label{sec:EnergyEstimates}
\subsection{The energy estimate with asymptotically Schwarzschild coordinates and weights}
In the energy estimate on Minkowski space, one can introduce a spacetime integral by multiplying by a weight $w(q)$, with $w'>0$, since $\nabla w$ is a future directed null vector field. For the metric $g$, the positive mass theorem implies $\lVert\nabla w\rVert_g\sim 2Mr^{-1}$, so we instead adapt the weight to the approximate optical function $q^*$:
\begin{equation}\label{def:wdef}
w = w(q^*) =\bigg\{ \begin{aligned}
&1 + (1+q^*)^{1+2\gamma},\qquad & q^* \geq 0, \\
&1 + (1-q^*)^{-2\mu},\qquad & q^* < 0.
\end{aligned}
\end{equation}
Then, $\lVert \nabla w\rVert_g = o(r^{-1})$ along the light cone. If $\mu < 1/2$, we have the basic inequality
\[
(1+2\gamma)^{-1}(1+|q^*|)\,\pa_{q^*}w \leq w \leq 2(2\mu)^{-1}(1+|q^*|)^{1+2\mu}\pa_{q^*}w.
\]
The energy momentum tensor of the wave equation in the modified coordinates is
\begin{equation}
\widetilde{T}_{ab}[\phi] = \wpa_a\phi \,\wpa_b\phi - \frac12 \wt{g}_{ab}\wt{g}^{\,cd}\wpa_c\phi\,\wpa_d\phi.
\end{equation}
This is consistent with the energy momentum tensor $T[\bphi, \bf{0}]$ for the Maxwell-Klein Gordon equations in modified coordinates. We can define the weighted scalar current
\[
\widetilde{J}_w^a[\phi] = -\widetilde{g}^{\,ab}\widetilde{T}_{b0}[\phi]w = -\bigtwo(\widetilde{g}^{\,ab}\wpa_b\phi\,\wpa_t\phi - \delta_0^a\widetilde{g}^{\,bc}\wpa_b\phi\,\wpa_c\phi/2\bigtwo)w,
\]
the weighted energy
\beq
\mathcal{E}[\phi](T) = \sup_{t\in[0,T]}\int_{\Sigma_t}|\wpa\phi|^2 w \, d\wt{x},
\eq
and the weighted spacetime energy
\beq
\mathcal{S}[\phi](T)
= \int_0^T\int_{\Sigma_t}|\wpao\phi|^2w'\, d\wt{x}\, dt,
\eq
where $|\wpao\phi|^2=|\wt{S}_1\phi|^2 +|\wt{S}_2\phi|^2 +|\wL \phi|^2$ is the norm of the derivatives tangential to the outgoing curved light cones
$r^*-t=q^*$, and $|\swpa\phi|^2=|\wt{S}_1\phi|^2 + |\wt{S}_2\phi|^2$ is the norm of the derivatives tangential to the sphere with constant $r^*$ and $t$.
The following is a generalization of the theorem in \cite{LR10,LT18}.
\begin{theorem}\label{thm:EEst}
Take $\gamma, \mu > 0$. There exists an $\varepsilon_0 > 0$ (which depends on $\gamma, \mu$) such that, if $g$ satisfies the metric assumption
\begin{subequations}\label{est:H1}
\begin{align}
M & \leq \varepsilon_0,\label{est:H1M} \\
|\wt{H}_1|+\langle t-r^*\rangle|\wpa \wt{H}_1|+ \langle t+r^*\rangle
(|\wpao\wt{H}_1|)
&\leq \varepsilon_0\langle t-r^*\rangle^{1/2-\mu}\langle t+r^*\rangle^{-1/2-\mu},\label{est:H11} \\
|\wt{H}_1|_{\wt{L}\wt{L}}+\langle t-r^*\rangle|\wpa \wt{H}_1|_{\wt{L}\wt{L}}+
\langle t+r^*\rangle(|\wpao\wt{H}_1|_{\wt{L}\wt{L}} )&\leq \varepsilon_0\langle t-r^*\rangle\langle t+r^*\rangle^{-1-2\mu},\label{est:H12}
\end{align}
\end{subequations}
with
\[
|\wpa \wt{H}_1|_{\wt{L}\wt{L}} = {\sum}_{\wt{U}\in\{\wt{L}, \wt{\underline{L}}, \wt{S}_1, \wt{S}_2
\}}|\wt{U}^a\wt{L}^b\wt{L}^c\wpa_a\wt{H}_{bc}|,\qquad
|\wpao\wt{H}_1|^2_{\wt{L}\wt{L}}
= {\sum}_{\wt{U}\in\{\wt{L}, \wt{S}_1, \wt{S}_2\}}|\wt{U}^a\wt{L}^b\wt{L}^c\wpa_a\wt{H}_{bc}|^2
\]
then with
\[
\widetilde{\Box}={\Box}^{\wt{g}}=\wt{g}^{\,ab}\wpa_a\wpa_b
\]
we have
\beq
\sup_{t\in[0,T]}\int_{\Sigma_t}|\wpa\phi|^2 w \, d\wt{x} +\int_0^T\int_{\Sigma_t}|\wpao\phi|^2w'\, d\wt{x}\, dt\leq 8\int_{\Sigma_0}|\wpa\phi|^2 w \, d\wt{x} + 12\int_0^T\int_{\Sigma_t}|\widetilde{\Box}\phi||\wpa\phi|w\, dx \, dt.
\eq
\end{theorem}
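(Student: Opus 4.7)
I use the vector field method with multiplier $\wpa_t\phi \cdot w$. Set $P^a := \wt{g}^{\,ab}\wpa_b\phi\,\wpa_t\phi - \tfrac{1}{2}\delta^a_0\,\wt{g}^{\,bc}\wpa_b\phi\,\wpa_c\phi$, so that $\widetilde{J}_w^a[\phi]=-wP^a$. A direct computation using the symmetry of $\wt{g}^{\,ab}$ yields
\begin{equation*}
\wpa_a \widetilde{J}_w^a[\phi] \;=\; - w\,\wpa_t\phi\,\widetilde{\Box}\phi \;-\; (\wpa_a w)\,P^a \;-\; w\,\wpa_t\phi\,(\wpa_a\wt{g}^{\,ab})\wpa_b\phi \;+\; \tfrac{w}{2}\,(\wpa_t\wt{g}^{\,bc})\wpa_b\phi\,\wpa_c\phi.
\end{equation*}
Integrating over $[0,T]\times\R^3$ and invoking the divergence theorem converts the LHS into the boundary difference $\int_{\Sigma_T}\widetilde{J}_w^0 - \int_{\Sigma_0}\widetilde{J}_w^0$. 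By the smallness of $\wt{H}_1$ in \eqref{est:H11} the integrand $\widetilde{J}_w^0$ obeys $\widetilde{J}_w^0\geq \tfrac{1}{2}(1-C\varepsilon_0)w|\wpa\phi|^2$, supplying the energy term on the LHS. For $-(\wpa_a w)P^a$, use $\wpa_a w = w'(q^*)\wL_a$ together with the null-frame identity $\wL\phi\,\wpa_t\phi - \tfrac{1}{2}\widehat{m}^{\,ab}\wpa_a\phi\wpa_b\phi = \tfrac{1}{2}(|\wL\phi|^2+|\swpa\phi|^2) = \tfrac{1}{2}|\wpao\phi|^2$ to extract the good spacetime integrand $\tfrac{1}{2}w'(q^*)|\wpao\phi|^2$, leaving a correction bounded pointwise by $C\varepsilon_0\,w'|\wpa\phi|^2$ via \eqref{est:H11}.

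What remains is the metric-derivative error
\[
\mathrm{Err}[\phi] \;:=\; - w\,\wpa_t\phi\,(\wpa_a\wt{g}^{\,ab})\wpa_b\phi \;+\; \tfrac{w}{2}(\wpa_t\wt{g}^{\,bc})\wpa_b\phi\,\wpa_c\phi.
\]
Expand $\wpa_a = -\tfrac{1}{2}\wuL_{a}\wL - \tfrac{1}{2}\wL_{a}\wuL + (\text{angular})$. Contributions containing at least one tangential derivative of $\wt{g}^{\,ab}$ are controlled by the $\wpao\wt{H}_1$-bound in \eqref{est:H11}. The dangerous term is the $\wuL$-derivative, which contracts against $\wL_a\wL_b$ and so sees only $|\wpa\wt{H}_1|_{\wL\wL}$, governed by the sharper estimate \eqref{est:H12}. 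Combining these bounds with the two-sided weight inequality $\langle q^*\rangle w' \les w \les \langle q^*\rangle^{1+2\mu}w'$ displayed just before the theorem, the spatial $L^1$ norm of $\mathrm{Err}[\phi]$ at fixed $t$ satisfies
\[
\int_{\Sigma_t}|\mathrm{Err}[\phi]|\, d\wt{x}
\;\les\; \varepsilon_0\int_{\Sigma_t} w'|\wpao\phi|^2\, d\wt{x} \;+\; \varepsilon_0\,\langle t\rangle^{-1-2\mu}\!\int_{\Sigma_t}w|\wpa\phi|^2\, d\wt{x}.
\]
The first piece is absorbed into the spacetime integral on the LHS; the second, once integrated in $t$ using $\int_0^\infty \langle t\rangle^{-1-2\mu}dt<\infty$, is absorbed into the energy $\mathcal{E}[\phi](T)$. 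Choosing $\varepsilon_0$ small enough that all absorbed quantities constitute a fixed small fraction of the LHS, and estimating $|w\,\wpa_t\phi\,\widetilde{\Box}\phi| \leq |\widetilde{\Box}\phi||\wpa\phi|w$ on the driving term, yields the stated inequality with the explicit constants $8$ and $12$.

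The main obstacle is the $\wpa\wt{g}$ factor in $\mathrm{Err}[\phi]$. Using only the generic bound \eqref{est:H11} yields decay $\langle t+r^*\rangle^{-1/2-\mu}\langle t-r^*\rangle^{-1/2-\mu}$ for $\wpa\wt{H}_1$, which is barely non-integrable in $t$ and would force a logarithmic or polynomial growth factor via Gronwall rather than a clean constant. The crucial point is that the $\wuL$-derivative is the only direction producing the worst factor $|\wuL\phi|^2$ in the error, and after the null-frame contraction it sees only the $\wL\wL$-component of $\wpa\wt{g}$; the dedicated estimate \eqref{est:H12} on this component provides the additional decay in $\langle t+r^*\rangle$ needed to close the time integral without any growth factor. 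Once this null-frame split is carried out the rest of the argument is straightforward bookkeeping using the smallness of $\varepsilon_0$.
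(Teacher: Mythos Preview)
Your approach is the same as the paper's (multiplier $w\,\wpa_t\phi$, divergence theorem, null-frame split of the metric-derivative error), and your handling of $\mathrm{Err}[\phi]$ is essentially correct. There are, however, two gaps you should close.

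\textbf{The weight correction cannot be absorbed as written.} After extracting $\tfrac12 w'|\wpao\phi|^2$ from $-(\wpa_a w)P^a$, you claim the remainder is bounded pointwise by $C\varepsilon_0\,w'|\wpa\phi|^2$ via \eqref{est:H11}. But such a term is \emph{not} absorbable: the left-hand side carries $w'|\wpao\phi|^2$, not $w'|\wpa\phi|^2$, and using only \eqref{est:H11} together with $\langle q^*\rangle w'\lesssim w$ yields a time integrand $\sim\langle t\rangle^{-1/2-\mu}$, which is integrable only for $\mu>1/2$. You must apply the same null-frame split here that you used for $\mathrm{Err}[\phi]$: the worst piece of the remainder is $(\wt g-\widehat m)_{\wL\wL}\,(\pa_{q^*}\phi)^2\,w'$, for which \eqref{est:H12} (on the \emph{undifferentiated} $|\wt H_1|_{\wL\wL}$) gives the extra factor $\langle t-r^*\rangle\langle t+r^*\rangle^{-1-2\mu}$, hence $\lesssim\varepsilon_0\langle t+r^*\rangle^{-1-2\mu}w|\wpa\phi|^2$ after $\langle q^*\rangle w'\lesssim w$; the remaining pieces carry at least one $\wpao\phi$ and give a cross term $\varepsilon_0\,w'|\wpa\phi||\wpao\phi|$ handled by Cauchy--Schwarz into $\varepsilon_0(\mathcal E+\mathcal S)$. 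This is exactly what the paper does in the paragraph beginning ``When $\wpa$ falls on $w$\ldots''.

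\textbf{The $\wt m_0$ contribution.} You write $\wpa\wt g$ but only invoke bounds on $\wt H_1$. Since $\wt g^{ab}=\wt m_0^{ab}+\wt H_1^{ab}$ with $\wt m_0\neq\widehat m$, the piece $\wpa(\wt m_0-\widehat m)$ must be handled separately; the paper does this via Lemma~\ref{lem:m0approx} (equivalently the first statement of Proposition~\ref{prop:MetCons}), which gives $|\wpa(\wt g-\wt H_1)|\lesssim M\ln\langle t+r^*\rangle\,\langle t+r^*\rangle^{-2}$ and makes that contribution harmless.

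Two smaller points: your displayed null-frame identity has a sign slip (it should read $\wL\phi\,\wpa_t\phi+\tfrac12\widehat m^{ab}\wpa_a\phi\wpa_b\phi=\tfrac12|\wpao\phi|^2$), and to get $\mathcal E[\phi](T)$ rather than $E[\phi](T)$ on the left you need the standard trick of also running the estimate up to the time $T'\in[0,T]$ where $E[\phi]$ is maximal and adding, as the paper does at the end of its proof.
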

\begin{remark}
If we had used pointwise bounds on components of $H$ with respect to the null frame in Minkowski space, we would have only been able to prove slowly growing energy,  even in the case $\widetilde{\Box}\phi = 0$ (cf. \cite{LR10}).
\end{remark}
Before proving this, we state some consequences of the metric assumptions.
\begin{proposition}\label{prop:MetCons}
If we replace $\wt{H}_1^{ab}$ with $(\wt{g}-\widehat{m})^{ab}$, then for $\varepsilon_0 < 1$ the metric assumptions \eqref{est:H11} and \eqref{est:H12} hold up to a constant (replacing $\varepsilon_0$ with $C\varepsilon_0$ on the right hand side). We additionally have the following estimates:
\begin{subequations}\label{est:derH}
\begin{align}
|\wpa_t \wt{H}_1^{ab}\wpa_a\phi\,\wpa_b\phi|
&\leq C\varepsilon_0\big( \langle{t+r^*}\rangle^{-1-2\mu}|\wpa\phi|^2
 + \langle{t-r^*}\rangle^{-1/2-\mu}\langle{t+r^*}\rangle^{-1/2-\mu}
|\wpao\phi||\wpa\phi|\big),\label{est:derH1}\\
|\wpa_a \wt{H}_1^{ab} \wpa_b\phi|
&\leq C\varepsilon_0\big(\langle{t+r^*}\rangle^{-1-2\mu}|\wpa\phi|
+ \langle{t-r^*}\rangle^{-1/2-\mu}\langle{t+r^*}\rangle^{-1/2-\mu}
|\wpao\phi|\big).\label{est:derH2}
\end{align}
\end{subequations}
\end{proposition}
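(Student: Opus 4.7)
The plan splits into two parts: first, verifying that the hypotheses \eqref{est:H11}--\eqref{est:H12} transfer from $\wt{H}_1$ to $(g-\widehat{m})$; second, proving the two bilinear estimates by null-frame decomposition. Since $\tg^{ab} = \wt{m}_0^{ab} + \wt{H}_1^{ab}$, the transfer reduces to showing that $\wt{m}_0 - \widehat{m}$ satisfies bounds of the same form. Applying Lemma \ref{lem:m0decomp} with $I=0$ gives $\wt{m}_0^{ab} - \widehat{m}^{ab} = (\kappa_0-1)\widehat{m}^{ab} + \slashed{S}^{0\,ab} + R^{ab}$ with explicit pointwise bounds; each summand is $O(M\langle\ln r\rangle/\langle t+r^*\rangle)$ on the support of $\widetilde\chi$ (where $r\gtrsim t+1$). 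Since $\mu<1/2$ and $M\leq\varepsilon_0$, one checks $M\langle\ln r\rangle\langle t+r^*\rangle^{-1} \leq C\varepsilon_0 \langle t-r^*\rangle^{1/2-\mu}\langle t+r^*\rangle^{-1/2-\mu}$. For the $\wt{L}\wt{L}$-bound \eqref{est:H12}, the identities $\widehat{m}^{ab}\wt{L}_a\wt{L}_b = 0$ and $\slashed{S}^{ab}\wt{L}_a\wt{L}_b = 0$ kill the leading contributions, leaving only the better-decaying pieces $|\slashed{S}^0_{\wt{L}\wt{U}}|$ and $|R_{\wt{L}\wt{L}}|$ from Lemma \ref{lem:m0decomp}, which fit inside $C\varepsilon_0\langle t-r^*\rangle\langle t+r^*\rangle^{-1-2\mu}$. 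Derivative bounds follow identically.

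For \eqref{est:derH1}, I would expand the covector $\wpa_a\phi$ in the dual null frame,
\begin{equation*}
\wpa_a\phi = -\tfrac12\wt{\underline{L}}_a\wt{L}\phi - \tfrac12\wt{L}_a\wt{\underline{L}}\phi + \swpa_a\phi,
\end{equation*}
so that $\wpa_a\phi\,\wpa_b\phi = \tfrac14\wt{L}_a\wt{L}_b(\wt{\underline{L}}\phi)^2 + R_{ab}$, where every entry of $R_{ab}$ carries at least one tangential-derivative factor of $\phi$; contraction with any bounded symmetric tensor $K$ gives $|K^{ab}R_{ab}|\lesssim |K||\wpao\phi||\wpa\phi|$. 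Contracting with $\wpa_t\wt{H}_1^{ab}$, the leading piece becomes $\tfrac14(\wpa_t\wt{H}_1)_{\wt{L}\wt{L}}(\wt{\underline{L}}\phi)^2$, which \eqref{est:H12} bounds by $C\varepsilon_0 \langle t+r^*\rangle^{-1-2\mu}|\wpa\phi|^2$, while \eqref{est:H11} handles the remainder to produce the second term of \eqref{est:derH1}.

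For \eqref{est:derH2}, the same decomposition of $\wpa_b\phi$ reduces matters to estimating $(\wpa_a\wt{H}_1^{ab})\wt{L}_b$. Using the divergence decomposition \eqref{eq:divergenceinnullframetilde},
\begin{equation*}
\wpa_a\wt{H}_1^{ab}\,\wt{L}_b = \wt{L}_a\wt{L}_b\pa_{q^*}\wt{H}_1^{ab} - \wt{\underline{L}}_a\wt{L}_b\pa_{s^*}\wt{H}_1^{ab} + \wt{S}_{ia}\wt{L}_b\pa_{\wt{S}_i}\wt{H}_1^{ab}.
\end{equation*}
Commuting $\wt{L}_a\wt{L}_b$ through $\pa_{q^*}$ (with $O(1/r)$ errors from $\pa\wt{L}$ absorbed into \eqref{est:H11}), the first term becomes a transverse derivative of the $\wt{L}\wt{L}$-component, bounded by $|\wpa\wt{H}_1|_{\wt{L}\wt{L}} \lesssim \varepsilon_0\langle t+r^*\rangle^{-1-2\mu}$. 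The remaining terms involve tangential derivatives of unrestricted components and are bounded by $|\wpao\wt{H}_1|\lesssim \varepsilon_0\langle t-r^*\rangle^{1/2-\mu}\langle t+r^*\rangle^{-3/2-\mu}$, absorbed into the $\langle t+r^*\rangle^{-1-2\mu}$-rate via $\langle t-r^*\rangle\lesssim\langle t+r^*\rangle$ and $\mu<1/2$. The tangential cross terms arising from $\swpa\phi$ in $\wpa_b\phi$ are controlled directly by $|\wpa\wt{H}_1||\wpao\phi|$ using \eqref{est:H11}.

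The main obstacle is combinatorial bookkeeping rather than analytical depth: one must track precisely which null-frame contractions project onto $|\cdot|_{\wt{L}\wt{L}}$ in order to unlock the sharper bound \eqref{est:H12}, and verify that the $\pa\wt{L}=O(1/r)$ correction terms are always dominated by the decay rates already in play. The decay-exponent comparison in the final step relies on the standing assumption $\mu<1/2$ from Theorem \ref{thm:EEst}; without it the tangential-derivative contribution $\langle t-r^*\rangle^{1/2-\mu}\langle t+r^*\rangle^{-3/2-\mu}$ could not be absorbed into $\langle t+r^*\rangle^{-1-2\mu}$ uniformly in the exterior region.
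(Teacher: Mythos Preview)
Your argument is correct and follows essentially the same route as the paper's proof: invoking the $\wt{m}_0$-structure (Lemma~\ref{lem:m0approx}/\ref{lem:m0decomp}) for the first statement, and for \eqref{est:derH} performing a null-frame decomposition to isolate the $\wt L\wt L$ component so that \eqref{est:H12} controls the worst piece while \eqref{est:H11} handles the rest. The only organizational difference is that you decompose $\wpa_a\phi$ in the null frame first, whereas the paper lowers indices via $\wt H_1^{ab}=\widehat m^{ac}\widehat m^{bd}\wt H_{1\,cd}+O(|g-\widehat m||\wpa H_1|)$ and then expands both copies of $\widehat m$; your route avoids that conversion step and its cubic remainder, but the substance is the same.
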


\begin{proof}
The first statement follows directly from Lemma \ref{lem:m0approx}. The estimates \eqref{est:derH1} and \eqref{est:derH2} are trivial when $r^* < (t+1)/2$, as we have the approximation $\langle t-r^*\rangle \approx \langle t+r^*\rangle$. In the exterior region, this is slightly more complicated. We prove the estimate \eqref{est:derH2} here; \eqref{est:derH1} follows from similar reasoning. We can expand
\[
\wpa_a\wt{H}_1^{ab}\wpa_b\phi = \widehat{m}^{ac}\widehat{m}^{bd}\wpa_a (\wt{H}_{1cd})\wpa_b\phi + O(|g-\widehat{m}||\wpa H_1|)\wpa\phi,
\]
and further expand $\widehat{m}^{ab}, \widehat{m}^{cd}$ in our null frame, which gives
\[
|\wpa_a\wt{H}_1^{ab}\wpa_b\phi|\lesssim  \varepsilon\langle t-r^*\rangle^{-2\mu}\langle{t+r^*}\rangle^{-1-2\mu}|\wpa\phi|
+ |\wpa \wt{H}_1|(|\wpao\phi|) + (|\wpao{\wt{H}}_1| +|\wpa \wt{H}_1|_{\wt{L}\wt{L}})|\wpa\phi|
\]
 Our estimate follows.
\end{proof}

\begin{proof}[Proof of Theorem \ref{thm:EEst}]
We apply the divergence theorem in the $(\wt{t},\wt{x})$ coordinates to $\widetilde{J}_w^a[\phi]$ on $[0,T]\times \mathbb{R}^3$ .
\beq\label{est:divergence}
\int_{\Sigma_T}\widetilde{J}_w^{\,0}[\phi]\, d\widetilde{x}-  \int_{\Sigma_0}\widetilde{J}_w^{\,0}[\phi] \, d\widetilde{x} =\int_0^T\int_{\Sigma_t}\wpa_a(\widetilde{J}_w^a[\phi])\, d\widetilde{x} \, d\widetilde{t}.
\eq
Defining
\[
E[\phi](t) = \int_{\Sigma_t}|_{\,}\wpa\phi|^2 w \, d\widetilde{x},
\]
A pointwise calculation gives
\beq
\int_{\Sigma_t}\left|\tfrac12\big|\wpa\phi\big|^2 - \wt{T}_{00}\right|\, d\wt{x}\lesssim \varepsilon_0 E[\phi](t).
\eq
Additionally,
\beq
|(-\wt{g}^{\,0b}+\widehat{m}^{0b})\widetilde{T}[\phi]_{b0}w|\lesssim \varepsilon_0 E[\phi](t).
\eq
Therefore, for sufficiently small $\varepsilon_0$, the inequality
\begin{equation}\label{est:timeslice}
\tfrac13E[\phi](t) \leq \int_{\Sigma_t}\widetilde{J}_w^{\,0}[\phi]\, d\wt{x} \leq  \tfrac23E[\phi](t).
\end{equation}
holds for all $t$. Now we estimate the right hand side. We write
\beq
-\wpa_a(\widetilde{g}^{\,ab}\widetilde{T}[\phi]_{b0}\,w) = (-\wpa_a\widetilde{g}^{\,ab})\wpa_b\phi\,\wpa_t\phi\, w + \frac12(\wpa_t\widetilde{g}^{\,bc})\wpa_b\phi\,\wpa_c\phi \, w-\widetilde{g}^{\,ab}\wpa_a\wpa_b\phi\,\wpa_0\phi\, w - \widetilde{g}^{\,ab}\widetilde{T}_{b0}\wpa_a\, w
\eq
In the far interior, $\frac{r}{t+1}<\frac34$, the bounds \eqref{est:H1} imply
\[
|\widetilde{g} - \widehat{m}|\lesssim \varepsilon_0\langle t \rangle^{-2\mu} , \qquad |\wpa\widetilde{g}|\lesssim \varepsilon_0 \langle t \rangle^{-1-2\mu},\qquad|w|\lesssim 1, \qquad \qquad |\wpa w| \lesssim \langle t \rangle^{-1-2\mu},
\]
and consequently
\[
|(-\wpa_a\widetilde{g}^{\,ab})\wpa_b\phi\,\wpa_t\phi\, w| + |(\wpa_t\widetilde{g}^{\,bc})\wpa_b\phi\,\wpa_c\phi\, w|  + |g^{\,ab}\widetilde{T}_{b0}\wpa_a \,w + \tfrac12\widetilde{T}_{\wt{L}0}\wt{\underline{L}}w|\lesssim \varepsilon_0 \langle t\rangle^{-1-2\mu}|\wpa\phi|^2 w,
\]
follows directly. Outside this region, we use the null decomposition. Lemma \ref{lem:m0approx} implies
\[
|\wpa(\widetilde{g}^{\,ab} - \widetilde{H}_1^{ab})| \lesssim \frac{M\ln(1+\langle t+r^*\rangle)}{\langle t+r^*\rangle^2},
\]
and therefore
\[
|\wpa_a(\widetilde{g}^{\,ab} -\widetilde{H}_1^{ab})\wpa_b\phi\,\wpa_t\phi \,w| + |\wpa_t(\widetilde{g}^{\,bc} - \wt{H}_1^{bc})\wpa_b\phi\,\wpa_c\phi \,w| \lesssim \frac{M\ln(1+\langle t+r^*\rangle)}{\langle t+r^*\rangle^2}|\wpa\phi|^2 w.
\]

The inequalities \eqref{est:derH} imply
\begin{align*}
|\wpa_t \wt{H}_1^{ab}\wpa_a\phi\,\wpa_b\phi\, w|&\lesssim \varepsilon_0
\big(\langle{t+r^*}\rangle^{-1-2\mu}|\wpa\phi|^2w
+ \langle{t-r^*}\rangle^{-1-2\mu}|\wpao\phi|^2 w\big),\\
&\lesssim \varepsilon_0\big(\langle{t+r^*}\rangle^{-1-2\mu}|\wpa\phi|^2 +
|\wpao\phi|^2 w'\big),\\
|\wpa_a \wt{H}_1^{ab} \wpa_b\phi\, \wpa_t\phi|
&\lesssim \varepsilon_0\big(\langle{t+r^*}\rangle^{-1-2\mu}|\wpa\phi|^2
+ \langle{t-r^*}\rangle^{-1/2-\mu}\langle{t+r^*}\rangle^{-1/2-\mu}
|\wpao\phi||\wpa\phi|\big),\\
&\lesssim \varepsilon_0\big(\langle{t+r^*}\rangle^{-1-2\mu}|\wpa\phi|^2 +
 |\wpao\phi|^2 w'\big).
\end{align*}
Therefore, we have the pointwise inequality
\begin{equation}\label{est:DerMetric}
|\wpa_a \wt{g}^{\,ab}\wpa_b\phi\,\wpa_t\phi\,w|
 + |\wpa_t \wt{g}^{\,bc}\wpa_b\phi\,\wpa_c\phi\, w|
 \lesssim \varepsilon_0\big(\langle{t+r^*}\rangle^{-1-2\mu}|\wpa\phi|^2 +
 |\wpao\phi|^2 w'\big)
\end{equation}
Since the integral of $\langle t \rangle^{-1-2\mu}$ is finite in time, this gives the bound
\begin{align}
\int_0^T\!\!\!\!\!\int_{\Sigma_t}\!\!\!\!|
\wpa_a \wt{g}^{\,ab}\wpa_b\phi\,\wpa_t\phi\, w|\!
 + |\wpa_t \wt{g}^{\,bc}\wpa_b\phi\,\wpa_c\phi\, w|\, d\wt{x}\, d\wt{t}
 \lesssim \varepsilon_0(\mathcal{E}[\phi](T) + \mathcal{S}[\phi](T))
\end{align}

When $\wpa$ falls on $w$ give the spacetime energy $\mathcal{S}[\phi](T)$ plus error terms. We first bound
\beq
|\wt{g}^{\,ab}\wt{T}_{0a}\wpa_b w +\frac12\wt{T}_{0\wt{L}}\underline{\wt{L}} w|
\lesssim \varepsilon_0\big(\langle t-r^*\rangle^{1/2-\mu}
\langle t+r^*\rangle^{-1/2-\mu}|\wpa\phi||\wpao\phi|
+ \langle t-r^*\rangle\langle t+r^*\rangle^{-1-2\mu}|\wpa\phi|^2\big)w',
\eq
which follows from the null decomposition and the first statement of Proposition \ref{prop:MetCons}.
The inequality $\langle t-r^*\rangle w' \lesssim w$ and Hölder's inequality allow us to bound the integral of the right hand side by a constant times $\varepsilon_0(\mathcal{E}[\phi](T) + \mathcal{S}[\phi](T))$. Additionally,
\[
|(\widetilde{T}_{\wt{L}0} - \frac12|\overline{\wt{\partial}}\phi|^2)
\wt{\underline{L}}w| \lesssim \varepsilon_0\big(\langle t-r^*\rangle
\langle t+r^*\rangle^{-1-2\mu}|\wpa\phi|^2
+ \varepsilon_0\langle t-r^*\rangle^{1/2-\mu} \langle t+r^*\rangle^{-1/2-\mu}
|\wpa\phi||\wpao\phi|\big)w'.
\]
We can bound this in the same way. Therefore,
\begin{equation}\label{est:spacetime}
\int_0^T\int_{\Sigma_t}|\wt{g}^{\,ab}\wt{T}_{0a}\wpa_b w +\frac14|\overline{\wt{\partial}}\phi|^2\wt{\underline{L}}w|\lesssim \varepsilon_0(\mathcal{E}[\phi](T) + \mathcal{S}[\phi](T)).
\end{equation}
We recall that
\begin{equation}
\mathcal{S}[\phi](T) = -\int_0^T\int_{\Sigma_t}\tfrac12|\overline{\wt{\partial}}\phi|^2\wt{\underline{L}}w.
\end{equation}
Then,
\beq
\int_0^T\int_{\Sigma_t}\left|\wpa_a(\widetilde{J}_w^a[\phi]) + \widetilde\Box\phi\wpa_t\phi-\tfrac14|\overline{\wt{\partial}}\phi|^2\wt{\underline{L}}w\right|\, d\widetilde{x} \, d\widetilde{t} \lesssim \varepsilon_0(\mathcal{E}[\phi](T) + \mathcal{S}[\phi](T))
\eq
To close the proof, we first rewrite \eqref{est:divergence} as
\begin{align}
\label{eqn:EnergyIdentity}
\int_{\Sigma_T}\!\!\!\widetilde{J}_w^{\,0}[\phi]\, d\widetilde{x}= \int_{\Sigma_0}\!\!\!\widetilde{J}_w^{\,0}[\phi] \, d\widetilde{x} + \int_0^T\!\!\!\int_{\Sigma_t}\!\!\!\left(\wpa_a(\widetilde{J}_w^a[\phi]) + \widetilde\Box\phi\wpa_t\phi -\tfrac14|\overline{\wt{\partial}}\phi|^2\wt{\underline{L}}w\right) -\widetilde\Box\phi\wpa_t\phi +\tfrac14|\overline{\wt{\partial}}\phi|^2\wt{\underline{L}}w\, d\widetilde{x} \, d\widetilde{t}
\end{align}
Therefore, there exists a $C$ depending on $\gamma, \mu$ such that
\begin{equation}
\tfrac13E[\phi](T) +\tfrac12 \mathcal{S}[\phi](T) \leq \tfrac23E[\phi](0)+ C\varepsilon_0(\mathcal{E}[\phi](T) + \mathcal{S}[\phi](T)) + \int_0^T\int_{\Sigma_t}|\wpa_t\phi||\widetilde{\Box}\phi|w\, d\widetilde{x} \, d\widetilde{t}
\end{equation}
We repeat this estimate with $T$ replaced with $T'\in [0,T]$ where $E[\phi]$ attains its maximum; i.e., $E[\phi](T') = \mathcal{E}[\phi](T)$, and add the two estimates to get:
\begin{equation}
\tfrac13\mathcal{E}[\phi](T) +\tfrac12 \mathcal{S}[\phi](T) \leq \tfrac43E[\phi](0)+ 2C\varepsilon_0(\mathcal{E}[\phi](T) + \mathcal{S}[\phi](T)) + 2\int_0^T\int_{\Sigma_t}|\wpa_t\phi||\widetilde{\Box}\phi|w\, d\widetilde{x} \, d\widetilde{t}.
\end{equation}
Multiplying this by 6 and assuming $2C\varepsilon_0 < 1$ gives
\begin{equation}
\mathcal{E}[\phi](T) +\mathcal{S}[\phi](T) \leq 8E[\phi](0)+ 12\int_0^T\int_{\Sigma_t}|\wpa_t\phi||\widetilde{\Box}\phi|w\, d\widetilde{x} \, d\widetilde{t}.{}\tag*{\qedhere}
\end{equation}
\end{proof}

\subsection{Poincare lemmas with weights}\label{sec:poincare}

We restate results from \cite{LR10}, which will be particularly useful in bounding $L^2$ norms of Lie derivatives of $H$.
\begin{lemma}\label{est:1dpoincare}
Let $0\leq a\leq 2$, $\mu > -1/2$, and $\gamma > 0$, and let $\phi$ be a differentiable function on $[0, \infty)$ such that $r^{\gamma}\phi$ vanishes at $\infty$. Then, for all $t\geq 0$, we have the inequality
\begin{equation}
\int_0^\infty \frac{|\phi|^2\overline{w}_{\gamma, \, \mu}(r^*-t)}{(1+|r^*-t|)^2(1+t+r^*)^a}\, r^{*2}\, dr^* \leq C_{\gamma, \mu} \int_0^\infty \frac{|\pa_{r^*}\phi|^2\overline{w}_{\gamma, \,\mu}(r^*-t)}{(1+t+r^*)^a}\, r^{*2}\, dr^*,
\end{equation}
where
\begin{equation}
\overline{w}_{\gamma, \mu}(y) = \bigg\{\begin{aligned}
&(1+|y|)^{1+2\gamma},\qquad & y \geq 0, \\
&(1+|y|)^{-2\mu},\qquad & y < 0.
\end{aligned}
\end{equation}
\end{lemma}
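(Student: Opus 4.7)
The plan is to split the integral at $r^*=t$ and apply a weighted integration-by-parts argument on each of the exterior region $\{r^*>t\}$ and the interior region $\{0\le r^*\le t\}$, where the weight $\overline{w}_{\gamma,\mu}(r^*-t)(1+|r^*-t|)^{-2}$ takes the simpler forms $(1+r^*-t)^{2\gamma-1}$ and $(1+t-r^*)^{-2\mu-2}$ respectively. Using $\gamma>0$ and $\mu>-1/2$, these can be rewritten as
\[
(1+r^*-t)^{2\gamma-1}=\tfrac{1}{2\gamma}\partial_{r^*}(1+r^*-t)^{2\gamma}, \qquad
(1+t-r^*)^{-2\mu-2}=\tfrac{1}{2\mu+1}\partial_{r^*}(1+t-r^*)^{-2\mu-1}.
\]
Writing $\sigma(r^*):=r^{*2}/(1+t+r^*)^a$ for the radial factor, a direct computation gives $\sigma'(r^*)=r^*(2(1+t+r^*)-ar^*)(1+t+r^*)^{-a-1}\ge 0$, where the hypothesis $a\le 2$ is used; this is the sign that makes the $\sigma'$-terms produced by the integration by parts harmless. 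By a standard density argument it suffices to treat $\phi\in C_c^\infty([0,\infty))$.

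For the exterior I would integrate by parts: the boundary term at $\infty$ vanishes on compactly supported $\phi$, and the boundary term at $r^*=t$ is $-|\phi(t)|^2\sigma(t)/(2\gamma)$. Denoting the exterior left-hand side by $A$, and setting $B:=|\phi(t)|^2\sigma(t)/(2\gamma)\ge 0$ and $C:=\tfrac{1}{2\gamma}\int_t^\infty|\phi|^2\sigma'(1+r^*-t)^{2\gamma}\,dr^*\ge 0$, one obtains the identity
\[
A+B+C=-\tfrac{1}{\gamma}\int_t^\infty \phi\,\phi'\,\sigma\,(1+r^*-t)^{2\gamma}\,dr^*.
\]
A Cauchy--Schwarz with the weight-split $(1+r^*-t)^{2\gamma}=(1+r^*-t)^{\gamma-1/2}(1+r^*-t)^{\gamma+1/2}$ bounds the right-hand side by $\tfrac{1}{\gamma}A^{1/2}D^{1/2}$, where $D:=\int_t^\infty|\phi'|^2\sigma(1+r^*-t)^{2\gamma+1}\,dr^*$ is the exterior right-hand side of the lemma. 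Since $A,B,C\ge 0$, this gives simultaneously the exterior inequality $A\le D/\gamma^2$ and, as a by-product, the trace estimate $|\phi(t)|^2\sigma(t)\le 2D/\gamma$.

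The analogous integration by parts on $[0,t]$ has no boundary contribution at $r^*=0$ because $\sigma(0)=0$, and produces a boundary term $+|\phi(t)|^2\sigma(t)/(2\mu+1)$ at $r^*=t$ together with a term $-|\phi|^2\sigma'/(2\mu+1)$ of favorable sign. Writing $E$ for the interior left-hand side and $F:=\int_0^t|\phi'|^2\sigma(1+t-r^*)^{-2\mu}\,dr^*$ for the interior right-hand side, a Cauchy--Schwarz followed by $2ab\le a^2+b^2$ yields
\[
E\le \tfrac{2|\phi(t)|^2\sigma(t)}{2\mu+1}+\tfrac{4}{(2\mu+1)^2}F.
\]
Plugging in the trace estimate from the exterior controls $E$ by a $C_{\gamma,\mu}$-multiple of $D+F$, and combined with $A\le D/\gamma^2$ this yields the inequality with a constant depending only on $\gamma$ and $\mu$, uniformly in $t$ and in $0\le a\le 2$.

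The main obstacle is the boundary term $|\phi(t)|^2\sigma(t)/(2\mu+1)$ arising from the interior integration by parts: a priori it is not controlled by the interior right-hand side alone, and the resolution is precisely the trace estimate at $r^*=t$ obtained as a by-product of the exterior analysis, so the exterior and interior pieces must be chained together rather than treated independently. A secondary point is that the hypothesis $a\le 2$ enters solely through the sign $\sigma'\ge 0$, which is what permits the $\sigma'$-integrals to be dropped on both sides without invoking a further Hardy step near $r^*=0$.
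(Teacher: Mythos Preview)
Your proof is correct, but the route differs from the paper's. You split the integral at $r^*=t$, integrate by parts separately on the exterior and interior, and then couple the two pieces via the trace estimate $|\phi(t)|^2\sigma(t)\le 2D/\gamma$ extracted from the exterior step. The paper instead runs a single unified multiplier argument on all of $[0,\infty)$: expanding $\int_0^\infty (Cf\partial_{r^*}\phi+g\phi)^2\,dr^*-\int_0^\infty\partial_{r^*}(Cfg\phi^2)\,dr^*\ge 0$ with $f=r^*\overline{w}_{\gamma,\mu}^{1/2}(1+t+r^*)^{-a/2}$ and $g=f/(1+|q^*|)$ yields $C^2\int f^2|\phi'|^2\ge\int[C\partial_{r^*}(fg)-g^2]|\phi|^2$, and a direct computation of $\partial_{r^*}(fg)$ shows the bracket dominates $g^2$ once $C=2\max((2\gamma)^{-1},(1+2\mu)^{-1})$. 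This avoids any boundary term at $r^*=t$ altogether, since the only boundary contributions are at $0$ (where $f$ vanishes) and at $\infty$. Both arguments use $a\le 2$ in exactly the same place, namely the sign of $\frac{2}{r^*}-\frac{a}{1+t+r^*}$ (equivalently your $\sigma'\ge 0$). Your approach is perhaps more transparent about where each hypothesis enters and yields the trace bound as a free by-product, while the paper's approach is shorter and sidesteps the need to chain the two regions.
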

\begin{proof}
Let $f$, $g$, and $\phi$ be functions of $\rs$. By a density argument, we can assume $\phi$ is smooth and compactly supported in $[0, \infty)$. Then,
\[
\int_0^\infty(Cf\pa_{\rs}\phi + g\phi)^2\, dr^* - \int_0^\infty \pa_{r^*}(Cfg\phi^2)\, dr^* \geq 0.
\]
as long as $fg\phi^2$ vanishes at 0 and $\infty$. Therefore,
\[
\int_0^\infty C^2|f\pa_{r^*}\phi|^2\, dr^* \geq \int_0^\infty \big[C\pa_{r^*}(fg) - g^2\big]|\phi|^2\, d\rs.
\]

Now set
\begin{align*}
f(r^*) &= r^{*}\overline{w}_{\gamma, \mu}^{1/2}(1+t+r^*)^{-a/2}, \\
g(r^*) &= r^{*}(1+|q^*|)^{-1}\overline{w}_{\gamma,\mu}^{1/2}(1+t+r^*)^{-a/2}.
\end{align*}
Then,
\[
\partial_{r^*}(fg) = \left(\frac{2}{r^*} - \frac{a}{1+t+r^*} - \frac{\sgn q^*}{(1+|q^*|)} + \frac{\overline{w}_{\gamma, \mu}'(q^*)}{\overline{w}_{\gamma, \mu}(q^*)} \right)fg.
\]
Since $a < 2$,  $\frac{2}{r^*} - \frac{a}{1+t+r^*} > 0$, and
\[
-\frac{\sgn q^*}{(1+|q^*|)} + \frac{\overline{w}_{\gamma, \mu}'(q^*)}{\overline{w}_{\gamma, \mu}(q^*)} = \begin{cases}
\frac{1+2\mu}{(1+|q^*|)}, & q^* < 0, \\
\frac{2\gamma}{(1+|q^*|),} & q^* > 0.
\end{cases}
\]
Setting $C = 2\max((2\gamma)^{-1}, (1+2\mu)^{-1})$ gives our result.
\end{proof}
This lemma has two immediate consequences, which will be useful in the fixed-time and spacetime energies.
\begin{corollary}\label{cor:poincare}
Recalling the definition \eqref{def:wdef}, for fixed $t >0$, $-1 \leq b \leq 1$, and $\phi \in C_0^\infty(\mathbb{R}^3)$, the following estimate holds:
\begin{equation}\label{cor1331}
\int_{\mathbb{R}^3}\frac{|\phi|^2}{(1+|q^*|^2)(1+t+r^*)^{1-b}} \, w\, d\widetilde{x}  \leq C_{\gamma, \mu} \int_{\mathbb{R}^3}\frac{|\pa_{r^*}\phi|^2}{(1+t+r^*)^{1-b}} \, w\, d\widetilde{x}.
\end{equation}
If additionally $b < 2\gamma$, $\mu> 0$, and $q^*_- = \max(-q^*, 0)$, we additionally have the estimate
\begin{equation}\label{cor1332}
\int_{\mathbb{R}^3} \frac{|\phi|^2(1+|q^*|)^{-b}}{(1+|q^*|^2)(1+|t+r^*|^{1-b})} \, \frac{w}{(1+q^*_-)^{2\mu}}\, d\widetilde{x} \leq C_{b, \gamma, \mu} \int_{\mathbb{R}^3}|\pa_{r^*}\phi|^2 \, w'\, d\widetilde{x}.
\end{equation}

\end{corollary}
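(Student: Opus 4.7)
The plan is to reduce both estimates to the one-dimensional bound of Lemma \ref{est:1dpoincare} by passing to spherical coordinates $(r^*, \omega)$ with $d\widetilde{x} = r^{*2}\,dr^*\,d\omega$, applying the 1D inequality at each fixed $\omega$ with an appropriate choice of parameters, and integrating in $\omega$. No analytic input beyond Lemma \ref{est:1dpoincare} is needed; the two estimates arise from different choices of the parameters $(\tilde\gamma, \tilde\mu, \tilde a)$ in the lemma.

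For \eqref{cor1331}, I would apply Lemma \ref{est:1dpoincare} with $\tilde\gamma = \gamma$, $\tilde\mu = 0$, and $\tilde a = 1-b \in [0,2]$. A direct case check gives the uniform comparability
\[
\overline{w}_{\gamma, 0}(q^*) \leq w(q^*) \leq 2\,\overline{w}_{\gamma, 0}(q^*):
\]
on $q^* \geq 0$ this uses $(1+q^*)^{1+2\gamma} \geq 1$, while on $q^* < 0$ one has $\overline{w}_{\gamma, 0} \equiv 1$ and $w \in [1,2]$. Combined with $(1+|q^*|)^2 \sim (1+|q^*|^2)$, this reduces \eqref{cor1331} to a direct consequence of Lemma \ref{est:1dpoincare}, with constant $C_{\gamma,\mu}$ of size $O(1/\gamma)$.

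For \eqref{cor1332}, the plan is to apply Lemma \ref{est:1dpoincare} with the shifted parameters $\tilde\gamma = \gamma - b/2$, $\tilde\mu = \mu + b/2$, $\tilde a = 1-b$. The hypotheses $-1 \leq b \leq 1$, $b < 2\gamma$, and $\mu > 0$ ensure $\tilde\gamma > 0$, $\tilde\mu > -1/2$, and $0 \leq \tilde a \leq 2$, so the lemma applies. A case analysis on the sign of $q^*$ then yields the pointwise comparability
\[
\frac{\overline{w}_{\tilde\gamma,\tilde\mu}(q^*)}{(1+|q^*|)^2 (1+t+r^*)^{1-b}} \;\sim\; \frac{(1+|q^*|)^{-b}\, w(q^*)}{(1+|q^*|^2)(1+|t+r^*|^{1-b})(1+q^*_-)^{2\mu}},
\]
so the left-hand sides are equivalent. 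On the right, the ratio $\overline{w}_{\tilde\gamma,\tilde\mu}(q^*)/((1+t+r^*)^{1-b}\, w'(q^*))$ is bounded uniformly, using $w'(q^*) \sim \overline{w}_{\gamma,\mu}(q^*)/(1+|q^*|)$ together with the elementary inequality $(1+|q^*|)^{1-b} \lesssim (1+t+r^*)^{1-b}$ (which follows from $|q^*| \leq t + r^*$ and $1-b \geq 0$). Hence \eqref{cor1332} also reduces to Lemma \ref{est:1dpoincare}.

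The main bookkeeping step is the case-split $q^* \geq 0$ vs.\ $q^* < 0$ in all of the weight comparisons above, since $w$, $\overline{w}_{\gamma, \mu}$, and $w'$ take genuinely different functional forms on the two sides of $q^* = 0$; once these are tabulated, the argument is essentially automatic. The only delicate point is checking that the shifted $\tilde\gamma, \tilde\mu$ in \eqref{cor1332} stay within the hypotheses of Lemma \ref{est:1dpoincare}, which is exactly what the conditions $b < 2\gamma$ and $\mu > 0$ enforce.
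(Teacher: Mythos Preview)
Your proposal is correct and follows essentially the same route as the paper: reduce to Lemma \ref{est:1dpoincare} in spherical coordinates, and for \eqref{cor1332} use the shifted parameters $\tilde\gamma=\gamma-b/2$, $\tilde\mu=\mu+b/2$, $\tilde a=1-b$ together with the pointwise bound $\overline{w}_{\tilde\gamma,\tilde\mu}/(1+t+r^*)^{1-b}\lesssim w'$. The only (harmless) difference is in \eqref{cor1331}: the paper splits $w\sim \overline{w}_{\gamma,\mu}+\overline{w}_{\gamma,0}$ and applies the lemma twice, whereas you observe directly that $w\sim \overline{w}_{\gamma,0}$ and apply it once, which is slightly cleaner.
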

\begin{proof}
The inequality \eqref{cor1331} follows directly from Lemma \ref{est:1dpoincare} along with the approximation $w \leq \overline{w}_{\gamma, \mu} + \overline{w}_{\gamma, 0} \leq 2w$, and
\eqref{cor1332} follows from Lemma \ref{est:1dpoincare} using $\overline{w}_{\gamma - b/2, \mu + b/2}$ and the pointwise inequality
\[
 \frac{(1+|q^*|)^{-b}}{(1+|t+r^*|^{1-b})} \, \frac{w}{(1+q^*_-)^{2\mu}} \leq w'.
 {}\tag*{\qedhere}
\]
\end{proof}

\section{The Decay estimates for the wave equation}\label{sec:DecayEstimates}
We consider energy norms with the following weight function
$$
w_{p_{\,},\gamma}(t,x)=\langle q^*_+\rangle^{1-1/\!p\,+\gamma},\qquad 0<\gamma<1.
$$
In this section we will work with the flat wave operator in the curved coordinates
\beq
\Box^*=\widehat{m}^{ab}\wpa_a\wpa_b=\widetilde{\Box}^{\,\widehat{m}} ,
\eq
taking advantage of already established formulas using the fundamental solution in flat coordinates.

\subsection{Weighted Klainerman-Sobolev estimates}
In this section we provide a straightforward generalization of the
Klainerman-Sobolev inequalities, expressing pointwise decay in terms
of the bounds on $L^2$ norms involving vector field $Z\in {\mathcal Z}$.

We have the following global Sobolev inequality, see \cite{LR10}
\begin{proposition} \label{prop:K-S}
For any function $\phi\in C^\infty_0(\R^3)$ and an arbitrary
$(t,x)$,
$$
|\phi(t,x)|(1+t+|q^*|) (1+|q^*|)^{1/2} w_{2,\gamma}(t,x) \leq
C{\sum}_{|I|\leq 3} \|w_{2,\gamma} \wZ^I \phi(t,\cdot)\|_{L^2}.
$$
\end{proposition}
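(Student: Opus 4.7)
The plan is to reduce this to the unweighted Klainerman--Sobolev inequality in the $(\widetilde{t},\widetilde{x})$ coordinates, since $\widetilde{Z}\in\{\wpa,\widetilde{\Omega}_{ij},\widetilde{\Omega}_{i0},\widetilde{S}\}$ are exactly the Minkowski commuting vector fields in those coordinates. First I would verify that the weight passes through the vector fields. Since $q^*=|\widetilde{x}|-\widetilde{t}$ depends only on the new coordinates, a direct computation gives
\[
\widetilde{\Omega}_{ij}\, q^*=0,\qquad \widetilde{S}\,q^*=q^*,\qquad \widetilde{\Omega}_{i0}\,q^*=-\widetilde{x}^i q^*/|\widetilde{x}|,
\]
so $\widetilde{Z} w_{2,\gamma}(q^*)=c(t,x)\,w_{2,\gamma}(q^*)$ with $c$ uniformly bounded (the key point is $q^* w_{2,\gamma}'(q^*)\lesssim w_{2,\gamma}$, which follows from $w_{2,\gamma}=\langle q_+^*\rangle^{1/2+\gamma}$). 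Iterating, $\|w_{2,\gamma}\widetilde{Z}^I\phi\|_{L^2}\sim\sum_{|J|\leq |I|}\|\widetilde{Z}^J(w_{2,\gamma}\phi)\|_{L^2}$, so it suffices to show the unweighted inequality
\[
|\psi(t,x)|\,(1+t+|q^*|)\,(1+|q^*|)^{1/2}\;\lesssim\;{\sum}_{|I|\leq 3}\|\widetilde{Z}^I\psi(t,\cdot)\|_{L^2}
\]
for $\psi=w_{2,\gamma}\phi$.

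For the unweighted bound I would follow the standard Klainerman--Sobolev argument from \cite{LR10}, splitting into two regions. In the interior $|\widetilde{x}|\leq (1+t)/2$, we have $1+t+|q^*|\sim 1+|q^*|\sim 1+t$. A local $H^3\hookrightarrow L^\infty$ embedding on a Euclidean ball of radius $(1+t)/8$ in $\widetilde{x}$ centered at $x$, rescaled to a unit ball, gives
\[
|\psi(t,x)|\lesssim (1+t)^{-3/2}{\sum}_{|\alpha|\leq 2}\|(1+t)^{|\alpha|}\wpa^\alpha\psi(t,\cdot)\|_{L^2}.
\]
In this region each factor $(1+t)\wpa_i$ is controlled by $\widetilde{\Omega}_{i0}-\widetilde{x}^i\wpa_t$ and $(1+t)\wpa_t$ by $\widetilde{S}-\widetilde{x}^i\wpa_i$, modulo bounded coefficients, which yields the claimed bound.

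In the exterior $|\widetilde{x}|>(1+t)/2$, I would pass to spherical coordinates $(r^*,\omega)$ in $\widetilde{x}$ and work on the adapted coordinate rectangle
\[
R_{t,x}=\{y:|r^*(y)-r^*(x)|<(1+|q^*|)/8,\; |\omega(y)-\omega(x)|<(1+t+r^*)/(8(1+r^*))\}.
\]
After the anisotropic rescaling $r^*-r^*(x)\to \rho/(1+|q^*|)$, $\omega\to \theta/(1+t+r^*)$, Sobolev on the unit cube in $(\rho,\theta)$ produces the weights $(1+|q^*|)^{1/2}$ and $(1+t+r^*)$. The radial derivative $\partial_{r^*}$ acting on $\psi$ is bounded by $\wpa$-derivatives (comparable in this region up to bounded factors), while angular derivatives $(1+r^*)\slashed{\widetilde{\partial}}$ on the sphere are controlled by the rotations $\widetilde{\Omega}_{ij}$. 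Combining and noting $(1+t+r^*)\sim (1+t+|q^*|)$ in the exterior yields the desired pointwise bound on $\psi$ in terms of $\sum_{|I|\leq 3}\|\widetilde{Z}^I\psi\|_{L^2}$.

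The main obstacle is mostly bookkeeping rather than conceptual: one must keep track of how the modified vector fields span the relevant tangent directions uniformly in the exterior (in particular near the light cone, where the Lorentz boost $\widetilde{\Omega}_{i0}$ degenerates in the $q^*$ direction), and verify that the anisotropic rescaling and the commutator $[\widetilde{Z},w_{2,\gamma}]$ produce only uniformly bounded error factors. These steps go through verbatim from \cite{LR10} once one recognizes that $\widetilde{Z}$ commute and act on $(\widetilde{t},\widetilde{x})$ exactly as the flat Minkowski fields.
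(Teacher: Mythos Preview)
Your proposal is correct and follows essentially the same approach as the paper, which does not give its own proof but simply refers to \cite{LR10} for this global Sobolev inequality; you have correctly recognized that the $\widetilde{Z}$ are precisely the Minkowski vector fields in the $(\widetilde{t},\widetilde{x})$ coordinates, so the argument there applies verbatim, and your verification that $|\widetilde{Z}q^*|\lesssim 1+|q^*|$ (hence $|\widetilde{Z}w_{2,\gamma}|\lesssim w_{2,\gamma}$) is exactly the observation needed to pass the weight through.
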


\subsection{The weighted $L^1$-$L^\infty$ estimates}
To get improve decay estimates in the interior we will use H\"ormander's $L^1$--$L^\infty$
estimates for the fundamental solution of $\Box$, see \cite{H87, L90}:
 \begin{proposition}\label{prop:hormander} Suppose that $w$ is a solution of
 $\Box^* u=F$ (i,e. the flat Minkowski wave operator) with vanishing data $ u\big|_{t=0}=\pa_t u\big|_{t=0}=0$. Then
 \beq |u(t,x)|(1+t+|x|)w_{1,\gamma}(t,x)  \leq
C{\sum}_{|I|\leq
2}\int_0^t\int_{\mathbf{R}^3}\frac{|Z^I F(s,y)|}{1+s+|y|}\,w_{1,\gamma} (s,y)\, dy\,ds.
 \eq
\end{proposition}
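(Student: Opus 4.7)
The plan is to reduce this to the classical unweighted H\"ormander $L^1$--$L^\infty$ estimate in flat Minkowski space, and then absorb the weight by a triangle-inequality argument along the backward null cone from $(t,x)$. Observe first that $\Box^*=\widehat{m}^{ab}\wpa_a\wpa_b$ is literally the constant-coefficient Minkowski wave operator in the coordinates $(\wt{t},\wt{x})$, and the fields $\wZ\in{\mathcal Z}$ are the corresponding Minkowski symmetry generators satisfying $[\wZ,\Box^*]=c_{\wZ}\Box^*$. Treating $(\wt{t},\wt{x})$ as standard Cartesian coordinates on flat spacetime, so that $r^*=|\wt{x}|$ and $q^*=r^*-\wt{t}$, the inequality with $w_{1,\gamma}\equiv 1$ is the classical H\"ormander $L^1$--$L^\infty$ estimate of \cite{H87,L90}, obtained from the forward fundamental solution representation $u(t,x)=\tfrac{1}{4\pi}\int F(t-|x-y|,y)/|x-y|\,dy$ combined with a Sobolev embedding on $2$-spheres that uses the commuting fields $\wZ$ to gain the scalar factor $(1+t+|x|)^{-1}$; the two-derivative budget in $|I|\leq 2$ is exactly what is consumed by that Sobolev step.

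To promote the estimate to the weighted version, I would use the following monotonicity of $q^*_+$ along backward null cones. Fix an evaluation point $(t,x)$ and any source point $(s,y)$ in its causal past, that is, $0\leq s\leq t$ and $|x-y|\leq t-s$. The triangle inequality gives $|y|\geq |x|-|x-y|\geq |x|-(t-s)$, hence
$$q^*(s,y)=|y|-s\geq |x|-t=q^*(t,x),$$
and since $a\mapsto a_+=\max(a,0)$ is non-decreasing this yields
$$w_{1,\gamma}(t,x)=\la q^*_+(t,x)\ra^\gamma\leq \la q^*_+(s,y)\ra^\gamma = w_{1,\gamma}(s,y)$$
at every source point in the backward light cone from $(t,x)$.

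Combining the two ingredients: the unweighted H\"ormander estimate bounds $|u(t,x)|(1+t+|x|)$ by an integral of $\wZ^I F$ supported entirely in the causal past of $(t,x)$. Multiplying both sides by the constant $w_{1,\gamma}(t,x)$ and using the pointwise domination above to replace it by the (at least as large) source-point weight $w_{1,\gamma}(s,y)$ inside the integral immediately produces the claimed bound.

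The only technical point to check carefully is that all intermediate integrals appearing in the H\"ormander proof --- in particular the spherical means that arise after the Sobolev step on $\mathbf{S}^2$ --- remain supported in the causal past of $(t,x)$, so that the pointwise domination $w_{1,\gamma}(t,x)\leq w_{1,\gamma}(s,y)$ applies uniformly throughout. This is automatic from the support properties of the forward fundamental solution and from the fact that each commuting field $\wZ$ sends the backward null cone from $(t,x)$ into itself. No further null structure or cancellation beyond what is already present in the unweighted argument is required; the weight is simply transported along characteristics.
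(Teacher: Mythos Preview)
Your approach is exactly the paper's: reduce to the unweighted H\"ormander estimate from \cite{H87,L90} and then invoke domain of dependence, which the paper compresses to a single sentence (``by a domain of dependency argument the weight is larger in the support of the inhomogeneous term'').

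One correction to your closing paragraph: the claim that each $\wZ$ sends the backward null cone from $(t,x)$ into itself is false --- translations and boosts clearly do not, and the rotations $\widetilde{\Omega}_{ij}$ are about the origin rather than about $x$, so the Sobolev step on $\mathbf{S}^2$ genuinely spreads the integration beyond $J^-(t,x)$. What actually saves the argument is that $w_{1,\gamma}$ depends only on $|y|-s$. After the Sobolev step on origin-centered spheres the integration in the H\"ormander proof ranges over $\{(s,y):|\,r-(t-s)\,|\le|y|\le r+(t-s)\}$ with $r=|x|$, and on this set one still has $|y|-s\ge r-t$ whenever $r>t$ (while the case $r\le t$ is trivial since then $w_{1,\gamma}(t,x)=1$). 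So your monotonicity $w_{1,\gamma}(t,x)\le w_{1,\gamma}(s,y)$ does hold throughout the actual integration region, just not for the reason you gave.
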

In \cite{L90} the proof was without the weight, but by a domain of dependency argument
the weight is larger in the support of the inhomogeneous term.

Also for the linear homogeneous solution we have from \cite{L90}:
\begin{lemma}\label{lemma:linearhomogenousdecayweighted} If $v$ is the solution of
 $\Box^* v=0$, with data $v\big|_{t=0}=v_0$ and $ \pa_t v\big|_{t=0}=v_1 $
 then for any $\gamma>0$;
 \beq
 (1+t+r)|v(t,x)|w_{1,\gamma}(t,x) \leq C{\sup}_x \big( (1+|x|)^{2+\gamma}
 ( |v_1(x)|+|\partial v_0(x)|) +(1+ |x|)^{1+\gamma}|v_0(x)|\big).
 \eq
\end{lemma}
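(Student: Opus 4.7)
Since $\Box^*$ is the flat Minkowski wave operator in the $(\widetilde{t},\widetilde{x})$-coordinates and $r^*=|\widetilde{x}|$, the plan is to use Kirchhoff's representation formula for the solution to the flat wave equation and reduce the bound to weighted one-dimensional integrals. Namely, write
\[
v(t,\widetilde{x}) = \partial_t(tM_t v_0)(\widetilde{x}) + tM_t v_1(\widetilde{x}), \qquad M_t f(\widetilde{x}) = \frac{1}{4\pi}\int_{S^2} f(\widetilde{x}+t\omega)\,d\omega,
\]
and expand $\partial_t(tM_t v_0) = M_t v_0 + t M_t(\omega\cdot\nabla v_0)$. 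The hypotheses give $|v_0(\widetilde{y})|\le C_0(1+|\widetilde{y}|)^{-1-\gamma}$ and $|v_1(\widetilde{y})|+|\nabla v_0(\widetilde{y})|\le C_1(1+|\widetilde{y}|)^{-2-\gamma}$, where $C_0,C_1$ are controlled by the right hand side of the lemma.

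The integrals over $S^2$ are then reduced to one-dimensional integrals by the standard change of variables on the sphere $\{|\widetilde{y}-\widetilde{x}|=t\}$: parameterize by $\rho = |\widetilde{y}|$, which ranges from $|r^*-t|$ to $r^*+t$, with $d\omega = \frac{\rho}{r^* t}\,d\rho\,d\phi$. This yields the template bound
\[
|M_t F(\widetilde{x})| \lesssim \frac{1}{r^* t}\int_{|r^*-t|}^{r^*+t}(1+\rho)^{-k}\rho\,d\rho,
\]
for $F$ satisfying $|F|\le C(1+|\widetilde{y}|)^{-k}$, with $k=1+\gamma$ for the $v_0$ term and $k=2+\gamma$ for the $v_1$ and $\omega\cdot\nabla v_0$ terms. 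Since $(1+\rho)^{-k}\rho \le (1+\rho)^{-k+1}$, one has $\int_{|r^*-t|}^{r^*+t}(1+\rho)^{-k+1}\,d\rho \lesssim (1+|r^*-t|)^{-k+1}$ when $k>1$, which for $t M_t v_1$ produces $\tfrac{C_1}{r^*}(1+|r^*-t|)^{-\gamma}$ and analogously handles the other two terms. In the exterior region $r^*>t$ the factor $(1+|r^*-t|)^{-\gamma}=(1+q^*_+)^{-\gamma}=w_{1,\gamma}^{-1}$ supplies exactly the extra exterior weight needed.

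To convert $1/r^*$ into $1/(1+t+r^*)$ one distinguishes three regimes. In the exterior $r^*\ge t$ one has $r^*\ge (t+r^*)/2$, so $1/r^*\le 2/(1+t+r^*)$. For $r^*\ge t/2$ in the interior the same inequality holds. For $r^*\le t/2$ the interval of integration is of length $2r^*$ and contained in $[t/2, 3t/2]$, so the integrand is $\lesssim (1+t)^{-k}$ and the factor $2r^*$ cancels the $1/r^*$, giving $t^{-k}\lesssim t^{-1}$. The mild (apparent) singularity at $r^*=0$ in the Jacobian is thus absorbed by the shrinking length of the $\rho$-interval; alternatively one can note that $M_t f$ is smooth in $\widetilde{x}$ and treat a neighborhood of the origin directly via the supremum bound $|M_t f(\widetilde{x})|\le \sup_{|\widetilde{y}-\widetilde{x}|=t}|f(\widetilde{y})|$.

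The main obstacle is the bookkeeping between the two weights $(1+t+r^*)^{-1}$ and $w_{1,\gamma}^{-1}=(1+q^*_+)^{-\gamma}$ in the exterior regime, and ensuring that the strongest decay hypothesis (the extra $(1+|\widetilde{y}|)$ factor on $v_0$ relative to $v_1$) is used exactly for the term $M_t v_0$ where no $t$-multiplier is present. Tracking these weights through the three contributions $M_t v_0$, $tM_t(\omega\cdot\nabla v_0)$, and $tM_t v_1$ separately yields the stated pointwise estimate with a constant $C$ depending only on $\gamma$.
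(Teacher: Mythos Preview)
Your plan via Kirchhoff's formula and the reduction to a one-dimensional integral in $\rho=|\widetilde y|$ is the same route the paper takes (it parametrizes by $\omega_1$, related to yours by $\rho\,d\rho=r^*t\,d\omega_1$). There is, however, a slip in your integral estimate: the claim $\int_{|r^*-t|}^{r^*+t}(1+\rho)^{-k+1}\,d\rho \lesssim (1+|r^*-t|)^{-k+1}$ for $k>1$ is false as written; integrating the tail gives $(1+|r^*-t|)^{-k+2}$, and only when $k>2$. With this correction the $t M_t v_1$ and $tM_t(\omega\!\cdot\!\nabla v_0)$ contributions ($k=2+\gamma$) do yield your $\tfrac{C_1}{r^*}(1+|r^*-t|)^{-\gamma}$, so the exponent was presumably a typo. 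But ``analogously handles the other two terms'' does not cover $M_t v_0$: there $k=1+\gamma$, and for $0<\gamma\le 1$ the tail integral of $(1+\rho)^{-\gamma}$ diverges, so a different bound is needed.

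The remaining gap is the far exterior $r^*>2t$. Your three regimes are designed to convert $1/r^*$ into $1/(1+t+r^*)$, but they do not isolate this case, and the $M_t v_0$ estimate actually fails there by a factor of $r^*/t$ (even after the correction above). The fix is trivial---for $r^*>2t$ the integration sphere lies in $\{|\widetilde y|\ge r^*/2\}$, so the sup bound $|M_t v_0|\le C_0(1+r^*)^{-1-\gamma}$ already suffices---and this is precisely the ``by inspection'' step the paper invokes for $r>2t$. Once you add it, the argument is complete.
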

\begin{proof} The proof is an immediate consequence of
Kirchoff's formula
 $$
v(t,x)=t\int_{|{\omega}|=1}{\big(v_1(x+t{\omega})+\langle
v^{\prime}_0(x+t{\omega}),{\omega}\rangle \big)\,dS({\omega})} +
 \int_{|{\omega}|=1}{v_0(x+t{\omega})\,dS({\omega})},
 $$
where $dS({\omega})$ is the normalized surface measure on $S^2$.
 Suppose that $x=r\mathbf{e}_1$, where $\mathbf{e}_1=(1,0,0)$.
 Then for $k=1,2$ we must estimate
 \beq
 \int\!\! \frac{dS(\omega)}{1\!+\!|{}_{\,}r\mathbf{e}_1\!+t\omega|^{k+\gamma}}
 =\int_{-1}^1\frac{C d\omega_1}{1+\big(
 (r\!-\!t\omega_1)^2+t^2(1\!-\!\omega_1^2)\big)^{(k+\gamma)/2}}
 \leq \int_0^2\!\! \frac{C d s}{1+\big(
 (r\!-\!t\!+\!t s)^2+t^2 s\big)^{(k+\gamma)/2}}.
 \eq
 If $k=2$ we make the change of variables $t^2 s=\tau$ to get
 an integral bounded by $Ct^{-2}\langle (r^*-t)_+\rangle^{-\gamma} $ and if $k=1$, we make the change
 of variables $t s=\tau$ to get an integral bounded by $t^{-1}\langle (r^*-t)_+\rangle^{-\gamma}$.
 This proves the result for $r<2t$, say, but for $r>2t $ it follows by inspection.
 \end{proof}

\subsection{The weighted $L^\infty$-$L^\infty$ estimates}

We will now derive sharp estimates for the first order derivatives, following
\cite{L90} we have
\begin{lemma}\label{lem:transversalder} Let $D_t=\{x;\, \big|t-|x|\big|\leq t/4 \}$, and let $\overline{w}(q^*)$ be any positive continuous function,
where $q^*=r^*-t$, $r^*=r+M\ln{r}$. Suppose that $\Box^* \phi_{\mu\nu}=F_{\mu\nu}$. Let
$U,V\in \{\widetilde{L},\widetilde{\underline{L}},S_1,S_2\}$ and $\phi_{UV}= \phi_{\mu\nu} U^\mu V^\nu$.
 Then
\begin{multline}\label{eq:transversalder}
(1+t+|x|) \,|\pa \phi_{UV}(t,x)\, \overline{w}(q^*)| \les\!\sup_{|q^*|/4\leq
\tau\leq t}
{\sum}_{|I|\leq 1}\|\,Z^I\! \phi(\tau,\cdot)\, \overline{w}\|_{L^\infty}\\
+ \int_{|q^*|/4}^t\Big( (1+\tau)\|
\,F_{UV}(\tau,\cdot)\, \overline{w}\|_{L^\infty(D_\tau)} +{\sum}_{|I|\leq 2} (1+\tau)^{-1}
\| Z^I \phi(\tau,\cdot)\, \overline{w}\|_{L^\infty(D_\tau)}\Big)\, d\tau.
\end{multline}
\end{lemma}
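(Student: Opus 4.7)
The proof naturally splits the derivative $\wpa\phi_{UV}$ into its tangential components $\overline{\wpa}\in\{\widetilde{L},S_1,S_2\}$ and the transverse component $\widetilde{\underline{L}}$.  For the tangential directions, the curved analogue of \eqref{eq:tangentialbyvectorfield}, $|\overline{\wpa}\psi|\lesssim (1+t+|x|)^{-1}\sum_{|I|\leq 1}|\widetilde{Z}^I\psi|$, applied to $\psi=\phi_{UV}$ together with the bounds $|\widetilde{Z}^I U^\mu|+|\widetilde{Z}^I V^\nu|\lesssim 1$ on the null frame immediately reduces these components to the supremum term on the right-hand side.  The substance lies in controlling $\widetilde{\underline{L}}\phi_{UV}$.

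For this I would invoke the null factorization of $\Box^*=\widehat{m}^{ab}\wpa_a\wpa_b$ coming from \eqref{eq:waveoperatorstarsphericalcoord}, namely $\widetilde{L}\widetilde{\underline{L}}(r^*\psi)=-r^*\Box^*\psi+(r^*)^{-1}\triangle_\omega\psi$, applied to $\psi=\phi_{UV}$.  Because $U^\mu,V^\nu$ are not constants, we have $\Box^*(\phi_{UV})=F_{UV}+E$, where the error $E$ satisfies $|E|\lesssim (r^*)^{-1}|\wpa\phi|+(r^*)^{-2}|\phi|$, coming from the bounds $|\wpa U^\mu|,|\wpa V^\nu|=O((r^*)^{-1})$ and $|\Box^* U^\mu|,|\Box^* V^\nu|=O((r^*)^{-2})$ on the frame vectors expressed in $\widetilde{x}$-coordinates.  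The angular piece is controlled by $|\triangle_\omega\phi_{UV}|\lesssim \sum_{|I|\leq 2}|\widetilde{Z}^I\phi|$.

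Next, along the outgoing null characteristic $\{r^*-\tau=q^*\}$ parameterized by $\tau$, the curve derivative is exactly $\widetilde{L}$ and the weight $\overline{w}(q^*)$ is constant.  Integrating the null factorization from $\tau=t_0:=|q^*|/4$ up to $\tau=t$ and using the identity $\widetilde{\underline{L}}(r^*\phi_{UV})=r^*\widetilde{\underline{L}}\phi_{UV}-\phi_{UV}$ (from $\widetilde{\underline{L}}r^*=-1$) produces
\[
r^*\widetilde{\underline{L}}\phi_{UV}(t)=\phi_{UV}(t)+\widetilde{\underline{L}}(r^*\phi_{UV})(t_0)+\int_{t_0}^{t}\Big(-r^*F_{UV}-r^*E+(r^*)^{-1}\triangle_\omega\phi_{UV}\Big)(\tau,\cdot)\,d\tau.
\]
The boundary contribution at $t_0$ is dominated by $\sum_{|I|\leq 1}\|\widetilde{Z}^I\phi(t_0,\cdot)\overline{w}\|_{L^\infty}$, which fits inside the supremum on the right-hand side.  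On $D_\tau$ one has $r^*\sim 1+\tau$, so $r^*|F_{UV}|$ produces the $(1+\tau)\|F_{UV}\,\overline{w}\|_{L^\infty(D_\tau)}$ contribution, while $r^*|E|$ and $(r^*)^{-1}|\triangle_\omega\phi_{UV}|$ together produce the $(1+\tau)^{-1}\sum_{|I|\leq 2}\|\widetilde{Z}^I\phi\,\overline{w}\|_{L^\infty(D_\tau)}$ contribution.  Dividing by $r^*$, multiplying by $(1+t+r^*)/r^*\lesssim 1$ in the regime $(t,x)\in D_t$, and noting that outside this regime the supremum term already dominates, closes the estimate.

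The main technical obstacle is the error $E$ from commuting $\Box^*$ past the non-constant frame components $U^\mu V^\nu$: one has to check that these lower-order pieces truly fit within the $(1+\tau)^{-1}\sum_{|I|\leq 2}\|\widetilde{Z}^I\phi\,\overline{w}\|_{L^\infty(D_\tau)}$ slot.  This works precisely because the gains $|\wpa U|,|\wpa V|=O((r^*)^{-1})$ and $|\Box^* U|,|\Box^*V|=O((r^*)^{-2})$ line up with the fact that the modified fields $\widetilde{Z}^I$ with $|I|\leq 2$ control two ordinary derivatives of $\phi$.  A secondary point is the choice of lower endpoint $|q^*|/4$, which ensures that the outgoing null ray through $(t,x)$ lies in a region where $1+\tau\sim r^*$ for all $\tau$ of interest, so that the comparison between $r^*$ and $1+t+r^*$ used to pass from $r^*\widetilde{\underline{L}}\phi_{UV}$ to $(1+t+r^*)\widetilde{\underline{L}}\phi_{UV}$ is legitimate.
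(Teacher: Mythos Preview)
Your overall strategy—split into tangential and transverse pieces, use the null factorization of $\Box^*$, and integrate along the outgoing characteristics—is exactly the paper's approach.  But there is a genuine gap in your handling of the commutator error $E=\Box^*(\phi_{UV})-(\Box^*\phi)_{UV}$.

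Your bound $|E|\lesssim (r^*)^{-1}|\wpa\phi|+(r^*)^{-2}|\phi|$ is too weak.  After multiplying by $r^*$ (as you must, since the integrand is $r^*F_{UV}+r^*E+\dots$), the first piece becomes simply $|\wpa\phi|$.  This does \emph{not} fit into the slot $(1+\tau)^{-1}\sum_{|I|\leq 2}\|\widetilde{Z}^I\phi\,\overline{w}\|_{L^\infty(D_\tau)}$: the relation $|\wpa\phi|\lesssim\langle t-r^*\rangle^{-1}\sum_{|I|\leq 1}|\widetilde{Z}^I\phi|$ only gains $\langle q^*\rangle^{-1}$, and in $D_\tau$ the quantity $\langle q^*\rangle$ can be arbitrarily small compared to $1+\tau$.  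Integrating $|\wpa\phi|$ in $\tau$ from $|q^*|/4$ to $t$ would then produce an uncontrolled factor of $t$.  Your justification that ``the modified fields $\widetilde{Z}^I$ with $|I|\leq 2$ control two ordinary derivatives of $\phi$'' does not supply the missing $(1+\tau)^{-1}$.

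The missing observation, which the paper uses, is that the frame components $U^\mu,V^\nu$ (for $U,V\in\{\widetilde{L},\widetilde{\underline{L}},S_1,S_2\}$) are functions of $\omega$ \emph{only}: $\widetilde{L}^a=(1,\omega)$, $\widetilde{\underline{L}}^a=(1,-\omega)$, and $S_i^a$ are tangent to the spheres.  Consequently they commute with $\pa_t$ and $\pa_{r^*}$, so the entire radial part $(r^*)^{-1}(\pa_t^2-\pa_{r^*}^2)(r^*\,\cdot\,)$ of $\Box^*$ passes through $U^\mu V^\nu$ without error.  The commutator $E$ comes only from the angular Laplacian, yielding the sharper bound
\[
|E|\;=\;(r^*)^{-2}\big|[\triangle_\omega,U^\mu V^\nu]\phi_{\mu\nu}\big|\;\lesssim\;(r^*)^{-2}\sum_{|J|\leq 1}|\widetilde{Z}^J\phi|,
\]
since $|\Omega_{ij}(U^\mu V^\nu)|+|\triangle_\omega(U^\mu V^\nu)|\lesssim 1$.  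Now $r^*|E|\lesssim(1+\tau)^{-1}\sum_{|J|\leq 1}|\widetilde{Z}^J\phi|$ genuinely fits into the stated right-hand side.  (Equivalently, the cross term $2\widehat{m}^{ab}\wpa_a(U^\mu V^\nu)\wpa_b\phi_{\mu\nu}$ involves only angular derivatives $\slashed{\pa}\phi$, which gain an extra $(r^*)^{-1}$ via $|\slashed{\pa}\phi|\lesssim (r^*)^{-1}|\Omega\phi|$.)
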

\begin{proof}
 Since $\Box\phi=-r^{-1}(\pa_t^2-\pa_r^2)(r\phi)+r^{-2}\triangle_\omega \phi $, where $\triangle_\omega=\sum\Omega_{ij}^2$ and $
 |Z U|\leq C$, for $U\in \{ A,B, L,\underline{L}\}$,
it follows that
 \beq
 \big| \Box^*\big(  {\phi}_{UV}\big)
 -U^\mu V^\nu\Box^* \phi_{\mu\nu}\big|
 \leq{r}^{-2}\,\, {\sum}_{|J|\leq 1} \,\,|Z^{J} \phi|.
 \eq
We have
\begin{equation*}
\Box^* \phi
=\frac{1}{r^*}4\pa_{s^*}\pa_{q^*}(r^*\phi)
+\frac{1}{(r^*)^2}\triangle_\omega \phi,
\end{equation*}
where  $\pa_{q^*}=(\pa_{r^*}\!-\pa_t)/2 $ and $\pa_{s^*}=(\pa_{r^*}\!+\pa_t)/2$.
Hence
\beq
\Big|4\pa_{s^*}\pa_{q^*}(r^*\phi) -r^* \Box^*\phi  \Big|
\les {r}^{-1}\,\, {\sum}_{|J|\leq 2} \,\,|Z^{J} \phi|,
\eq
so with $s=t+r^*$
\beq
\Big| \pa_{s^*}\pa_{q^*}(r^* \phi_{UV})\Big|
\les r\big|(\Box^* \phi)_{UV}\big|+ (t+r)^{-1}\,\, {\sum}_{|J|\leq 2} \,\,|Z^{J} \phi|,
\qquad |{}_{\,} t-r^*|\leq  t/4 .
\eq
Integrating this along the flow lines of the vector field $\pa_{s^*}$ from the boundary of $D=\cup_{\tau\geq 0}D_\tau$
to any point inside $D$, using that $\overline{w}$ is essentially constant along the flow lines, gives that for any $(t,x)\in D$
\begin{multline}
|\pa_q(r \phi_{UV}(t,x))\, \overline{w}| \les\!\sup_{0\leq
\tau\leq t}
{\sum}_{|I|\leq 1}\|\,Z^I \phi(\tau,\cdot)\, \overline{w}\|\\
+ \int_{|q^*|/4}^t\Big( (1+\tau)\|
\,F_{UV}(\tau,\cdot)\, \overline{w}\|_{L^\infty(D_\tau)} +{\sum}_{|I|\leq 2} (1+\tau)^{-1}
\| Z^I \phi(\tau,\cdot)\, \overline{w}\|_{L^\infty(D_\tau)}\Big)\, d\tau.
\end{multline}
The lemma follows from also using \eqref{eq:tangentialbyvectorfield}-\eqref{eq:transversalderivativeprojectionsec2} and  that the estimate is trivially true  when
$|{}_{\,}r\!-\!t|\!\geq \!  t/4$.
\end{proof}

\section{Energy bounds and decay estimates for Maxwell-Klein-Gordon}\label{sec:ChrisResults}
We restate Theorem 1.1 from \cite{Ka18}, along with results from Theorem 7.2 in that paper, which will form a portion of our bootstrap argument. This takes the form of a set of energy and decay estimates which will follow from the harmonic coordinate condition and the bootstrap assumption on the metric. In order to properly set up the argument, we must first define certain quantities arising from the MKG system which appear in decay rates and in the initial conditions. Next, we prove a result which expands on the identification $h^1 \sim H_1$, and which allows us to simplify the required metric bounds. Finally, we restate the main theorem and consequent decay estimates. In Section \ref{sec:sharpdecayfields} we will confirm that the required bounds on the energy momentum tensor follow from the bootstrap assumption.
\subsection{Setting up the initial value problem}\label{sec:MKGIVPdefs}
We recall from Section \ref{subsubsec:Charge} that in order to best capture the decay of $\bF$ we decompose it into $\bF^0\!\! + \bF^1\!\!$, where $\bF^0$ is a fixed field depending on the charge $q$ which picks up the symmetric part of the decay in the exterior, and $\bF^1$ is the remainder. More precisely, we define
\beq
q(t) = \int_{\Sigma_t}-\sqrt{|g|}J^0\, dx,
\eq
where the volume form is taken with respect to $m$. Since $J$ is divergence free with respect to $\nabla$, as long as $\phi$ has good decay in space, we can take $q$ to be constant in time. Then, for a smooth increasing function $\chi_{\bF}(x)$ which is identically 0 for $x\leq 0$ and 1 for $x\geq 1$, we define the 2-forms $\bF^0, \bF^1$ by
\begin{equation}
\bF^0_{0i} = \frac{\omega_iq}{4\pi}\frac{\chi_{\bF}(r^*-t-2)\partial_r(r^*)}{r^{*2}}, \qquad \bF^0_{ij} = \bF^0_{00} =0, \qquad \bF^1 = \bF - \bF^0.
\end{equation}
Although the $\bF^0, \bF^1$ decomposition is useful in our calculations, when writing the initial conditions there is another useful decomposition. We may decompose $\bF$ into its electric and magnetic field components, respectively defined by
\beq
 {\bf E}_i = \bF_{0i}, \qquad {\bf B}_i =\varepsilon_{0ijk}\bF^{jk}.
\eq
This decomposition is of course not preserved after Lie differentiation with respect to the Lorentz boosts. However, it has the nice property that the poor decay of $\bF$ in space can be localized to the curl-free part of ${\bf E}$. If we define ${\bf E}^{df}$ to be the divergence-free part of ${\bf E}$ (in the Helmholtz decomposition), then ${\bf E}^{df}_{i} \sim \bF^1_{0i}$.

For a  $(0,k)$ tensor ${\bf T}$ and a complex function $\bphi$, we define the initial data norms
\begin{align}
\lVert {\bf T} \rVert^2_{H^{N, s_0}} &= {\sum}_{|\alpha|\leq N}{\sum}_{a_i\in (1,2,3)}\int_{\mathbb{R}^3}(1+r^2)^{s_0+|\alpha|}|\underline{\wpa}{}^\alpha{\bf T}(\wpa_{a_1}, \hdots, \wpa_{a_k})|^2\, dx, \\
\lVert \bphi \rVert^2_{H^{N, s_0}} &= {\sum}_{|\alpha|\leq N}\int_{\mathbb{R}^3}(1+r^2)^{s_0+|\alpha|}|\underline{\widetilde{D}}{}^\alpha\bphi|^2\, dx,
\end{align}
where $\underline{\wpa}, \underline{\widetilde{D}}$ are spatial derivatives in modified coordinates, and $\alpha$ is a multiindex.
\subsection{The metric bounds} When dealing with derivatives of the energy momentum tensor, we must deal with terms containing both the metric and the inverse metric, so it is useful to to establish similar energy and decay estimates for $h^1$ and $H_1$.
\begin{proposition}\label{prop:Hhequiv1}
Let $g_{\alpha\beta} = m_{\alpha\beta} + h^0_{\alpha\beta} + h^1_{\alpha\beta}$, and let $g^{\alpha\beta} = m^{\alpha\beta} + H_0^{\alpha\beta} + H_1^{\alpha\beta}$, as in Section \ref{exteriordecaymass}. Suppose further that $|Z^I(h^0_{\alpha\beta})|+|Z^I(h^1_{\alpha\beta})|\leq \varepsilon\langle t+r\rangle^{-1+\delta}$ for some collection of vector fields $Z$, a multiindex $I$ with $|I| \leq N$, and a small constant $\delta \in (0,1/2)$. Then, for vector fields $X, Y$ with bounded components, and for sufficiently small $\varepsilon$,
\beq
|X^\alpha Y^\beta Z^I(H_{1\alpha\beta})|\leq |X^\alpha Y^\beta Z^I(h^1_{\alpha\beta})| + C\varepsilon^2 \langle t+r\rangle^{-2+2\delta}.
\eq
\end{proposition}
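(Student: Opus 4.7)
The plan is to expand $H_1^{\alpha\beta}$ as a Neumann series in $h = h^0 + h^1$, identify the exact cancellation $H_0 = -m^{-1}h^0 m^{-1}$, and bound the remaining tail (which starts at quadratic order in $h$) via Leibniz and the hypothesis. The only point that requires any care is verifying the exact identity for $H_0$; everything else is mechanical bookkeeping for an absolutely convergent series.

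First, the hypothesis at $|I|=0$ gives $|h|\leq \varepsilon \langle t+r\rangle^{-1+\delta}\leq \varepsilon$, so for $\varepsilon$ sufficiently small the Neumann expansion
\beq
g^{\alpha\beta} = (m+h)^{-1}{}^{\alpha\beta} = m^{\alpha\beta} - m^{\alpha\gamma}m^{\beta\delta}h_{\gamma\delta} + \sum_{k\geq 2}(-1)^k\bigl((m^{-1}h)^k m^{-1}\bigr)^{\alpha\beta}
\eq
converges uniformly. The explicit formulas $h^0_{\mu\nu}=\chi Mr^{-1}\delta_{\mu\nu}$ and $H_0^{\mu\nu}=-\chi Mr^{-1}\delta^{\mu\nu}$, together with $m = \operatorname{diag}(-1,1,1,1)$, imply the exact identity $H_0^{\alpha\beta}=-m^{\alpha\gamma}m^{\beta\delta}h^0_{\gamma\delta}$, so subtracting and lowering both free indices with the constant metric $m$ yields
\beq
H_{1\alpha\beta} = -h^1_{\alpha\beta} + E_{\alpha\beta},\qquad E_{\alpha\beta} = \sum_{k\geq 2}(-1)^k h_{\alpha}{}^{\sigma_1}h_{\sigma_1}{}^{\sigma_2}\cdots h_{\sigma_{k-1}\beta},
\eq
where intermediate raisings are performed by $m^{-1}$. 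In particular $E_{\alpha\beta}$ is a convergent power series in $h$ starting at quadratic order.

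Applying $Z^I$, which acts as a scalar directional derivative on components and annihilates the constants $m_{\alpha\beta},m^{\alpha\beta}$, the Leibniz rule gives
\beq
Z^I E_{\alpha\beta} = \sum_{k\geq 2}\sum_{I_1+\cdots+I_k=I} C^k_{I_1,\ldots,I_k}\,(Z^{I_1}h)(Z^{I_2}h)\cdots(Z^{I_k}h)
\eq
with bounded combinatorial coefficients and implicit contractions through $m$. Each factor satisfies $|Z^{I_j}h|\leq |Z^{I_j}h^0|+|Z^{I_j}h^1|\leq \varepsilon\langle t+r\rangle^{-1+\delta}$ by hypothesis (since $|I_j|\leq|I|\leq N$), so each $k$-fold product is bounded by
\beq
\varepsilon^k\langle t+r\rangle^{k(\delta-1)} = \varepsilon^2\langle t+r\rangle^{-2+2\delta}\bigl(\varepsilon\langle t+r\rangle^{-1+\delta}\bigr)^{k-2}\leq \varepsilon^2\langle t+r\rangle^{-2+2\delta},
\eq
using $\varepsilon < 1$ and $\langle t+r\rangle^{-1+\delta}\leq 1$ (as $\delta<1$ and $\langle t+r\rangle\geq 1$). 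The finitely many Leibniz partitions at each order $k$ and the geometric decay in $k$ sum to $|Z^I E_{\alpha\beta}|\leq C\varepsilon^2\langle t+r\rangle^{-2+2\delta}$. Contracting with the bounded $X^\alpha,Y^\beta$ and applying the triangle inequality to the pointwise identity $X^\alpha Y^\beta Z^I H_{1\alpha\beta} = -X^\alpha Y^\beta Z^I h^1_{\alpha\beta} + X^\alpha Y^\beta Z^I E_{\alpha\beta}$ then yields the claimed bound. The main (minor) obstacle is the exact identification of $H_0$ with $-m^{-1}h^0 m^{-1}$, which is what absorbs all linear-in-$h$ contributions into $-h^1$ and leaves only the quadratic-and-higher tail for the error estimate.
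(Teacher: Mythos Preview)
Your proof is correct. It differs slightly in mechanics from the paper's argument: the paper avoids the full Neumann series by instead differentiating the exact finite identities $g^{\alpha\beta}=g^{\alpha\gamma}g_{\gamma\delta}g^{\delta\beta}$ and $g_{\alpha\beta}=g_{\alpha\gamma}g^{\gamma\delta}g_{\delta\beta}$, first inductively bounding $|Z^I H|$ and then reading off $|Z^I(H_{\alpha\beta}+h_{\alpha\beta})|\leq C\varepsilon^2\langle t+r\rangle^{-2+2\delta}$ directly from the quadratic remainder, before invoking the same key identity $H_{0\alpha\beta}=-h^0_{\alpha\beta}$. Your direct series expansion is more explicit and self-contained (at the cost of tracking a convergent tail), while the paper's identity trick keeps everything finite and sidesteps any convergence discussion; both rest on the same cancellation and the same Leibniz bookkeeping, and yield the same bound.
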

\begin{proof}
We first apply the fields $Z$ to the identity
\beq
g^{\alpha\beta} = g^{\alpha\gamma}g_{\gamma\delta}g^{\delta\beta},
\eq
which, via induction, gives the preliminary estimate
\beq
|Z^I(H_{\alpha\beta})|\leq 2\varepsilon\langle t+r\rangle^{-1+\delta},
\eq
for sufficiently small $\varepsilon$. We can now apply $Z^I$ to the identity
\beq
g_{\alpha\beta} = g_{\alpha\gamma}g^{\gamma\delta}g_{\delta\beta},
\eq
which gives
\[
|Z^I(H_{\alpha\beta}) + Z^I(h_{\alpha\beta})| \leq C\varepsilon^2\langle t+r\rangle^{-2+2\delta}.
\]
Taking the identity $H_{0\alpha\beta} = -h^0_{\alpha\beta}$ and contracting with the fields $X, Y$ gives our result.
\end{proof}
A similar energy result holds, which follows from a pointwise calculation, again taking derivatives of the identity $g_{\alpha\beta} = g_{\alpha\gamma}g^{\gamma\delta}g_{\delta\beta}$, then taking an $L^\infty$ bound on $H$ or $h$ when applicable:
\begin{proposition}\label{prop:Hhequiv2}
Let $\overline{w}$ be a positive function, and take $N$ such that $|Z^I(h^0_{\alpha\beta})|+|Z^I(h^1_{\alpha\beta})|\leq \varepsilon\langle t+r\rangle^{-1+\delta'}$ for $|I|\leq N/2$. Then, for a given region $\Omega$ with volume element $dV$,
\beq
\int_{\Omega} |Z^I(H_{1\alpha\beta})|^2 \overline{w} \, dV\leq \int_{\Omega} |Z^I(h^1_{\alpha\beta})|^2 \overline{w}dV + C\varepsilon{\sum}_{|J|\leq|I|-1}\int_{\Omega} \langle t+r\rangle^{-2+2\delta}(|Z^Jh|^2+|Z^JH|^2) \overline{w}dV.
\eq
\end{proposition}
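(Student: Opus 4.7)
The plan is to derive a pointwise algebraic identity for $Z^I H_{1\alpha\beta}$ from $g_{\alpha\beta}=g_{\alpha\gamma}g^{\gamma\delta}g_{\delta\beta}$, then square, weight, and integrate. Expanding $g=m+h$ and $g^{-1}=m^{-1}+H$, the linear-in-$(h,H)$ terms cancel after using $m_{\alpha\gamma}m^{\gamma\delta}m_{\delta\beta}=m_{\alpha\beta}$, leaving
\[
H_{\alpha\beta}+h_{\alpha\beta}=Q_{\alpha\beta}(h,H),
\]
where $H_{\alpha\beta}:=m_{\alpha\mu}m_{\beta\nu}H^{\mu\nu}$ and $Q$ is a polynomial of degree $\geq 2$ in the components of $h$ and $H$ with constant coefficients. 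Since $H_{0\alpha\beta}+h^0_{\alpha\beta}=0$ by construction in Section \ref{exteriordecaymass}, this restricts to
\[
H_{1\alpha\beta}+h^1_{\alpha\beta}=Q_{\alpha\beta}(h,H).
\]
Applying $Z^I$ and the Leibniz rule yields a pointwise expansion of $Z^I H_{1\alpha\beta}+Z^I h^1_{\alpha\beta}$ as a sum of monomials $(Z^{I_1}A)(Z^{I_2}B)\cdots$ with each factor $A,B,\ldots\in\{h_{\mu\nu},H^{\mu\nu}\}$, indices satisfying $|I_1|+|I_2|+\cdots=|I|$, and at least two factors in every monomial.

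Squaring this identity, integrating against $\overline{w}\,dV$, and using a pointwise Cauchy--Schwarz reduces the estimate to controlling each such monomial. For \emph{interior} monomials, i.e., those with every multi-index $|I_j|\geq 1$, one has $|I_j|\leq |I|-1$ for every factor; by pigeonhole at least one factor has $|I_j|\leq |I|/2\leq N/2$ and can be pulled out in $L^\infty$ via the hypothesis with smallness $\varepsilon\langle t+r\rangle^{-1+\delta'}$, while the other factor is kept in $L^2$ at level $|J|\leq |I|-1$. This produces exactly the stated error sum $\sum_{|J|\leq |I|-1}\int\langle t+r\rangle^{-2+2\delta}(|Z^Jh|^2+|Z^JH|^2)\overline{w}\,dV$. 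The \emph{endpoint} monomials take the form $(Z^IA)\cdot B$ with $B\in\{h_{\mu\nu},H^{\mu\nu}\}$ undifferentiated; the undifferentiated factor is again $L^\infty$-bounded by $\varepsilon\langle t+r\rangle^{-1+\delta'}$. Splitting $A=A^0+A^1$, the $A^0$ piece has the explicit smooth profile recalled in Section \ref{sec:masssubtractionsection} and, after commuting derivatives onto the polynomial weight and cutoff, contributes to the $|J|\leq |I|-1$ sum.

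The main obstacle is the $A^1$ piece of the endpoint terms, which produces a contribution of the form $\varepsilon^2\int\langle t+r\rangle^{-2+2\delta'}(|Z^Ih^1|^2+|Z^IH_1|^2)\overline{w}\,dV$ that is \emph{not} captured by the sum over $|J|\leq |I|-1$. The $|Z^IH_1|^2$ contribution is moved to the left-hand side for $\varepsilon_0$ small, and the $|Z^Ih^1|^2$ contribution is absorbed into the main right-hand term $\int|Z^Ih^1|^2\overline{w}\,dV$ at the cost of an implicit $(1+C\varepsilon^2)$ prefactor, consistent with the stated inequality for small $\varepsilon$. Cubic and higher monomials from $Q$ are strictly easier because each carries an additional $L^\infty$-smallness factor. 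Choosing $\varepsilon_0$ small enough that the above absorption is valid closes the bound.
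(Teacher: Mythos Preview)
Your approach is essentially the paper's: the paper says only that the result ``follows from a pointwise calculation, again taking derivatives of the identity $g_{\alpha\beta}=g_{\alpha\gamma}g^{\gamma\delta}g_{\delta\beta}$, then taking an $L^\infty$ bound on $H$ or $h$ when applicable,'' and you have carried this out in detail. In particular your observation that the endpoint top-order contributions must be absorbed (the $|Z^IH_1|^2$ piece to the left, the $|Z^Ih^1|^2$ piece into the leading right-hand term) is a point the paper leaves implicit; the stated constant $1$ should indeed be read as $1+O(\varepsilon)$, which is harmless in every application.

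One small correction: your justification for the $A^0$ endpoint piece (``commuting derivatives onto the polynomial weight and cutoff, contributes to the $|J|\le|I|-1$ sum'') is not quite the right mechanism. The cleaner route is to use directly that $|Z^I h^0|\lesssim M\langle t+r\rangle^{-1}$ for all $I$, since $h^0$ is explicit; this is the same size as the undifferentiated $|h^0|$, hence bounded by $|h|+|h^1|$ at order zero. The $|h^1|$ part is again absorbed into the main term, and the $|h|$ part sits in the $|J|=0$ summand of the error. With this adjustment your argument is complete.
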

\subsection{Energy bounds and decay estimates for Maxwell-Klein-Gordon on a fixed background}
We can now state the relevant result from \cite{Ka18}, using the initial norms and decompositions defined in Section \ref{sec:MKGIVPdefs}.
\begin{theorem}\label{thm:ChrisResults}
Let $(\mathcal{M}, g)$ be an asymptotically flat Lorentzian manifold which admits a  global system of coordinates $(t, x_1, x_2, x_3)$ (not necessarily satisfying Einstein's equations), which are harmonic with respect to $g$ and satisfy the initial splitting condition $g_{0i} = 0$. Define the consequent quantities $M, \chi, h^0, h^1, H_0, H_1$ as in Section \ref{exteriordecaymass}. Additionally, take real constants $s, s_0, \gamma, \mu, \delta$ satisfying $\mu < 1/2$, $\gamma > 1/2$, $\frac12 < s<1<s_0<\frac32$, and $4\delta < \min(1-2\mu, s-\frac12, 1-s, \gamma - \frac12, 1-\gamma, 1+\gamma-2s)$, and take an integer $N \geq 11$. For $0<T\leq\infty$, define the intervals
$
I_T =
[0,T] $.
There exists a constant $\varepsilon_g>0$ depending only on $(s, s_0, \gamma, \mu, \delta, N)$ such that, for all $T$, if  $M, h^1$ satisfies the decay estimates
\begin{subequations}
\label{MetLIint}
\begin{align}
 M &< \varepsilon_g, \label{MetLIint1}\\
{\sup}_{t\in I_T}|\Lie_{{\wt{X}}}^Ih^1| &< \varepsilon_g\langle t+r \rangle^{-1+\delta}, \label{MetLIint2}\\
{\sup}_{t\in I_T}|\Lie_{{\wt{X}}}^I h^1|_{\mathcal{L}\mathcal{T}} &< \varepsilon_g\left(\tfrac{\langle t-r^*\rangle}{\langle t+r\rangle}\right)^{\gamma}\langle t+r \rangle^{-1+\delta}, \label{MetLIint3}
\end{align}
\end{subequations}
\noeqref{MetLIint}\noeqref{MetLIint1}\noeqref{MetLIint2}\noeqref{MetLIint3} for $|I|\leq N-6$, $t \in I_T$, and the energy estimates
\begin{subequations}
\label{MetL2int}
\begin{align}
\label{L21int}
\!\!\!\left\lVert|\wpa\Lie_{{\wt{X}}}^Ih^{1\!}|w_\gamma^{1/2}
\right\rVert_{L^2(\mathbb{R}^3)}\!\! + \left\lVert{\langle r^*\!\!\!-\!t\rangle}^{-1}|\Lie_{{\wt{X}}}^I{h}^{1\!}|w_\gamma^{1/2}
\right\rVert_{L^2(\mathbb{R}^3)}&\!\!\leq \!\varepsilon_g (1\!\!+\!t)^{\delta/2}\!\!,\,\,\,\,\,\, \\
\label{L22int}
\!\!\!\!\!\left\lVert\big(|\wpa\Lie_{{\wt{X}}}^I{h}^{1\!}|_{\mathcal{L}\mathcal{L}}\!+ \! |\overline\wpa\Lie_{{\wt{X}}}^I{h}^{1\!}|\big) (w_\gamma')^{1/2}\right\rVert_{L^2(I_{T}\!\times\mathbb{R}^3)}&\!\!\leq \!\varepsilon_g (1\!\!+\!T)^{\delta/2}\!\!,\,\,\,\,\,\, \\
\label{L23int}
\!\left\lVert{\!\langle r^*\!\!\!-\!t\rangle}^{-1-s}{\!\langle r^*\!\!\!+\!t\rangle}^{-1+s}|\Lie_{{\wt{X}}}^I{h}^{1\!}|_{\mathcal{L}\mathcal{L}}
(w_\gamma')^{1/2}
\right\rVert_{L^2(I_{T}\!\times\mathbb{R}^3)}&\!\!\leq \!\varepsilon_g,
\end{align}
\end{subequations}\noeqref{L21int}\noeqref{L22int}
for $|I|\leq N$ and $t\in I_T$, and
\beq
w_\gamma = \begin{cases}
1 + (1 + t - r^*)^{-2\mu} & r^* \leq t, \\
1+(1 + (r^* - t))^{1+2\gamma} & r^* \geq t,
\end{cases}
\eq
 then the system \eqref{eq:MKG} is well-posed for small initial data.
Specifically, there exists an $\varepsilon_0 > 0$ such that for all $\varepsilon < \varepsilon_0$ and for all one-forms ${\bf E}_0, {\bf B}_0$ and scalar functions $\bphi_0, \,\,\dot{\!\!\bphi}_0$ on $\mathbb{R}^3$ satisfying
\beq\label{def:ID}
\lVert{\bf E}_{0}^{df}\rVert_{H^{k, s_0}} + \lVert{\bf B}_{0}\rVert_{H^{k, s_0}} + \lVert D\bphi_0 \rVert_{H^{k, s_0}} + \lVert\,\,\dot{\!\!\bphi}_0\rVert_{H^{k, s_0}}< \varepsilon,
\eq
for a given $\bf{A}$, there exists a solution to the system \eqref{eq:MKG} on $(t, x)\in I_T\times \mathbb{R}^3$, with
\begin{alignat}{4}
\bF_{0i}(0,x) &= ({\bf E}_0)_i(x) \qquad & \epsilon_{0ijk}\bF^{jk}(0,x) &= ({\bf B}_0)_i, \\
\bphi(0,x) &= \bphi_0(x), &\wpa_t\bphi(0,x) &= \,\,\dot{\!\!\bphi}_0 .
\end{alignat}
All constants, including $\varepsilon_0$ and $\varepsilon_g$, can be taken independently of $T$. We also have decay estimates, which we adapt here from Theorems 5.6 and 6.6 of \cite{Ka18}, as well as energy estimates, which we adapt from Theorems 3.1 and 4.2.
 For $|I|\leq N-4$, we have
\begin{subequations}\label{est:DPhiDF1}
\begin{align}
\!\!\!\!\!\!|D_{\widetilde{L}} D_{\wt{X}}^I\bphi| + |\alpha[\Lie_{\wt{X}}^I \bF^1]|
&\leq C \varepsilon \langle t+r^*\rangle^{-2}\langle t-r^*\rangle^{1/2-s}
\langle (r^*-t)_+\rangle^{s-s_0},\quad \label{est:F1Phi1}\\
\!\!\!\!\!\!|\slashed{D}D_{\wt{X}}^I \bphi| + |\rho[\Lie_{\wt{X}}^I \bF^1]| + |\sigma[\Lie_{\wt{X}}^I \bF^1]|&\leq C\varepsilon\langle t+r^*\rangle^{-1-s}\langle t-r^*\rangle^{-1/2}\langle (r^*-t)_+\rangle^{s-s_0},\quad\label{est:F1Phi2} \\
\!\!\!\!\!\!\langle t-r^* \rangle^{-1} |D_{\wt{X}}^I\bphi| + |D_{\widetilde{\underline{L}}}D_{\wt{X}}^I \bphi| + |\underline{\alpha}[\Lie_{\wt{X}}^I \bF^1]|&\leq C \varepsilon\langle t+r^*\rangle^{-1}\langle t-r^*\rangle^{-1/2-s}\langle (r^*-t)_+\rangle^{s-s_0}.\quad\label{est:F1Phi3}
\end{align}\noeqref{est:DPhiDF1} \noeqref{est:F1Phi1} \noeqref{est:F1Phi2} \noeqref{est:F1Phi3}
\end{subequations}
By the estimate (2.49) in \cite{Ka18}, $\bF^0$ is supported in $r^*\! \geq \! t\!+\!2$ and satisfies the following bounds for  $|I|\!\leq\! N$:
\begin{subequations}\label{est:DF2}
\begin{align}
|\alpha[\Lie_{\wt{X}}^I \bF^0]| &\leq C\varepsilon \langle t+r^*\rangle^{-3}\langle t-r^*\rangle, \label{est:DF21}\\
|\rho[\Lie_{\wt{X}}^I \bF^0]| +|\sigma[\Lie_{\wt{X}}^I \bF^0]|+|\underline{\alpha}[\Lie_{\wt{X}}^I \bF^0]|&\leq C \varepsilon \langle t+r^*\rangle^{-2}\label{est:DF21}.\\
\end{align}\noeqref{est:DF2}\noeqref{est:DF21}\noeqref{est:DF22}
\end{subequations}
Furthermore, the following energy bounds hold for $|I|\leq N$ and $t\in I_T$:
\begin{subequations}\label{est:ChrisEnergyBounds}
\begin{align}
\int_{\Sigma_t}\!\!\!\!\big(\!\langle t+r^*\rangle^{2s}(|\rho[\Lie_{\wt{X}}^I \bF^1]|^2\! \!+ \!|\sigma[\Lie_{\wt{X}}^I \bF^1]|^2 \!\!+\! | \alpha[\Lie_{\wt{X}}^I \bF^1]|^2) \!+\! \langle t-r^*\rangle^{2s}|\underline{\alpha}[\Lie_{\wt{X}}^I \bF^1]|^2\big)\langle (t-r^*)_-\rangle^{2s_0 - 2s} &\leq C\varepsilon^2, \qquad\label{est:ChrisEnergyBounds1}\\
\int_{\Sigma_t}\!\!\!\!\big(\langle t+r^*\rangle^{2s}(|D_{\widetilde{L}}D_{{\wt{X}}}^I\bphi|^2 + |\slashed{D}D_{\wt{X}}^I\bphi|^2 + |\tfrac{\bphi}{r}|^2) + \langle t-r^*\rangle^{2s}|D_{\widetilde{\underline{L}}}D_{\wt{X}}^I\bphi|^2\big)\langle (t-r^*)_-\rangle^{2s_0 - 2s} &\leq C\varepsilon^2.\label{est:ChrisEnergyBounds2}
\end{align}
\end{subequations}
\noeqref{est:ChrisEnergyBounds}\noeqref{est:ChrisEnergyBounds1}\noeqref{est:ChrisEnergyBounds2}
\end{theorem}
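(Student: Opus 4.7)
The plan is to follow the bootstrap strategy developed in Kauffman \cite{Ka18}, combining a conformal (Morawetz) energy estimate for the Maxwell--Klein--Gordon system on an asymptotically Schwarzschild background with pointwise decay estimates derived from weighted Klainerman--Sobolev inequalities and H\"ormander's $L^1$--$L^\infty$ estimate. First I would set up the initial value problem in the Lorenz gauge for $\mathbf{A}$, which gives a system of nonlinear wave equations with a weak null structure for $\{\bphi,\bF\}$, establish local existence by standard energy methods, and then subtract off the charge part $\bF^0$ as in Section~\ref{subsubsec:Charge}. The point of this subtraction is that while $\bF$ itself only decays like $r^{-2}$ at spatial infinity, the remainder $\bF^1=\bF-\bF^0$ decays like $o(r^{-2-\gamma})$, which is the decay required to make the fractional conformal Morawetz weight integrable on the initial slice. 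The data norm \eqref{def:ID} is tailored precisely to this decomposition, since the charge lives in the curl-free part of the electric field $\mathbf{E}$.

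The core of the argument is a bootstrap on two families of quantities: the pointwise bounds \eqref{est:DPhiDF1} on null components of $D_{\widetilde{X}}^I\bphi$ and $\Lie_{\widetilde{X}}^I\bF^1$ up to order $N-4$, and the weighted $L^2$ bounds \eqref{est:ChrisEnergyBounds} up to order $N$. To recover the $L^2$ estimates, I would run a conformal energy estimate adapted to the asymptotically Schwarzschild wave operator $\Box^{*}=\widehat{m}^{ab}\wpa_a\wpa_b$ of Section~\ref{sec:AsymSchwarzWaveOp}, using the multiplier that asymptotically reproduces the Morawetz multiplier along the optical function $q^{*}=r^{*}-t$. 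The key point here is that the Morawetz estimate for MKG on Minkowski in \cite{LS} is \emph{not} stable under generic $O(\langle t+r\rangle^{-1})$ metric perturbations, but the hypotheses \eqref{MetLIint}--\eqref{MetL2int} assert exactly that the difference between $g$ and the Schwarzschild metric decays at the rates required so that the bulk terms generated by commuting the multiplier with $\Box^{\,g}$ are controllable. Commutations with the gauge-covariant fields $D_{\widetilde{X}}^I$ and the modified Lie derivatives $\Lie_{\widetilde{X}}^I$ produce curvature-type error terms of the schematic form $(\Lie^J H_1)\,\wpa^2(D^K\bphi)$ and semilinear interactions with $J_\mu=\mathfrak{I}(\bphi\overline{D_\mu\bphi})$; I would estimate these using the pointwise hypotheses on the metric in the $\mathcal{L}\mathcal{T}$-components together with the improved $L^2$ control \eqref{L23int}.

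The pointwise bounds \eqref{est:DPhiDF1} would then follow from the $L^2$ estimates by applying the weighted Klainerman--Sobolev inequality of Proposition~\ref{prop:K-S} to the commuted fields and integrating out the additional derivative using Proposition~\ref{prop:hormander} and Lemma~\ref{lem:transversalder} along the $L$--direction; the splitting of the decay rates into the good components $D_{\widetilde L}\bphi$, $\slashed{D}\bphi$ (with $\langle t+r^*\rangle^{-1-s}$ or better) and the bad component $D_{\widetilde{\underline L}}\bphi$ mirrors the decay hierarchy for the null components $\alpha,\rho,\sigma,\underline\alpha$ of $\bF^1$ and reflects the weak null structure inherited from MKG. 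The bound \eqref{est:DF2} on $\bF^0$ is explicit, computed directly from the formula for $\bF^0_{0i}$. Finally, the bootstrap is closed by choosing $\varepsilon_0$ sufficiently small compared to $\varepsilon_g$ so that the improved constants absorb the implicit $O(\varepsilon^2)$ feedback coming from quadratic terms in $\bphi,\bF^1$.

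The hard part will be the conformal energy estimate itself: the Morawetz weight is sharp, so any logarithmic loss coming from the $M\ln r$ asymptotics of $r^{*}$ must be absorbed by the $\langle(r^*-t)_+\rangle^{s-s_0}$ factor in the decay rates and by the hierarchy $4\delta<1+\gamma-2s$ imposed on the parameters. A secondary obstacle is organizing the commutator errors so that the $\mathcal{L}\mathcal{L}$-components of $h^1$, which decay better by the wave coordinate condition, are the only ones that multiply the worst transversal derivatives $D_{\widetilde{\underline L}}D^I\bphi$; this requires carefully projecting the commutators $[\Box^{\,g},D^I]$ onto the null frame and exploiting Proposition~\ref{prop:Hhequiv1} to pass between $h^1$ and $H_1$ bounds.
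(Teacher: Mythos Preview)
The paper does not prove this theorem; it explicitly restates it from \cite{Ka18} (Theorem~1.1 together with Theorems~3.1, 4.2, 5.6, 6.6, and~7.2 there), with the only adaptation being the use of Propositions~\ref{prop:Hhequiv1} and~\ref{prop:Hhequiv2} to reduce the hypotheses on $H$ to hypotheses on $h^1$ alone. Your outline is a faithful sketch of the argument in \cite{Ka18}---the conformal Morawetz estimate adapted to the asymptotically Schwarzschild optical function $q^*$, the charge subtraction $\bF=\bF^0+\bF^1$, the bootstrap between weighted $L^2$ bounds and pointwise null-component decay via Klainerman--Sobolev and integration along $L$, and the use of the improved $\mathcal{L}\mathcal{L}$ decay to handle the worst commutator terms---so you are describing essentially the same approach the paper invokes by citation.
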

\begin{remark}
The version of this theorem which appears in \cite{Ka18} requires the estimates \eqref{MetLIint} and \eqref{MetL2int} for both $h$ and $H$; however, we can reduce this using Propositions \ref{prop:Hhequiv1} and \ref{prop:Hhequiv2}. Additionally, it follows from direct computation and Corollary \ref{cor:Aderivatives} that the estimates \eqref{MetLIint}, \eqref{MetL2int}, are equivalent to the corresponding estimates with $h, H$ replaced by $\wt{h}, \wt{H}$, as Lie derivatives are unaffected by the change in coordinates.
\end{remark}
\begin{remark}
In the application to the full system, we will require $s_0 = \gamma+\frac12$ to close the argument. However, afterwards, we may repeat this estimate with higher $s_0$ to show better decay estimates for $\{\bphi, \bF\}$ in the far exterior. This is a consequence of the fact that terms that are quadratic in $\bphi, \bF$ appear by themselves in the right hand side of the equation for $h^1$, but in the MKG system $h^1$ only appears coupled with $\bphi$ and $\bF$ terms.
\end{remark}
Since $[D_a, D_b]\bphi = i\bF_{ab}\bphi$, we must deal with commutator terms which arise even in the Minkowski background.
\begin{corollary}\label{cor:DDComm}
For $\widetilde{U}\in \{\widetilde{L}, \widetilde{\underline{L}}, \widetilde{S}_1, \widetilde{S}_2\}$, the decay bounds for $\bphi$ in the estimates \eqref{est:DPhiDF1} hold up to a constant if $D_{\widetilde{U}}D_{\widetilde{X}}^I\phi$ is replaced with $\widetilde{U}^a\widetilde{X}^b(\Lie_{\widetilde{X}}^{I_1}\bF)_{ab}D_{\widetilde{X}}^{I_2}\bphi$, with $|I_1| + |I_2| + 1 \leq |I|$. Additionally,
\beq
\sup_{t\in I_t}\int_{\Sigma_t} \langle t-r^*\rangle^{2s}||\widetilde{U}^a\widetilde{X}^b(\Lie_{\widetilde{X}}^{I_1}\bF)_{ab}D_{\widetilde{X}}^{I_2}\bphi|^2\langle (t-r^*)_-\rangle^{2s_0 - 2s} \leq C\varepsilon^2.
\eq
\end{corollary}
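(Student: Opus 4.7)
\medskip\noindent\textbf{Proof proposal.} The strategy is to decompose $\widetilde X^b$ in the null frame $\{\widetilde L,\widetilde{\underline L},\widetilde S_1,\widetilde S_2\}$ and use the antisymmetry of $\bF$ to pair bad components of $\widetilde X$ with good (fast-decaying) components of $\bF$. Since $\widetilde X$ is one of the Minkowski commutation fields \eqref{eq:ModifiedVectorFields}, in each case the null-frame components satisfy the pointwise bounds $|\widetilde X^{\widetilde{\underline L}}|\lesssim \langle t-r^*\rangle$ and $|\widetilde X^{\widetilde L}|,|\widetilde X^A|\lesssim \langle t+r^*\rangle$; together with the antisymmetry identity $\bF_{\widetilde L\widetilde L}=\bF_{\widetilde{\underline L}\widetilde{\underline L}}=0$, this forces the largest $\widetilde X$-coefficient $\langle t+r^*\rangle$ to multiply only the fast-decaying null components $\alpha,\rho,\sigma$ of $\bF^1$, while the slow component $\underline\alpha$ is paired only with $\langle t-r^*\rangle$.

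Plugging in the decay estimates \eqref{est:F1Phi1}--\eqref{est:F1Phi3} applied to $\Lie^{I_1}_{\widetilde X}\bF^1$, a case analysis on $\widetilde U\in\{\widetilde L,\widetilde{\underline L},\widetilde S_1,\widetilde S_2\}$ would give pointwise bounds of the form
\begin{equation*}
\bigl|\widetilde U^a\widetilde X^b(\Lie^{I_1}_{\widetilde X}\bF^1)_{ab}\bigr|
\lesssim \varepsilon\, \mathcal{D}_{\widetilde U}(t,r^*),
\end{equation*}
where $\mathcal{D}_{\widetilde U}$ is precisely the right-hand side of the relevant bound in \eqref{est:F1Phi1}--\eqref{est:F1Phi3} divided by the pointwise bound on $|D^{I_2}_{\widetilde X}\bphi|$ from \eqref{est:F1Phi3} (valid as $|I_2|\leq|I|-1\leq N-5$). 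Multiplying the two pointwise bounds then yields a bound of order $\varepsilon^2$ times the target decay rate for $D_{\widetilde U}D^I_{\widetilde X}\bphi$, and smallness of $\varepsilon$ absorbs the extra factor. The contribution of $\bF^0$ is handled separately using \eqref{est:DF2} and the support property $r^*\geq t+2$, which gives additional $\langle t+r^*\rangle$-decay.

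For the $L^2$ estimate, I would split the sum over $I$ into the regimes $|I_1|\leq N/2$ and $|I_2|\leq N/2$. In the first regime, the pointwise null-frame bound on $\widetilde U^a\widetilde X^b(\Lie^{I_1}_{\widetilde X}\bF^1)_{ab}$ above is combined with the energy estimate \eqref{est:ChrisEnergyBounds2} for $D^{I_2}_{\widetilde X}\bphi$, supplemented by the weighted Hardy inequality of Corollary \ref{cor:poincare} to pass from $D_{\widetilde{\underline L}}D^{I_2}_{\widetilde X}\bphi$ to $D^{I_2}_{\widetilde X}\bphi$ itself; in the second regime the roles are reversed, with a pointwise bound on $D^{I_2}_{\widetilde X}\bphi$ from \eqref{est:F1Phi1}--\eqref{est:F1Phi3} combined with the null-component energy bound \eqref{est:ChrisEnergyBounds1} for $\Lie^{I_1}_{\widetilde X}\bF^1$, again after null-frame decomposition of $\widetilde X^b$. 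The $\bF^0$ contribution follows directly from \eqref{est:DF2} and the $\bphi/r$ bound in \eqref{est:ChrisEnergyBounds2}. The main obstacle is purely bookkeeping: in each sub-case one must track the exterior weight $\langle(t-r^*)_-\rangle^{2s_0-2s}=\langle(r^*-t)_+\rangle^{2s_0-2s}$, and check that it cancels against the factors $\langle(r^*-t)_+\rangle^{s-s_0}$ coming from the pointwise decay bounds; this is where the inequalities $\frac12<s<1<s_0$ from Theorem \ref{thm:ChrisResults} are actually used.
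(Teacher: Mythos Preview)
Your approach is essentially the same as the paper's: decompose $\widetilde X$ in the null frame, read off how the null components $\alpha,\rho,\sigma,\underline\alpha$ of $\Lie^{I_1}_{\widetilde X}\bF$ pair with the coefficients of $\widetilde X$, then feed in \eqref{est:DPhiDF1}, \eqref{est:DF2} for the pointwise part and split $\bF=\bF^0+\bF^1$ plus an $L^\infty$--$L^2$ split on $(\bF^1,\bphi)$ for the energy part. The paper records the contraction bounds in closed form,
\[
|\mathbf G(\widetilde L,\widetilde X)|\lesssim \langle t+r^*\rangle|\alpha|+\langle t-r^*\rangle|\rho|+|\mathbf G|,\quad
|\mathbf G(\widetilde{\underline L},\widetilde X)|\lesssim \langle t+r^*\rangle|\mathbf G|,\quad
|\mathbf G(\widetilde S_i,\widetilde X)|\lesssim \langle t+r^*\rangle(|\alpha|+|\sigma|)+\langle t-r^*\rangle|\underline\alpha|,
\]
and then simply cites \eqref{est:DPhiDF1}, \eqref{est:DF2}, \eqref{est:ChrisEnergyBounds}; your description matches this.

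One imprecision worth flagging: your claim that ``$\underline\alpha$ is paired only with $\langle t-r^*\rangle$'' is false for $\widetilde U=\widetilde{\underline L}$, where $\bF(\widetilde{\underline L},\widetilde S_A)=\underline\alpha_A$ picks up the angular coefficient $|\widetilde X^A|\lesssim\langle t+r^*\rangle$. This does not break the argument --- the $\widetilde U=\widetilde{\underline L}$ target \eqref{est:F1Phi3} is the weakest, and $\langle t+r^*\rangle|\underline\alpha[\bF^1]|\,|D^{I_2}_{\widetilde X}\bphi|$ still lands below it using $s>1/2$ --- but the antisymmetry heuristic as stated is only accurate for $\widetilde U\in\{\widetilde L,\widetilde S_1,\widetilde S_2\}$. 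Also, the Hardy step you invoke is not really needed: the $|\bphi/r|$ term is already present in \eqref{est:ChrisEnergyBounds2}, and the paper's energy argument proceeds directly from \eqref{est:ChrisEnergyBounds} without it.
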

\begin{proof}
We first take the following bounds for a 2-form $\mathbf{G}$, which follow from the null decompositions of the fields $\widetilde{X}$:
\begin{align}
|\mathbf{G}(\widetilde{L}, \widetilde{X})| &\lesssim \langle t+r^*\rangle |\alpha[\mathbf{G}]| + \langle t-r^*\rangle|\rho[\mathbf{G}]| + |\mathbf{G}| ,\\
|\mathbf{G}(\underline{\widetilde{L}}, \widetilde{X})| &\lesssim \langle t+r^*\rangle |\mathbf{G}|, \\
 |\mathbf{G}(\widetilde{S}_i, \widetilde{X})| &\lesssim \langle t+r^*\rangle (|\alpha[\mathbf{G}]| + |\sigma[\mathbf{G}]|) + \langle t-r^*\rangle|\underline{\alpha}[\mathbf{G}]|.
\end{align}
The decay bound then follows from \eqref{est:DPhiDF1} and \eqref{est:DF2}.
For the energy bound, we decompose $\bF = \bF^0 + \bF^1$. For the terms containing $\bF^0$, our result follows from \eqref{est:DF2}, and \eqref{est:ChrisEnergyBounds}. For terms containing derivatives of  $\bF^1$ and $\bphi$, for $|I|\geq 6$, we can apply \eqref{est:DPhiDF1} to $\bF^1$ or $\bphi$, and bound the remainder with \eqref{est:ChrisEnergyBounds}.
\end{proof}
\subsection{Energy and Decay bounds for the energy momentum tensor}\label{sec:EDBounds}
We can use the decay estimates stated in the previous section to prove the
sharp decay of the energy momentum tensor.
\begin{corollary}\label{cor:DecayBoundsOnT}
If $H$ satisfies the estimates in \eqref{est:H1} and $\bF, \bphi$ satisfy \eqref{est:DPhiDF1}, \eqref{est:DF2}, then the following estimates hold for $|I|\leq k-6, \mathcal{T} = \{\widetilde{L}, \widetilde{S_1}, \widetilde{S_2}\}, \mathcal{U}, \mathcal{V} = \{\widetilde{L}, \widetilde{\underline{L}}, \widetilde{S_1}, \widetilde{S_2}\}$:
\begin{subequations}
\begin{align}
|\Lie_{\wt{X}}^I\widecheck{T}|_{\mathcal{U}\mathcal{V}}&\lesssim \varepsilon^2\langle t+r^*\rangle^{-2}\langle{t-r^*}\rangle^{-2}\langle
(t-r^*)_+\rangle^{1-2s}, \\
|\Lie_{\wt{X}}^I\widecheck{T}|_{\mathcal{T}\mathcal{U}}&\lesssim \varepsilon^2\langle t+r^*\rangle^{-2-s}\langle{t-r^*}\rangle^{-2+s}\langle (t-r^*)_+\rangle^{1-2s}.
\end{align}
\end{subequations}
\end{corollary}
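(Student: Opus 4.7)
My plan is a direct bookkeeping exercise: distribute $\Lie_{\wt{X}}^I$ across the factors of $\widehat{T}$ by Leibniz, then estimate each resulting product using the null-frame decay rates already established. A preliminary simplification shows that trace reversal is innocuous: for the scalar field part one checks that $\widehat{T}^\varphi_{\mu\nu} = \mathfrak{R}(D_\mu\bphi\,\overline{D_\nu\bphi})$, while for the Maxwell part the conformal invariance of Maxwell's equations in four dimensions gives $g^{\alpha\beta}T^{\bF}_{\alpha\beta}=0$, so $\widehat{T}^\bF_{\mu\nu} = \bF_{\mu\gamma}\bF_\nu{}^\gamma - \tfrac{1}{4}g_{\mu\nu}\bF_{\gamma\delta}\bF^{\gamma\delta}$. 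Using the bounds \eqref{est:H1} and Corollary \ref{cor:Aderivatives}, terms where Lie derivatives fall on the metric or its inverse contribute only bounded factors (for small $\varepsilon_0$), so the analysis reduces to sums of products of Lie derivatives of $D\bphi$ and $\bF$ with combined order $\leq |I|$.

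Next I split $\bF = \bF^0 + \bF^1$ and write the Maxwell part as $T[\bF^0,\bF^0] + 2T[\bF^0,\bF^1] + T[\bF^1,\bF^1]$ as in Section \ref{subsubsec:Charge}. The pure $\bF^0$ contribution is supported in $r^*\geq t+2$ and bounded by $|q|^2\langle t+r^*\rangle^{-4}$ via \eqref{est:DF2}, trivially satisfying both targets; the cross term $T[\bF^0,\bF^1]$ is likewise exterior-supported and handled by multiplying $|\bF^0|\lesssim \langle t+r^*\rangle^{-2}$ by \eqref{est:F1Phi3} and invoking $s_0=\gamma+\tfrac{1}{2}$ to produce extra exterior decay. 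The remaining contributions from $\bF^1$ and $D\bphi$ obey \eqref{est:DPhiDF1}. The critical observation is that the unique worst-decaying null components are $\underline{\alpha}[\bF^1]$ and $D_{\wt{\underline{L}}}\bphi$, both bounded by $\varepsilon\langle t+r^*\rangle^{-1}\langle t-r^*\rangle^{-1/2-s}\langle (r^*-t)_+\rangle^{s-s_0}$, whereas every other null component improves over this by at least a factor of $\langle t+r^*\rangle^{-s}\langle t-r^*\rangle^{s}$ (visible on comparing \eqref{est:F1Phi1}--\eqref{est:F1Phi3}).

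For the general $\mathcal{U}\mathcal{V}$ bound I take the square of the worst component, giving $\varepsilon^2\langle t+r^*\rangle^{-2}\langle t-r^*\rangle^{-1-2s}\langle (r^*-t)_+\rangle^{2s-2s_0}$. In the interior $\{r^*\leq t\}$ this equals $\varepsilon^2\langle t+r^*\rangle^{-2}\langle t-r^*\rangle^{-2}\langle (t-r^*)_+\rangle^{1-2s}$ (by factoring off $\langle t-r^*\rangle^{-2}$), matching the target. In the exterior, using $s_0 = \gamma+\tfrac12$ (dictated by the hypotheses of Theorem \ref{thm:ChrisResults}) and $\langle (r^*-t)_+\rangle\sim\langle t-r^*\rangle$, the same expression becomes $\varepsilon^2\langle t+r^*\rangle^{-2}\langle t-r^*\rangle^{-2-2\gamma}$, strictly stronger than the target $\varepsilon^2\langle t+r^*\rangle^{-2}\langle t-r^*\rangle^{-2}$.

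For the refined $\mathcal{T}\mathcal{U}$ bound the null structure of $\widehat{T}$ does the work: the worst pairings $|D_{\wt{\underline{L}}}\bphi|^2$ and $|\underline{\alpha}|^2$ appear only in the $\wt{\underline{L}}\wt{\underline{L}}$ component of $\widehat{T}$ and are excluded whenever one external index is tangential. For the scalar part, the factor $D_T\bphi$ with $T\in\mathcal{T}$ is $D_{\wt{L}}\bphi$ or $\slashed{D}\bphi$, each of which improves over $D_{\wt{\underline{L}}}\bphi$ by at least $\langle t+r^*\rangle^{-s}\langle t-r^*\rangle^{s}$; for the Maxwell part the internal contraction $\bF_{\mu\gamma}\bF_\nu{}^\gamma$ forces any surviving $\underline{\alpha}$ factor (arising when $T=\wt{S}_i$, via the pairing $\bF_{\wt{S}_i\wt{\underline{L}}}\bF_{\wt{\underline{L}}}{}^{\wt{\underline{L}}}$ through $g^{\wt{L}\wt{\underline{L}}}$ and $\bF_{\wt{S}_i\wt{S}_j}\bF_{\wt{\underline{L}}}{}^{\wt{S}_j}$) to be paired only with $\alpha$, $\rho$, or $\sigma$, yielding the same improvement. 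The trace piece $g_{\mu\nu}|\bF|^2$ contains no $|\underline{\alpha}|^2$ contribution at all (only $\rho^2,\sigma^2,\alpha\cdot\underline{\alpha}$) and so respects both targets trivially. The main obstacle is purely bookkeeping: carefully identifying which null components survive each contraction and matching the exponent arithmetic separately in the interior and exterior; once the exclusion of the two worst pairings when one index is tangential is recognized, the estimates collapse.
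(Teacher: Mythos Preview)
Your approach is essentially the same as the paper's: both reduce to a Minkowski-metric part plus cubic metric-perturbation remainders, apply Leibniz, decompose $\bF = \bF^0 + \bF^1$, and read off the bounds from the null-frame decay rates \eqref{est:DPhiDF1}--\eqref{est:DF2}. Your null-frame bookkeeping for the $\mathcal{T}\mathcal{U}$ refinement is in fact more explicit than the paper's terse ``follows directly from the null decomposition.''

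One point you gloss over: when you ``distribute $\Lie_{\wt{X}}^I$ across the factors,'' the scalar-field piece $\mathfrak{R}(D_a\bphi\,\overline{D_b\bphi})$ does not produce only products of $D_a D_{\wt{X}}^J\bphi$. The commutator $[D_a,D_{\wt{X}}]\bphi = i\wt{X}^c\bF_{ca}\bphi$ generates additional terms of the form $\wt{X}^c(\Lie_{\wt{X}}^{J_1}\bF)_{ca}\,D_{\wt{X}}^{J_2}\bphi$ (this is the content of \eqref{id:LieT}). The paper handles these via Corollary~\ref{cor:DDComm}, which shows they obey the same componentwise decay as the main terms $D_{\wt{U}}D_{\wt{X}}^I\bphi$. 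Your phrase ``sums of products of Lie derivatives of $D\bphi$ and $\bF$'' implicitly absorbs this, but the translation from $\Lie_{\wt{X}}(D\bphi)$ to the quantities $D_{\wt{U}}D_{\wt{X}}^I\bphi$ that actually appear in \eqref{est:DPhiDF1} deserves an explicit sentence and a citation of Corollary~\ref{cor:DDComm}.
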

\begin{remark}
The sharp decay rate in the exterior comes from $\bF^0$, which cannot be expected to decay faster than $r^{-2}$.
\end{remark}
\begin{proof}
We expand
\begin{equation}
\widecheck{T}[\bphi, \bF]_{ab} = \mathfrak{R}\bigtwo(D_a\bphi\overline{D_b\bphi}\bigtwo) + \widetilde{g}^{cd}\bF_{ac}\bF_{bd}- \widetilde{g}_{ab}\widetilde{g}^{ce}\widetilde{g}^{df}\bF_{cd}\bF_{df}/4 .
\end{equation}
We decompose $\widecheck{T}[\phi, \bF] = \widecheck{T}_{\whm}[\bphi, \bF]+\widecheck{T}_{\whh}[\bphi, \bF]$ such that
\begin{align}
\widecheck{T}_{\whm}[\bphi, \bF]_{ab} &= \mathfrak{R}\bigtwo(D_a\bphi\overline{D_b\bphi}\bigtwo) + \whm^{cd}\bF_{ac}\bF_{bd}- \whm_{ab}\whm^{ce}\whm^{df}\bF_{cd}\bF_{df}/4,\\
\widecheck{T}_{\whh}[\bphi, \bF]_{ab} &= (\widetilde{g}^{cd} - \widehat{m}^{cd})\bF_{ac}\bF_{bd} - (\widetilde{g}_{ab}\widetilde{g}^{ce}\widetilde{g}^{df} - \whm_{ab}\whm^{ce}\whm^{df})\bF_{cd}\bF_{df}/4.
\end{align}
As with the decomposition  $F$ in the Einstein portion of the system, we note that $\widecheck{T}_{\whm}[\bphi, \bF]_{ab}$ behaves nicely with respect to the null decomposition, and $\widecheck{T}_{\whh}[\bphi, \bF]_{ab}$ consists of cubic terms. Additionally, since $[D_a, D_b]\phi = i\bF_{ab}\phi$,
\beq \label{id:LieT}
\Lie_{\wt{X}}(\mathfrak{R}(D_a\phi \overline{D_b\psi})) = \mathfrak{R}(D_aD_{\wt{X}}\phi \overline{D_b\psi} +  D_a\phi \overline{D_bD_{\wt{X}}\psi} - i\wt{X}^c \bF_{ac}\phi\overline{D_b\phi}- i\wt{X}^c \bF_{bc}\overline{\phi}D_a\phi),
\eq
and
\beq
\Lie_{\wt{X}}(\mathfrak{I}(\phi \overline{D_b\psi})) = \mathfrak{I}(D_{\wt{X}}\phi \overline{D_b\psi} +  \phi \overline{D_bD_{\wt{X}}\psi} - i\wt{X}^c F_{bc}\phi\overline{\phi}).
\eq
Then,
\beq
{\sum}_{|I|\leq N-6}| (\Lie_{\wt{X}}^IT_{\whh}[\bphi, \bF])_{ab}| \lesssim \varepsilon_g\langle t+r^*\rangle^{-1+\delta} {\sum}_{|I|\leq N-6}||\Lie_{\wt{X}}^I\bF|^2,
\eq
using \eqref{MetLIint}, along with the estimate $-1+\delta < -s$. The estimate on $T_{\whm}[\bphi, \bF]_{ab}$ follows directly from the null decomposition, as well as Corollary \ref{cor:DDComm} for the commutator terms.
\end{proof}
 For higher derivatives of $\widecheck{T}$ we can also establish energy bounds which will be useful in propagating the bootstrap assumption.
\begin{corollary}\label{cor:EnergyBoundsOnT}
For sufficiently small $\varepsilon_g$, and $\gamma = s_0 - \tfrac12$, the following inequality holds:
\begin{equation}
\lVert\Lie_{\wt{X}}^I\widecheck{T}[\bphi, \bF](t, \cdot) w^{1/2}\rVert_{L^2(\mathbb{R}^3)} \lesssim \frac{\varepsilon^2}{1+t}.
\end{equation}
\end{corollary}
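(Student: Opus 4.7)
The plan is to expand $\Lie_{\wt{X}}^I \widehat{T}[\bphi,\bF]$ by the Leibniz rule, producing a sum of products of two field factors drawn from $\{D^{K}\bphi,\Lie_{\wt{X}}^{K}\bF\}$, together with harmless extra factors of $\widetilde{g}$ or its inverse. The commutators $[D_a,D_b]\bphi = i\bF_{ab}\bphi$ that arise as $\Lie_{\wt{X}}$ passes through covariant derivatives are already packaged in Corollary \ref{cor:DDComm}. Since $|I|\leq N$ and $N\geq 11$, every split $I = I_1 + I_2$ lets me arrange $\min(|I_1|,|I_2|)\leq N-6$, so the pointwise decay estimates \eqref{est:DPhiDF1}--\eqref{est:DF2} apply to the low-derivative factor, while the other factor, with $|I_j|\leq N$, is controlled in weighted $L^2$ by \eqref{est:ChrisEnergyBounds}. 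It suffices to bound the worst null component (the double-$\wt{\underline{L}}$ one), as all other components only yield better estimates.

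For the quadratic terms from the Minkowskian part $\widehat{T}_{\whm}$ built out of $D\bphi$ and $\bF^1$, the bottleneck factor satisfies
\[
|Z^{I_1}\!f_1| \lesssim \varepsilon(1+t)^{-1}\langle t-r^*\rangle^{-1/2-s}\langle(r^*-t)_+\rangle^{s-s_0}.
\]
Inserting this $L^\infty$ bound and appealing to \eqref{est:ChrisEnergyBounds} for $Z^{I_2}\!f_2$, I would reduce matters to the pointwise weight inequality
\[
w(q^*)\,\langle t-r^*\rangle^{-1-2s}\langle(r^*-t)_+\rangle^{2s-2s_0} \lesssim \langle t+r^*\rangle^{2s}\langle(t-r^*)_-\rangle^{2s_0-2s},
\]
which is easy to verify on $\{r^*\leq t\}$ (where $w$ is bounded) and on $\{r^* > t\}$ (where $w \sim \langle r^*-t\rangle^{2s_0}$ and one uses $s < s_0 + 1/2$ together with $\langle r^*-t\rangle\leq\langle t+r^*\rangle$). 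The product then bounds $\|Z^{I_1}\!f_1\cdot Z^{I_2}\!f_2\cdot w^{1/2}\|_{L^2(\Sigma_t)}$ by $C\varepsilon^2(1+t)^{-1}$.

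The $\bF^0$ contributions need separate treatment, since $\bF^0$ is supported in $\{r^*\geq t+2\}$ and Lie derivatives preserve its pointwise bound $\lesssim |q|\langle t+r^*\rangle^{-2}$ by \eqref{est:DF2}. I would estimate the pure charge term directly:
\[
\|T[\bF^0,\bF^0]\,w^{1/2}\|_{L^2(\Sigma_t)}^2 \lesssim q^4\!\int_{t+2}^{\infty}\!\!r^{*-6}\langle r^*-t\rangle^{2s_0}\,dr^* \lesssim q^4(1+t)^{-5+2s_0} \lesssim \varepsilon^4(1+t)^{-2},
\]
using $s_0 < 3/2$. The mixed term $T[\bF^0,\bF^1]$ is handled by placing $\bF^0$ and its Lie derivatives in $L^\infty$ and invoking the $\bF^1$ energy bound. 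The cubic corrections arising from $\widehat{T}_{\whh}$ or from Lie derivatives striking $\widetilde{g}$ carry an extra factor of $\Lie_{\wt{X}}^{K}\widetilde{H}_1$ (or of $\widetilde{h}^1$, via Proposition \ref{prop:Hhequiv1}); by \eqref{MetLIint} this extra factor is $\lesssim \varepsilon_g\langle t+r^*\rangle^{-1+\delta}$ whenever $|K|\leq N-6$, so the quadratic estimate above absorbs it with a factor of $\varepsilon_g$ to spare, while for $|K|>N/2$ the dual field factor now has low order and we close via the $L^\infty$ decay together with the energy bound \eqref{L21int} on $\Lie_{\wt{X}}^{K}\widetilde{H}_1$.

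The main obstacle I anticipate is the weight bookkeeping: the choice $\gamma = s_0 - \tfrac12$ is forced and precisely synchronizes $w^{1/2}\sim\langle r^*-t\rangle^{s_0}$ in the exterior with the weights $\langle t+r^*\rangle^{s}\langle(t-r^*)_-\rangle^{s_0-s}$ dictated by \eqref{est:ChrisEnergyBounds}; any looser $\gamma$ would either sacrifice the $\langle(r^*-t)_+\rangle^{s-s_0}$ decay needed to control $\bF^0$ or break integrability at spatial infinity. With this alignment secured, the remainder of the argument is a mechanical application of Leibniz and H\"older.
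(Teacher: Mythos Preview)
Your proposal is correct and follows essentially the same approach as the paper: split $\widehat{T}=\widehat{T}_{\whm}+\widehat{T}_{\whh}$, apply an $L^\infty$--$L^2$ H\"older estimate using \eqref{est:DPhiDF1}--\eqref{est:DF2} on the low-derivative factor and \eqref{est:ChrisEnergyBounds} on the high-derivative one (with Corollary~\ref{cor:DDComm} absorbing commutator terms), treat the $\bF^0$ contributions by direct integration, and close the cubic $\widehat{T}_{\whh}$ terms via the metric bootstrap. One small slip: in your displayed weight inequality the right-hand side should carry $\langle t-r^*\rangle^{2s}$ rather than $\langle t+r^*\rangle^{2s}$, since that is the weight \eqref{est:ChrisEnergyBounds} actually provides for the $\underline{\alpha}$ and $D_{\widetilde{\underline{L}}}\bphi$ components, but this only makes the needed inequality easier and your interior/exterior verification goes through unchanged.
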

\begin{proof}
We again split $\widecheck{T} = \widecheck{T}_{\whm} + \widecheck{T}_{\whh}$.
An $L^2-L^\infty$ estimate along with Corollary \ref{cor:DDComm}  gives
\begin{multline}
\lVert\Lie_{\wt{X}}^I\widecheck{T}_{\whm}[\bphi, \bF](t, \cdot) w^{1/2}\rVert_{L^2(\mathbb{R}^3)}\\
 \lesssim \varepsilon\langle t \rangle^{-1}\lVert\langle t-r^*\rangle^{-1}(|D\bphi| + |\bF^1| + |i_{\widetilde{X}}\bF||\bphi|)w^{1/2}\rVert_{L^2(\mathbb{R}^3)} +\lVert\varepsilon^2\langle t+r\rangle^{-4} w^{1/2}\rVert_{L^2(\mathbb{R}^3)}.
\end{multline}
The second term on the right comes from the $|\bF^0|^2$ terms appearing in $T_{\whm}$. The final bound follows from Corollary \ref{cor:DDComm} for the first term and direct integration for the second, using the inequality $1+2\gamma < 3$. In the case of $T_{\whh}$, methods used for $T_{\whm}$ as well as Lemma \ref{lem:m0decomp} and Corollary \ref{cor:DDComm} can be used to bound all terms except those where the most derivatives fall on $h^1$ and $H_1$. We can therefore reduce this to
\beq
\lVert\Lie_{\widetilde{X}}^IT_{\whh}[\bphi, \bF](t, \cdot) w^{1/2}\rVert_{L^2(\mathbb{R}^3)} \lesssim \varepsilon^2\lVert \langle t-r^*\rangle^{-2}\langle t+r^*\rangle^{-2} (|\Lie_{\widetilde{X}}^I h^1| + |\Lie_{\widetilde{X}}^I H_1|)w^{1/2}\rVert_{L^2(\mathbb{R}^3)} \lesssim \varepsilon^2\varepsilon_g\langle t\rangle^{-1}.
\eq
This follows from our bootstrap assumption.
\end{proof}

\section{Precise statement of the theorem and the structure of the proof}\label{sec:StatementandBootstrap}
In this section we give the precise statement of the theorem.
We also start the proof that will be given over the next few sections as well as concluding the proof using the results from the next few sections.

\subsection{The energies and statement of the theorem}
For a given $0< \gamma < 1$, $0<\mu < 1-\gamma$, we define the energy at time $t$,
\begin{equation} \label{eq:energydefE}
	E_N(T)={\sup}_{0\leq t\leq T}{\sum}_{\vert I \vert \leq N\,\,}
	\int
	|\wpa \widetilde{Z}^I\! h^1 (t,x)|^2 w\, dx, \quad
	\text{where}
	\quad
	w(t,x)
	=
	\bigg\{\begin{aligned}
		&(1+|r^*\!\!-t|)^{1+2\gamma}\!\!,\quad\!\!
		&
		r^*\!>t,
		\\
		&\sim 1
,
		\quad
		&
		r^*\!\leq t.
	\end{aligned}
\end{equation}
and space time norm
\begin{equation} \label{eq:energydefS}
	S_N(T)
	\!=
	\!\!{\sum}_{\vert I \vert \leq N\,\,}\int_0^T\!\!\!\int 	
	|\wpao
 \widetilde{Z}^I\! h^{\!1} (t,x)|^2 w'dx dt,
	\quad
	\text{where}
	\quad
w^\prime(t,x)\!=\bigg\{\begin{aligned} &(1\!+2\gamma)(1\!+|r^*\!\!-t|)^{2\gamma}\!\!,\quad\!\! &r^*\!>t,\\
&2\mu (1\!+|r^*\!\!-t|)^{-1-2\mu}\!,\quad &r^*\!\leq t,\end{aligned}
\end{equation}
where $|\wpao h |^2=|\swpa h|^2 \!+|\wL h|^2$ is the norm of the derivatives tangential to the outgoing curved light cones
$r^*\!\!-t=q^*$, where $|\swpa h|^2\!=\!\sum_{\,i=1,2}|S_i h|^2$ is the norm of the derivatives tangential to the sphere.
Moreover let
\begin{equation} \label{eq:energydefQ}
	Q_N(T)={\sup}_{0\leq t\leq T}{\sum}_{\vert I \vert \leq N\,\,}
	\int
	|\!D_{\widetilde{Z}}^I\widetilde{D} \bphi(t,x)|^2\!\!
+|\widetilde{Z}^{I} \bF^1 (t,x)|^2\,\,  w\, dx,
\end{equation}
where $D_{\widetilde{Z}}\bpsi=\widetilde{Z}^a \widetilde{D}_a$,
and $\widetilde{D}_a=(\wpa_a+i\widetilde{A}_a)\bpsi$.
We now restate a precise version of Theorem \ref{thm:EMKGIntro}:

\begin{theorem}\label{thm:mainrestated} Suppose that $1/2<\gamma<1$, and $N \geq 11$. There are constants $C_N, C_N^{\,\prime}<\infty$, $\mu>0$, and $\varepsilon_N>0$ such that if for some $\varepsilon\leq \varepsilon_N$
\beq\label{eq:initialassumption}
E_N(0)+Q_N(0)+M^2+q^2 \leq \varepsilon^2,
\eq
then Einstein-MKG  have a global solution satisfying
\beq\label{eq:finalestimate}
E_N(t)+S_N(t)\leq C_N\varepsilon^2(1\!+t)^{C_N^{\,\prime}\varepsilon},\quad\text{and}\quad  Q_N(t)\leq C_N\varepsilon^2.
\eq
Additionally, the decay estimates \eqref{eq:h1improcedhormanderweakbound}, \eqref{eq:h1improcedhormanderweakboundder}, \eqref{eq:wavecoordinatederivativeLie}, and \eqref{eq:wavecoordinatefunctionLie} hold with $\delta$ replaced by $C_N^{\,\prime}\varepsilon$, as do \eqref{est:DPhiDF1} and \eqref{est:DF2} for $s > 1/2$ satisfying $2s < 1+\gamma$.

Moreover, in the case $Q_N(0)=0$, i.e. the vacuum case, then the same result holds if $0<\gamma<1$.
\end{theorem}

\subsection{The bootstrap energy assumptions and the structure of the proof}
Given $1/2\!<\!\gamma\!<\!1$ (or $0\!<\!\gamma\!<\!1$ in the vacuum case)
pick a $\delta$ so small that $1/2\!+\!8\delta \!< \gamma\!<\!1\!-\!8\delta$
(or $8\delta \!< \!\gamma\!<\!1\!-\!8\delta$ in the vacuum case).\!
We will start by making the bootstrap assumptions that for some $C_b\!<\!\infty\!$
\beq\label{eq:bootstrapassumptionmetric}
E_N(t)+S_N(t)\leq C_{b}\, \varepsilon^2 (1\!+t)^{\delta}, \qquad\text{for}\quad 0\leq t\leq T,
\eq
and
\beq\label{eq:bootstrapassumptionfields}
 Q_N(t)\leq C_{b}\, \varepsilon^2 (1\!+t)^{\delta},  \qquad\text{for}\quad 0\leq t\leq T.
\eq
All energies are continuous, so it suffices to show that for some $C_b$ the estimates \eqref{eq:bootstrapassumptionmetric}, \eqref{eq:bootstrapassumptionfields} imply the same estimates with $C_b$ replaced by $C_b/2$.

\subsubsection{Getting back the bootstrap for the fields}
In the first part of sections \ref{sec:ProofMetricDecay} and \ref{sec:wavecoordbounds} we will show that the energy bounds required for Theorem \ref{thm:ChrisResults} follow from the bootstrap assumption \eqref{eq:bootstrapassumptionmetric} combined with the initial assumption \eqref{eq:initialassumption} for sufficiently small $\varepsilon$ depending on $C_b$. Specifically, we select $\varepsilon_b' = \varepsilon_b'(C_b) > 0$ small enough that $CC_b\varepsilon_b'^2 < \varepsilon_g^2$, where $\varepsilon_g$ is as in Theorem \ref{thm:ChrisResults} and $C$ refers to the constants which we use to retrieve the energy bounds \eqref{MetLIint}, \eqref{MetL2int} from \eqref{eq:bootstrapassumptionmetric}. Then there exist quantities $C_Q, \varepsilon_Q$ such that, if $\varepsilon < \varepsilon_Q$ and $\varepsilon < \varepsilon_b'$,
\beq\label{est:BootstrapRecovered}
Q_N(t)\leq C_Q\varepsilon^2.
\eq
In other words, we recover the bootstrap \eqref{eq:bootstrapassumptionfields} for $C_b \geq C_Q$.
It is important to note that, provided $CC_b\varepsilon_b'^2 < \varepsilon_g^2$, we may select $C_Q$ and $\varepsilon_Q$ independent of $C_b, \varepsilon_b'$ such that \eqref{est:BootstrapRecovered} holds. This bound on $Q_N$, combined with decay estimates for $\bF$ and $\bphi$, will give an $L^2$ bound for the energy momentum tensor for the Maxwell Klein Gordon system.

\subsubsection{Getting back the metric bootstrap}
We will finally prove in Section \ref{sec:ProofMetricEnergy}, see \eqref{eq:finalbootstrapestimate},
that the bootstrap assumptions imply that for some constant $C_N$ (with $C_N \geq C_Q$), that
is independent of the  constants $C_b$, $T,$ $\delta$,  and constants $C_b^{\,\prime}=C_b^{\,\prime}(C_b)<\infty$ and $\varepsilon_b=\varepsilon_b(C_b)>0$, that depend continuously on $C_b$ but are independent of $T,$ $\delta$, we have
\beq
E_N(t)+S_N(t)\leq C_{N}\, \varepsilon^2 (1\!+t)^{C_b^\prime \varepsilon}, \qquad\text{for}\quad 0\leq t\leq T, \quad \text{if}\quad \varepsilon\leq \varepsilon_b .
\eq
However,we can pick $C_b^\prime$ and $\varepsilon_b$ just depending on $N$ as well. In fact, this is achieved by picking $C_b=2C_N$ assuming that $C_N\geq 1$ as we may, and the further picking $0<\varepsilon_N\leq \min(\varepsilon_Q, \varepsilon_b(2C_N), \varepsilon'_b(2C_N))$ so small that $\varepsilon_N C_b^\prime(2C_N)\leq \delta$. Once we found the universal constant $C_N$, through the estimates, we can now go back and make the more intelligent bootstrap assumption that
\eqref{eq:bootstrapassumptionmetric} hold with $C_b=2C_N$. It follows that \eqref{eq:bootstrapassumptionfields} holds also. Therefore, we may extend the bootstrap argument to all time, which implies that \eqref{eq:finalestimate} holds globally.

\subsubsection{The structure of the proof}\label{sec:ProofStructure}
We will appeal to the results stated in Section \ref{sec:ChrisResults} in a nested way in the proof. First we will show that the apriori bounds stated above imply weak decay estimates for the metric, and also $L^2$ estimates for the components controlled by the wave coordinate condition. This in turn is sufficient to satisfy the assumptions on the metric needed to prove the results in Section  \ref{sec:ChrisResults}. Then using the decay estimates for the fields from Section \ref{sec:ChrisResults} we will be able to prove the sharp decay estimates for the metric which are needed to prove the sharper $L^2$ estimates for the metric which gives back the apriori bounds for the metric. Moreover the sharper $L^2$ bounds for the field also follows from  Section \ref{sec:ChrisResults}.
\subsection{The general statement}\label{sec:TheoremGenericProof}
When proving this for a more generic field, we will need to bound (derivatives of) the energy-momentum tensor at two points. First, in Proposition \ref{prop:theL1bound}, we prove that the bootstrap assumptions \eqref{eq:bootstrapassumptionmetric}-\eqref{eq:bootstrapassumptionfields} imply slow growth for the energy-momentum tensor in $L^1$, which in turn implies the sharper decay and energy results in Sections \ref{subsec:WCCDecay} and \ref{subsec:WCCEnergy}. In sections \ref{sec:sharpdecayfields} and \ref{sec:MKGBootstrap}, we use these bounds to propagate the field assumption \eqref{eq:bootstrapassumptionfields}, as well as improve bounds on the energy-momentum tensor. We can hence state a more general version of Theorem \ref{thm:EExistence} abstractly in the following form:
\begin{theorem}\label{thm:EFEgenericstability}
Let $\widecheck{T}$ be the reduced energy momentum tensor for some system of equations, and let $Q$ be a bootstrap energy for this system such that the following hold:
\begin{enumerate}
\item The bootstrap assumptions \eqref{eq:bootstrapassumptionmetric}-\eqref{eq:bootstrapassumptionfields} imply the inequality
\beq \label{eq:L1boundT}
\|\widetilde{Z}^J \widecheck{T}(t,\cdot)\|_{L^1(w_{1,\gamma})}
\lesssim \varepsilon^2 (1+t)^{\delta},\qquad |J|\leq N-3,
\eq
\item The improved decay bounds given in Lemma \ref{lem:h1improcedhormanderweakbound} and Proposition \ref{prop:wavecoorddecayhighlow}, as well as the improved energy bound \eqref{est:L2HLL} given by the wave coordinate condition, imply the improved bootstrap
\beq\label{QBoundsgeneric}
Q_N(T)\leq C\varepsilon^2,
\eq
as well as the decay bounds
\begin{subequations}
\begin{align}
|(\Lie_{\widetilde{X}}^I\widecheck{T})| &\lesssim \varepsilon^2\langle t+r^* \rangle^{-2}\langle t-r^*\rangle^{-2}\langle (t-r^*)_+\rangle^{1-2s},\qquad |I|\leq N-6\label{TBoundsLinftygeneric1},\\
|(\Lie_{\widetilde{X}}^I\widecheck{T})|_{\mathcal{U}\mathcal{T}} &\lesssim \varepsilon^2\langle t +r^*\rangle^{-2-s}\langle t-r^*\rangle^{-2+s}\langle (t-r^*)_+\rangle^{1-2s},\qquad |I|\leq N-6\label{TBoundsLinftygeneric2}
\end{align}
\end{subequations}
and the energy bound
\beq\label{TBoundsL2generic}
\int_{\Sigma_t}|(\Lie_{\widetilde{X}}^I\widecheck{T})|^2w\, dx \leq \frac{C\varepsilon^2}{1+t},\qquad |I|\leq N.
\eq
\end{enumerate}
Then the statement of Theorem \ref{thm:mainrestated} holds for the general Einstein-field system.
\end{theorem}
\begin{proof}
We prove this in the same way as Theorem \ref{thm:mainrestated}, first proving Proposition \ref{prop:theL1bound} directly using the bound \eqref{eq:L1boundT}. Consequently, the propagation of the bootstrap in Theorem \ref{thm:FieldBootstrap} follows from \eqref{QBoundsgeneric}. Finally, the bounds \eqref{TBoundsLinftygeneric1}, \eqref{TBoundsLinftygeneric1}, and \eqref{TBoundsL2generic} replace those given in section \ref{sec:MKGBootstrap}.
\end{proof}

\section{The weak decay of the metric assuming weak energy bounds}\label{sec:ProofMetricDecay}

In this section we will prove the weak decay estimates of the metric assuming the apriori energy bounds in the previous section.

\subsubsection{The weak decay of the metric and the fields using Klainerman-Sobolev}\label{sec:ProofMetricDecayWeak}

It follows from the weighted Klainerman-Sobolev and the assumed bounds
that
\beq
|\pa \widetilde{Z}^J {h}^1(t,x)|+|\widetilde{Z}^J \mathbf{F}^1|
\leq C\varepsilon (1+t)^{\delta/2}  (1+t+|q^*|)^{-1} (1+|q^*|)^{-1/2}
\langle q_+^*\rangle^{-1/2-\gamma},\qquad |J|\leq N-3.
\eq
Let us for simplicity now assume $\gamma>1/2$.
Integrating this from initial data in the $t-r$ direction also gives
\beq\label{eq:h1weakbound}
|\widetilde{Z}^J h^1(t,x)|
\leq C\varepsilon (1+t)^{\delta/2}  (1+t+|q^*|)^{-1} (1+|q^*_-|)^{1/2}
 \langle q_+^*\rangle^{1/2-\gamma},\qquad |J|\leq N-3.
\eq

\subsubsection{The $L^1$ bound of the inhomogeneous terms}\label{sec:theL1bound}
We now prove that weak $L^2$ bounds for the metric and fields imply weak $L^1$ bounds for the semilinear terms and energy-momentum tensor. This will be used in the following section to get an improved weak decay estimate.

\begin{proposition}\label{prop:theL1bound}
The bootstrap assumptions \eqref{eq:bootstrapassumptionmetric}-\eqref{eq:bootstrapassumptionfields} imply the bound
\beq \label{eq:L1bound}
\|\widetilde{Z}^J {F}(g)[\pa h,\pa h](t,\cdot)\|_{L^1(w_{1,\gamma})}
+ \|\widetilde{Z}^J {T}(t,\cdot)\|_{L^1(w_{1,\gamma})}
\lesssim \varepsilon^2 (1+t)^{\delta},\qquad |J|\leq N-3.
\eq
\end{proposition}
\begin{proof}
The first part of \eqref{eq:L1bound} is easy so let us first do the second part. Again the part for
$\mathbf{F}^1$ follows directly from the $L^2$ bounds although there are some issues
with subtracting of the charge in order to get the bound for $\mathbf{F}$.
These are similar to subtracting of the mass for $h$. The main difficulty is
therefore the estimate for $\bphi$.

We have
\beq
Q_{\alpha\beta}[D\bphi] = \mathfrak{R}\bigtwo(D_\alpha\bphi\cdot\overline{D_\beta\bphi}
 -  g_{\alpha\beta} g^{\mu\nu}D_\mu\bphi\cdot\overline{D_\nu\bphi}/2\bigtwo),
\eq
where $D_a\bphi=\wpa_a\bphi+i A_a \bphi$ and
\beq
\widetilde{Z} \, \mathfrak{R}\bigtwo(D_\alpha\bphi\cdot\overline{D_\beta\bpsi}\bigtwo)
=\mathfrak{R}\bigtwo(D_{\widetilde{Z}} D_\alpha  \bphi\cdot\overline{D_\beta\bpsi}\bigtwo)
+\mathfrak{R}\bigtwo(D_\alpha\bphi\cdot
\overline{D_{\widetilde{Z}}D_\beta \bpsi}\bigtwo).
\eq
In view of \eqref{eq:h1weakbound} it follows that for $|J|\leq N-3$
\beq
|\widetilde{Z}^J Q_{\alpha\beta}[D\bphi] |
\lesssim {\sum}_{|K|\leq |J|} |\!D_{\widetilde{Z}}^K\widetilde{D} \bphi|^2 .
\eq

We have
\beq
Q_{\alpha\beta}[\bF]
= g^{\mu\nu}\bF_{\alpha \mu} \bF_{\beta\nu}
-\frac{1}{4} g_{\alpha\beta} g^{\mu\nu} g^{\gamma\delta}
\bF_{\mu\gamma} \bF_{\nu\delta}  ,
\eq
and hence in view of \eqref{eq:h1weakbound} it follows that for $|J|\leq N-3$
\beq
|\widetilde{Z}^J Q_{\alpha\beta}[\bF] |
\lesssim {\sum}_{|K|\leq |J|} |{\widetilde{Z}}^K\bF|^2
\lesssim {\sum}_{|K|\leq |J|} |{\widetilde{Z}}^K\bF^1|^2
+{\sum}_{|K|\leq |J|} |{\widetilde{Z}}^K\bF^0|^2 .
\eq
We also have
\beq
|\widetilde{Z}^J F(g)[\pa h,\pa h]|
\lesssim {\sum}_{|K|\leq |J|} |{\widetilde{Z}}^K \pa h|^2
\lesssim {\sum}_{|K|\leq |J|} |{\widetilde{Z}}^K \pa h^1|^2
+{\sum}_{|K|\leq |J|} |{\widetilde{Z}}^K  \pa h^0|^2 .
\eq
We have
\beq
{\sum}_{|K|\leq |J|} |{\widetilde{Z}}^K\bF^0|^2
+{\sum}_{|K|\leq |J|} |{\widetilde{Z}}^K  \pa h^0|^2
\lesssim  \frac{\varepsilon^2}{(1+t+r)^{4}} .
\eq
It follows that
\beq
{\sum}_{|K|\leq |J|} \int  |{\widetilde{Z}}^K\bF^0|^2\!
+ |{\widetilde{Z}}^K  \pa h^0|^2 \, \, w_{1,\gamma}  dx
\lesssim \frac{\varepsilon^2}{(1+t)^{1-\gamma}},\qquad \gamma<1.
\eq
This together with the bootstrap assumptions
\eqref{eq:bootstrapassumptionmetric}-\eqref{eq:bootstrapassumptionfields} proves
\eqref{eq:L1bound}, since $w_{1, \gamma}\lesssim w$.
\end{proof}

\subsubsection{The improved weak decay of the metric and the fields using H\"ormander's $L^1-L^\infty$ estimate}\label{sec:ImprovedWeakDecay}
We will now show that
\begin{lemma}\label{lem:h1improcedhormanderweakbound}
\beq\label{eq:h1improcedhormanderweakbound}
|\widetilde{Z}^J h^1(t,x)|
\leq C\varepsilon (1+t)^{\delta}  (1+t+|q^*|)^{-1}
 \langle q_+^*\rangle^{-\gamma},\qquad |J|\leq N-5,
\eq
and
\beq\label{eq:h1improcedhormanderweakboundder}
(1\!+t\!+|q^*|)|\overline{\wpa} \widetilde{Z}^J h^1(t,x)|+\langle q^*\rangle |\wpa \widetilde{Z}^J h^1(t,x)|
\leq C\varepsilon (1\!+t)^{\delta}  (1\!+t\!+|q^*|)^{-1}
 \langle q_+^*\rangle^{-\gamma}\!\!,\qquad |J|\leq N\!-6.
\eq
\end{lemma}
\begin{proof} \eqref{eq:h1improcedhormanderweakboundder} follows directly from \eqref{eq:h1improcedhormanderweakbound}.
To prove  \eqref{eq:h1improcedhormanderweakbound} we will apply Proposition \ref{prop:hormander} and Lemma
\ref{lemma:linearhomogenousdecayweighted}
 to $\Box^* \widehat{\mathcal{L}}_{\widetilde{X}}^I
\widetilde{h}^{1}$. Applying ${\mathcal{L}}_{\widetilde{X}}^I$
to \eqref{eq:newcoordeinsteinwithRerrors}
 using Lemma \ref{lem:TheCommutatorLemma} and Lemma \ref{lem:errorlemma}
\begin{multline}\label{eq:LiecommutatortildekapparepeatstarFineq}
	|\Box^* \widehat{\mathcal{L}}_{\widetilde{X}}^I
\widetilde{h}^{1}|
	\lesssim \bigtwo|\Box^* \widehat{\mathcal{L}}_{\widetilde{X}}^I
\widetilde{h}_{cd}^1
-\Lie_{\widetilde{X}}^I \big( \kappa\widetilde{\Box} \widetilde{h}^1_{cd}\big)\bigtwo|
+\big|{\mathcal{L}}_{\widetilde{X}}^I\widetilde{F}(\tg)[\wpa \th,\wpa \th]\big|
+ |{\mathcal{L}}_{\widetilde{X}}^I\widetilde{\widecheck{T}}|
+ \big|\Lie_{\widetilde{X}}^I\widetilde{R}^{\, mass}\big|
+\big|\Lie_{\widetilde{X}}^I\widetilde{R}^{\, cov}\big|\\
\lesssim {\sum}_{|J|+|K|\leq |I|+1,\, |J|\leq  |I|}
\bigtwo(\frac{M\ln{\langle t+r\rangle}\!\!}{\langle t+r\rangle^2}
+\frac{|\widehat{\mathcal{L}}_{\widetilde{X}}^J\widetilde{H}_1|}{\langle t-r^*\rangle}\bigtwo)
\big|\wpa\widehat{\Lie}^{K}_{\widetilde{X}}\widetilde{h}^{1}\big|
+\big|{\mathcal{L}}_{\widetilde{X}}^I\widetilde{F}(\tg)[\wpa \th,\wpa \th]\big|
+ |{\mathcal{L}}_{\widetilde{X}}^I\widetilde{\widecheck{T}}|\\
+ \frac{M H(r \!<\! t/2)}{\langle t+r\rangle^3}+\frac{M^2 }{\langle t+r\rangle^4}
+\frac{M\big(\ln{\langle t\!+r\rangle}+1\big)}{\langle t\!+r\rangle^3}
{\sum}_{|J|\leq |I|}\big|\widetilde{Z}^J\widetilde{H}_1\big|
+\frac{M\big(\ln{\langle t\!+r\rangle}+1\big)}{\langle t\!+r\rangle^2}
{\sum}_{|J|\leq |I|}\big|\wpa\widetilde{Z}^J\widetilde{H}_1\big|.
\end{multline}
We have already estimated $\big|{\mathcal{L}}_{\widetilde{X}}^I\widetilde{F}(\tg)[\wpa \th,\wpa \th]\big|
+ |{\mathcal{L}}_{\widetilde{X}}^I\widetilde{\widecheck{T}}|$
in the weighted $L^1$ norm,
and using Cauchy Schwarz the other terms are bounded by
\beq
\frac{M}{\langle t+r\rangle^3}H(r\! <\! t/2)+\frac{M^2(1+\ln{\langle t+r\rangle})^2}{\langle t+r\rangle^4}
+{\sum}_{|K|\leq |I|+1}\big|\wpa
\widehat{\Lie}^{K}_{\widetilde{X}}\widetilde{h}^{1}\big|^2 +
{\sum}_{|J|\leq |I|}\Big(
\frac{|\widehat{\mathcal{L}}_{\widetilde{X}}^J\widetilde{H}_1|}{\langle t-r^*\rangle}\Big)^2.
\eq
We already dealt with terms similar to the first two terms and the first sum in Section \ref{sec:ImprovedWeakDecay}.
For the first term one uses that it vanishes when $r>t$ so that it can absorb any exterior weight.
It therefore only remains to take the weighted $L^1$ norm of the last sum. However,
this follows from that $H_1=-h^1+O(h^2)$ and Corollary \ref{cor:poincare}, since
$w_{1, \gamma}\lesssim w$. In conclusion we have
\beq
\int |\Box^* \widehat{\mathcal{L}}_{\widetilde{X}}^I
\widetilde{h}^{1}|\, w_{1, \gamma}\, dx \lesssim  \varepsilon (1+t)^{\delta}.
\eq
This together with Proposition \ref{prop:hormander} and Lemma \ref{lemma:linearhomogenousdecayweighted}
gives \eqref{eq:h1improcedhormanderweakbound}.
\end{proof}

\section{The decay and $L^2$ bounds from the wave coordinate condition}\label{sec:wavecoordbounds}
In this section we prove improved decay and $L^2$ estimates for the components of the metric controlled by the wave coordinate condition. These are needed both the sharp energy estimate for the metric in section \ref{sec:ProofMetricEnergy} and for the energy estimates for the field in section \ref{sec:ChrisResults}.

\subsection{The strong decay estimates from the wave coordinate condition}\label{subsec:WCCDecay}
By \eqref{eq:higherwavecoordinateLiederivativeH1Wsec7}
we have
\begin{multline}
|\pa_{q^*} \widehat{\mathcal L}_{\widetilde{Z}}^I
\widetilde{H}_1|_{\widetilde{L}\widetilde{\mathcal T}}
+|\pa_{q^*}  \trs\widehat{\mathcal L}_{\widetilde{Z}}^I \widetilde{H}_1 |
\les |\overline{\wpa} \widehat{\mathcal L}_{\widetilde{Z}}^I \widetilde{H}_1|
+ {\sum}_{|J|+|K|\leq |I|} \big|  \widehat{\mathcal L}_{\widetilde{Z}}^{J} \widetilde{h}^1\big|
    \big|\wpa  \widehat{\mathcal L}_{\widetilde{Z}}^{K} \widetilde{h}^1\big|\\
    + \frac{M\big|\chi^\prime\big(\tfrac{r}{t+1}\big)\big|}{(1+t+r)^2}
    + \frac{M}{(1+t+r)^3}+ \frac{M(1+\ln{\langle t+r\rangle})}{(1+t+r)^2}
    {\sum}_{|J|\leq |I|} \big|  \widehat{\mathcal L}_{\widetilde{Z}}^{J}\widetilde{H}_{1}\big|
    + \frac{M}{1+t+r}
    {\sum}_{|J|\leq |I|} \big| \wpa \widehat{\mathcal L}_{\widetilde{Z}}^{J}\widetilde{H}_{1}\big|,
\label{eq:higherwavecoordinateLiederivativeH1W}
\end{multline}
where $\chi^\prime(s)$, is a function supported when $1/4\leq s\leq 1/2$.

\subsubsection{The additional decay estimates from the wave coordinate condition}

\begin{proposition}\label{prop:wavecoorddecayhighlow} For $|I|\leq N-6$ we have
\begin{equation}
|\pa_{q^*} \widehat{\mathcal L}_Z^I{H}_1|_{L\mathcal T}
+|\pa_{q^*}  \trs\widehat{\mathcal L}_Z^I {H}_1 |\les
\varepsilon(1+t+r)^{-2+2\delta}\langle q_+^*\rangle^{-\gamma}\langle q^*\rangle^{-\delta},
\label{eq:wavecoordinatederivativeLie}
\end{equation}
\begin{equation}
 |\widehat{\mathcal L}_Z^I{H}_1|_{L\mathcal T}\!
+|\trs\widehat{\mathcal L}_Z^I {H}_1 |\!
\les \varepsilon
(1\!+t\!+r)^{-1-\gamma+\delta}\!\! +
\varepsilon(1\!+t)^{-2+2\delta}\langle q^*_-\rangle^{1-\delta}\!
\les \varepsilon(1\!+t\!+r)^{-1-\gamma+\delta} \langle q^*_-\rangle^{\gamma}\! .
\label{eq:wavecoordinatefunctionLie}
\end{equation}
\end{proposition}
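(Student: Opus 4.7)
My plan is to establish \eqref{eq:wavecoordinatederivativeLie} first by feeding the weak decay bounds \eqref{eq:h1improcedhormanderweakbound}--\eqref{eq:h1improcedhormanderweakboundder} into the wave coordinate identity \eqref{eq:higherwavecoordinateLiederivativeH1W}, and then to obtain \eqref{eq:wavecoordinatefunctionLie} by integrating the resulting $\pa_{q^*}$ bound along the integral curves of $\pa_{q^*}$, which are exactly the lines $\{t+r^*=v^*,\,\omega=\text{const}\}$ foliating the upper half-space from the initial slice.

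For the $\pa_{q^*}$ estimate I will bound each term on the right of \eqref{eq:higherwavecoordinateLiederivativeH1W} in turn. The tangential-derivative term $|\overline{\pa}\widehat{\mathcal L}_Z^I \widetilde H_1|$ is majorized using $\widetilde H_1 = -\widetilde h^1 + O(\widetilde h^2)$ together with \eqref{eq:h1improcedhormanderweakboundder}, and in fact yields a bound stronger than the target. The quadratic nonlinearity $|\widehat{\mathcal L}_Z^J \widetilde h^1|\,|\pa \widehat{\mathcal L}_Z^K \widetilde h^1|$ with $|J|+|K|\leq |I|$ is controlled by the product bound $\varepsilon^2(1+t)^{2\delta}(1+t+r)^{-2}\langle q^*\rangle^{-1}\langle q_+^*\rangle^{-2\gamma}$. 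The explicit mass errors carrying $\chi'\bigl(\tfrac{r}{t+1}\bigr)$ are supported where $\langle q^*\rangle\sim \langle t+r\rangle$ and contribute at most $M\langle t+r\rangle^{-2}$; the remaining $M$-weighted terms involving $|\widehat{\mathcal L}_Z^J H_1|$ and $|\pa\widehat{\mathcal L}_Z^J H_1|$ carry the extra smallness $M/\langle t+r\rangle$ and are absorbed using the weak decay applied to $\widetilde H_1 = -\widetilde h^1 + O(\widetilde h^2)$. Summing these, and using $\gamma+\delta<1$ together with the smallness of $M$ and $\varepsilon$, gives \eqref{eq:wavecoordinatederivativeLie}.

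For \eqref{eq:wavecoordinatefunctionLie} I will use the fundamental theorem of calculus along the integral curves of $\pa_{q^*} = (\pa_{r^*}-\pa_t)/2$, which preserve $v^* = t+r^*$ and $\omega$, with $q^*$ evolving linearly. The curve through $(t, r^*, \omega)$ meets the initial slice at $(0, v^*, \omega)$, and the identity reads
\[
\widehat{\mathcal L}_Z^I \widetilde H_1(t, r^*, \omega) = \widehat{\mathcal L}_Z^I \widetilde H_1(0, v^*, \omega) + \int_{q^*}^{v^*}\bigl(\pa_{q^*}\widehat{\mathcal L}_Z^I \widetilde H_1\bigr)\big|_{v^*,\omega}\, dq.
\]
The boundary contribution is $\lesssim \varepsilon(1+v^*)^{-1-\gamma}$ by the asymptotic flatness \eqref{eq:asymptoticallyflatdata}. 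Since $(1+t+r^*)\equiv 1+v^*$ along the curve, inserting \eqref{eq:wavecoordinatederivativeLie} reduces matters to evaluating $\int_{q^*}^{v^*}\langle q_+\rangle^{-\gamma}\langle q\rangle^{-\delta}\,dq$, which is $\lesssim (1+v^*)^{1-\gamma-\delta}$ in the exterior $q^*\geq 0$ and $\lesssim \langle q^*_-\rangle^{1-\delta}+(1+v^*)^{1-\gamma-\delta}$ in the interior $q^*<0$. Multiplying by $(1+v^*)^{-2+2\delta}$ produces the two terms stated in \eqref{eq:wavecoordinatefunctionLie}; the final rearrangement into $\varepsilon(1+t+r)^{-1-\gamma+\delta}\langle q^*_-\rangle^\gamma$ follows from $\langle q^*_-\rangle\leq 1+t\sim 1+t+r$ in the interior together with $1-\gamma-\delta\geq 0$.

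The hardest bookkeeping will be the interplay between the projection onto the null-frame components $L\mathcal T$ and $\trs$ and the differential operators $\pa_{q^*}$ and $\widehat{\mathcal L}_Z^I$: commuting these produces connection-type corrections proportional to $H_1/r$ and to derivatives of the frame vectors, which are nominally lower order but require care so as not to upset the decay accounting. These correction terms have the same structural form as terms already present on the right-hand side of \eqref{eq:higherwavecoordinateLiederivativeH1W}, and the weak decay bounds on $\widetilde h^1$ and $\widetilde H_1$ established earlier are sufficient to absorb them.
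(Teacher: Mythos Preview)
Your proposal is correct and follows essentially the same route as the paper: plug the weak decay bounds \eqref{eq:h1improcedhormanderweakbound}--\eqref{eq:h1improcedhormanderweakboundder} into the wave-coordinate identity \eqref{eq:higherwavecoordinateLiederivativeH1W} to obtain \eqref{eq:wavecoordinatederivativeLie}, then integrate in $q^*$ from the initial slice to get \eqref{eq:wavecoordinatefunctionLie}. The only minor difference is that the paper justifies the initial boundary term via Sobolev's lemma applied to the bounded initial energy (giving $\varepsilon(1+r)^{-1-\gamma+\delta}$) rather than by invoking the asymptotic-flatness condition directly; either works here.
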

\begin{proof} By \eqref{eq:higherwavecoordinateLiederivativeH1W} and
\eqref{eq:h1improcedhormanderweakbound}-\eqref{eq:h1improcedhormanderweakboundder}
we have for $|I|\leq 6$
\begin{multline}
|\pa_{q^*} \widehat{\mathcal L}_{\widetilde{Z}}^I
\widetilde{H}_1|_{\widetilde{L}\widetilde{\mathcal T}}
+|\pa_{q^*}  \trs\widehat{\mathcal L}_{\widetilde{Z}}^I \widetilde{H}_1 |
\lesssim \frac{\varepsilon (1+t)^{\delta}}{ (1+t+|q^*|)^2}
 \langle q_+^*\rangle^{-\gamma}
+\frac{\varepsilon^2 (1+t)^{2\delta} }{ (1+t+|q^*|)^{2}\langle q^*\rangle \langle q_+^*\rangle^{2\gamma}}
\\
    + \frac{M\big|\chi^\prime\big(\tfrac{r}{t+1}\big)\big|}{(1+t+r)^2}
    + \frac{M}{(1+t+r)^3}+ \frac{M\varepsilon(1+\ln{\langle t+r\rangle})(1+t)^{\delta}}{(1+t+r)^3\langle q_+^*\rangle^{\gamma}}
    + \frac{M\varepsilon(1+t)^{\delta}}{(1+t+r)^2\langle q^*\rangle\langle q_+^*\rangle^{\gamma}},
\label{eq:higherwavecoordinateLiederivativeH1W2}
\end{multline}
from which the first part follows. The second follows from integrating the first
in the $q^*$ direction from initial data where it is bounded by
$ \varepsilon(1\!+r)^{-1-\gamma+\delta}$,
because of Sobolev's lemma and the fact the energy is bounded initially.
\end{proof}

\subsection{The additional $L^2$ bounds from the wave coordinate condition}\label{subsec:WCCEnergy}
 Using \eqref{eq:h1improcedhormanderweakbound}-\eqref{eq:h1improcedhormanderweakboundder}
in \eqref{eq:higherwavecoordinateLiederivativeH1W} and the fact that $H_1=-h^1+O(h^2)$ we have
\begin{multline}
|\pa_{q^*} \widehat{\mathcal L}_{\widetilde{Z}}^I
\widetilde{H}_1|_{\widetilde{L}\widetilde{\mathcal T}}
+|\pa_{q^*}  \trs\widehat{\mathcal L}_{\widetilde{Z}}^I \widetilde{H}_1 |
\les |\overline{\wpa} \widehat{\mathcal L}_{\widetilde{Z}}^I \widetilde{h}^1|\\
+ \frac{M(1\!+\ln{\,\langle t\!+r\rangle})\!+
\varepsilon(1\!+t)^{\delta}\langle q_+^*\rangle^{-\delta}\!\!\!\!}
    {(1+t+r)}
    \sum_{|J|\leq |I|}\!\!\!\Big(
  \frac{1}{\langle q^*\rangle}
   \big|  \widehat{\mathcal L}_{\widetilde{Z}}^{J}\widetilde{h}^{1}\big|
  +\big| \wpa \widehat{\mathcal L}_{\widetilde{Z}}^{J}\widetilde{h}^{1}\big|\Big)
    + \frac{M\big|\chi^\prime\big(\tfrac{r}{t+1}\big)\big|}{(1+t+r)^2}
    + \frac{M}{(1+t+r)^3} .
\label{eq:higherwavecoordinateLiederivativeH1Wsimplified}
\end{multline}
The following $L^2$ will be used both in the energy estimate for the fields and for the metric:
\begin{lemma}\label{lem:wavecoordL2est0} If  $|I|\geq 6$ then
\begin{multline}
\int_0^T\!\!\!\int \big(	|\pa_{q^*} \widehat{\mathcal L}_{\widetilde{Z}}^I{H}_1|_{L\mathcal T}^2
+|\pa_{q^*}  \trs\widehat{\mathcal L}_{\widetilde{Z}}^I {H}_1 |^2\big) w'dx dt
\\
\lesssim
 \int_0^T\!\!\!\int |\wpao
 \widetilde{Z}^I\! h^{\!1} (t,x)|^2 w'dx dt +(M^2\!+\varepsilon^2)
 \Big(1+  \!\!{\sum}_{|J|\leq |I|}\int_0^T\!\!\!\!\!\frac{dt}{(1+t)^{2-2\delta}}\int |\wpa
 \widetilde{Z}^J\! h^{\!1} (t,x)|^2 w dx\Big) .
\end{multline}
\end{lemma}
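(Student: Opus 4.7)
The plan is to start from the pointwise bound \eqref{eq:higherwavecoordinateLiederivativeH1Wsimplified} derived from the wave-coordinate condition, square both sides, multiply by the spacetime weight $w'$, and integrate over $[0,T]\times\R^3$. The right-hand side of \eqref{eq:higherwavecoordinateLiederivativeH1Wsimplified} splits into four groups, each handled separately. The first term $|\overline{\pa}\widehat{\Lie}_{\widetilde Z}^I\widetilde h^1|$, once squared and integrated against $w'$, is exactly the $\int_0^T\!\int|\wpao\widetilde Z^I h^1|^2 w'\,dx\,dt$ appearing on the right of the claim. The two pure mass errors $M\chi'(r/(t+1))(1+t+r)^{-2}$ and $M(1+t+r)^{-3}$ are supported where $\langle q^*\rangle\lesssim\langle t+r\rangle$, so after squaring and multiplying by $w'$ they integrate (in spacetime) to a constant times $M^2$, absorbed into the leading $M^2$ in the claim.

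The two serious groups are the terms of the shape
\begin{equation*}
\frac{M(1+\ln\langle t+r\rangle)+\varepsilon(1+t)^\delta\langle q_+^*\rangle^{-\delta}}{1+t+r}\Big(|\wpa\widehat{\Lie}_{\widetilde Z}^J\widetilde h^1|+\tfrac{1}{\langle q^*\rangle}|\widehat{\Lie}_{\widetilde Z}^J\widetilde h^1|\Big),\qquad |J|\leq|I|.
\end{equation*}
After squaring, each inherits the prefactor $(M^2+\varepsilon^2(1+t)^{2\delta})(1+t+r)^{-2}\langle q_+^*\rangle^{-2\delta}$. In the exterior region, $\langle q_+^*\rangle^{-2\delta}w'\lesssim w/\langle q^*\rangle$ (since $w'\sim\langle q^*\rangle^{2\gamma}$ and $w\sim\langle q^*\rangle^{1+2\gamma}$), and in the interior $w'\lesssim w$. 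Combined with the $(1+t+r)^{-2}$ prefactor, spatial integration against $w$ followed by time integration produces the time weight $(1+t)^{-2+2\delta}$, which is precisely the factor in the statement. This disposes of the $|\wpa\widetilde Z^J h^1|^2$ contribution.

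For the $|\widetilde h^1|^2/\langle q^*\rangle^2$ piece, the weighted Poincaré estimate \eqref{cor1332} of Corollary \ref{cor:poincare} converts $\int|\phi|^2\langle q^*\rangle^{-2}(1+t+r)^{b-1}w/(1+q^*_-)^{2\mu}\,d\widetilde x$ into $\int|\pa_{r^*}\phi|^2 w'\,d\widetilde x$, provided $-1\leq b\leq 1$, $b<2\gamma$, and $\mu>0$ — all satisfied by our choice of parameters. Taking $\phi=\widehat{\Lie}_{\widetilde Z}^J\widetilde h^1$ and choosing $b$ so that the $(1+t+r)^{b-1}$ matches (up to a power of $(1+t)$) the $(1+t+r)^{-2}$ prefactor, one obtains the bound $(1+t)^{-2+2\delta}\int|\wpa\widetilde Z^J h^1|^2 w\,d\widetilde x$. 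Summing over $|J|\leq|I|$ yields the claimed inequality.

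The main delicacy is keeping the exterior weight accounting straight when passing from $w'$ to $w$: the factor of $\langle q^*\rangle$ lost in the transition must be paid for by the $\langle q_+^*\rangle^{-2\delta}$ gained from the weak decay estimate \eqref{eq:h1improcedhormanderweakboundder} together with the spatial prefactor $(1+t+r)^{-2}$, which only works because the pointwise bound \eqref{eq:higherwavecoordinateLiederivativeH1Wsimplified} itself contains that decay. The hypothesis $|I|\geq 6$ enters through the need to apply the $L^\infty$ bounds \eqref{eq:h1improcedhormanderweakbound}–\eqref{eq:h1improcedhormanderweakboundder} (valid for $|J|\leq N-6$) when simplifying the quadratic wave-coordinate remainder $|\widehat{\Lie}^J\widetilde h^1||\pa\widehat{\Lie}^K\widetilde h^1|$ underlying \eqref{eq:higherwavecoordinateLiederivativeH1Wsimplified}: one factor is placed in $L^\infty$ and the other in the energy norm, producing exactly the $\varepsilon^2(1+t)^{2\delta}$ decoration on the coefficient.
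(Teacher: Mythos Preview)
Your overall strategy—square \eqref{eq:higherwavecoordinateLiederivativeH1Wsimplified}, multiply by $w'$, and integrate—matches the paper's one-line proof exactly, and your treatment of the tangential term, the two mass errors, and the $|\wpa\widehat{\Lie}_{\widetilde Z}^J\widetilde h^1|$ contribution is correct.

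The only divergence is in the $\langle q^*\rangle^{-1}|\widehat{\Lie}_{\widetilde Z}^J\widetilde h^1|$ step, where you invoke \eqref{cor1332} while the paper uses \eqref{cor1331}. Your statement of \eqref{cor1332} is off in several respects: the left side of \eqref{cor1332} carries an additional factor $\langle q^*\rangle^{-b}$ that you dropped, the range $-1\le b\le 1$ belongs to \eqref{cor1331} rather than \eqref{cor1332}, and the output of \eqref{cor1332} is weighted by $w'$, not by the $w$ you write. The paper's route is more direct and sidesteps all of this: since $w'\lesssim w$ and $(1+t+r)^{-2}\le(1+t)^{-2}$, one has
\[
\frac{(M^2+\varepsilon^2)(1+t)^{2\delta}}{(1+t+r)^2}\,\frac{|\widehat{\Lie}_{\widetilde Z}^J\widetilde h^1|^2}{\langle q^*\rangle^2}\,w'
\;\lesssim\;
\frac{M^2+\varepsilon^2}{(1+t)^{2-2\delta}}\,\frac{|\widehat{\Lie}_{\widetilde Z}^J\widetilde h^1|^2}{\langle q^*\rangle^2}\,w,
\]
and then \eqref{cor1331} with $b=1$ (so no $(1+t+r^*)$ weight remains) converts $\int\langle q^*\rangle^{-2}|\phi|^2\,w$ into $\int|\pa_{r^*}\phi|^2\,w$ directly, valid for every $\gamma>0$. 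Your route via \eqref{cor1332} can be repaired with a careful choice of $b$, but it is a detour, and in the vacuum regime $\gamma\le 1/2$ the constraint $b<2\gamma$ may obstruct the value of $b$ your ``matching'' heuristic suggests.
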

\begin{proof}
Using the first part of Corollary \ref{cor:poincare} and the fact that $w'\sim w\langle q^*\rangle^{-1}\langle q_-^*\rangle^{\,-2\mu}$ in \eqref{eq:higherwavecoordinateLiederivativeH1Wsimplified}
gives the result.
\end{proof}
The following $L^2$ bounds are needed estimates for the fields.
\begin{lemma}\label{lem:wavecoordL2est} If $0\leq b\leq \gamma $ and $|I|\geq 6$ then
\begin{multline}
\int_0^T\!\!\!\int \Big(	\frac{| \widehat{\mathcal L}_{\widetilde{Z}}^I{H}_1|_{L\mathcal T}^2}{\langle q^*\rangle^2}
+\frac{|  \trs\widehat{\mathcal L}_{\widetilde{Z}}^I {H}_1 |^2}{\langle q^*\rangle^2}\Big) \Big(\frac{\langle q^*\rangle}{\langle\, t\!+r^*\rangle}\Big)^{1-b} w^\prime  \, dx dt
\\
\lesssim
 \int_0^T\!\!\!\int |\wpao
 \widetilde{Z}^I\! h^{\!1} (t,x)|^2 w'dx dt +(M^2\!+\varepsilon^2)
 \Big(1+  \!\!{\sum}_{|J|\leq |I|}\int_0^T\!\!\!\!\!\frac{dt}{(1+t)^{2-2\delta}}\int |\wpa
 \widetilde{Z}^J\! h^{\!1} (t,x)|^2 w dx\Big) .
\end{multline}
\end{lemma}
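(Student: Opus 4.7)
\smallskip
\noindent\textbf{Plan.} Lemma \ref{lem:wavecoordL2est} is a Hardy-type $L^2$ upgrade of Lemma \ref{lem:wavecoordL2est0}: it promotes a spacetime bound on $\pa_{q^*}$ of the $L\mathcal T$ and trace components of $\widehat{\mathcal L}_Z^I H_1$ to a spacetime bound on those components themselves divided by $\langle q^*\rangle$, with the Hardy trade-off weight $(\langle q^*\rangle/\langle t+r^*\rangle)^{1-b}$. The tool I would use is the weighted one-dimensional Poincar\'e estimate \eqref{cor1332} of Corollary \ref{cor:poincare}, applied at each fixed $t$ to the scalar components $\phi=(\widehat{\mathcal L}_Z^I H_1)_{\wL\widetilde T}$ (for $\widetilde T\in\widetilde{\mathcal T}$) and $\phi=\trs\widehat{\mathcal L}_Z^I H_1$, and then integrated over $t\in[0,T]$.

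\smallskip
\noindent The first step is to check that the weight on the left-hand side of \eqref{cor1332} matches $\langle q^*\rangle^{-2}(\langle q^*\rangle/\langle t+r^*\rangle)^{1-b} w'$ up to a uniform constant. Using $w'\sim w\langle q^*\rangle^{-1}\langle q_-^*\rangle^{-2\mu}$, both weights reduce, up to constants, to $\langle q^*\rangle^{2\gamma-1-b}\langle t+r^*\rangle^{b-1}$ in the exterior ($q^*\ge 0$) and to $\langle q^*\rangle^{-2-b-2\mu}\langle t+r^*\rangle^{b-1}$ in the interior ($q^*<0$). With that matching, \eqref{cor1332} reduces the left-hand side of the lemma to $\int_0^T\!\!\int|\pa_{r^*}\phi|^2 w'\,d\widetilde x\,dt$. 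I would next split $\pa_{r^*}=\pa_{q^*}+\pa_{s^*}$ with $\pa_{s^*}=\wL/2$ tangential to the outgoing light cones. The $\pa_{q^*}$-piece of $\pa_{r^*}\phi$, after the $O(r^{-1})$ product-rule commutators from $\pa_{r^*}$ hitting the contracted null frame vectors are separated out, is exactly the quantity bounded by Lemma \ref{lem:wavecoordL2est0}. The $\pa_{s^*}$-piece, being tangential, is bounded by $|\wpao\widehat{\mathcal L}_Z^I H_1|$; I would then convert this to $|\wpao\widehat{\mathcal L}_Z^I h^1|$ via the pointwise identity $H_1=-h^1+O(h^2)$ together with the weak $L^\infty$ bootstrap bounds \eqref{eq:h1improcedhormanderweakbound}--\eqref{eq:h1improcedhormanderweakboundder}, which place the quadratic correction into the reservoir on the right-hand side with a prefactor $\varepsilon(1+t)^\delta\langle t+r\rangle^{-1}$. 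The residual $r^{-1}|\widehat{\mathcal L}_Z^I H_1|$ commutator terms are absorbed by applying the first Hardy inequality \eqref{cor1331}, again trading $H_1$ for $h^1$, producing exactly the $(M^2+\varepsilon^2)(1+t)^{-2+2\delta}\!\int|\wpa\widetilde Z^J h^1|^2 w\,dx$ contribution.

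\smallskip
\noindent The main technical point will be the opening weight check and the admissibility conditions $\gamma-b/2>0$ and $\mu+b/2>-1/2$ required by the one-dimensional Poincar\'e Lemma \ref{est:1dpoincare}; the former forces the restriction $b\le\gamma$ in the statement, while the latter is automatic for $\mu>0$ and $b\ge 0$. After this, the argument is a routine combination of Lemma \ref{lem:wavecoordL2est0}, the first Hardy inequality \eqref{cor1331}, and the algebraic relation between $H_1$ and $h^1$: each of the explicit source terms on the right-hand side of \eqref{eq:higherwavecoordinateLiederivativeH1Wsimplified}, namely the cutoff term $M\chi'(r/(t+1))\langle t+r\rangle^{-2}$, the pure-power term $M\langle t+r\rangle^{-3}$, and the product terms $\langle t+r\rangle^{-1}\bigl(M\ln\langle t+r\rangle+\varepsilon(1+t)^\delta\bigr)\bigl(\langle q^*\rangle^{-1}|\widehat{\mathcal L}^J h^1|+|\wpa\widehat{\mathcal L}^J h^1|\bigr)$, integrates against $w'$ with the expected decay to yield the $M^2$ piece and the time-integral reservoir on the right.
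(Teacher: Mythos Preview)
Your approach is correct and matches the paper's: the paper's one-line proof simply says to use the second part of Corollary~\ref{cor:poincare} (that is, \eqref{cor1332}) together with the weight identity $w'\sim w\langle q^*\rangle^{-1}\langle q_-^*\rangle^{-2\mu}$ and the previous lemma, which is exactly your plan of matching weights, applying \eqref{cor1332}, splitting $\pa_{r^*}=\pa_{q^*}+\pa_{s^*}$, and invoking Lemma~\ref{lem:wavecoordL2est0} for the $\pa_{q^*}$ piece while absorbing the tangential $\pa_{s^*}$ piece into the $|\wpao\widetilde Z^I h^1|^2 w'$ term on the right. One minor simplification: the null frame covectors $\wL_a,\ (\widetilde S_i)_a$ in the $\widetilde x$ coordinates depend only on $\omega$, so your ``$O(r^{-1})$ product-rule commutators from $\pa_{r^*}$ hitting the contracted null frame vectors'' are in fact identically zero and need not be tracked.
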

\begin{proof}
Using the second part of Corollary \ref{cor:poincare} and the fact that $w'\sim w\langle q^*\rangle^{-1}\langle q_-^*\rangle^{\,-2\mu}$ and the previous lemma
gives the result.
\end{proof}
Hence we have the additional apriori assumption
\beq\label{est:L2HLL}
\int_0^T\!\!\!\int 	|\wpa \widehat{\mathcal L}_{\widetilde{Z}}^I{H}_1|_{L\mathcal T}^2
+|\wpa\trs\widehat{\mathcal L}_{\widetilde{Z}}^I {H}_1 |^2 w'dx dt\les C \varepsilon^2 (1+T)^{\delta}.
\eq

The following $L^2$ estimates are needed for the energy estimates of the metric.

\beq
 \int_0^T\!\!\int_{\Sigma_t}\!
R_{tan}^{\,2}\frac{(1\!+t)^{1+2\nu}\!\!\!\!\!}{\varepsilon}\,\,w\, dx \,dt
\les {\sum}_{|J|\leq |I|} C\varepsilon\int_0^T\!\!\int_{\Sigma_t}\!\frac{ |\overline{\wpa} h^{1 J}|^2}{ \langle q^*\rangle^{1+2\nu}} \,\,w\, dx \,dt
\les  C\varepsilon S_N(T).
\eq

\section{The sharp energy and decay estimates for the fields}\label{sec:sharpdecayfields}
We can now appeal to the results of Section \ref{sec:ChrisResults} to further bound the metric estimates in the bootstrap assumption. We first show that the bootstrap assumption implies the required bound on metric components.

\begin{proposition}
For sufficiently small $\varepsilon$, the bounds \eqref{MetLIint} and \eqref{MetL2int} follow from \eqref{eq:bootstrapassumptionmetric}.
\end{proposition}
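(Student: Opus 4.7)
The plan is to extract each of the bounds in \eqref{MetLIint} and \eqref{MetL2int} from the bootstrap assumption \eqref{eq:bootstrapassumptionmetric} by combining tools already established: the improved Hörmander decay \eqref{eq:h1improcedhormanderweakbound}-\eqref{eq:h1improcedhormanderweakboundder}, the wave coordinate decay of Proposition \ref{prop:wavecoorddecayhighlow}, the $L^2$ wave coordinate estimates of Lemmas \ref{lem:wavecoordL2est0}-\ref{lem:wavecoordL2est}, and the equivalence between $h^1$ and $H_1$ components from Propositions \ref{prop:Hhequiv1}-\ref{prop:Hhequiv2}. The commutator fields $\wt{X}$ and $\wt{Z}$ being the same modified set, the various lemmas apply directly. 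Since $|I|\leq N-6$ in \eqref{MetLIint} while $E_N, S_N$ control $|I|\leq N$, low-order pointwise information may freely draw on high-order $L^2$ information via Klainerman-Sobolev.

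First I would treat \eqref{MetLIint1}-\eqref{MetLIint2}. The bound $M<\varepsilon_g$ follows from $M^2\leq \varepsilon^2$ once $\varepsilon\leq \varepsilon_g$. The pointwise estimate \eqref{MetLIint2} is exactly \eqref{eq:h1improcedhormanderweakbound}, valid for $|I|\leq N-5$, which gives $|\wt{Z}^I h^1|\lesssim \varepsilon (1+t)^\delta \langle t+r\rangle^{-1}\langle q_+^*\rangle^{-\gamma}$, stronger than required. For the improved null-frame decay \eqref{MetLIint3}, I apply Proposition \ref{prop:wavecoorddecayhighlow} to $H_1$, giving $|\Lie_{\wt{Z}}^I H_1|_{L\mathcal{T}}\lesssim \varepsilon\langle t+r\rangle^{-1-\gamma+\delta}\langle q_-^*\rangle^\gamma$, and then transfer back to $h^1$ via Proposition \ref{prop:Hhequiv1}. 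The resulting nonlinear error is $O(\varepsilon^2\langle t+r\rangle^{-2+2\delta})$ and is absorbed by smallness. The algebraic inequality $\langle q_-^*\rangle^\gamma \lesssim \langle t-r^*\rangle^\gamma$ shows the resulting bound matches the form of \eqref{MetLIint3}.

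For the $L^2$ bound \eqref{L21int}, the first term $\|\wpa \Lie_{\wt{X}}^I h^1\, w_\gamma^{1/2}\|_{L^2}$ is dominated by $E_N(t)^{1/2}\lesssim \sqrt{C_b}\,\varepsilon\,(1+t)^{\delta/2}$ directly, as $w_\gamma\lesssim w$. The second term involving $\langle r^*-t\rangle^{-1}\Lie_{\wt{X}}^I h^1$ follows from the first by the Hardy inequality \eqref{cor1331} of Corollary \ref{cor:poincare}, applied componentwise along radial rays. For \eqref{L22int}, the $\wpao$ part is controlled directly by $S_N(T)^{1/2}$, while the $|\wpa|_{\mathcal{L}\mathcal{L}}$ part comes from Lemma \ref{lem:wavecoordL2est0}: the spacetime integral of $|\pa_{q^*}\Lie^I H_1|_{L\mathcal{T}}^2\, w'$ is bounded by $S_N(T)+(M^2+\varepsilon^2)E_N(T)\int_0^T(1+t)^{-2+2\delta}\,dt\lesssim C_b\,\varepsilon^2\,(1+T)^\delta$. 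Proposition \ref{prop:Hhequiv2} then converts $H_1$ bounds to $h^1$ bounds, with a cubic nonlinear error that is lower order.

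The estimate \eqref{L23int} is the most delicate, as it carries no $(1+t)^{\delta/2}$ growth. Here I apply Lemma \ref{lem:wavecoordL2est} with the parameter $b=2s-1$, which lies in $(0,\gamma)$ under the hypotheses $\tfrac12<s<\tfrac{1+\gamma}{2}$ and $\gamma>\tfrac12$ of the theorem; then $(\langle q^*\rangle/\langle t+r^*\rangle)^{1-b}\langle q^*\rangle^{-2}=\langle q^*\rangle^{-2s}\langle t+r^*\rangle^{-2+2s}$, matching exactly the weight in \eqref{L23int}. The right hand side of Lemma \ref{lem:wavecoordL2est} is bounded by $S_N(T)+(M^2+\varepsilon^2)\int_0^T(1+t)^{-2+2\delta}E_N(t)\,dt$; since $(1+t)^{-2+3\delta}$ is integrable on $[0,\infty)$ for small $\delta$, this produces a $T$-uniform bound. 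The main obstacle will be the bookkeeping of constants: each step above produces a bound of the form $C(C_b)\varepsilon$ rather than $\varepsilon_g$, so one must fix $\varepsilon\leq \varepsilon_b'(C_b)$ with $\varepsilon_b'$ small enough that $C(C_b)\varepsilon_b'\leq \varepsilon_g$, exactly as outlined at the start of Section 13. With this choice every bound required in \eqref{MetLIint}-\eqref{MetL2int} holds and Theorem \ref{thm:ChrisResults} can be invoked.
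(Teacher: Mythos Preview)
Your handling of \eqref{MetLIint} and of \eqref{L21int}--\eqref{L22int} is essentially what the paper does. The gap is in your treatment of \eqref{L23int}.

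First, a weight mismatch: with $b=2s-1$, Lemma~\ref{lem:wavecoordL2est} controls
\[
\int_0^T\!\!\int |\widehat{\mathcal L}_{\widetilde Z}^I H_1|_{L\mathcal T}^2\,\langle q^*\rangle^{-2s}\langle t+r^*\rangle^{2s-2}\,w'\,dx\,dt,
\]
whereas \eqref{L23int} requires the weight $\langle q^*\rangle^{-2-2s}\langle t+r^*\rangle^{2s-2}$ (the exponent in \eqref{L23int} is $-1-s$, not $-s$). Your computed weight is too weak by a factor of $\langle q^*\rangle^{-2}$, so the two do not ``match exactly''.

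Second, and more fundamentally, even if the weights did match, the right-hand side of Lemma~\ref{lem:wavecoordL2est} is $S_N(T)$ plus a uniformly bounded lower-order piece, and under the bootstrap $S_N(T)\lesssim C_b\varepsilon^2(1+T)^{\delta}$ \emph{grows}. So the claim ``this produces a $T$-uniform bound'' is false as written. The paper avoids both problems by applying the Hardy inequality \eqref{cor1332} directly to $|h^1|_{\mathcal L\mathcal L}$ with $b=2s-1+2\delta$ (this supplies the missing $\langle q^*\rangle^{-2}$), landing on $\|\,|\wpa h^1|_{\mathcal L\mathcal L}\,(w')^{1/2}\|_{L^2}$ controlled by \eqref{est:L2HLL}. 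The extra $\langle t+r^*\rangle^{2\delta}$ generated by the shift in $b$ is traded for a prefactor $(T')^{-2\delta}$ on dyadic time intervals $[2^k,2^{k+1}]$; summing $\sum_k 2^{-k\delta}$ then gives the required $T$-uniform $O(\varepsilon^2)$ bound.
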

\begin{proof}
The bound \eqref{eq:bootstrapassumptionmetric} immediately implies
\beq
{\sup}_{t\in[0,T]}\lVert\wpa \Lie_{\wt{Z}}^Ih^1w^{1/2}\rVert_{L^2(\mathbb{R}^3)} + \lVert\slashed{\wpa}
\Lie_{\wt{Z}}^Ih^1w'_\gamma{}^{1/2}\rVert_{L^2{([0,T]\times\mathbb{R}^3)}}\leq C\varepsilon C_b^{1/2\!}(1\!+T)^{{\delta}\!/{2}}\!\!,
\eq
The estimate
\beq
\left\lVert{\tminusrstar}^{-1}|\Lie_{\widetilde{Z}}^Ih^{1\!}|w^{1/2}
\right\rVert_{L^2(\mathbb{R}^3)}\!\!\leq C\varepsilon C_b^{1/2\!}(1\!+T)^{{\delta}\!/{2}}\!\!,
\eq
follows directly from Corollary \ref{cor:poincare} in spherical coordinates.
Likewise, \eqref{est:L2HLL} gives
\beq
\!\left\lVert |\wpa\Lie_{\widetilde{Z}}^Ih^{1\!}|_{\mathcal{L}\mathcal{L}}
w'_\gamma{}^{1/2}
\right\rVert_{L^2(I_{T'}\!\times\mathbb{R}^3)}\!\!\leq C\varepsilon (1+T')^{\delta/2},
\eq
noting that the difference consists of similar components containing lower order derivatives. To prove \eqref{L23int} we convert to spherical coordinates and appeal to Corollary \ref{cor:poincare} more directly, specifically the estimate \eqref{cor1332}. For any $s \in (\tfrac12 + 4\delta, 1-4\delta)$ we take $b = 2s - 1 + 2\delta$, so $b\in(10\delta, 1-6\delta)$.
\begin{multline}
\lVert\langle t-\rs\rangle^{-1-s}\langle t+\rs\rangle^{s-1}|\Lie_{X}^Ih^{1\!}|_{\mathcal{L}\mathcal{L}}\rVert_{L^2([0,T'] \times\mathbb{R}^3)} \lesssim {T'}^{-\delta}\lVert\langle t-\rs\rangle^{-1-s}\langle t+\rs\rangle^{s-1+\delta}|\Lie_{\widetilde{Z}}^Ih^{1\!}|_{\mathcal{L}\mathcal{L}}\rVert_{L^2([0,T'] \times\mathbb{R}^3)}\\
\lesssim {T'}^{-\delta}\lVert\langle t-\rs\rangle^{-\tfrac32-\tfrac{b}{2} +\delta}\langle t+\rs\rangle^{\tfrac{b-1}{2}}|\Lie_{\widetilde{Z}}^Ih^{1\!}|_{\mathcal{L}\mathcal{L}}\rVert_{L^2([0,T'] \times\mathbb{R}^3)}.
\end{multline}
Since $w \geq 1$ and $\mu < 1-\gamma < \tfrac12 - 4\delta$, it follows that
\beq
\langle t-\rs\rangle^{-\tfrac12 +\delta} \lesssim \frac{w}{(1+q^*_-)^{\mu}},
\eq
so we may apply \eqref{cor1332} and \eqref{est:L2HLL}, then sum over dyadic regions in time $[2^k, 2^{k+1}]$.
\end{proof}

 It remains to show that the initial conditions \eqref{def:ID} follow from the initial assumption \eqref{eq:initialassumption} for sufficiently small $\varepsilon$.
\begin{proposition}\label{prop:BootstrapIC}
There exists a fixed constant $C$ such that
\beq
\lVert{\bf E}^{df}_0\rVert^2_{H^{N, 1/2+\gamma}} + \lVert{\bf B}_{0}\rVert^2_{H^{N, 1/2+\gamma}} + \lVert D\bphi_0 \rVert^2_{H^{N, 1/2+\gamma}} + \lVert\,\,\dot{\!\!\bphi}_0\rVert^2_{H^{N, 1/2+\gamma}}< C(1+C_b\varepsilon^2)Q_N(0).
\eq
\end{proposition}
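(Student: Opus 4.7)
The plan is to control the spatial-derivative norms in the initial data by $Q_N(0)$ by applying appropriate products of the modified commutator fields at $t=0$ and converting $\widetilde{D}\bphi, \bF^1$ to the physical initial data $D\bphi_0, \,\,\dot{\!\!\bphi}_0, {\bf E}_0^{df}, {\bf B}_0$; the deviation of the metric and connection from Minkowski will produce the $C_b\varepsilon^2$ correction.

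First I would observe that at $t=0$ the fields simplify to $\widetilde{\Omega}_{ij}|_{t=0} = \widetilde{x}^i\wpa_j - \widetilde{x}^j\wpa_i$, $\widetilde{S}|_{t=0} = r\wpa_r$, $\widetilde{\Omega}_{0i}|_{t=0} = \widetilde{x}^i\wpa_t$, and $\wpa_a$. The subfamily $\{\widetilde{S},\widetilde{\Omega}_{ij},\wpa_i\}$ spans weighted spatial derivatives: $\widetilde{S}^k$ equals $r^k\wpa_r^k$ plus lower-order terms, and $\widetilde{\Omega}_{ij}^k$ provides $r^k\swpa^k$ at leading order. An induction on the derivative count, absorbing commutator remainders into already-controlled lower-order terms, yields weighted bounds of the form
$$\int \langle r\rangle^{1+2\gamma+2|\alpha|}\bigl(|\underline{\wpa}^\alpha \widetilde{D}\bphi|^2 + |\underline{\wpa}^\alpha \bF^1|^2\bigr)\, dx \lesssim Q_N(0),$$
since $w(0,\cdot)\sim\langle r\rangle^{1+2\gamma}$ and the $r^{2|\alpha|}$ gains match exactly the weight $(1+r^2)^{1/2+\gamma+|\alpha|}$ of $H^{N,1/2+\gamma}$. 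For $\,\,\dot{\!\!\bphi}_0$ one additionally uses one factor of $\widetilde{\Omega}_{0i}$ to produce $\wpa_t\bphi|_{t=0}$ and builds up the remaining spatial weights with the previous fields.

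The passage from $\bF^1$ to ${\bf E}_0^{df}, {\bf B}_0$ uses that $\bF^0$ at $t=0$ is an exact exterior derivative of a potential $\sim q/r$, so $\bF^1_{0i}|_{t=0}$ coincides with the divergence-free part $({\bf E}_0^{df})_i$ while $({\bf B}_0)^i = \tfrac12\epsilon_{0ijk}\bF^{jk}|_{t=0}$ is unchanged by the subtraction; the passage from $\widetilde{D}\bphi$ to $D\bphi_0$ and $\,\,\dot{\!\!\bphi}_0$ is the identity modulo $i\widetilde{A}_a\bphi$. The main obstacle will be verifying the multiplier $1+C_b\varepsilon^2$ rather than a worse multiplicative constant: each replacement of $\widetilde{D}$ by $\wpa$ or substitution involving a non-Minkowski metric component introduces a factor of $\widetilde{A}$ or $g^{ab}-m^{ab}$, which by Propositions~\ref{prop:Hhequiv1}--\ref{prop:Hhequiv2} and~\ref{prop:MetCons} together with the pointwise restriction of \eqref{eq:bootstrapassumptionmetric} to $t=0$ is pointwise bounded by $C\,C_b^{1/2}\varepsilon$; an $L^2$--$L^\infty$ estimate (using Hardy's inequality to absorb the bare factor of $\bphi$ when needed) contributes at worst $C_b\varepsilon^2\, Q_N(0)$, yielding the claimed inequality.
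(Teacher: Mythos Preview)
Your general strategy of recovering the weighted spatial norms from $Q_N(0)$ by restricting the commutator fields to $t=0$ is sound and matches the paper for the scalar field pieces $D\bphi_0$, $\,\,\dot{\!\!\bphi}_0$ (the paper in fact dispatches these in one sentence, since the $H^{N,1/2+\gamma}$ norm for $\bphi$ is already defined with $\underline{\widetilde{D}}$ and the weights line up exactly).

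There is, however, a genuine gap in your treatment of ${\bf E}_0^{df}$. You assert that $\bF^1_{0i}|_{t=0}$ \emph{coincides} with $({\bf E}_0^{df})_i$; this is false. It is true that $\bF^0_{0i}|_{t=0}$ is curl--free (a radial gradient), but ${\bf E}^{df}$ is the Helmholtz divergence--free projection of ${\bf E}$, and $\bF^1_{0\cdot}|_{t=0}$ is \emph{not} divergence--free: its divergence is $\operatorname{div}{\bf E}-\operatorname{div}\bF^0_{0\cdot}=-(J^0-J^0_0)$, where $J^0_0$ is the artificial current supporting the cut--off Coulomb field $\bF^0$. The difference $({\bf E}_0^{df})_i-\bF^1_{0i}|_{t=0}$ is therefore the gradient of the solution to a Poisson equation with source $J^0-J^0_0$, and controlling it in $H^{N,1/2+\gamma}$ requires an elliptic estimate on this projection together with a Hardy--type inequality to bound the current $J^0=\mathfrak{I}(\bphi\overline{D_0\bphi})$ in the appropriate weighted space. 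This is precisely the argument the paper invokes (citing Lemma~10.1 of \cite{LS}); your proposal skips it entirely.

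A secondary point: for ${\bf B}_0$ you note correctly that $\bF^0_{ij}=0$, but the definition ${\bf B}_i=\varepsilon_{0ijk}\bF^{jk}$ raises indices with $g$, so when many derivatives land on the metric factor one must appeal to the $L^2$ bounds \eqref{MetL2int} on $h^1$, not just the pointwise bounds; the paper handles this case separately. Your $L^2$--$L^\infty$ scheme could likely be adapted to cover it, but as written it is not addressed.
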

\begin{proof}
The quantities $\lVert D\bphi_0 \rVert^2_{H^{N,1/2+\gamma}} + \lVert\,\,\dot{\!\!\bphi}_0\rVert^2_{H^{N, 1/2+\gamma}}$ follow immediately from the definitions. For the estimates on the fields, we first use the standard frame in modified coordinates. This allows us to convert between Lie derivatives $\Lie_{\widetilde{Z}}$ and standard derivatives $\widetilde{Z}$ without modifying the estimates, as the difference consists of lower order derivatives of the same weight. The estimate on $\lVert{\bf B}_{0}\rVert^2_{H^{N, 1/2+\gamma}}$ therefore follows from the definition as well as \eqref{MetLIint} in the case where most derivatives fall on $\bF^1$, or a Klainerman-Sobolev estimate similar to the worst components of $\bF^1$ in \eqref{est:DPhiDF1}, combined with \eqref{MetL2int}, in the case where most derivatives fall on $\varepsilon_{0ijk}$. The bound on $\lVert{\bf E}_{0}^{df}\rVert^2_{H^{N, 1/2+\gamma}}$ follows from an elliptic estimate as well as a Hardy inequality on the current, and follows from Lemma 10.1 in \cite{LS} and Hölder's Inequality.
\end{proof}
\begin{theorem}\label{thm:FieldBootstrap}
Given constants $\gamma, \delta, \mu, N$ satisfying $N \geq 11$, $\gamma > 1/2$, $\mu < 1/2$, and $8\delta < \min(\gamma-1/2, 1-\gamma)$, the bounds \eqref{MetLIint} and \eqref{MetL2int} follow from the bootstrap assumption \eqref{eq:bootstrapassumptionmetric} for sufficiently small $\varepsilon_b$, and therefore the system \eqref{eq:MKG} is well-posed while \eqref{eq:bootstrapassumptionmetric} holds. Additionally,
\beq
Q_N(T)\leq C\varepsilon^2,
\eq
and the bounds \eqref{est:DPhiDF1} hold for $s_0 = 1/2+\gamma$, $s = 1/2+\gamma/2 - 3\delta$.
\end{theorem}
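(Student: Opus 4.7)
The plan is to assemble the results of the section in the order: metric bounds $\Rightarrow$ initial data conditions $\Rightarrow$ Theorem \ref{thm:ChrisResults} $\Rightarrow$ recovery of $Q_N$. First I would invoke the Proposition immediately preceding the statement to convert the bootstrap assumption \eqref{eq:bootstrapassumptionmetric} together with the wave coordinate refinement \eqref{eq:wavecoordinatefunctionLie} from Proposition \ref{prop:wavecoorddecayhighlow} into the metric hypotheses \eqref{MetLIint} and \eqref{MetL2int} required by Theorem \ref{thm:ChrisResults}. The pointwise bound \eqref{MetLIint2} follows directly from \eqref{eq:h1improcedhormanderweakbound} with the exponent $\delta$; the improved exterior estimate \eqref{MetLIint3} follows from \eqref{eq:wavecoordinatefunctionLie}, transferred from $H_1$ to $h^1$ via Proposition \ref{prop:Hhequiv1}; the smallness \eqref{MetLIint1} is built into the initial assumption \eqref{eq:initialassumption}. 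The $L^2$ bounds \eqref{L21int}, \eqref{L22int} are immediate from the definitions of $E_N, S_N$ together with the Hardy-type Corollary \ref{cor:poincare}, while \eqref{L23int} combines the additional $L^2$ bound \eqref{est:L2HLL} from Lemma \ref{lem:wavecoordL2est0} with the weighted Hardy inequality \eqref{cor1332}, after summation over dyadic time intervals.

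Second, I would verify the initial data conditions \eqref{def:ID} needed to activate Theorem \ref{thm:ChrisResults}. This is precisely Proposition \ref{prop:BootstrapIC}, which gives
\begin{equation}
\lVert{\bf E}^{df}_0\rVert^2_{H^{N,1/2+\gamma}} + \lVert{\bf B}_0\rVert^2_{H^{N,1/2+\gamma}} + \lVert D\bphi_0\rVert^2_{H^{N,1/2+\gamma}} + \lVert\dot{\bphi}_0\rVert^2_{H^{N,1/2+\gamma}} \lesssim (1+C_b\varepsilon^2)Q_N(0),
\end{equation}
so the initial data norm is controlled by $C\varepsilon$ by virtue of \eqref{eq:initialassumption}. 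The parameter choice $s_0 = 1/2+\gamma$, $s = 1/2+\gamma/2 - 3\delta$ satisfies the constraints of Theorem \ref{thm:ChrisResults}: one checks $\tfrac12 < s < 1 < s_0 < \tfrac32$ from $\tfrac12 + 8\delta < \gamma < 1-8\delta$, and the smallness $4\delta < \min(1-2\mu, s-\tfrac12, 1-s, \gamma-\tfrac12, 1-\gamma, 1+\gamma-2s) = \min(\cdots, 6\delta)$ from the same hypotheses (recall $\mu < 1-\gamma$).

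Third, provided we choose $\varepsilon_b = \varepsilon_b(C_b)$ small enough that the constants appearing in the step above produce metric bounds bounded by $\varepsilon_g$ from Theorem \ref{thm:ChrisResults}, we may apply that theorem. This yields well-posedness on $[0,T]$, the decay bounds \eqref{est:DPhiDF1} with the chosen $s_0, s$, and the energy bounds \eqref{est:ChrisEnergyBounds}. Since the weight $w$ and the weight $\langle(t-r^*)_-\rangle^{2s_0-2s}$ are comparable for $r^* \leq t$ (both $\sim 1$ away from the exterior, with $2s_0 - 2s = 1+2\gamma - 2\cdot(1/2+\gamma/2-3\delta) + \text{corrections}$ matching the exterior weight of $w$ up to powers smaller than $\delta$), the energy $Q_N(T)$ is bounded by the left-hand sides of \eqref{est:ChrisEnergyBounds1}, \eqref{est:ChrisEnergyBounds2}, giving $Q_N(T)\leq C\varepsilon^2$ uniformly in $T$.

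The main obstacle is the bookkeeping of constants in the nested bootstrap: the constant $\varepsilon_g$ in Theorem \ref{thm:ChrisResults} depends on $(s,s_0,\gamma,\mu,\delta,N)$ but not on $C_b$, while the constants appearing when we extract \eqref{MetLIint}, \eqref{MetL2int} from the bootstrap \eqref{eq:bootstrapassumptionmetric} scale with $C_b^{1/2}$. One must therefore verify that the threshold $\varepsilon_b = \varepsilon_b(C_b)$ can be chosen so that $C C_b^{1/2}\varepsilon_b < \varepsilon_g$, while the resulting universal constant $C_Q$ in $Q_N(T)\leq C_Q\varepsilon^2$ is independent of $C_b$. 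A secondary technicality is that the weight comparison between $w$ (from \eqref{eq:energydefE}) and the weights appearing in Theorem \ref{thm:ChrisResults} introduces a small loss in the exterior which must be absorbed into the margin $4\delta$ in the parameter constraints; this is why the statement records $s = 1/2+\gamma/2 - 3\delta$ rather than $s = 1/2 + \gamma/2$.
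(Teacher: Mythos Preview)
Your assembly is essentially the paper's own argument: check the parameter constraints for Theorem \ref{thm:ChrisResults}, invoke the preceding Proposition and Proposition \ref{prop:BootstrapIC} to supply its hypotheses, and then read off well-posedness, the decay bounds \eqref{est:DPhiDF1}, and the energy bounds \eqref{est:ChrisEnergyBounds}.

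One point you gloss over that the paper makes explicit: the passage from \eqref{est:ChrisEnergyBounds} to the bound on $Q_N(T)$ is not purely a weight comparison. The quantity $Q_N$ is built from $D_{\widetilde{Z}}^I\widetilde{D}\bphi$ and $\widetilde{Z}^I\bF^1$, whereas \eqref{est:ChrisEnergyBounds} controls $D_{\widetilde{U}}D_{\widetilde{X}}^I\bphi$ and null components of $\Lie_{\widetilde{X}}^I\bF^1$. Reordering the covariant derivatives produces commutators $[D_a,D_b]\bphi = i\bF_{ab}\bphi$, and these extra terms (of the form $\widetilde{U}^a\widetilde{X}^b(\Lie_{\widetilde{X}}^{I_1}\bF)_{ab}\,D_{\widetilde{X}}^{I_2}\bphi$) must themselves be bounded in the weighted $L^2$ norm. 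The paper handles this by citing the identity \eqref{id:LieT} and Corollary \ref{cor:DDComm}; you should do the same rather than asserting the energy comparison directly. Your weight discussion is also a bit muddled: the correct observation is that in the exterior the combined weight $\langle t-r^*\rangle^{2s}\langle(t-r^*)_-\rangle^{2s_0-2s}$ equals $\langle q^*\rangle^{2s_0} = \langle q^*\rangle^{1+2\gamma}$, which dominates $w$, while in the interior both are $\gtrsim 1$.
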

\begin{proof} We first note that $1-s = (1-\gamma)/2+3\delta \geq 6\delta$, $s-1/2\geq 1/4 \geq 6\delta$, $1-2\mu \geq 2(\gamma-1/2)$ $1+\gamma-2s = 6\delta$, $s_0 \geq 1$. Therefore, the restrictions on $s,s_0,$ and $\delta$ in Theorem \ref{thm:ChrisResults} are satisfied.
This follows from Equations \eqref{est:ChrisEnergyBounds1} and \eqref{est:ChrisEnergyBounds2}, along with the identity \eqref{id:LieT} and Corollary \ref{cor:DDComm}, which handle the commutator terms.
\end{proof}
\subsubsection{The energy and decay estimates for the energy momentum tensor}\label{sec:MKGBootstrap}
Corollaries \ref{cor:DecayBoundsOnT} and \ref{cor:EnergyBoundsOnT} follow directly from Theorem \ref{thm:FieldBootstrap}.
\begin{proposition}\label{prop:energymomentumLinfty}
Given constants $\gamma, \delta, \mu, N$ satisfying $N \geq 11$, $1/2 < \gamma < 1$, $0 < \mu < 1-\gamma$, and $0 < 8\delta < \min(1-2\mu, \gamma-1/2, 1-\gamma)$, as well as the initial bounds \eqref{eq:initialassumption}, and the bootstrap assumption \eqref{eq:bootstrapassumptionmetric}, then for sufficiently small $\varepsilon$, the  following bounds hold for $|I|\! \leq \!N\!-\!6$ and $s = (1+\gamma)/2 - 3\delta$:
\begin{subequations}
\begin{align}
|(\Lie_{\widetilde{X}}^I\widecheck{T})| &\lesssim \varepsilon^2\langle t+r^* \rangle^{-2}\langle t-r^*\rangle^{-2}\langle (t-r^*)_+\rangle^{1-2s},\\
|(\Lie_{\widetilde{X}}^I\widecheck{T})|_{\mathcal{U}\mathcal{T}} &\lesssim \varepsilon^2\langle t +r^*\rangle^{-2-s}\langle t-r^*\rangle^{-2+s}\langle (t-r^*)_+\rangle^{1-2s}.
\end{align}
\end{subequations}
\end{proposition}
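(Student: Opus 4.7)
The plan is to reduce this proposition to Corollary \ref{cor:DecayBoundsOnT} by verifying that its two hypotheses, namely the metric bounds \eqref{est:H1} and the field decay bounds \eqref{est:DPhiDF1}, \eqref{est:DF2}, both follow from the bootstrap assumption \eqref{eq:bootstrapassumptionmetric} together with the initial smallness \eqref{eq:initialassumption}. The field bounds are essentially immediate: Theorem \ref{thm:FieldBootstrap} has already been established under the current hypotheses and delivers exactly \eqref{est:DPhiDF1} with $s_0 = 1/2 + \gamma$ and $s = 1/2+\gamma/2 - 3\delta = (1+\gamma)/2 - 3\delta$, which matches the value of $s$ in the statement. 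The bounds \eqref{est:DF2} for $\bF^0$ follow from the explicit form of the charge model field and the charge smallness $q^2 \leq \varepsilon^2$.

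The bulk of the work is in verifying the metric assumption \eqref{est:H1}. First I would invoke the weak decay estimate \eqref{eq:h1improcedhormanderweakbound} and its derivative version \eqref{eq:h1improcedhormanderweakboundder} for $|I| \leq N-6$, which give $|\widehat{\mathcal L}^I_{\widetilde{X}} \widetilde{h}^1| + \langle q^*\rangle |\wpa \widehat{\mathcal L}^I_{\widetilde{X}} \widetilde{h}^1| + \langle t+r^*\rangle |\overline{\wpa} \widehat{\mathcal L}^I_{\widetilde{X}} \widetilde{h}^1| \lesssim \varepsilon \langle t+r^*\rangle^{-1+\delta} \langle q^*_+\rangle^{-\gamma}$. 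Converting to $\widetilde{H}_1$ via $H_1 = -h^1 + O(h^2)$ (cf. Proposition \ref{prop:Hhequiv1}) yields \eqref{est:H11} with the stated power $1/2 - \mu$ in $\langle t-r^*\rangle$ and $-1/2-\mu$ in $\langle t+r^*\rangle$, since $\mu < 1-\gamma$ and $\delta$ was chosen small, so the exponent arithmetic is compatible. For the sharper $\wL\wL$ bound \eqref{est:H12} I would use Proposition \ref{prop:wavecoorddecayhighlow}, which gives improved decay for $|\widehat{\mathcal L}_Z^I{H}_1|_{L\mathcal T}$ of the form $\varepsilon(1+t+r)^{-1-\gamma+\delta}\langle q^*_-\rangle^{\gamma}$ and for $|\pa_{q^*} \widehat{\mathcal L}_Z^I{H}_1|_{L\mathcal T}$; these translate directly into \eqref{est:H12} once one checks that $1+\gamma - \delta \geq 1 + 2\mu$, which holds for our parameter choices since $\mu < 1-\gamma$ and $\delta$ is small.

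With both hypotheses of Corollary \ref{cor:DecayBoundsOnT} now in place (for sufficiently small $\varepsilon$, absorbing bootstrap constants into $\varepsilon_g$ as in the discussion after Theorem \ref{thm:FieldBootstrap}), the corollary delivers exactly the two decay estimates of the proposition for $|I| \leq N - 6$, with the value of $s$ already matching. Finally, to confirm that the hypotheses of Theorem \ref{thm:FieldBootstrap} are verified on the full interval where the bootstrap holds, I would note that Proposition \ref{prop:BootstrapIC} provides the initial data estimates \eqref{def:ID}, and the energy bounds \eqref{MetLIint}--\eqref{MetL2int} follow from \eqref{eq:bootstrapassumptionmetric} and \eqref{est:L2HLL}, as already shown in the proof of Theorem \ref{thm:FieldBootstrap}.

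The main technical obstacle is ensuring that the parameters $(s, s_0, \gamma, \mu, \delta)$ in the hypotheses of Theorem \ref{thm:ChrisResults} and Corollary \ref{cor:DecayBoundsOnT} are mutually consistent with the restrictions $1/2 < \gamma < 1$, $0 < \mu < 1-\gamma$, $N \geq 11$, and $8\delta < \min(1-2\mu, \gamma - 1/2, 1-\gamma)$ stated in this proposition; in particular, verifying that $s = (1+\gamma)/2 - 3\delta$ lies in the range $(1/2, 1)$ with $2s < 1+\gamma$ required for the field decay estimates. Once this bookkeeping is done, the proof itself is essentially a direct composition of the cited results.
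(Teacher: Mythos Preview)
Your proposal is correct and matches the paper's approach: the paper states explicitly that Corollaries~\ref{cor:DecayBoundsOnT} and~\ref{cor:EnergyBoundsOnT} follow directly from Theorem~\ref{thm:FieldBootstrap}, and Proposition~\ref{prop:energymomentumLinfty} is simply Corollary~\ref{cor:DecayBoundsOnT} restated with the specific value $s=(1+\gamma)/2-3\delta$. Your verification of the metric hypothesis is slightly more work than needed---the proof of Corollary~\ref{cor:DecayBoundsOnT} actually uses \eqref{MetLIint} rather than \eqref{est:H1}, and \eqref{MetLIint} was already checked in the proposition preceding Theorem~\ref{thm:FieldBootstrap}---but this does no harm.
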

We state a revised form of Corollary \ref{cor:EnergyBoundsOnT}.
\begin{lemma}\label{prop:energymomentumL2}
Given constants $\gamma, \delta, \mu, N$ satisfying $N \geq 11$, $1/2 < \gamma < 1$, $0 < \mu < 1-\gamma$, and $0 < 8\delta < \min(1-2\mu, \gamma-1/2, 1-\gamma)$, as well as the initial bounds \eqref{eq:initialassumption}, and the bootstrap assumption \eqref{eq:bootstrapassumptionmetric}, then for sufficiently small $\varepsilon$,  the following bound holds:
\beq
\int_{\Sigma_t}|(\Lie_{\widetilde{X}}^I\widecheck{T})|^2w\, dx \leq \frac{C\varepsilon^2}{1+t}.
\eq
\end{lemma}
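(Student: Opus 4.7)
The plan is to deduce this lemma directly from Corollary \ref{cor:EnergyBoundsOnT} once one verifies that its hypotheses are in force under the bootstrap. By Theorem \ref{thm:FieldBootstrap}, the bootstrap \eqref{eq:bootstrapassumptionmetric} together with the initial bound \eqref{eq:initialassumption} implies the full set of metric assumptions \eqref{MetLIint}, \eqref{MetL2int} of Theorem \ref{thm:ChrisResults} with $s_0=\tfrac12+\gamma$ and $s=\tfrac{1+\gamma}{2}-3\delta$; the wave coordinate gain \eqref{est:L2HLL} supplies the improved $\mathcal{L}\mathcal{L}$ bound required in \eqref{L23int}. Consequently, Theorem \ref{thm:ChrisResults} yields the pointwise decay \eqref{est:DPhiDF1}-\eqref{est:DF2} for $\{\bphi, \bF^1, \bF^0\}$ and the weighted energy bounds \eqref{est:ChrisEnergyBounds}, which are exactly the hypotheses that Corollary \ref{cor:EnergyBoundsOnT} consumes.

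For concreteness I outline the internal structure of Corollary \ref{cor:EnergyBoundsOnT}. One decomposes $\widehat{T}=\widehat{T}_{\whm}+\widehat{T}_{\whh}$, with $\widehat{T}_{\whh}$ collecting the cubic corrections proportional to $g-\widehat{m}$. Expanding $\Lie_{\widetilde{X}}^I\widehat{T}_{\whm}$ by the modified Leibniz rule \eqref{id:LieT}, each summand is a bilinear pairing of derivatives of $(D\bphi,\bF^1)$ plus commutator contributions handled by Corollary \ref{cor:DDComm}. Bounding the low-derivative factor pointwise by \eqref{est:DPhiDF1} gives the prefactor $\varepsilon\langle t+r^*\rangle^{-1}\langle t-r^*\rangle^{-1}$, and the Hardy inequality of Corollary \ref{cor:poincare} absorbs the weight $w$ into a $\langle t-r^*\rangle^{-1}$ factor; the high-derivative factor is then controlled in $L^2$ by \eqref{est:ChrisEnergyBounds}, producing the claimed $\varepsilon^2/(1+t)$. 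The pure $\bF^0$ contributions decay like $\varepsilon\langle t+r^*\rangle^{-2}$ in the exterior, and since $1+2\gamma<3$ they integrate directly against $w$.

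The cubic piece $\widehat{T}_{\whh}$ is handled as in the proof of Corollary \ref{cor:EnergyBoundsOnT}: when derivatives fall on the field factors, the bound \eqref{MetLIint} on $g-\widehat{m}$ contributes an extra $\langle t+r^*\rangle^{-1+\delta}$, which is more than enough; when most derivatives land on $h^1$ or $H_1$, we use Proposition \ref{prop:Hhequiv2} to trade $H_1$ for $h^1$, bound the field factors pointwise, and estimate $\wpa \Lie_{\widetilde{Z}}^I h^1$ in $L^2(w)$ by the bootstrap \eqref{eq:bootstrapassumptionmetric}, producing an $O(\varepsilon^2\varepsilon_g/(1+t))$ contribution that is absorbed into the stated bound. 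The main obstacle is not the estimate itself but the bookkeeping of the exterior weight: the restriction $\gamma<1$ is tight for the $\bF^0$ contribution, and $s_0=\tfrac12+\gamma$ is precisely what allows \eqref{est:ChrisEnergyBounds} to close against $w$ in the region $r^*>t$.
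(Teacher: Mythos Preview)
Your proposal is correct and follows essentially the same route as the paper: the lemma is stated as a ``revised form of Corollary \ref{cor:EnergyBoundsOnT}'', and the paper simply notes that Corollaries \ref{cor:DecayBoundsOnT} and \ref{cor:EnergyBoundsOnT} follow directly from Theorem \ref{thm:FieldBootstrap}, which is exactly the chain you invoke. Your additional outline of the internal structure of Corollary \ref{cor:EnergyBoundsOnT} (the $\widehat{T}_{\whm}/\widehat{T}_{\whh}$ split, the $L^2$--$L^\infty$ product estimate, the $\bF^0$ integration using $1+2\gamma<3$, and the handling of high-derivative metric factors via the bootstrap) matches the proof given there; the only minor wording issue is that the $\langle t-r^*\rangle^{-1}$ factor in the paper's argument comes from the pointwise decay of the low-derivative factor rather than from Hardy, but this does not affect the conclusion.
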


\section{The sharp decay estimates for the metric}\label{sec:metricdecaysharp}
We now use the weak decay estimates for the metric in Section \ref{sec:ProofMetricDecay}
and the decay estimates for the fields in section \ref{sec:sharpdecayfields} in the wave equation for the metric to obtain the sharp decay estimates for the metric.

From Lemma \ref{lem:transversalder},
the decay estimates we have proven so far and Lemma \ref{lem:TheCommutatorLemma}
to estimate the commutator
$\Box^* \widehat{\mathcal{L}}_{\widetilde{X}}^I
\widetilde{h}_{cd}^1
	-\Lie_{\widetilde{X}}^I \big( \kappa\widetilde{\Box}
 \widetilde{h}^1_{cd}\big)$ we get
\begin{lemma}\label{lem:transversaldersimplifiedLiecommuted}
Let $D_t\!=\!\{(t,x);\, \big|t\!-\!|x|\big|\!\leq\! c_0 t \}$, for some
$0\!<\!c_0\!<\!1$. With
$\overline{w}(q^*)\!=\!\langle q^*\rangle^{1\shortminus \delta}\langle q_+^*\rangle^{\!\gamma}\!$ we have
\begin{equation}\label{eq:transversaldersimplifiedLie}
(1+t+|x|) \,|(\wpa \widehat{\mathcal L}_{\widetilde{Z}}^I
\wt{h}^1)_{\mathcal{U}\mathcal{V}}(t,x)\, \overline{w}(q^*)| \les\varepsilon +\!\int_{|q^*|/4}^t \!\!\!\! (1+\tau)\|
\,\big(\Lie_{{\widetilde{Z}}}^I ( \kappa\widetilde{\Box} \widetilde{h}^1)\big)_{\mathcal{U}\mathcal{V}}(\tau,\cdot)\, \overline{w}\|_{L^\infty(D_\tau)} d\tau,
\quad|I|\!\leq\! N\!-7.
\end{equation}
\end{lemma}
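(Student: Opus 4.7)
The plan is to apply Lemma \ref{lem:transversalder} to the null components $\phi_{UV}$ of the tensor $\phi = \widehat{\mathcal{L}}_{\widetilde{X}}^I\widetilde{h}^1$, with inhomogeneous term $F = \Box^*\phi$. By the commutator identity \eqref{eq:Liecommutatortildekapparepeatstar} of Lemma \ref{lem:TheCommutatorLemma}, this source is rewritten as $\Box^*\widehat{\mathcal{L}}_{\widetilde{X}}^I\widetilde{h}^1 = \Lie_{\widetilde{X}}^I(\kappa\widetilde{\Box}\widetilde{h}^1) - \kappa\widetilde{R}^{\,com,*,I}$. Substituting into Lemma \ref{lem:transversalder} splits the right hand side of \eqref{eq:transversalder} into four pieces: \textbf{(i)} the initial supremum $\sup_{|q^*|/4\leq \tau\leq t}\sum_{|J|\leq 1}\|Z^J\widehat{\mathcal{L}}^I h^1\,\overline{w}\|_{L^\infty}$, \textbf{(ii)} the main integral containing $\Lie_{\widetilde{X}}^I(\kappa\widetilde{\Box}\widetilde{h}^1)$ which is retained on the right hand side of \eqref{eq:transversaldersimplifiedLie}, \textbf{(iii)} the commutator integral involving $\kappa\widetilde{R}^{\,com,*,I}$, and \textbf{(iv)} the lower order integral $\int(1+\tau)^{-1}\sum_{|J|\leq 2}\|Z^J\widehat{\mathcal{L}}^I h^1\,\overline{w}\|_{L^\infty(D_\tau)}d\tau$.

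The initial term (i) is controlled by the improved weak decay \eqref{eq:h1improcedhormanderweakbound}: for $|I|\leq N-7$ and $|J|\leq 1$ we have $|I|+|J|\leq N-5$, so multiplying by $\overline{w}(q^*) = \langle q^*\rangle^{1-\delta}\langle q_+^*\rangle^\gamma$ and using $(1+\tau+|q^*|)^{-1}\langle q^*\rangle^{1-\delta}\les (1+\tau)^{-\delta}$ in $D_\tau$ gives a pointwise bound $\les \varepsilon$. The lower order term (iv) is handled similarly, yielding $\|Z^J\widehat{\mathcal{L}}^I h^1\,\overline{w}\|_{L^\infty(D_\tau)}\les \varepsilon(1+\tau)^\delta$; the factor $(1+\tau)^{-1}$ then makes the integrand $\les \varepsilon(1+\tau)^{-1+\delta}$, which integrates to $\les \varepsilon$ since $\delta$ can be absorbed into the implicit constant (and any residual $(1+t)^{O(\delta)}$ is compatible with the $(1+t)^{O(\varepsilon)}$ bootstrap margin).

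The commutator term (iii) is the decisive technical estimate. By Lemma \ref{lem:TheCommutatorLemma},
\[
|\widetilde{R}^{\,com,*,I}| \les \!\!\!\!\!\!\sum_{\substack{|J|+|K|\leq |I|+1 \\ |J|\leq |I|}}\!\!\!\!\!\!\Big(\frac{M\ln\tplusr}{\tplusr^2} + \frac{|(\widehat{\mathcal{L}}^J\widetilde{H}_1)_{\widetilde{L}\widetilde{L}}|}{\tminusrstar} + \frac{|\widehat{\mathcal{L}}^J\widetilde{H}_1|}{\tplusr}\Big)\,|\wpa\widehat{\mathcal{L}}^K\widetilde{h}^1|.
\]
For $|I|\leq N-7$ we have $|J|,|K|\leq N-6$, so \eqref{eq:h1improcedhormanderweakboundder} gives $|\wpa\widehat{\mathcal{L}}^K \widetilde{h}^1|\les \varepsilon(1+\tau)^{-1+\delta}\langle q^*\rangle^{-1}\langle q_+^*\rangle^{-\gamma}$, while Proposition \ref{prop:wavecoorddecayhighlow} supplies the sharp bound $|(\widehat{\mathcal{L}}^J\widetilde{H}_1)_{\widetilde{L}\widetilde{L}}|\les \varepsilon(1+\tau)^{-1-\gamma+\delta}\langle q_-^*\rangle^\gamma$, and the generic bound $|\widehat{\mathcal{L}}^J\widetilde{H}_1|\les \varepsilon(1+\tau)^\delta(1+\tau+|q^*|)^{-1}\langle q_+^*\rangle^{-\gamma}$ follows from \eqref{eq:h1improcedhormanderweakbound} together with $\widetilde{H}_1 = -\widetilde{h}^1 + O(\widetilde{h}^2)$. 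Multiplying each resulting contribution by $(1+\tau)\overline{w}(q^*)$ and taking the supremum over $D_\tau$ (where $\langle q^*\rangle\sim\langle q\rangle\les \langle\tau\rangle$ and the exterior weight $\langle q_+^*\rangle^\gamma$ inside $\overline{w}$ cancels $\langle q_+^*\rangle^{-\gamma}$ from the $\wpa h^1$ factor) yields an integrand $\les \varepsilon^2(1+\tau)^{-1-\gamma+O(\delta)}$ for the critical $LL$ contribution and $\les \varepsilon^2(1+\tau)^{-1-\mu+O(\delta)}$ for the other two, all integrable to a total $\les \varepsilon^2 \les \varepsilon$.

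The hardest step is (iii): without the improved decay of the critical $(\widehat{\mathcal{L}}^J\widetilde{H}_1)_{LL}$ component coming from the wave coordinate condition via Proposition \ref{prop:wavecoorddecayhighlow}, the coefficient $1/\tminusrstar$ in the middle term of the commutator bound would be catastrophic in the interior. It is precisely the asymptotically Schwarzschild coordinate change — which puts the bending of the light cones to leading order into the coordinates rather than into $\widetilde{H}_1$ — that makes this component decay like $\langle t+r^*\rangle^{-1-\gamma+\delta}$ rather than $\langle t+r^*\rangle^{-1}$, gaining the extra $\gamma$ needed to close the integral. Collecting the bounds for (i), (iii), and (iv) into the $\varepsilon$ on the right hand side and keeping (ii) as the main integral produces \eqref{eq:transversaldersimplifiedLie}.
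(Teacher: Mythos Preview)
Your proof is correct and follows essentially the same approach as the paper: apply Lemma~\ref{lem:transversalder}, rewrite the source via the commutator formula of Lemma~\ref{lem:TheCommutatorLemma}, bound the supremum term and the lower-order integral using \eqref{eq:h1improcedhormanderweakbound}, and bound the commutator integral using \eqref{eq:h1improcedhormanderweakbound}--\eqref{eq:h1improcedhormanderweakboundder} together with the wave-coordinate decay \eqref{eq:wavecoordinatefunctionLie} from Proposition~\ref{prop:wavecoorddecayhighlow}. One minor sharpening: in your treatment of term (iv) you state $\|Z^J\widehat{\mathcal L}^I h^1\,\overline{w}\|_{L^\infty(D_\tau)}\lesssim\varepsilon(1+\tau)^\delta$, but in fact the same computation as for (i) gives the cleaner bound $\lesssim\varepsilon$, since $(1+\tau)^\delta(1+\tau+|q^*|)^{-1}\langle q^*\rangle^{1-\delta}\leq(1+\tau)^\delta(1+\tau+|q^*|)^{-\delta}\leq 1$; this matches the paper's one-line appeal to \eqref{eq:h1improcedhormanderweakbound}.
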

\begin{proof} By Lemma \ref{lem:transversalder}
\begin{multline}\label{eq:transversalder2}
(1+t+|x|) \,|(\wpa \widehat{\mathcal L}_{\widetilde{Z}}^I
\wt{h}^1)_{\mathcal{U}\mathcal{V}}(t,x)\, \overline{w}(q^*)| \les\!\sup_{|q^*|/4\leq
\tau\leq t}
{\sum}_{|J|\leq |I|+1}\|\,\widehat{\mathcal L}_{\widetilde{Z}}^J
\wt{h}^1(\tau,\cdot)\, \overline{w}\|_{L^\infty}\\
+ \int_{|q^*|/4}^t\Big( (1+\tau)\|
\,(\Box^* \widehat{\mathcal L}_{\widetilde{Z}}^I \wt{h}^1)_{\mathcal{U}\mathcal{V}}(\tau,\cdot)\, \overline{w}\|_{L^\infty(D_\tau)} +{\sum}_{|J|\leq |I|+2} (1+\tau)^{-1}
\| \widehat{\mathcal L}_{\widetilde{Z}}^J
\wt{h}^1(\tau,\cdot)\, \overline{w}\|_{L^\infty(D_\tau)}\Big)\, d\tau.
\end{multline}
Using the estimate \eqref{eq:h1improcedhormanderweakbound}:
\beq
|\widehat{\mathcal L}_{\widetilde{Z}}^J
\wt{h}^1(t,x)|\lesssim \varepsilon (1+t)^{\delta}  (1+t+|q^*|)^{-1}
 \langle q_+^*\rangle^{-\gamma},\qquad |J|\leq N-5 {}\tag*{\qedhere}
\eq
it follows that
\begin{equation}\label{eq:transversaldersimplifiedLie0}
(1\!+t\!+\!|x|) \,|(\wpa \widehat{\mathcal L}_{\widetilde{Z}}^I
\wt{h}^1)_{\mathcal{U}\mathcal{V}}(t,x)\, \overline{w}(q^*)| \les\varepsilon + \!\!\int_{|q^*|/4}^t \!\!\!\!\! \!\!(1\!+\!\tau)\|
\big( \kappa\widetilde{\Box}\Lie_{\widetilde{Z}}^I  \widetilde{h}^1\big)_{\mathcal{U}\mathcal{V}}(\tau,\cdot)\, \overline{w}\|_{L^\infty(D_\tau)} d\tau,
\qquad|I|\!\leq \!N\!\!-\!7.
\end{equation}
By Lemma \ref{lem:TheCommutatorLemma}
\beq
\bigtwo|\Box^* \widehat{\mathcal{L}}_{\widetilde{Z}}^I
\widetilde{h}_{cd}^1
	-\Lie_{\widetilde{Z}}^I \big( \kappa\widetilde{\Box} \widetilde{h}^1_{cd}\big)\bigtwo|
\lesssim
\sum_{|J|+|K|\leq |I|+1,\, |J|\leq |I|}\Big(\frac{M\ln{\langle t+r\rangle}}{\langle t+r\rangle^2}
+\frac{\big|\big(\widehat{\mathcal{L}}_{\widetilde{X}}^J
\widetilde{H}_1\big)_{\widetilde{L}\widetilde{L}}
\big|}{\langle t-r^*\rangle}
+\frac{|\widehat{\mathcal{L}}_{\widetilde{X}}^J\widetilde{H}_1|}
{\langle t+r\rangle}\Big)
\big|\wpa
\widehat{\Lie}^{K}_{\widetilde{X}}\widetilde{h}^{1}\big|.
\eq
Using the estimates \eqref{eq:h1improcedhormanderweakbound}, \eqref{eq:h1improcedhormanderweakboundder} and
\eqref{eq:wavecoordinatefunctionLie} we obtain
\beq
\bigtwo|\Box^* \!\widehat{\mathcal{L}}_{\widetilde{Z}}^I
\widetilde{h}_{cd}^1
	-\Lie_{\widetilde{Z}}^I \big( \kappa\widetilde{\Box} \widetilde{h}^1_{cd}\big)\bigtwo|
\!\lesssim\!
\Big(\!\frac{M\!\ln{\tplusr}\!\!}{\tplusr^2}
+\frac{\!\varepsilon(1\!+\!t\!+\!r)^{\shortminus 1\shortminus \gamma+\delta}\!\!\! \!\!\!}{\langle q^*\rangle^{1-\gamma}\langle q_+\rangle^{\gamma}}
+\frac{\varepsilon (1+t)^\delta}{\!\!\tplusr^2 \langle q_+^*\rangle^{\gamma}}\!\Big)
\frac{\varepsilon (1\!+\!t)^\delta \langle q^*\rangle^{\shortminus 1}\!\!\!\!\!\!}
{\!\!\!\!\tplusr^2 \langle q_+^*\rangle^{\gamma}\!\!\!\!}
\lesssim\! \frac{\varepsilon\langle q_+^*\rangle^{\shortminus 1\shortminus \gamma +\delta}\!\!\!\!}{\langle t\!+\!r^*\rangle^3}.
\eq
Hence
\beq
\int_{|q^*|/4}^t (1+\tau)\|\Box^* \widehat{\mathcal{L}}_{\widetilde{Z}}^I
\widetilde{h}_{cd}^1
	-\Lie_{\widetilde{Z}}^I \big( \kappa\widetilde{\Box} \widetilde{h}^1_{cd}\big)(\tau,\cdot)
\overline{w}\|_{L^\infty(D_\tau)} d\tau\lesssim \varepsilon.
\eq
The lemma follows from this and \eqref{eq:transversaldersimplifiedLie0}.
\end{proof}
Also using Lemma \ref{lem:errorlemma} we get
\begin{lemma}\label{lem:transversaldersimplifiedLiecommutedF}
Let $D_t\!=\!\{(t,x);\, \big|t\!-\!|x|\big|\!\leq\! c_0 t \}$, for some constant
$0\!<\!c_0\!<\!1$, and
$\overline{w}(q^*)\!=\!\langle q^*\rangle^{1-\delta}\langle q_+^*\rangle^{\!\gamma}\!$. Then for $|I|\!\leq\! N\!-7$
\begin{equation}\label{eq:transversaldersimplifiedLie}
(1+t+|x|) \,|\big(\wpa \widehat{\mathcal L}_{\widetilde{Z}}^I\widetilde{h}^{\!1}\big)_{\mathcal{U}\mathcal{V}}(t,x)\, \overline{w}(q^*)|
\les\varepsilon +\! \int_{|q^*|/4}^t \!\!\!\!\!\!\!\!\!
(1+\tau)\|
\big(\Lie_{\widetilde{Z}}^I \big( \kappa^2(
\widetilde{F}(\wpa \widetilde{h}^{\!1}\!\!,\wpa
\widetilde{h}^{\!1})+\widetilde{\widecheck{T}}\,)\big)\big)_{UV}(\tau,\cdot)\,
\overline{w}\|_{L^{\!\infty}(\!D_\tau\!)} d\tau.
\end{equation}
\end{lemma}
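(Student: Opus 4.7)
The plan is to reduce this statement to the previous Lemma \ref{lem:transversaldersimplifiedLiecommuted} by substituting Einstein's equation \eqref{eq:newcoordeinsteinwithRerrors} into the integrand $\Lie_{\widetilde{Z}}^I\bigtwo(\kappa\widetilde{\Box}\widetilde{h}^{1}\bigtwo)_{UV}$ and absorbing the mass and covariant errors, together with the discrepancy between $\kappa$ and $\kappa^{2}$, into an $O(\varepsilon)$ constant. First I would write
\begin{equation}
\kappa\,\widetilde{\Box}\,\widetilde{h}^{1}_{cd}
=\kappa^{2}\bigtwo(\widetilde{F}_{cd}(\tg)[\wpa \widetilde{h}^{1},\wpa\widetilde{h}^{1}]+\widetilde{\widehat{T}}_{\!cd}\bigtwo)
+\kappa(1-\kappa)\bigtwo(\widetilde{F}_{cd}+\widetilde{\widehat{T}}_{\!cd}\bigtwo)
+\kappa\bigtwo(\widetilde{R}^{\,mass}_{cd}+\widetilde{R}^{\,cov}_{cd}\bigtwo),
\end{equation}
using $\kappa=1/\kappa_{0}=1-M\widetilde{\chi}/r+O(M^{2}/r^{2})$ so that $|1-\kappa|\lesssim M\widetilde\chi/r$ and $|\Lie_{\widetilde{Z}}^{J}\kappa|\lesssim 1$, $|\Lie_{\widetilde{Z}}^{J}(1-\kappa)|\lesssim M\langle t+r^{*}\rangle^{-1}$ for all $J$. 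Contracting with $U^{c}V^{d}$ and applying $\Lie_{\widetilde{Z}}^{I}$ by the Leibniz rule isolates the term $(\Lie_{\widetilde{Z}}^{I}[\kappa^{2}(\widetilde{F}+\widetilde{\widehat{T}})])_{UV}$ on the right-hand side of the claimed bound.

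Next I would show that all three remainder contributions, when weighted by $(1+\tau)\overline{w}(q^{*})$ in $L^{\infty}(D_{\tau})$, integrate in $\tau\in[|q^{*}|/4,t]$ to a constant multiple of $\varepsilon$. For the $\kappa(1-\kappa)$-term one uses the pointwise sharp decay of Corollary \ref{cor:DecayBoundsOnT} together with the analogous bound for $\widetilde{F}$ (which follows from Lemma \ref{lem:PPS} combined with the bootstrap $|\wpa\Lie_{\widetilde{Z}}^{J}\widetilde{h}^{1}|\lesssim \varepsilon\langle t+r^{*}\rangle^{-1}\langle q^{*}\rangle^{-1}\langle q_{+}^{*}\rangle^{-\gamma}(1+t)^{\delta}$ from \eqref{eq:h1improcedhormanderweakboundder}); the extra $M\langle t+r^{*}\rangle^{-1}$ from $1-\kappa$ converts the critical $\langle t+r^{*}\rangle^{-2}$ decay into the integrable $M\varepsilon^{2}\langle\tau\rangle^{-3}\overline{w}(q^{*})\lesssim M\varepsilon^{2}\langle \tau\rangle^{-2+\gamma-\delta}$. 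For the mass and covariant error I would directly invoke Lemma \ref{lem:errorlemma}, plug in the bootstrap bounds \eqref{eq:h1improcedhormanderweakbound}--\eqref{eq:h1improcedhormanderweakboundder} for the $\widetilde{Z}^{J}\widetilde{H}_{1}$ and $\wpa\widetilde{Z}^{J}\widetilde{H}_{1}$ factors (using $\widetilde{H}_{1}=-\widetilde{h}^{1}+O(\widetilde{h}^{2})$), and pick up the same kind of $M\langle\tau\rangle^{-2+\gamma-\delta}$ or better decay, whose time integral is bounded by $C\varepsilon$.

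The main obstacle I foresee is the last term in Lemma \ref{lem:errorlemma}, namely $M\langle t+r^{*}\rangle^{-2}\langle \ln\rangle\sum_{|J|\leq |I|}|\wpa\widetilde{Z}^{J}\widetilde{H}_{1}|$, because this term carries one \emph{full} spacetime derivative on $\widetilde{H}_{1}$ rather than a tangential one. In $D_{\tau}$ the bootstrap gives $|\wpa\widetilde{Z}^{J}\widetilde{H}_{1}|\lesssim \varepsilon(1+\tau)^{\delta}\langle\tau\rangle^{-1}\langle q^{*}\rangle^{-1}\langle q_{+}^{*}\rangle^{-\gamma}$, so multiplication by $(1+\tau)\overline{w}$ yields $M\varepsilon\langle\tau\rangle^{-2+\delta}\langle q^{*}\rangle^{-\delta}\cdot\ln\langle\tau\rangle$, which is integrable in $\tau$ thanks to $\delta<1$ and the logarithmic factor losing only a power of $\tau^{\delta}$. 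Combining these integrated error estimates with Lemma \ref{lem:transversaldersimplifiedLiecommuted} closes the bound and gives the stated inequality for $|I|\leq N-7$.
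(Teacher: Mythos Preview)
Your approach is correct and matches the paper's proof in substance. Both arguments reduce to Lemma~\ref{lem:transversaldersimplifiedLiecommuted}, substitute Einstein's equation \eqref{eq:newcoordeinsteinwithRerrors}, and absorb the mass and covariant errors via Lemma~\ref{lem:errorlemma} together with the bootstrap decay \eqref{eq:h1improcedhormanderweakbound}--\eqref{eq:h1improcedhormanderweakboundder}.

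The one organizational difference is that the paper routes through the higher-order equation \eqref{eq:newcoordeinsteinwithRerrorshigherorder} and Lemma~\ref{lem:differentiatedinhomogen}, and consequently also lists the cubic remainder $\widetilde{R}^{\,cube\,I}_{cd,0}$ among the terms to be integrated to $O(\varepsilon)$. Your splitting $\kappa=\kappa^{2}+\kappa(1-\kappa)$ is more direct: it keeps $\Lie_{\widetilde{Z}}^{I}\big(\kappa^{2}\widetilde{F}\big)$ unexpanded on the right-hand side (exactly as in the lemma statement), so the cubic terms remain inside the integral and never need separate treatment. This saves you the appeal to Lemma~\ref{lem:differentiatedinhomogen}. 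Your handling of the worst mass/covariant term, $M\langle t+r^*\rangle^{-2}\ln\langle t+r^*\rangle\sum|\wpa\widetilde{Z}^{J}\widetilde{H}_{1}|$, is also correct: after multiplying by $(1+\tau)\overline{w}$ and taking the supremum over $D_\tau$, the resulting $M\varepsilon\,\tau^{-2+\delta}\ln\tau$ is indeed integrable.
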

\begin{proof} Using \eqref{eq:newcoordeinsteinwithRerrorshigherorder} and Lemma \ref{lem:transversaldersimplifiedLiecommuted} it suffices to show
\begin{equation}
\int_{|q^*|/4}^t \!\!\!\!\!\!\!\!\!
(1+\tau)\|
\big(\big(|\widetilde{R}_{cd}^{\,mass\, I}|
+|\widetilde{R}_{cd}^{\,cov\, I}| + |\widetilde{R}_{cd, 0}^{\,cube\, I}|\Big)(\tau,\cdot)\,
\overline{w}\|_{L^{\!\infty}(\!D_\tau\!)} d\tau \lesssim \varepsilon.
\end{equation}
This follows from \eqref{eq:h1improcedhormanderweakbound}, \eqref{eq:h1improcedhormanderweakboundder}, and Lemmas \ref{lem:differentiatedinhomogen} and \ref{lem:errorlemma}, which imply the pointwise bound
\beq
|\widetilde{R}_{cd}^{\,mass\, I}|
+|\widetilde{R}_{cd}^{\,cov\, I}| + |\widetilde{R}_{cd, 0}^{\,cube\, I}|\lesssim \varepsilon(\langle t+r^*\rangle^{-3+3\delta}\langle q^*\rangle^{-1} + \langle t+r^*\rangle^{-3}\langle q_+^*\rangle^{-1}).
\eq
The result follows from direct integration.
\end{proof}
\begin{proposition}
Let $D_t=\{(t,x);\, \big|t\!-\!|x|\big|\!\leq\! c_0 t \}$ for some constant
$0\!<\!c_0\!<\!1$ and
$\overline{w}(q^*)=\langle q^*\rangle^{1-\delta}\langle q_+^*\rangle^{\!\gamma}\!$.
Let $|I|\!\leq\! N-7$. Then
if $\mathcal{T} = \{\widetilde{L},\mathcal{S}_1,\mathcal{S}_2\}$ we have
\begin{equation}\label{eq:transversaldersimplifiedLieTangential}
(1+t+|x|) \,|\big(\wpa \widehat{\mathcal L}_{\widetilde{Z}}^I\widetilde{h}^{\!1}\big)_{\mathcal{T}\mathcal{U}}(t,x)\, \overline{w}(q^*)|
\les\varepsilon + {\sum}_{|J|\leq |I|}\int_{|q^*|/4}^t \!\!\!
(1+\tau)
\,\|\big(\Lie_{\widetilde{Z}}^J \widetilde{\widecheck{T}}\big)_{\mathcal{T}\mathcal{U}}(\tau,\cdot)\,
\overline{w}\|_{L^{\!\infty}(\!D_\tau\!)} d\tau.
\end{equation}
On the other hand
\begin{multline}\label{eq:transversaldersimplifiedLieLtilde}
(1+t+|x|) \,|\wpa \widehat{\mathcal L}_{\widetilde{Z}}^I\widetilde{h}^{\!1}(t,x)\, \overline{w}(q^*)|
\les\varepsilon + {\sum}_{|J|\leq |I|}\int_{|q^*|/4}^t \!\!\!
(1+\tau)
\,\|\Lie_{\widetilde{Z}}^J \widetilde{\widecheck{T}}(\tau,\cdot)\,
\overline{w}\|_{L^{\!\infty}(\!D_\tau\!)}d\tau\\
+{\sum}_{J+K=I}\int_{|q^*|/4}^t \!\!\!
(1+\tau)\||\wpa\, \wh^{1J\!}|_{\widetilde{\mathcal{T}}\widetilde{\mathcal{T}}}|\wpa\,
 \widetilde{h}^{1K\!}|_{\widetilde{\mathcal{T}}\widetilde{\mathcal{T}}}
 (\tau,\cdot)\,
\overline{w}\|_{L^{\!\infty}(\!D_\tau\!)} d\tau.
\end{multline}
\end{proposition}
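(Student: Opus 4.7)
The proof starts from Lemma \ref{lem:transversaldersimplifiedLiecommutedF}, which reduces both estimates to controlling the integral $\int_{|q^*|/4}^{\,t} (1+\tau)\,\|(\Lie_{\widetilde{Z}}^I(\kappa^2(\widetilde F(\wpa \wh^1,\wpa \wh^1)+\widetilde{\widehat T}\,)))_{UV}\overline w\|_{L^{\infty}(D_\tau)} d\tau$ along the characteristic segment $\tau\in[|q^*|/4,t]$. The energy-momentum contribution is already exactly the $\widetilde{\widehat T}$ integral appearing on the right in both target inequalities, so the task is to show that the $\widetilde F$ part contributes $O(\varepsilon)$ in the first case, and $O(\varepsilon)$ plus the additional $|\wpa \wh^{1J}|_{\widetilde{\mathcal T}\widetilde{\mathcal T}}|\wpa \wh^{1K}|_{\widetilde{\mathcal T}\widetilde{\mathcal T}}$ integral in the second case.

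First I would commute $\Lie_{\widetilde Z}^I$ through $\widetilde F$ using Lemma \ref{lem:differentiatedinhomogen}, producing the principal sum $\sum_{J+K=I}\widetilde F_{cd}(\tg)[\wpa \wh^{1J},\wpa \wh^{1K}]$ plus a cubic remainder $\widetilde R_{cd,0}^{\,cube,I}$ controlled by $\sum|\widehat{\Lie}_{\widetilde X}^L\widetilde H_1||\wpa \wh^{1J}||\wpa \wh^{1K}|$. Each bilinear term is then decomposed via Lemma \ref{lem:PPS} into $\kappa^2\widetilde F_{cd}[\wpa \wh^{1J},\wpa \wh^{1K}]=\wL_c\wL_d\,\widehat P(\pa_{q^*}\wh^{1J},\pa_{q^*}\wh^{1K})+\kappa^2\widetilde R_{cd}^{\,tan,JK}+\kappa^2\widetilde R_{cd,1}^{\,cube,JK}$. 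Using \eqref{eq:h1improcedhormanderweakbound}, \eqref{eq:h1improcedhormanderweakboundder} and \eqref{eq:wavecoordinatefunctionLie}, a direct computation shows that $\widetilde R^{tan}$ and both cubic remainders, weighted by $\overline w(q^*)(1+\tau)$ and integrated over $\tau\in[|q^*|/4,t]$, contribute at most $C\varepsilon^2\lesssim \varepsilon$.

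For the first inequality, I would then invoke the crucial geometric fact that $\wL_a T^a=0$ for every $T\in\widetilde{\mathcal T}=\{\wL,\widetilde S_1,\widetilde S_2\}$, since $\wL$ is null and orthogonal to the $\widetilde S_i$ in $\widehat m$. Contracting the principal term $\wL_c\wL_d\,\widehat P(\cdots)$ with $T^c U^d$ therefore annihilates it entirely, and only the $\widetilde R^{tan}$, $\widetilde R^{cube}$, and $\widetilde{\widehat T}_{TU}$ contributions remain, giving the claimed bound. For the second inequality the principal term $\wL_c\wL_d\widehat P$ does not vanish, and one splits it further using the second bound of Lemma \ref{lem:PPS}. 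All summands there except the last contain a factor $|\pa_{q^*}\wh^{1J}|_{\wL\widetilde{\mathcal T}}$ or $|\pa_{q^*}\trs\wh^{1J}|$, which by the wave-coordinate gain \eqref{eq:wavecoordinatederivativeLie} in Proposition \ref{prop:wavecoorddecayhighlow} enjoy the extra decay $\varepsilon\langle t+r^*\rangle^{-2+2\delta}\langle q_+^*\rangle^{-\gamma}\langle q^*\rangle^{-\delta}$. Paired with $|\wpa\wh^{1K}|$ from \eqref{eq:h1improcedhormanderweakboundder} and weighted by $(1+\tau)\overline w(q^*)$, each of these integrates to $C\varepsilon^2$. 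The single remaining summand $|\wpa\wh^{1J}|_{\widetilde{\mathcal T}\widetilde{\mathcal T}}|\wpa\wh^{1K}|_{\widetilde{\mathcal T}\widetilde{\mathcal T}}$ admits no such improvement, so it must be retained on the right-hand side as the extra integral in the statement.

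The main obstacle is verifying that the wave-coordinate gain from Proposition \ref{prop:wavecoorddecayhighlow} is genuinely sufficient to render the $\wL\widetilde{\mathcal T}$ and trace pieces of $\widehat P$ integrable along the characteristic. Schematically one pairs a factor decaying like $\varepsilon\langle t+r^*\rangle^{-2+2\delta}\langle q^*\rangle^{-\delta}\langle q_+^*\rangle^{-\gamma}$ against one of size $\varepsilon\langle t+r^*\rangle^{-1+\delta}\langle q^*\rangle^{-1}\langle q_+^*\rangle^{-\gamma}$; after multiplying by $(1+\tau)\overline w(q^*)=(1+\tau)\langle q^*\rangle^{1-\delta}\langle q_+^*\rangle^{\gamma}$ the $\tau$-integrand behaves like $\varepsilon^2(1+\tau)^{-2+3\delta}\langle q_+^*\rangle^{-\gamma}$, which is safely integrable under the standing assumption $8\delta<\min(\gamma-\tfrac12,1-\gamma)$. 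Bookkeeping through these margins for every term in Lemma \ref{lem:PPS}, including the cubic $\widetilde R^{cube}$ pieces that pick up an extra factor of $\widetilde H_1$ or $M\ln\langle t+r\rangle/\langle t+r\rangle$, is tedious but routine once the above decay allocations are in hand.
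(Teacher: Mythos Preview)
Your proposal is correct and follows essentially the same approach as the paper: start from Lemma \ref{lem:transversaldersimplifiedLiecommutedF}, expand $\Lie_{\widetilde Z}^I(\kappa^2\widetilde F)$ via Lemmas \ref{lem:differentiatedinhomogen} and \ref{lem:PPS}, use \eqref{eq:h1improcedhormanderweakbound}--\eqref{eq:h1improcedhormanderweakboundder} to show the tangential and cubic remainders integrate to $O(\varepsilon)$, then for the $TU$ components invoke $\wL_a T^a=0$ to kill the principal $\wL_c\wL_d\widehat P$ term, while for the general components split $\widehat P$ via the second bound of Lemma \ref{lem:PPS} and use the wave-coordinate gain \eqref{eq:wavecoordinatederivativeLie} on the $\wL\widetilde{\mathcal T}$ and trace pieces, retaining only the $\widetilde{\mathcal T}\widetilde{\mathcal T}$ product. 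The only cosmetic difference is that the paper packages the first step as a single displayed bound on $|\Lie_{\widetilde X}^I(\kappa^2\widetilde F)-\sum_{J+K=I}\wL_a\wL_b\widehat P|$ rather than treating $\widetilde R^{tan}$, $\widetilde R^{cube}_0$, $\widetilde R^{cube}_1$ separately, and it does not explicitly isolate the $\kappa^2$ factor on $\widetilde{\widehat T}$ (derivatives of $\kappa$ being harmless lower-order terms).
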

\begin{proof} By Lemma \ref{lem:differentiatedinhomogen}, Lemma \ref{lem:PPS} and
\eqref{eq:h1improcedhormanderweakbound}-\eqref{eq:h1improcedhormanderweakboundder}
we have
\begin{multline}
\bigtwo| \Lie_{\widetilde{Z}}^I
\bigtwo(\!\kappa^{2\!} \widetilde{F}_{ab}(\tg)
[\wpa \widetilde{h}^1\!\!,\wpa\widetilde{h}^1]\bigtwo)
-\!\!\!\!\sum_{J+K=I}\!\!\!\!\wL_{a}\wL_b
\widehat{P}(\pa_{q^*} \wh^{1J}\!\!,\pa_{q^*} \widetilde{h}^{1K})\bigtwo| \\
\lesssim \!\!\!\!\sum_{J+K=I}\!\!\!\! |\overline{\wpa}\, \wh^{1J}| \,|\wpa \widetilde{h}^{1K}|
+\frac{M\ln{\tplusr}}{\langle t+r\rangle}
\!\!\!\!\sum_{J+K=I}\!\!\!\!|\wpa \widetilde{h}^{1J\!}|\,|\wpa \widetilde{h}^{1K\!}|
+\!\!\!\!\!\!\!\!\!\!\!\!\!\sum_{\,\,\,\,\,\,\,\,\,\,\,|J|+|K|+|L| \leq |I|\!\!\!\!\!\!\!\!\!\!}
\!\!\!\!\!\!\!\!\!\!\!
|\widehat{\Lie}^{L}_{\widetilde{Z}} \widetilde{H}_1| \,
|\wpa \widetilde{h}^{1J\!}|\,|\wpa\widetilde{h}^{1K\!}|\\
\lesssim  \varepsilon^2   (1\!+t\!+|q^*|)^{-3+2\delta}
\langle q^*\rangle^{-1}\langle q_+^*\rangle^{-\gamma}
+\varepsilon^2   (1\!+t\!+|q^*|)^{-3+3\delta}
\langle q^*\rangle^{-2}\langle q_+^*\rangle^{-\gamma}.
\end{multline}
It follows that
\beq
\int_{|q^*|/4}^t \!\!\!\!\!\!\!\!\!
(1+\tau)\bigtwo|\bigtwo| \bigtwo(\Lie_{\widetilde{Z}}^I
\bigtwo(\!\kappa^{2\!} \widetilde{F}_{ab}(\tg)
[\wpa \widetilde{h}^1\!\!,\wpa\widetilde{h}^1]\bigtwo)
-\!\!\!\!\sum_{J+K=I}\!\!\!\!\wL_{a}\wL_b
\widehat{P}(\pa_{q^*} \wh^{1J}\!\!,\pa_{q^*} \widetilde{h}^{1K})\bigtwo) (\tau,\cdot)\,
\overline{w}\bigtwo|\bigtwo|_{L^{\!\infty}(\!D_\tau\!)} d\tau\lesssim \varepsilon.
\eq
The first estimate follows directly from this using Lemma
\ref{lem:transversaldersimplifiedLiecommutedF}.
To prove the second estimate we use Lemma \ref{lem:PPS} and \eqref{eq:wavecoordinatederivativeLie}
and \eqref{eq:h1improcedhormanderweakboundder}:
\begin{multline}
\big| \widehat{P}(\pa_{q^*} \wh^{1J}\!\!,\pa_{q^*} \widetilde{h}^{1K})\big|\\
\les \big(|\pa_{q^*}\wh^{1J\!}|_{\wL\widetilde{\mathcal
T}}+|\pa_{q^*}\trs \wh^{1J\!}|\big)|\wpa\,\widetilde{h}^{1K\!}| +|\wpa\,\wh^{1J\!}
|\big(|\pa_{q^*}\widetilde{h}^{1K\!}|_{\wL\widetilde{\mathcal
T}}+|\pa_{q^*}\trs \widetilde{h}^{1K\!}|\big)
+|\wpa\, \wh^{1J\!}|_{\widetilde{\mathcal{T}}\widetilde{\mathcal{T}}}|\wpa\,
 \widetilde{h}^{1K\!}|_{\widetilde{\mathcal{T}}\widetilde{\mathcal{T}}}\\
 \lesssim  \varepsilon^2   (1\!+t\!+|q^*|)^{-3+3\delta}
\langle q^*\rangle^{-1-\delta}\langle q_+^*\rangle^{-2\gamma}
+|\wpa\, \wh^{1J\!}|_{\widetilde{\mathcal{T}}\widetilde{\mathcal{T}}}|\wpa\,
 \widetilde{h}^{1K\!}|_{\widetilde{\mathcal{T}}\widetilde{\mathcal{T}}},
\end{multline}
from which the second estimate follows.
\end{proof}

\begin{proposition}\label{prop:sharpfirstorderwave} For $|I|\leq N-7$ we have
\beq\label{eq:sharphTUder}
(1+t+|x|) \,|(\wpa \widehat{\mathcal L}_{\widetilde{Z}}^I
\wt{h}^1)_{\mathcal{T}\mathcal{U}}(t,x)\, \overline{w}(q^*)|\les \varepsilon .
\eq
Moreover,
\beq\label{eq:sharphUVder}
(1+t+|x|) \,|(\wpa \widehat{\mathcal L}_{\widetilde{Z}}^I
\wt{h}^1)_{\mathcal{U}\mathcal{V}}(t,x)\, \overline{w}(q^*)|
\les \varepsilon \Big(1+ \ln\frac{t\!+\!r^*}{\langle t\!-\!r^*\rangle}\Big).
\eq
\end{proposition}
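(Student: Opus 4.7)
The proposition is essentially a corollary of the two bounds \eqref{eq:transversaldersimplifiedLieTangential} and \eqref{eq:transversaldersimplifiedLieLtilde} established just above, combined with the sharp pointwise decay for the energy-momentum tensor from Proposition \ref{prop:energymomentumLinfty}. So my plan is to simply estimate the right-hand side integrands of those two inequalities under the weight $\overline{w}(q^*)=\langle q^*\rangle^{1-\delta}\langle q_+^*\rangle^{\gamma}$, take suprema over the truncated cone region $D_\tau$, and integrate in $\tau$.

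For \eqref{eq:sharphTUder}, I use \eqref{eq:transversaldersimplifiedLieTangential}: I must control the integral of $(1+\tau)\|(\Lie_{\widetilde Z}^J\widetilde{\widehat T})_{TU}\,\overline{w}\|_{L^\infty(D_\tau)}$. Proposition \ref{prop:energymomentumLinfty} (with $s=(1+\gamma)/2-3\delta$) yields
\[
(1+\tau)\,|\widetilde{\widehat T}|_{TU}\,\overline{w}\;\lesssim\;\varepsilon^2(1+\tau)^{-1-s}\langle q^*\rangle^{-1+s-\delta}\langle q_+^*\rangle^{1+\gamma-2s}
\]
in $D_\tau$, where the exponents on $\langle q^*\rangle$ and $\langle q_+^*\rangle$ are both negative since $s<1$ and $\gamma<1-8\delta$, so each is bounded by $1$ on $D_\tau$. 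As $s>1/2$, the integral $\int_{|q^*|/4}^{t}(1+\tau)^{-1-s}\,d\tau\lesssim 1$, giving the $\lesssim\varepsilon$ bound; combined with the $\varepsilon$ from initial data in \eqref{eq:transversaldersimplifiedLieTangential}, this produces \eqref{eq:sharphTUder}.

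For \eqref{eq:sharphUVder}, I start from \eqref{eq:transversaldersimplifiedLieLtilde}. The $\widetilde{\widehat T}$ integrand is bounded by the (even better) estimate $\varepsilon^2\langle t+r^*\rangle^{-2}\langle t-r^*\rangle^{-2}\langle (t-r^*)_+\rangle^{1-2s}$ from Proposition \ref{prop:energymomentumLinfty}, whose $\tau$-integral is $O(\varepsilon^2)$ by the same analysis. The new term is the fully-tangential quadratic term $|\wpa\wh^{1J}|_{\widetilde{\mathcal T}\widetilde{\mathcal T}}|\wpa\widetilde h^{1K}|_{\widetilde{\mathcal T}\widetilde{\mathcal T}}$ with $J+K=I$; since $|J|,|K|\leq N-7$, I can apply \eqref{eq:sharphTUder} to each factor, obtaining the pointwise estimate
\[
|\wpa\wh^{1J}|_{\widetilde{\mathcal T}\widetilde{\mathcal T}}\,|\wpa\widetilde h^{1K}|_{\widetilde{\mathcal T}\widetilde{\mathcal T}}\;\lesssim\;\frac{\varepsilon^2}{(1+\tau+r)^2\,\overline{w}(q^*)^2}.
\]
Multiplying by $(1+\tau)\overline{w}$ and taking the $L^\infty(D_\tau)$ supremum (where $\overline{w}\geq 1$) gives an integrand $\lesssim \varepsilon^2(1+\tau)^{-1}$, whose integral from $|q^*|/4$ to $t$ is $\lesssim \varepsilon^2\ln\bigl((t+r^*)/\langle t-r^*\rangle\bigr)$. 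Combined with the initial $\varepsilon$ and $\varepsilon\ll 1$, this yields \eqref{eq:sharphUVder}.

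The only subtlety is bookkeeping between the flat null frame $\mathcal T,\mathcal N$ in the statement and the curved frame $\widetilde{\mathcal T}, \widetilde{\mathcal N}$ used in \eqref{eq:transversaldersimplifiedLieTangential}--\eqref{eq:transversaldersimplifiedLieLtilde}; this is routine given Lemma \ref{lemma:starwaveeq} and the already established weak decay \eqref{eq:h1improcedhormanderweakbound}, since the discrepancy between the two frames contributes only $O(M r^{-1}\ln r)$ rotations that are absorbed into lower-order, faster-decaying error terms. Thus the only genuine analytical input beyond the identities already proven is the integrability of $\tau^{-1-s}$ (sharp $T$-decay giving the clean $\varepsilon$ on $TU$ components) and the unavoidable $\int d\tau/\tau$ in the quadratic term (producing the logarithm for general components), both of which are immediate once the pointwise estimates from the previous section are in hand.
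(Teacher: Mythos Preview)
Your strategy is the same as the paper's, and the argument closes, but two computational claims are incorrect and should be fixed.

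\textbf{First, for \eqref{eq:sharphTUder}.} The weight in Proposition \ref{prop:energymomentumLinfty} is $\langle (t-r^*)_+\rangle^{1-2s}$, i.e.\ the \emph{interior} factor $\langle q_-^*\rangle^{1-2s}$, not $\langle q_+^*\rangle^{1-2s}$. Your displayed bound
\[
(1+\tau)\,|\widetilde{\widehat T}|_{TU}\,\overline{w}\;\lesssim\;\varepsilon^2(1+\tau)^{-1-s}\langle q^*\rangle^{-1+s-\delta}\langle q_+^*\rangle^{1+\gamma-2s}
\]
is therefore not a valid upper bound in the exterior (there the correct right-hand side carries $\langle q^*\rangle^{\gamma+s-1-\delta}$, which is larger than yours since $s>1/2$). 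Moreover, even if your bound were correct, the exponent $1+\gamma-2s$ equals $6\delta>0$ for $s=(1+\gamma)/2-3\delta$, so it is not negative as you assert. The paper instead takes the supremum over $|q^*|\le c_0\tau$ directly, obtaining an integrand $\lesssim\varepsilon^2\langle\tau\rangle^{\gamma-\delta-2}$, which is integrable because $\gamma<1$; this is what you should do.

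\textbf{Second, for \eqref{eq:sharphUVder}.} The general $\widehat T$ bound has only $\langle t+r^*\rangle^{-2}$, not $\langle t+r^*\rangle^{-2-s}$, so after multiplying by $(1+\tau)\overline{w}$ and taking the sup over $D_\tau$ you get an integrand $\lesssim\varepsilon^2(1+\tau)^{-1}$, whose integral from $|q^*|/4$ to $t$ is a logarithm, not $O(\varepsilon^2)$ as you claim. This is exactly the same logarithm as the one you correctly extract from the quadratic $|\wpa h^{1J}|_{\widetilde{\mathcal T}\widetilde{\mathcal T}}|\wpa h^{1K}|_{\widetilde{\mathcal T}\widetilde{\mathcal T}}$ term, and it is absorbed into the stated right-hand side of \eqref{eq:sharphUVder}. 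So the conclusion stands, but the sentence ``whose $\tau$-integral is $O(\varepsilon^2)$ by the same analysis'' is wrong.

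With these two corrections your proof coincides with the paper's.
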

\begin{proof} By Proposition \ref{prop:energymomentumLinfty} we have
\begin{multline}
\int_{|q^*|/4}^t \!\!\!\!\!\!\!\!\!
\langle\tau\rangle
\,\|\big(\Lie_{\widetilde{Z}}^J \widetilde{\widecheck{T}}\big)_{TU}(\tau,\cdot)\,
\overline{w}\|_{L^{\!\infty}(\!D_\tau\!)} d\tau\!
\les \epsilon\!\!\int_{|q^*|/4}^t \!\!\!\!\!\!\!\!\!\langle\tau\rangle^{\!-1-s}
\!\!\!\sup_{|q|\leq c_0 \tau}\!\langle q\rangle^{\!-2+s}\langle q_-\rangle^{\!1-2s}
\langle q\rangle^{1-\delta}\langle q_+\rangle^{\!\gamma} d\tau\!
\les \epsilon\!\!\int_{|q^*|/4}^t \!\!\!\!\!\!\!\!\!\langle\tau\rangle^{\!\gamma-\delta-2}
 d\tau\!\\
 \les\epsilon\langle q^*\rangle^{\!\gamma-\delta-1}\!\!,
\end{multline}
which together with \eqref{eq:transversaldersimplifiedLieTangential}
implies \eqref{eq:sharphTUder}, and
\begin{equation}\int_{|q^*|/4}^t \!\!\!\!\!\!\!\!\!
\langle\tau\rangle
\,\|\Lie_{\widetilde{Z}}^J \widetilde{\widecheck{T}}(\tau,\cdot)\,
\overline{w}\|_{L^{\!\infty}(\!D_\tau\!)}d\tau
\lesssim  \epsilon\!\!\int_{|q^*|/4}^t \!\!\!\!\!\!\!\!\!\langle\tau\rangle^{\!-1}
\!\!\!\sup_{|q|\leq c_0 \tau}\!\langle q\rangle^{\!-2}\langle q_-\rangle^{1-2s}
\langle q\rangle^{1-\delta}\langle q_+\rangle^{\gamma} d\tau\!
\les \epsilon\!\!\int_{|q^*|/4}^t \!\!\!\!\!\!\!\!\!\langle\tau\rangle^{\!-1}
 d\tau\!\les\epsilon\ln\bigthree(\!\frac{\langle t\rangle}{\langle q^{*\!}\rangle}\!\bigthree).
\end{equation}
Moreover using \eqref{eq:sharphTUder} we get
\begin{equation}
\int_{|q^*|/4}^t \!\!\!\!\!\!\!\!\!
\langle\tau\rangle\||\wpa\, \wh^{1J\!}|_{\widetilde{\mathcal{T}}\widetilde{\mathcal{T}}}|\wpa\,
 \widetilde{h}^{1K\!}|_{\widetilde{\mathcal{T}}\widetilde{\mathcal{T}}}(\tau,\cdot)\,
\overline{w}\|_{L^{\!\infty}(\!D_\tau\!)} d\tau
\lesssim \epsilon\!\!\int_{|q^*|/4}^t \!\!\!\!\!\!\!\!\!\langle\tau\rangle^{\!-1}
\!\!\sup_{|q|\leq c_0 \tau}\!\!
\langle q\rangle^{\delta-1}\langle q_+\rangle^{-\gamma}
 d\tau\!\les\epsilon\ln\bigthree(\!\frac{\langle t\rangle}{\langle q^{*\!}\rangle}\!\bigthree).
\end{equation}
\eqref{eq:sharphUVder} follows from using the last two estimates in
\eqref{eq:transversaldersimplifiedLieLtilde}.
\end{proof}

\section{The Sharp Energy bounds of the metric assuming the decay estimates}\label{sec:ProofMetricEnergy}
In the previous section we got improved decay estimates for certain components of the metric.
With the energy estimate we  can not differentiate between
different components at the highest order. However in the proof we need to use
that we have better decay estimates for tangential components.
In particular, the assumptions of the basic energy estimate in Theorem \ref{thm:EEst} hold.
We will now apply this estimate to the differentiated equations \eqref{eq:newcoordeinsteinwithRerrorshigherorder}:
\beq
\widetilde{\Box}\, \widetilde{h}^{1I}_{cd}
+\widetilde{R}_{cd}^{\,com\, I}\!
=\!\!\!{\sum}_{I^\prime+I^{\prime\prime}=I} \kappa_{I^{\prime\prime}} \Big({\sum}_{J+K=I^\prime} \widetilde{F}_{cd}(\tg)[\wpa \widetilde{h}^{1J}\!\!,\wpa\widetilde{h}^{1K}]+\widetilde{R}_{cd,\, 0}^{\,cube\, I^\prime}\!\!
+\widetilde{\widecheck{T}}_{\!cd}{\!\!}^{\!I^\prime}
+\widetilde{R}_{cd}^{\,mass\,I^\prime}\!\!+\widetilde{R}_{cd}^{\,cov\,I^\prime}\Big),
\eq
where
\beq
\kappa^{2\!} \widetilde{F}_{\!cd}(\tg)
[\wpa \widetilde{h}^{1J}\!\!,\wpa\widetilde{h}^{1K}]
=\!\wL_{c}\wL_d
\widehat{P}(\pa_{q^*} \wh^{1J}\!\!,\pa_{q^*} \widetilde{h}^{1K})
+\kappa^2\widetilde{R}_{cd}^{\,tan\, JK}
+\kappa^2\widetilde{R}_{cd,\, 1}^{\,cube\, JK}\! .
\eq

We will now use the decay estimates for lower derivatives we obtained in the previous section in the estimates in Section \ref{sec:higherordereinstein} for the various remainder terms
$\widetilde{R}^{\,\bullet}$\!. We first note that since $H_1=-h^1+O(h^2)$ using the decay estimates in the previous section we see that higher norms of $h^1$ and $H_1$ are equivalent, with constants depending on the lower norms, and below we will use this frequently without mentioning it.

\subsubsection{The main quadratic semilinear error terms}
By Lemma \ref{lem:PPS} and \eqref{eq:h1improcedhormanderweakboundder}
\beq
\bigtwo|
\widetilde{R}_{cd}^{\,tan\, JK}\!
\bigtwo|
\!\lesssim\! |\overline{\wpa}\, \wh^{1J}\!| \,|\wpa \widetilde{h}^{1K}\!|+ |\wpa\wh^{1J}\!| \,|\overline{\wpa}\, \widetilde{h}^{1K}\!|
\!\lesssim\! \frac{\varepsilon \langle q_+^*\rangle^{-\gamma}}{\tplusr^{1-\delta} \langle q^*\!\rangle}{\sum}_{|J|\leq |I|}\big|\overline{\wpa}\,
\widetilde{h}^{1 J}\big|
+\frac{\varepsilon \langle q_+^*\rangle^{-\gamma}}{\tplusr^{2-\delta}}{\sum}_{|J|\leq |I|}\big|\wpa
\widetilde{h}^{1 J}\big|.
 \eq
 Moreover, for any $\delta^\prime>\delta$
 \begin{multline}
\big| \widehat{P}(\pa_{q^*} \wh^{1J}\!\!,\pa_{q^*} \widetilde{h}^{1K})\big|\\
\les \big(|\pa_{q^*}\wh^{1J\!}|_{\wL\widetilde{\mathcal
T}}+|\pa_{q^*}\trs \wh^{1J\!}|\big)|\wpa\,\widetilde{h}^{1K\!}| +|\wpa\,\wh^{1J\!}
|\big(|\pa_{q^*}\widetilde{h}^{1K\!}|_{\wL\widetilde{\mathcal
T}}+|\pa_{q^*}\trs \widetilde{h}^{1K\!}|\big)
+|\wpa\, \wh^{1J\!}|_{\widetilde{\mathcal{T}}\widetilde{\mathcal{T}}}|\wpa\,
 \widetilde{h}^{1K\!}|_{\widetilde{\mathcal{T}}\widetilde{\mathcal{T}}}\\
 \les \frac{\varepsilon \langle q_+^*\rangle^{-\gamma}}{\tplusr^{1-\delta^\prime}\langle q^*\rangle} \big(|\pa_{q^*}\wh^{1J\!}|_{\wL\widetilde{\mathcal
T}}+|\pa_{q^*}\trs \wh^{1J\!}|\big)
+\frac{\varepsilon \langle q_+^*\rangle^{-\gamma}}{\tplusr\langle q^*\rangle^{1-\delta}}{\sum}_{|J|\leq |I|}\big|\wpa
\widetilde{h}^{1 J}\big|.
\end{multline}

\subsubsection{The quasilinear commutator error terms}
By Lemma \ref{lem:TheCommutatorLemma} and the estimates in the previous section  and using that $M\les \varepsilon$ we have
\beq
\bigtwo| \widetilde{R}_{cd}^{\,com\, I}\bigtwo|
\lesssim
\frac{\varepsilon\langle q_+^*\rangle^{-\gamma}}{\langle t\!-\!r^*\rangle\tplusr^{1-\delta}}{\sum}_{ |J|\leq |I|}\Big(\frac{\big|
{\widetilde{H}^J}_{1\, \widetilde{L}\widetilde{L}}
\big|}{\langle t\!-\!r^*\rangle}
+\frac{|{\widetilde{H}_1}^J|}{\tplusr}\Big)
+\frac{\varepsilon}{\tplusr^{1+\gamma-\delta}}{\sum}_{|J|\leq |I|}\big|\wpa
\widetilde{h}^{1 J}\big|.
\eq
The commutator terms with the Minkowski vector fields would have been highest order error terms but because we use modified coordinates and vector fields these are lower order.

\subsubsection{The lower order error terms}
 There is much more room in the mass, covariant and cubic error terms.
By Lemma \ref{lem:errorlemma}
\begin{equation}
|\widetilde{R}_{cd}^{\,mass\, I}|
+|\widetilde{R}_{cd}^{\,cov\, I}|
\lesssim \frac{\!M H(r \!<\! t/2)\!}{\tplusr^3}+\frac{M^2 }{\!\tplusr^4\!}
+\frac{\!M\big(\ln{\tplusr}\!+\!1\big)\!}{\tplusr^3}
\!\!\sum_{|J|\leq |I|}\!\!\!\big|\widetilde{H}_1^J\big|
+\frac{\!M\big(\ln{\tplusr}\!+\!1\big)\!}{\tplusr^2}
\!\!\sum_{|J|\leq |I|}\!\!\!\big|\wpa\widetilde{h}_1^J\big|.
\end{equation}
By Lemma \ref{lem:PPS} and the estimates in the previous section
where
\beq
\bigtwo|
\widetilde{R}_{cd,\, 1}^{\,cube\, JK}
\bigtwo|
\lesssim\Big(|\widetilde{H}_1|\!+\!\frac{M\ln{\tplusr}}{\tplusr}\Big)|\wpa \widetilde{h}^{1J}|\,|\wpa \widetilde{h}^{1K}|
\les \frac{\varepsilon^2 \langle q_+^*\rangle^{-\gamma}}{\tplusr^{2-\delta^\prime}\langle q^*\rangle}{\sum}_{|J|\leq |I|}\big|\wpa
\widetilde{h}^{1 J}\big|
 \eq
 and by Lemma \ref{lem:differentiatedinhomogen}
\begin{equation}
\big|\widetilde{R}_{cd,\, 0}^{\,cube\, I}\big|\les\!\!\!\!\!\!\!\!\!\!\!\!\!\!\!
\sum_{\,\,\,\,\,\,\,\,\,\,\,|J|+|K|+|L| \leq |I|,\,\, |L|\geq 1\!\!\!\!\!\!\!\!\!\!}\!\!\!\!\!\!
\!\!\!\!\!\!\!\!\!\!\!
| \widetilde{H}_1^L| \,
|\wpa \widetilde{h}^{1J\!}|\,|\wpa\widetilde{h}^{1K\!}|
\les \frac{\varepsilon^2 \langle q_+^*\rangle^{-\gamma}}{\tplusr^{2-\delta^\prime}\langle q^*\rangle}\!\!\sum_{|J|\leq |I|-1}\!\!\!\!\!\big|\wpa
\widetilde{h}^{1 J}\big|
+\frac{\varepsilon^2 \langle q_+^*\rangle^{-2\gamma}}{\tplusr^{2-\delta^\prime}\langle q^*\rangle^2}\!\!\sum_{|J|\leq |I|}\!\!\big|
\widetilde{H}_1^{1 J}\big|.
\end{equation}

\subsection{The energy estimates}
With the weighted energies ,
\begin{equation} \label{eq:energydefEsec13}
	E_N(T)=\!\!\!\!\sup_{0\leq t\leq T}{\sum}_{\!\vert I \vert \leq N\,\,}
	\int
	|\wpa \widetilde{Z}^I\! h^1 (t,x)|^2 w\, dx, \quad
	\text{where}
	\quad
	w(t,x)\!
	=\!
	\bigg\{\begin{aligned}
		&(1+|r^*\!\!-t|)^{1+2\gamma}\!\!,\quad
		&
		r^*\!>t,
		\\
		&1\!+(1\!+|r^*\!\!-t|)^{-2\mu}\!,
		\quad
		&
		r^*\!\leq t.
	\end{aligned}
\end{equation}
and
\begin{equation} \label{eq:energydefSsec13}
	S_N(T)
	\!=
	\!\!{\sum}_{\!\vert I \vert \leq N\,\,}\!\!\int_0^T\!\!\!\int 	
	|\wpao
 \widetilde{Z}^I\! h^{\!1} (t,x)|^2 w'dx dt,
	\quad
	\text{where}
	\quad
w^\prime(t,x)\!=\!\bigg\{\begin{aligned} &(1\!+2\gamma)(1\!+|r^*\!\!-t|)^{2\gamma}\!\!,\quad &r^*\!>t,\\
&2\mu (1\!+|r^*\!\!-t|)^{-1-2\mu}\!\!,\quad &r^*\!\leq t,\end{aligned}
\end{equation}
we have by Theorem \ref{thm:EEst}
\beq
E_N(T) + {S}_N(T)\leq 8 E_N(T)(0) + 16 {\sum}_{|I|\leq N} \int_0^T\Big(\int_{\Sigma_t}
|\widetilde{\Box}\widetilde{h}^{1I}|^2 w\, dx\Big)^{1/2}  \, E_N(t)^{1/2} dt.
\eq
Summing up we can identify essentially five types of error terms that we will estimate
\beq
|\widetilde{\Box}\widetilde{h}^{1I}|\leq
R_{lead}+R_{low}+R_{tan}+R_{coord}+R_{mass}+|\widecheck{T}^I|.
\eq
Here the leading order terms are bounded by
\beq
R_{lead}=\frac{C\varepsilon}{1\!+t} {\sum}_{|J|\leq |I|} |\wpa h^{1 J}|,
\eq
which can be estimated by $E_N(t)$ but are the worst that will lead to exponential growth like
$t^{\, C\varepsilon}$. The lower terms $R_{low}=R_{low,1}+R_{low,0}$ are for some fixed  $0<\gamma^{\,\prime}<\gamma$
\beq
R_{low,1}=\frac{C\varepsilon}{(1\!+t)^{1+\gamma^{\,\prime}} }{\sum}_{|J|\leq |I|} |\wpa h^{1 J}|, \qquad\text{and}\qquad
R_{low,0}=\frac{C\varepsilon}{(1\!+t)^{1+\gamma^{\,\prime}} }{\sum}_{|J|\leq |I|}\frac{|h^{1 J}|}{\langle q^*\rangle}
\eq
where in the norms with the weights we are considering the second sum is bounded by the first using Hardy's inequality, Corollary \ref{cor:poincare}.
The tangential terms can be bounded by
\beq
R_{tan}=\frac{C\varepsilon\langle q_+^*\rangle^{-\gamma}}{(1\!+t)^{1/2+\nu} \langle q^*\rangle} {\sum}_{|J|\leq |I|} |\overline{\wpa} h^{1 J}|,
\eq
for any fixed $\nu$ such that $0<\nu<1/2$, which
can be estimated in terms of $S_N$. The wave coordinate terms are bounded by
$R_{coord}=R_{coord,1}+R_{coord,0}$
\beq
R_{coord,1}=\frac{C\varepsilon\langle q_+^*\rangle^{-\gamma}}{(1\!+t)^{1/2+\nu} \langle q^*\rangle}{\sum}_{|J|\leq |I|} \Big(|\pa_{q^*}\wh^{1J\!}|_{\wL\widetilde{\mathcal
T}}+|\pa_{q^*}\trs \wh^{1J\!}|\Big),
\eq
and
\beq
R_{coord,0}=\frac{C\varepsilon\langle q_+^*\rangle^{-\gamma}}{(1\!+t)^{1/2+\nu} \langle q^*\rangle}{\sum}_{|J|\leq |I|} \Big(	\frac{| \widehat{\mathcal L}_Z^I{H}_1|_{L\mathcal T}}{\langle q^*\rangle}
+\frac{|  \trs\widehat{\mathcal L}_Z^I {H}_1 |}{\langle q^*\rangle}\Big),
\eq
can be estimated in terms of the tangential terms using the wave coordinate condition as in Lemma \ref{lem:wavecoordL2est} and Hardy's inequality.
For this we also need to pick $\nu$ and $\nu^{\,\prime}>0$ so that  $1-2(\nu-\nu^{\,\prime})<\gamma$.
Finally, the mass terms are bounded by
\beq
R_{mass}= \frac{CM }{\!\tplusr^3\langle q_+^*\rangle},
\eq
which can be bounded directly in terms of $M$ that we assume is small.

\subsubsection{The leading and lower order terms} We have
\beq
\int_0^T\Big(\int_{\Sigma_t}
R_{lead}^{\,2}w\, dx\Big)^{1/2}  \, E_N(t)^{1/2} dt
\leq\int_0^T\frac{C\varepsilon}{1\!+t}  E_N(t) dt .
\eq
and using Corollary \ref{cor:poincare} to estimate $R_{low,0}$ in terms of $R_{low,1}$ we have
\beq
 \int_0^T\Big(\int_{\Sigma_t}
R_{low}^{\,2}w\, dx\Big)^{1/2}  \, E_N(t)^{1/2} dt
\leq\int_0^T\frac{C\varepsilon}{(1\!+t)^{1+\gamma^{\,\prime}}}  E_N(t) dt .
\eq

\subsubsection{The tangential and coordinate terms} By Cauchy-Schwartz in time
\beq
 \int_0^T\!\!\!\Big(\!\int_{\Sigma_t}\!
R_{tan}^{\,2}w\, dx\Big)^{\!1/2}   E_N(t)^{1/2} dt
\leq\Big( \int_0^T\!\!\int_{\Sigma_t}\!\!
R_{tan}^{\,2}\frac{(1\!+\!t)^{1+2\nu}\!\!\!\!\!}{\varepsilon}\,\,w\, dx \,dt\Big)^{\!1/2} \Big(\int_0^T\!\!\frac{\varepsilon  E_N(t) }{(1\!+\!t)^{1+2\nu}\!\!}  \,\, dt \Big)^{\!1/2}\!\!.
\eq
Here
\beq
 \int_0^T\!\!\int_{\Sigma_t}\!
R_{tan}^{\,2}\frac{(1\!+t)^{1+2\nu}\!\!\!\!\!}{\varepsilon}\,\,w\, dx \,dt
\les {\sum}_{|J|\leq |I|} C\varepsilon\int_0^T\!\!\int_{\Sigma_t}\!\frac{ |\overline{\wpa} h^{1 J}|^2}{ \langle q^*\rangle^{2}} \,\,w\, dx \,dt
\les  C\varepsilon S_N(T).
\eq
Hence, using that $2ab\leq a^2+b^2$,
\beq
 \int_0^T\!\!\Big(\int_{\Sigma_t}\!
R_{tan}^{\,2}w\, dx\Big)^{\!1/2}   E_N(t)^{1/2} dt
\leq C^\prime\varepsilon S_N(T)+C^\prime \int_0^T\!\!\frac{\varepsilon }{(1\!+t)^{1+2\nu}\!\!}  \,\, E_N(t)\,dt.
\eq
Since
\beq
 \int_0^T\!\!\int_{\Sigma_t}\!
R_{coord,1}^{\,2}\frac{(1\!+t)^{1+2\nu}\!\!\!\!\!}{\varepsilon}\,\,w\, dx \,dt
\les {\sum}_{|J|\leq |I|} C\varepsilon\int_0^T\!\!\int_{\Sigma_t}\!\frac{ |\pa_{q^*}\wh^{1J}|_{\wL\widetilde{\mathcal
T}}^{\,2}+|\pa_{q^*}\trs \wh^{1J}|^{\,2}}{ \langle q^*\rangle^{2}} \,\,w\, dx \,dt,
\eq
by Lemma \ref{lem:wavecoordL2est0} $R_{coord,1}$ satisfy the same bound
as $R_{tan}$.
By Lemma \ref{lem:wavecoordL2est} we have
\beq
 \int_0^T\!\!\int_{\Sigma_t}\!
R_{coord,0}^{\,2}\frac{(1\!+t)^{1+2\nu^{\,\prime}}\!\!\!\!\!}{\varepsilon}\,\,w\, dx \,dt
\les  C\varepsilon S_N(T)+\int_0^T\!\!\frac{\varepsilon }{(1\!+t)^{1+2\nu^{\,\prime}}\!\!}  \,\, E_N(t)\,dt,
\eq
since we picked $\nu$ and $\nu^{\,\prime}$ so that in Lemma \ref{lem:wavecoordL2est}
$b=1-2(\nu-\nu^{\,\prime})<\gamma$.
Hence, we also have
\beq
 \int_0^T\!\!\Big(\int_{\Sigma_t}\!
R_{coord}^{\,2}w\, dx\Big)^{\!1/2}   E_N(t)^{1/2} dt
\leq C^\prime\varepsilon S_N(T)+C^\prime \int_0^T\!\!\frac{\varepsilon }{(1\!+t)^{1+2\nu^{\,\prime}}\!\!}  \,\, E_N(t)\,dt.
\eq

\subsubsection{The mass terms} We have if $\gamma<1$
\beq
\int_{\Sigma_t}R_{mass}^{\,2}w\, dx=C_N M^2\int_0^\infty \frac{\langle q_+^*\rangle^{1+2\gamma} r^2 dr}{\!\tplusr^6\langle q_+^*\rangle^2}
\leq \frac{C_N M^2}{(t\!+1)^{4-2\gamma}}.
\eq
Hence
\beq
 \int_0^T\!\!\Big(\int_{\Sigma_t}\!
R_{mass}^{\,2}w\, dx\Big)^{\!1/2}   E_N(t)^{1/2} dt
\leq  \int_0^T\!\!\frac{C_N M }{(1\!+t)^{2-\gamma}\!\!}  \,\, E_N(t)^{1/2}\,dt.
\eq

\subsubsection{The fields terms} By Proposition \ref{prop:energymomentumL2}
\beq
 \int_0^T\!\!\Big(\int_{\Sigma_t}\!
|\widecheck{T}^I|^2w\, dx\Big)^{\!1/2}   E_N(t)^{1/2} dt
\leq \!\int_0^T\!\!\frac{C_N\varepsilon^2 }{1\!+t\!\!}  \,\, E_N(t)^{1/2}\,dt.
\eq

\subsubsection{The final energy estimate} Also using that $E_N(0)\leq \varepsilon^2$ we obtain
\beq
E_N(T) + {S}_N(T)\leq 8\varepsilon^2 +  C\varepsilon S_N(T) +\int_0^T\!\!\!\frac{C\varepsilon}{1\!+t}  E_N(t) dt
+ \int_0^T\!\!\frac{C_N\varepsilon^2 }{1\!+t\!\!}  \, E_N(t)^{1/2}dt
 +\int_0^T\!\!\!\!\frac{C_N M }{(1\!+\!t)^{2-\gamma}\!\!}  \,\, E_N(t)^{1/2}dt
\eq
Next we pick $\varepsilon$ so small that $C\varepsilon<1/2$ so the term with $S_N(T)$ in the right can be absorbed in the left.
Let $G(T)^2$ denote the right hand side after this term has been removed. Then
\beq
2 G(t) G^{\,\prime}(t)\leq \frac{C\varepsilon}{1+t} G(t)^2+\frac{C_N\varepsilon^2}{1+t}G(t)
+\frac{C_N M }{(1\!+t)^{2-\gamma}}G(t)
\eq
Dividing by $G(t)$ and multiplying by the integrating factor $(1+t)^{-C\varepsilon}$ gives
\beq
2\frac{d}{dt}\Big( G(t)(1+t)^{-C\varepsilon}\Big) \leq \frac{C_N\varepsilon^2}{(1+t)^{1+C\varepsilon}}
+\frac{C_N M }{(1\!+t)^{2-\gamma+C\varepsilon}}
\eq
 and integrating and dividing by the same integrating factor gives since also $M\leq \varepsilon$
\beq
G(t) \leq G(0) (1+t)^{C\varepsilon} + C_N\varepsilon (1+t)^{C\varepsilon},
\eq
Hence, for some other constant $C_N$, that
is independent of  $C_b$ and $T$ in the bootstrap assumptions \eqref{eq:bootstrapassumptionmetric}-\eqref{eq:bootstrapassumptionfields},  and constants $C_b^{\,\prime}\!<\!\infty$ and $\varepsilon_b\!>\!0$, that may depend depends on $C_b$ but not on $T$, we have
\beq\label{eq:finalbootstrapestimate}
E_N(t)\leq C_N\varepsilon^2(1+t)^{C_b^{\,\prime}\varepsilon},\qquad \text{for}\quad 0\leq t\leq T,\quad\text{and}\quad \varepsilon\!\leq \! \varepsilon_b
\eq
If we choose $C_{b}=2\,C_N$ and $\varepsilon$ so that $C_b^{\,\prime}\varepsilon<\delta$ we get back a better estimate than we started with.

\appendix

\section{The Ricci curvature in terms of generalized wave coordinates}\label{sec:GeneralizedWaveCurvature}
Here we derive the expression for Einstein's equations for the metric $g$ in terms of a quantity that is assumed to be under control by a generalized wave coordinate condition. In this paper we will only use these equations in the case this quantity vanishes. However, in Appendix \ref{sec:AppB} we show that the equations we use can alternatively be derived as expressing the metric in generalized wave coordinates.

We consider the generalized harmonic coordinate condition
\begin{subequations}
\begin{equation}
\label{GHCC}
\Gamma^\alpha = \Gamma^{\alpha}_{\beta\gamma}g^{\beta\gamma},
\end{equation}
where $\Gamma$ is a known vector.
Using this notation, the harmonic coordinate condition is equivalent to the condition $\Gamma^\alpha = 0$. We can rewrite \eqref{GHCC} as
\begin{equation}
\label{GHCC2}
g^{\alpha\beta}\partial_\alpha g_{\beta\gamma} = \Gamma_\gamma + \Gamma^{\delta}_{\gamma\delta}.
\end{equation}
\end{subequations}
The Ricci curvature tensor
\[\label{eq:RicciDef}
R_{\alpha\beta} = \partial_\gamma\Gamma^{\gamma}_{\alpha\beta} - \partial_\beta \Gamma^{\gamma}_{\alpha\gamma} + \Gamma^\gamma_{\gamma\rho}\Gamma^{\rho}_{\alpha\beta} - \Gamma^\gamma_{\alpha\rho}\Gamma^{\rho}_{\beta\gamma},
\]
satisfies the following identity
\begin{lemma}\label{lem:RicciEisntein}
\begin{equation}
R_{\alpha\beta} \!=\! -\frac12 {\Box}^{\,g} g_{\alpha\beta} +\frac12 \Gamma^\delta \partial_\delta g_{\alpha\beta} + \frac12(\partial_\alpha \Gamma_\beta + \partial_\beta \Gamma_\alpha)  -\!\frac12 \Gamma_\alpha \Gamma_\beta + \frac12\Gamma^\gamma_{\alpha\gamma}\Gamma^{\rho}_{\beta\rho}\! - \frac14 g^{\gamma\delta}g^{\rho\sigma}\partial_\alpha g_{\delta\rho}\partial_\beta g_{\sigma\gamma}\!+ Q_{\alpha\beta},\nonumber
\end{equation}
where $Q_{\alpha\beta}$ is a linear combination of classical null forms.
\end{lemma}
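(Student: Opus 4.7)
The plan is to start from the definition \eqref{eq:RicciDef} of $R_{\alpha\beta}$ and exploit its symmetry in $(\alpha,\beta)$ to rewrite
\[
R_{\alpha\beta}=\partial_\gamma \Gamma^\gamma_{\alpha\beta}
-\tfrac12\bigl(\partial_\alpha \Gamma^\gamma_{\beta\gamma}+\partial_\beta \Gamma^\gamma_{\alpha\gamma}\bigr)
+\Gamma^\gamma_{\gamma\rho}\Gamma^\rho_{\alpha\beta}
-\Gamma^\gamma_{\alpha\rho}\Gamma^\rho_{\beta\gamma}.
\]
I would then expand $\partial_\gamma\Gamma^\gamma_{\alpha\beta}$ from the defining formula $\Gamma^\gamma_{\alpha\beta}=\tfrac12 g^{\gamma\delta}(\partial_\alpha g_{\beta\delta}+\partial_\beta g_{\alpha\delta}-\partial_\delta g_{\alpha\beta})$. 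The term in which $\partial_\gamma$ hits $\partial_\delta g_{\alpha\beta}$ produces $-\tfrac12\Box^{\,g}g_{\alpha\beta}$ directly, while the remaining two second-derivative terms $\tfrac12 g^{\gamma\delta}\partial_\gamma\partial_\alpha g_{\beta\delta}$ and $\tfrac12 g^{\gamma\delta}\partial_\gamma\partial_\beta g_{\alpha\delta}$ cannot a priori be expressed in terms of $\Box^{\,g}$.

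The key step is to exchange these bad second derivatives for controlled quantities by differentiating the generalized harmonic identity \eqref{GHCC2}: $\partial_\alpha(g^{\gamma\delta}\partial_\gamma g_{\delta\beta})=\partial_\alpha\Gamma_\beta+\partial_\alpha\Gamma^\delta_{\beta\delta}$ and its partner with $\alpha,\beta$ swapped. This replaces each mixed second derivative by a combination of $\partial \Gamma_\bullet$, $\partial \Gamma^\delta_{\bullet\delta}$ and a $\partial g\cdot \partial g$ remainder. A direct but pleasant cancellation then eliminates $\partial_\alpha\Gamma^\delta_{\beta\delta}+\partial_\beta\Gamma^\delta_{\alpha\delta}$ against the $-\tfrac12(\partial_\alpha\Gamma^\gamma_{\beta\gamma}+\partial_\beta\Gamma^\gamma_{\alpha\gamma})$ piece coming from the trace of the Christoffel symbol, and what survives is exactly $\tfrac12(\partial_\alpha\Gamma_\beta+\partial_\beta\Gamma_\alpha)$ together with a collection of quadratic $\partial g\cdot \partial g$ and $\Gamma\cdot\Gamma$ terms.

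To produce the remaining $\tfrac12 \Gamma^\delta\partial_\delta g_{\alpha\beta}$ and $-\tfrac12\Gamma_\alpha\Gamma_\beta$ terms, I would use the inverse-metric derivative formula $\partial_\gamma g^{\gamma\delta}=-g^{\gamma\mu}g^{\delta\nu}\partial_\gamma g_{\mu\nu}$ and apply \eqref{GHCC2} to $g^{\gamma\mu}\partial_\gamma g_{\mu\nu}=\Gamma_\nu+\Gamma^\sigma_{\nu\sigma}$, so that $\partial_\gamma g^{\gamma\delta}=-\Gamma^\delta-g^{\delta\nu}\Gamma^\sigma_{\nu\sigma}$. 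Substituting this into the contribution $\tfrac12\partial_\gamma g^{\gamma\delta}(\partial_\alpha g_{\beta\delta}+\partial_\beta g_{\alpha\delta}-\partial_\delta g_{\alpha\beta})$ and using \eqref{GHCC2} once more on the resulting $g^{\delta\nu}\partial_\bullet g_{\delta\nu}$-type contractions yields the $\tfrac12\Gamma^\delta\partial_\delta g_{\alpha\beta}$ term and, after combining with the last Christoffel product in the Ricci formula $-\Gamma^\gamma_{\alpha\rho}\Gamma^\rho_{\beta\gamma}$, the $-\tfrac12\Gamma_\alpha\Gamma_\beta$ term.

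The final and main step is bookkeeping: after the above reductions every surviving quadratic term is of the schematic form $g^{\bullet\bullet}g^{\bullet\bullet}\partial_\mu g_{\bullet\bullet}\partial_\nu g_{\bullet\bullet}$, and I need to show that the combination reduces to $\tfrac12\Gamma^\gamma_{\alpha\gamma}\Gamma^\rho_{\beta\rho}-\tfrac14 g^{\gamma\delta}g^{\rho\sigma}\partial_\alpha g_{\delta\rho}\partial_\beta g_{\sigma\gamma}$ plus terms in which at least one derivative is not of the form $\partial_\alpha$ or $\partial_\beta$, i.e.\ derivatives contracted with the metric in a way that produces classical null structure. I would organize this by grouping terms according to which of the free indices $\alpha,\beta$ each derivative carries: the pieces carrying $\partial_\alpha$ and $\partial_\beta$ respectively on the two factors yield $\tfrac12 P(g)[\partial_\alpha g,\partial_\beta g]=\tfrac12\Gamma^\gamma_{\alpha\gamma}\Gamma^\rho_{\beta\rho}-\tfrac14 g^{\gamma\delta}g^{\rho\sigma}\partial_\alpha g_{\delta\rho}\partial_\beta g_{\sigma\gamma}$ after using the identity $\Gamma^\gamma_{\alpha\gamma}=\tfrac12 g^{\gamma\delta}\partial_\alpha g_{\gamma\delta}$; every other term, where at least one factor is $\partial_\gamma g$ with $\gamma$ contracted against the metric, is manifestly of the form $g^{\gamma\delta}(\partial_\gamma g\,\partial_\alpha g - \partial_\alpha g\,\partial_\gamma g)$-style antisymmetric contraction and can be repackaged as a classical null form $Q_{\alpha\beta}$, using the standard identity $g^{\gamma\delta}\partial_\gamma u\,\partial_\delta v$ evaluated on pairs that antisymmetrize under index swaps. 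The main obstacle is precisely this last combinatorial identification; it will require careful index relabeling and repeated use of \eqref{GHCC2} to absorb traces $g^{\gamma\delta}\partial_\bullet g_{\gamma\delta}$ into $\Gamma$-terms (already accounted for) so that only genuinely null combinations remain in $Q_{\alpha\beta}$.
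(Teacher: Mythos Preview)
Your proposal is correct and follows essentially the same route as the paper's proof: expand each of the four terms in \eqref{eq:RicciDef}, use \eqref{GHCC2} (and its differentiated form) to trade the bad mixed second derivatives for $\partial_\alpha\Gamma_\beta+\partial_\beta\Gamma_\alpha$ and quadratic remainders, use the trace identity $\Gamma^\gamma_{\alpha\gamma}=\tfrac12 g^{\gamma\delta}\partial_\alpha g_{\gamma\delta}$, and then carry out the index bookkeeping to isolate the $P$-term and package the rest as null forms $Q^1+Q^2+Q^3+Q^4$. Your initial symmetrization of the trace term is a harmless organizational shortcut (the paper instead keeps $-\partial_\beta\Gamma^\gamma_{\alpha\gamma}$ as written and absorbs the resulting antisymmetric piece into a separate null form $Q^2_{\alpha\beta}$), but the content is the same.
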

\begin{remark} Note that the second term on the right is exactly the difference between the geometric wave operator and the reduced wave operator. Having this term together with the other semilinear terms provide an additional cancellation, c.f. \cite{GH19}.
\end{remark}
\begin{proof}
For the proof we expand each of the four terms in \eqref{eq:RicciDef}. First,
\begin{align*}
\partial_\gamma\Gamma^{\gamma}_{\alpha\beta} &= \frac12(\partial_\gamma g^{\gamma\delta})(\partial_\alpha g_{\beta\delta} + \partial_\beta g_{\alpha\delta} - \partial_\delta g_{\alpha\beta}) + \frac12 g^{\gamma\delta}(\partial_\alpha\partial_\gamma g_{\beta\delta} + \partial_\beta\partial_\gamma g_{\alpha\delta} - \partial_\gamma\partial_\delta g_{\alpha\beta}) \\
&= \frac12(\partial_\gamma g^{\gamma\delta})(\partial_\alpha g_{\beta\delta} + \partial_\beta g_{\alpha\delta} - \partial_\delta g_{\alpha\beta}) + \frac12 (\partial_\alpha(g^{\gamma\delta}\partial_\gamma g_{\beta\delta}) + \partial_\beta(g^{\gamma\delta}\partial_\gamma g_{\alpha\delta}) - g^{\gamma\delta}\partial_\gamma\partial_\delta g_{\alpha\beta}) \\
&\quad- \frac12\partial_\alpha g^{\gamma\delta}\partial_\gamma g_{\beta\delta} - \frac12\partial_\beta g^{\gamma\delta}\partial_\gamma g_{\alpha\delta}.
\end{align*}
It follows from \eqref{GHCC2} that
\[
-\frac12\partial_\gamma g^{\gamma\delta} = \frac12 \Gamma^\delta + \frac14 g^{\rho\sigma}g^{\delta\gamma}\partial_\gamma g_{\rho\sigma},
\]
and therefore,
\[
-\frac12\partial_\gamma g^{\gamma\delta}\partial_\delta g_{\alpha\beta} = \frac12 \Gamma^\delta\partial_\delta g_{\alpha\beta} + \frac14 g^{\rho\sigma}g^{\delta\gamma}\partial_\gamma g_{\rho\sigma}\partial_\delta g_{\alpha\beta}.
\]

If we define
\[
Q^1_{\alpha\beta} =  \frac12(\partial_\gamma g^{\gamma\delta}\partial_\alpha g_{\beta\gamma} - \partial_\alpha g^{\gamma\delta}\partial_\gamma g_{\beta\gamma}) +\frac12( \partial_\gamma g^{\gamma\delta}\partial_\beta g_{\alpha\delta} - \partial_\beta g^{\gamma\delta}\partial_\gamma g_{\beta\delta}) + \frac14 g^{\rho\sigma}g^{\delta\gamma}\partial_\gamma g_{\rho\sigma}\partial_\delta g_{\alpha\beta},
\]
and apply \eqref{GHCC2} to rewrite
\begin{align*}
\frac12\partial_\alpha(g^{\gamma\delta}\partial_\gamma g_{\beta\delta}) = \frac12\partial_\alpha \Gamma_\beta + \frac14 \partial_\alpha g^{\gamma\delta}\partial_\beta g_{\gamma\delta} + \frac14g^{\gamma\delta}\partial_\alpha\partial_\beta g_{\gamma\delta}, \\
\frac12\partial_\beta(g^{\gamma\delta}\partial_\gamma g_{\alpha\delta}) = \frac12\partial_\beta \Gamma_\alpha + \frac14 \partial_\beta g^{\gamma\delta}\partial_\alpha g_{\gamma\delta} + \frac14g^{\gamma\delta}\partial_\alpha\partial_\beta g_{\gamma\delta},
\end{align*}
we have the following identity:
\begin{equation}
\partial_\gamma \Gamma^\gamma_{\alpha\beta} \!=  \!\frac12 \Gamma^\delta \partial_\delta g_{\alpha\beta} \! + \! \frac12(\partial_\alpha \Gamma_\beta + \partial_\beta \Gamma_\alpha) \! +  \!\frac14(\partial_\alpha g^{\gamma\delta} \partial_\beta g_{\gamma\delta} \! +  \!\partial_\beta g^{\gamma\delta}\partial_\alpha g_{\gamma\delta}) \! +  \!\frac12 g^{\gamma\delta}\partial_\alpha\partial_\beta g_{\gamma\delta}  \!-  \!\frac12 \widetilde{\Box}_g g_{\alpha\beta} \! + \! Q^1_{\alpha\beta}.
\end{equation}

Now we recall the identity
\begin{equation}
\Gamma_{\alpha\gamma}^\gamma = \frac12 g^{\gamma\delta}\partial_\alpha g_{\gamma\delta}.
\end{equation}
It follows that
\begin{subequations}
\begin{equation}
-\partial_\beta \Gamma^{\gamma}_{\alpha\gamma} = -\frac12\partial_\beta g^{\gamma\delta} \partial_\alpha g_{\gamma\delta} - \frac12 g^{\gamma\delta}\partial_\alpha\partial_\beta g_{\gamma\delta}.
\end{equation}

We can define
\[
Q^2_{\alpha\beta} = \frac14(\partial_\alpha g^{\gamma\delta}\partial_\beta g_{\gamma\delta} - \partial_\beta g^{\gamma\delta}\partial_\alpha g_{\gamma\delta})
\]
in order to rewrite
\begin{equation}
-\partial_\beta \Gamma^{\gamma}_{\alpha\gamma} = -\frac14(\partial_\alpha g^{\gamma\delta}\partial_\beta g_{\gamma\delta} + \partial_\beta g^{\gamma\delta}\partial_\alpha g_{\gamma\delta}) - \frac12 g^{\gamma\delta}\partial_\alpha\partial_\beta g_{\gamma\delta} + Q^2_{\alpha\beta}.
\end{equation}
\end{subequations}

Next, we expand
\begin{align*}
\Gamma^\gamma_{\gamma\rho}\Gamma^{\rho}_{\alpha\beta} &= \frac14 g^{\rho\sigma}g^{\gamma\delta}\partial_\rho g_{\gamma\delta}\left(\partial_\alpha g_{\beta\sigma} + \partial_\beta g_{\alpha\sigma} - \partial_\sigma g_{\alpha\beta}\right).
\end{align*}
We define
\[
Q^3_{\alpha\beta} = \frac14g^{\rho\sigma}g^{\gamma\delta}\left((\partial_\rho g_{\gamma\delta} \partial_\alpha g_{\beta\sigma} - \partial_\alpha g_{\gamma\delta}\partial_\rho g_{\beta\sigma}) + (\partial_\rho g_{\gamma\delta}\partial_\beta g_{\alpha\sigma} - \partial_\beta g_{\gamma\delta}\partial_\rho g_{\alpha\sigma}) - \partial_\rho g_{\gamma\delta} \partial_\sigma g_{\alpha\beta}\right)
\]
and rewrite
\begin{align*}
\frac14 g^{\gamma\delta}\partial_\alpha g_{\gamma\delta}g^{\rho\sigma}\partial_\rho g_{\beta\sigma} &= \frac14 g^{\gamma\delta}\partial_\alpha g_{\gamma\delta}\Big(\Gamma_\beta + \frac12 g^{\rho\sigma}\partial_\beta g_{\rho\sigma}\Big)\\
\frac14 g^{\gamma\delta}\partial_\beta g_{\gamma\delta}g^{\rho\sigma}\partial_\rho g_{\alpha\sigma} &=\frac14 g^{\gamma\delta}\partial_\beta g_{\gamma\delta}\Big(\Gamma_\alpha + \frac12 g^{\rho\sigma}\partial_\alpha g_{\rho\sigma}\Big).
\end{align*}
We therefore have
\begin{equation}
\Gamma^\gamma_{\gamma\rho}\Gamma^{\rho}_{\alpha\beta} = Q^3_{\alpha\beta} + \frac12(\Gamma^{\gamma}_{\alpha\gamma}\Gamma_\beta + \Gamma^{\gamma}_{\beta\gamma}\Gamma_\alpha) + \Gamma^\gamma_{\alpha\gamma}\Gamma^{\rho}_{\beta\rho}
\end{equation}
Now we take the final term,
\[
-\Gamma^\gamma_{\alpha\rho}\Gamma^\rho_{\beta\gamma} = -\frac14 g^{\gamma\delta}g^{\rho\sigma}(\partial_\alpha g_{\delta\rho} + \partial_\rho g_{\alpha\delta} - \partial_\delta g_{\alpha\rho})(\partial_\beta g_{\sigma\gamma} + \partial_\gamma g_{\beta\sigma}-\partial_\sigma g_{\beta\gamma}).
\]

We expand it and deal with it term-by-term. We first take
\begin{multline*}
-\frac14 g^{\gamma\delta}g^{\rho\sigma}\partial_\alpha g_{\delta\rho}(\partial_\gamma g_{\beta\sigma} - \partial_\sigma g_{\beta\gamma})= -\frac14g^{\gamma\delta}g^{\rho\sigma}((\partial_\alpha g_{\delta\rho}\partial_\gamma g_{\beta\sigma} - \partial_\gamma g_{\delta\rho}\partial_\alpha g_{\beta\sigma}) \\
-(\partial_\alpha g_{\delta\rho}\partial_\sigma g_{\beta\gamma} - \partial_\sigma g_{\delta\rho}\partial_\alpha g_{\beta\gamma}))
\quad-\frac14 g^{\gamma\delta}g^{\rho\sigma}(\partial_\gamma g_{\delta\rho}\partial_{\alpha}g_{\beta\sigma} - \partial_\sigma g_{\delta\rho}\partial_\alpha g_{\beta\gamma}), \\
= -\frac14g^{\gamma\delta}g^{\rho\sigma}((\partial_\alpha g_{\delta\rho}\partial_\gamma g_{\beta\sigma} - \partial_\gamma g_{\delta\rho}\partial_\alpha g_{\beta\sigma})-(\partial_\alpha g_{\delta\rho}\partial_\sigma g_{\beta\gamma} - \partial_\sigma g_{\delta\rho}\partial_\alpha g_{\beta\gamma})).
\end{multline*}
The two terms that cancel out are identical, which follows straightforwardly from renaming contracted indices and noting symmetry of $g$. Similarly,
\begin{equation*}
-\frac14 g^{\gamma\delta}g^{\rho\sigma}\partial_\beta g_{\sigma\gamma}(\partial_\rho g_{\alpha\delta}\! - \partial_\delta g_{\alpha\rho})= -\frac14g^{\gamma\delta}g^{\rho\sigma}((\partial_\beta g_{\sigma\gamma}\partial_\rho g_{\alpha\gamma} \!- \partial_\rho g_{\sigma\gamma}\partial_\beta g_{\alpha\gamma}) - (\partial_\beta g_{\sigma\gamma}\partial_\delta g_{\alpha\rho}\! - \partial_\delta g_{\sigma\gamma}\partial_\beta g_{\alpha\rho}))
\end{equation*}

Next, we take
\[
-\frac14 g^{\gamma\delta}g^{\rho\sigma}(\partial_\rho g_{\alpha\delta}\partial_\gamma g_{\beta\sigma} + \partial_\delta g_{\alpha\rho}\partial_\sigma g_{\beta\gamma}).
\]
By renaming indices we have
\begin{multline*}
-\frac14 g^{\gamma\delta}g^{\rho\sigma}(\partial_\rho g_{\alpha\delta}\partial_\gamma g_{\beta\sigma} + \partial_\delta g_{\alpha\rho}\partial_\sigma g_{\beta\gamma}) = -\frac12 g^{\gamma\delta}g^{\rho\sigma}(\partial_\rho g_{\alpha\delta}\partial_\gamma g_{\beta\sigma}) \\
= -\frac12 g^{\gamma\delta}g^{\rho\sigma} ((\partial_\rho g_{\alpha\delta}\partial_\gamma g_{\beta\sigma} - \partial_\gamma g_{\alpha\delta} \partial_\rho g_{\beta\sigma}) + \partial_\gamma g_{\alpha\delta}\partial_\rho g_{\beta\sigma}) \\
= -\frac12 g^{\gamma\delta}g^{\rho\sigma}(\partial_\rho g_{\alpha\delta}\partial_\gamma g_{\beta\sigma} - \partial_\gamma g_{\alpha\delta} \partial_\rho g_{\beta\sigma}) -\frac12(\Gamma_\alpha + \Gamma^{\gamma}_{\alpha\gamma})(\Gamma_\beta + \Gamma^\delta_{\beta\delta}).
\end{multline*}
Defining
\begin{align*}
Q^4_{\alpha\beta} &= -\frac14g^{\gamma\delta}g^{\rho\sigma}((\partial_\alpha g_{\delta\rho}\partial_\gamma g_{\beta\sigma} - \partial_\gamma g_{\delta\rho}\partial_\alpha g_{\beta\sigma})-(\partial_\alpha g_{\delta\rho}\partial_\sigma g_{\beta\gamma} - \partial_\sigma g_{\delta\rho}\partial_\alpha g_{\beta\gamma})) \\
&\quad-\frac14g^{\gamma\delta}g^{\rho\sigma}((\partial_\beta g_{\sigma\gamma}\partial_\rho g_{\alpha\gamma} - \partial_\rho g_{\sigma\gamma}\partial_\beta g_{\alpha\gamma}) - (\partial_\beta g_{\sigma\gamma}\partial_\delta g_{\alpha\rho} - \partial_\delta g_{\sigma\gamma}\partial_\beta g_{\alpha\rho})) \\
&\quad+ \frac14 g^{\gamma\delta}g^{\rho\sigma}\partial_\rho g_{\alpha\delta}\partial_\sigma g_{\beta\gamma} + \frac14 g^{\gamma\delta}g^{\rho\sigma}\partial_\delta g_{\alpha\rho}\partial_\gamma g_{\beta\sigma}-\frac12 g^{\gamma\delta}g^{\rho\sigma}(\partial_\rho g_{\alpha\delta}\partial_\gamma g_{\beta\sigma} - \partial_\gamma g_{\alpha\delta} \partial_\rho g_{\beta\sigma}),
\end{align*}
we can expand
\begin{equation}
-\Gamma^\gamma_{\alpha\rho}\Gamma^\rho_{\beta\gamma} = -\frac14 g^{\gamma\delta}g^{\rho\sigma}\partial_\alpha g_{\delta\rho}\partial_\beta g_{\sigma\gamma} - \frac12 (\Gamma_\alpha + \Gamma^{\gamma}_{\alpha\gamma})(\Gamma_\beta + \Gamma^\delta_{\beta\delta})+ Q^4_{\alpha\beta}
\end{equation}

We can combine everything to get
\begin{align}
R_{\alpha\beta} &= -\frac12 {\Box}^{\,g} g_{\alpha\beta} +\frac12 \Gamma^\delta \partial_\delta g_{\alpha\beta} + \frac12(\partial_\alpha \Gamma_\beta + \partial_\beta \Gamma_\alpha)  -\frac12 \Gamma_\alpha \Gamma_\beta + \frac12\Gamma^\gamma_{\alpha\gamma}\Gamma^{\rho}_{\beta\rho} - \frac14 g^{\gamma\delta}g^{\rho\sigma}\partial_\alpha g_{\delta\rho}\partial_\beta g_{\sigma\gamma}+ \\
&\quad + Q^1_{\alpha\beta} + Q^2_{\alpha\beta} + Q^3_{\alpha\beta} + Q^4_{\alpha\beta}.\qedhere
\end{align}
\end{proof}

\subsubsection{The expression for Ricci curvature in terms of the wave operator}
For a general metric $g$ twice the Ricci curvature can by Lemma \ref{lem:RicciEisntein} be written
\begin{equation}\label{eq:ricciingenharm}
2 R_{\mu\nu}=-\Box^{\,g} g_{\mu\nu}+\pa_\mu \Gamma_{\!\nu}+\pa_\nu\Gamma_{\!\mu}
+F_{\!\mu\nu}(g)[\pa g,\pa g]-\Gamma_{\!\mu} \Gamma_{\!\nu}+\Gamma^\delta\pa_\delta g_{\mu\nu},
\quad\text{where}
\quad \Gamma_{\!\mu}\!=g_{\mu\delta}g^{\alpha\beta}\Gamma_{\!\alpha\beta}^\delta ,
\end{equation}
and $F_{\mu\nu}$ is as in \eqref{eq:EinsteinWaveintro}.
The Einstein vacuum equations in harmonic coordinates are $R_{\mu\nu}\!=0$ and $\Gamma_{\!\mu}\!=0$.
The Minkowski metric and the Schwarzschild expressed in harmonic coordinates satisfy these.
Since $m_0$ in \eqref{eq:hup0def} is the leading term in the expansion of the Schwarzschild metric it is therefore not
surprising that it approximately satisfies these.
By a similar calculation as above using \eqref{eq:H0wavecoord}
 we have
\begin{multline}
\Gamma^{0\mu}=m_0^{\alpha\beta} \Gamma^{0\mu}_{\alpha\beta}
=\pa_\alpha m_0^{\alpha\beta}
-\tfrac{1}{2}m_0^{\alpha\beta} m^{-1}_{0\gamma\mu}\,\pa_\alpha m_0^{\gamma\mu}
=\pa_\alpha H_0^{\alpha\mu}
-\tfrac{1}{2}m^{\gamma\mu} m_{\alpha\beta}\,\pa_\gamma H_0^{\alpha\beta}
+W^\mu(H_0)[H_0,\pa H_0]\\
={\chi}^{\,\prime\!}\big(\tfrac{r}{t+1}\big)M\delta^{\mu 0} r^{-2}
+\chi^{\mu}\big(\tfrac{r}{1+t},\omega,\tfrac{M}{r}\big)M^2r^{-3} .
\end{multline}
Also using \eqref{eq:Boxrm0} it follows that
\begin{multline}
2 R_{\mu\nu}^0=-\Box^{m_0} m^0_{\mu\nu}+\pa_\mu \Gamma^0_\nu+\pa_\nu\Gamma^0_\mu
+\Gamma^{0\delta}\pa_\delta m^0_{\mu\nu}
+F_{\mu\nu}(m_0)[\pa m^0,\pa m^0]-\Gamma^0_\mu \Gamma^0_\nu\\
=\chi^\prime_{2,\,\mu\nu}\big(\tfrac{r}{1+t},\omega,\tfrac{M}{r}\big)M r^{-3}
+\chi_{2,\,\mu\nu}\big(\tfrac{r}{1+t},\omega,\tfrac{M}{r}\big)M^2 r^{-4}.
\end{multline}

\section{Einstein's equations in generalized wave coordinates}\label{sec:AppB}
There is a different way of deriving the reduced Einstein's equations in the new coordinates.
Instead of using covariant derivatives with respect to the new coordinates one can considers the
new coordinates as generalized wave coordinates and use the procedure for deriving
Einstein's equations in generalized wave coordinates.
Although we decided not to do it that way here we include the calculation
since it is of interest that it can be interpreted this way and it could be used elsewhere.

\subsubsection{The reduced Einstein's equations in generalized wave coordinates}
  By \eqref{eq:ricciingenharm} Einstein's equations $\widetilde{R}_{ab}=0$ in generalized wave coordinates are
\begin{equation}\label{eq:ReducedEisnteinGenWave}
\widetilde{\Box}^{\,\widetilde{g}}\,  \widetilde{g}_{ab} -\wpa_a\Wcal_{\!b}(\tg)-\wpa_b
\Wcal_{\!a}(\tg)-\Wcal^{c}(\tg)
\widetilde{\pa}_c\widetilde{ g}_{ab}=\widetilde{F}_{\!ab} (\tg) [\widetilde{\pa}\widetilde{ g},
\widetilde{\pa}\widetilde{ g}]-\Wcal_{\!a}(\tg)\Wcal_{\!b}(\tg)+\widetilde{T}_{ab},
\end{equation}
where $\widetilde{\Box}^{\,\tg\!}\!=
\widetilde{g}^{ab}\wpa_a\wpa_b$, $\Wcal_{\!c\!}\!=\tg_{\!cd}\Wcal^{d}$
 and $\Wcal^c$ are some given functions
of the coordinate $\widetilde{x}$ and the metric $\widetilde{g}$ not depending on its derivative.
One can show that if $\widetilde{g}$ satisfy  the reduced equations \eqref{eq:ReducedEisnteinGenWave}
and $\widetilde{\Gamma}^{c}_{ab}$ are its Christoffel symbols, then the generalized wave coordinate
condition
\begin{equation}\label{eq:generalizedwavecoordinatecondition}
\widetilde{g}^{ab}\widetilde{\Gamma}^{c}_{ab}
=\Wcal^c(\tg)
\end{equation}
 holds if it holds initially. In particular if  $g$
is the metric expressed in  wave coordinates $x$,
and we choose new coordinates $\widetilde{x}=\widetilde{x}(x)$
which are given fixed functions of the original coordinates then by \eqref{eq:wavetildetowavecoord}-\eqref{eq:christoffelm}
\begin{equation}\label{eq:Wdef}
\Wcal^c(\tg)
=\tg^{ ab}\widehat{\Gamma}^{c}_{ab} =\Wcal^c(\tg)[\widehat{\Gamma}],
\end{equation}
where $\widehat{\Gamma}$ given by \eqref{eq:christoffelm} are the Christoffel symbols of the fixed metric $\widetilde{m}$ and
the last equality indicates that it is linear in $\widehat{\Gamma}$ which by itself is
$O(M\langle t+r\rangle^{-2}\ln{\langle t+r\rangle})$.
By \eqref{eq:ReducedEisnteinGenWave} and \eqref{eq:Wdef}
\begin{equation}\label{eq:EisnteinGenWave}
\widetilde{\Box}^{\,\widetilde{g}}\,  \widetilde{g}_{ab}
-\wpa_a\Wcal_b(\tg)[\widehat{\Gamma}]
-\wpa_b\Wcal_a(\tg)[\widehat{\Gamma}]
-\Wcal^c(\tg)[\widehat{\Gamma}]\,
 \widetilde{\pa}_c\widetilde{ g}_{ab}
=\widetilde{F}_{ab} (\tg) [\widetilde{\pa}\widetilde{ g},\widetilde{\pa}\widetilde{ g}]
-\Wcal_{ab}(\tg)[\widehat{\Gamma},\widehat{\Gamma}]
+\widetilde{T}_{ab},
\end{equation}
where
\beq
\Wcal_c(\tg)[\widehat{\Gamma}]
=\widetilde{g}_{cd}\Wcal^d(\tg)[\widehat{\Gamma}],
\quad\text{and}\quad \Wcal_{ab}(\tg)[\widehat{\Gamma},\widehat{\Gamma}]
=\Wcal_a(\tg)[\widehat{\Gamma}]\,
\Wcal_b(\tg)[\widehat{\Gamma}]
\eq

Now let $h_{\alpha\beta}=g_{\alpha\beta}-m_{\alpha\beta}$ and $H^{\alpha\beta}=g^{\alpha\beta}-m^{\alpha\beta}=-h_{\alpha\beta}+O(h^2)$
\beq
\widetilde{h}_{ab}=\widetilde{g}_{ab}-\widetilde{m}_{ab}= A^{\,\,\alpha}_{ a} A^{\,\,\beta}_{ b} h_{\alpha\beta}
\quad\text{and}\quad
\widetilde{H}^{ab}=\widetilde{g}^{ab}-\widetilde{m}^{ab}= A_{\,\,\alpha}^{ a} A_{\,\,\beta}^{ b} H^{\alpha\beta}.
\eq
Since in particular the Minkowski metric is a solution of the vacuum equations, we have
by \eqref{eq:ReducedEisnteinGenWave} and \eqref{eq:Wdef} with
 $\widetilde{m}$ in place of $\tg$
\begin{equation}\label{eq:EisnteinGenWave}
\widetilde{\Box}^{\,\widetilde{m}}\,  \widetilde{m}_{ab}
-\wpa_a\Wcal_b(\widetilde{m})[\widehat{\Gamma}]
-\wpa_b\Wcal_a(\widetilde{m})[\widehat{\Gamma}]
-\Wcal^c(\widetilde{m})[\widehat{\Gamma}]\,
 \widetilde{\pa}_c\widetilde{m}_{ab}
=\widetilde{F}_{ab} (\widetilde{m}) [\widetilde{\pa}\widetilde{ m},
\widetilde{\pa}\widetilde{ m}]
-\Wcal_{ab}(\widetilde{m})[\widehat{\Gamma},\widehat{\Gamma}].
\end{equation}
Subtracting the two equations we get an equation for the difference
\begin{multline}\label{eq:EisnteinGenWave}
\!\!\!\!\!\!\widetilde{\Box}^{\,\widetilde{g}}  \widetilde{h}_{ab}-\widetilde{H}^{cd}\wpa_c\wpa_d \widetilde{m}_{ab}
-\wpa_a\widetilde{W}_{\!b}(\widetilde{h},\widetilde{m})
[\widetilde{h},\widehat{\Gamma}]
-\wpa_b\widetilde{W}_{\!a}(\widetilde{h},\widetilde{m})
[\widetilde{h},\widehat{\Gamma}]
-\Wcal^{c\!}(\tg\!)[\widehat{\Gamma}]\,
 \widetilde{\pa}_c\widetilde{h}_{ab}
 -\widetilde{W}_{2}^c[\widetilde{H},\widehat{\Gamma}]\wpa_c\widetilde{m}_{ab}\\
=\widetilde{F}_{ab} (\tg) [\widetilde{\pa}\widetilde{ g},\widetilde{\pa}\widetilde{ g}]
-\widetilde{F}_{ab} (\widetilde{m}) [\widetilde{\pa}\widetilde{ m},
\widetilde{\pa}\widetilde{ m}]
-\widetilde{W}_{ab}(\widetilde{h},\widetilde{m})
[\widetilde{h},\widehat{\Gamma},\widehat{\Gamma}]
+\widetilde{T}_{ab},
\end{multline}
where
\begin{align}
\widetilde{W}_c(\widetilde{h},\widetilde{m})[\widetilde{h},\widehat{\Gamma}]
&=\Wcal_c(\tg)[\widehat{\Gamma}]
-\Wcal_c(\widetilde{m})[\widehat{\Gamma}],\\
\widetilde{W}_{ab}(\widetilde{h},\widetilde{m})
[\widetilde{h},\widehat{\Gamma},\widehat{\Gamma}]
&=\Wcal_{ab}(\tg)[\widehat{\Gamma},\widehat{\Gamma}]
-\Wcal_{ab}(\widetilde{m})[\widehat{\Gamma},\widehat{\Gamma}]\\
\Wcal^c(\tg)[\widehat{\Gamma}]\,\widetilde{\pa}_c\widetilde{h}_{ab}
 +\widetilde{W}_{2}^c[\widetilde{H},\widehat{\Gamma}]\wpa_c\widetilde{m}_{ab}
 &=\Wcal^c(\tg)[\widehat{\Gamma}]\,
 \widetilde{\pa}_c\widetilde{ g}_{ab}
 -\Wcal^c(\widetilde{m})[\widehat{\Gamma}]\,
 \widetilde{\pa}_c\widetilde{m}_{ab}.
\end{align}
Here $W(\cdots)[u_1,\cdots,u_k]$ stands for functions that are separately linear in each of
the arguments $u_1,\dots,u_k$.

\subsubsection{The generalized wave coordinate condition}
We have
\beq
\wpa_a \tg^{ac} -\tfrac{1}{2}\tg^{ac}  \tg_{bd}\,\wpa_a \tg^{bd}
=\Wcal^c(\tg)[\widehat{\Gamma}]
\eq
and in particular
\beq
\wpa_a \widetilde{m}^{ac} -\tfrac{1}{2}\widetilde{m}^{ac}  \widetilde{m}_{bd}\,
\wpa_a \widetilde{m}^{bd}
=\Wcal^c(\widetilde{m})[\widehat{\Gamma}].
\eq
It follows that $\widetilde{H}^{ab}=\tg^{ab}-\widetilde{m}^{ab}$ satisfy
\beq
\wpa_a \big(\widetilde{H}^{ac} -\tfrac{1}{2}\widetilde{m}^{ac}  \widetilde{m}_{bd}\,
 \widetilde{H}^{bd}\big)
=\widetilde{W}_1^c(\widetilde{m},\widetilde{H})[\widetilde{H},\wpa\widetilde{m}]
+\widetilde{W}_2^c(\widetilde{m},\widetilde{H})[\widetilde{H},\wpa\widetilde{H}].
\eq

\section{Additional interior asymptotics}\label{sec:AppendixC}\label{sec:AppC}

\subsubsection{Subtracting off a better approximation from the homogeneous wave equation picking up the mass in the exterior}
We can get an improved  approximation to the solution $h$ of the homogeneous wave equation in the interior of the light cone by instead of having a cutoff which is a function of $r/t$ have a cutoff which is a function of $r^*-t$. This however
only works in the wave equation but not in the solution of the wave coordinate condition $H$.
This is not needed for the energy estimate and existence but only for more precise asymptotics in the interior, see \cite{L17} for the detailed proof.
Here we just want to show how the error terms are under control because of the wave coordinate condition. One can use this and the other results in this paper to give a more direct proof of the asymptotics in \cite{L17}.

Let $\Box^*=\widehat{m}^{ab}\wpa_a\wpa_b$ and
\beq
\widetilde{h}^{1\, e}_{ab}=\widetilde{h}_{ab}-\widetilde{h}^{0\, e}_{ab},\qquad\text{where}\quad
\widetilde{h}^{0\, e}_{ab}=\frac{M}{r^*} \widetilde{\chi}(r^*\!\!-t) \delta_{ab}
\eq
We have
\beq
\Box^* \widetilde{h}^{0\, e}_{cd}=0
\eq
Moreover, with $L^*_a=\wpa_\alpha(\rs\!-t)$, $\omega_0=0$,
 $\omega_i=x_i/|x|$, for $i\geq 1$ and $\slashed{S}_{0\beta}=\slashed{S}_{\alpha 0}=0$
 and $\slashed{S}_{ij}=\delta_{ij}-\omega_i\omega_j$
\beq
\wpa_a\wpa_a \wh^{0e}_{cd}\!
=\!\Big(\!L^*_a L^*_b \frac{ \widetilde{\chi}^{\,\prime\prime}(\rs\!\!-\!t)\!\!}{\rs}
+\!\big(\slashed{S}_{\! ab}
 -( L^*_a \omega_b\!+\omega_a L^*_b)  \big)\!
\frac{\widetilde{\chi}^{\,\prime}(\rs\!\!-\!t)\!\!}{{\rs}^2}
+ \big(2\omega_a \omega_b-\slashed{S}_{ab}\big)
\frac{\widetilde{\chi}(\rs\!\!-\!t)}{{\rs}^3}\Big)
M\delta_{cd}.
\eq
Hence with $R=(0,\omega)$
\beq
\widetilde{H}_1^{ab}\wpa_a\wpa_a \wh^{0e}_{cd}
=\Big(\widetilde{H}_{\!1\, L^* L^*}\frac{ \widetilde{\chi}^{\,\prime\prime}(\rs\!\!-\!t)\!\!}{\rs}
+(\trs\widetilde{H}_1-2\widetilde{H}_{\!1\, L^* R})\frac{\widetilde{\chi}^{\,\prime}(\rs\!\!-\!t)\!\!}{{\rs}^2}
+(2 \widetilde{H}_{\!1\, R R}-\trs\widetilde{H}_1)\frac{\widetilde{\chi}(\rs\!\!-\!t)}{{\rs}^3}\Big)
M\delta_{cd}.
\eq

\subsection*{Acknowledgments}
C. K. was supported in party by NSF Grant DMS-1500925  and ERC Consolidator Grant 77224. H.L. was supported in part by NSF grant DMS-1500925 and Simons Collaboration Grant 638955. We would also like to thank the Mittag Leffler Institute for their hospitality during the Fall 2019 program in Geometry and Relativity.

\end{document}